%%
%% Copyright 2007-2019 Elsevier Ltd
%%
%% This file is part of the 'Elsarticle Bundle'.
%% ---------------------------------------------
%%
%% It may be distributed under the conditions of the LaTeX Project Public
%% License, either version 1.2 of this license or (at your option) any
%% later version.  The latest version of this license is in
%%    http://www.latex-project.org/lppl.txt
%% and version 1.2 or later is part of all distributions of LaTeX
%% version 1999/12/01 or later.
%%
%% The list of all files belonging to the 'Elsarticle Bundle' is
%% given in the file `manifest.txt'.
%%

%% Template article for Elsevier's document class `elsarticle'
%% with numbered style bibliographic references
%% SP 2008/03/01
%%
%%
%%
%% $Id: elsarticle-template-num.tex 164 2019-01-14 09:57:55Z rishi $
%%
%%
%\documentclass[preprint,12pt]{elsarticle}
%\pdfoutput=1

%% Use the option review to obtain double line spacing
%% \documentclass[authoryear,preprint,review,12pt]{elsarticle}

%% Use the options 1p,twocolumn; 3p; 3p,twocolumn; 5p; or 5p,twocolumn
%% for a journal layout:
%\documentclass[final,1p,times]{elsarticle}
%% \documentclass[final,1p,times,twocolumn]{elsarticle}
%% \documentclass[final,3p,times]{elsarticle}
%% \documentclass[final,3p,times,twocolumn]{elsarticle}
%% \documentclass[final,5p,times]{elsarticle}
%% \documentclass[final,5p,times,twocolumn]{elsarticle}
\documentclass[final,1p,times,numcompress]{elsarticle}

\usepackage{etoolbox}
\usepackage{lineno,hyperref}
\geometry{% 
	twoside,% 
	layoutwidth=192mm, 
	layoutheight=262mm, 
	paperwidth=192mm, 
	paperheight=262mm, 
	outer=40pt, 
	inner=37pt, 
	top=51pt, 
	bottom=89.2pt, 
	headheight=12pt, 
	headsep=12pt, 
	footskip=18.5pt, 
	columnsep=0pt, 
	footnotesep=15pt plus5pt minus2pt, 
} 
%% For including figures, graphicx.sty has been loaded in
%% elsarticle.cls. If you prefer to use the old commands
%% please give \usepackage{epsfig}

%% The amssymb package provides various useful mathematical symbols
\usepackage{amssymb}
\usepackage{bm}
\usepackage{bbm}
\usepackage{subfigure}
\usepackage{amsmath}
\usepackage{amsthm} % definte the proof environment: block letters + italic letters

\newcommand{\nsp}{\hspace{-0.4pt}}
\newcommand{\ssp}{\hspace{0.4pt}}
\newcommand{\Sp}{\,\,\,\,\,\,}

\newcommand{\w}{\mathbbm{w}}

\newcommand{\sx}{\sigma^{x}}
\newcommand{\sy}{\sigma^{y}}
\newcommand{\sz}{\sigma^{z}}
\newcommand{\rr}{\bm{r}}
\newcommand{\ra}{\rangle}
\newcommand{\la}{\langle}
\newcommand{\p}{\partial}
\newcommand{\be}{\begin{equation}}
\newcommand{\ee}{\end{equation}}
\newcommand{\bea}{\begin{eqnarray}}
\newcommand{\eea}{\end{eqnarray}}

\newcommand{\commut}[2]{[\ssp #1\ssp,\,#2\ssp]}
\newcommand{\commutb}[2]{\big[\ssp #1\ssp,\,#2\ssp\big]}

\newcommand{\ket}[1]{|#1\rangle}
\newcommand{\bra}[1]{\langle#1|}
\newcommand{\bk}[1]{\langle#1\rangle}
\newcommand{\kb}[1]{|#1\rangle\langle#1|}
\newcommand{\sket}[1]{|{#1})}
\newcommand{\sbra}[1]{({#1}|}
\newcommand{\sbraket}[2]{({#1}|{#2})}

\newcommand{\one}{1 \hspace{-1.0mm}  {\bf l}}
\newcommand{\Tr}{\textrm{Tr}}
\newcommand{\tr}{ \textrm{tr}}
\newcommand{\diag}{ \textrm{diag}}
\newcommand{\Det}{ \textrm{Det}}
\newcommand{\Cov}{ \textrm{Cov}}
\newcommand{\Ad}{ \textrm{Ad}}
\newcommand{\Pf}{ \textrm{Pf}}
\newcommand{\pr}{ \textrm{Pr}}
\newcommand{\HH}{\mathcal{H}}
\newcommand{\vv}[1]{{#1}}
\newcommand{\V}[1]{ \textrm{\bf vec}({#1})}

\newcommand{\QED}{\hfill\ensuremath{\blacksquare}}

\DeclareRobustCommand\openzero{\leavevmode\hbox{0\kern-.55em0}}

\renewcommand{\Im}{ \textrm{\bf Im}}
\renewcommand{\Re}{ \textrm{\bf Re}}

\newtheorem{Lemma}{Lemma}
\newtheorem{Proposition}{Proposition}

%% The amsthm package provides extended theorem environments
%% \usepackage{amsthm}

%% The lineno packages adds line numbers. Start line numbering with
%% \begin{linenumbers}, end it with \end{linenumbers}. Or switch it on
%% for the whole article with \linenumbers.
%% \usepackage{lineno}

\journal{Physics Reports}

\makeatletter
\def\@mkboth#1#2{}
\newlength\appendixwidth
\preto\appendix{\addtocontents{toc}{\protect\patchl@section}}
\newcommand{\patchl@section}{%
  \settowidth{\appendixwidth}{\textbf{Appendix }}%
  \addtolength{\appendixwidth}{1.5em}%
  \patchcmd{\l@section}{1.5em}{\appendixwidth}{}{\ddt}%
}
%\preto\appendix{\addtocontents{toc}{\protect\patchl@subsection}}
%\newcommand{\patchl@subsection}{%
%  \settowidth{\appendixwidth}{\textbf{Aappendix }}%
%  \addtolength{\appendixwidth}{1em}%
%  \patchcmd{\l@subsection}{1em}{\appendixwidth}{}{\ddt}%
%}
\makeatother

\begin{document}

\begin{frontmatter}

%% Title, authors and addresses

%% use the tnoteref command within \title for footnotes;
%% use the tnotetext command for theassociated footnote;
%% use the fnref command within \author or \address for footnotes;
%% use the fntext command for theassociated footnote;
%% use the corref command within \author for corresponding author footnotes;
%% use the cortext command for theassociated footnote;
%% use the ead command for the email address,
%% and the form \ead[url] for the home page:
%% \title{Title\tnoteref{label1}}
%% \tnotetext[label1]{}
%% \author{Name\corref{cor1}\fnref{label2}}
%% \ead{email address}
%% \ead[url]{home page}
%% \fntext[label2]{}
%% \cortext[cor1]{}
%% \address{Address\fnref{label3}}
%% \fntext[label3]{}

\title{Geometry of Quantum Phase Transitions}

%% use optional labels to link authors explicitly to addresses:
\author[1,2]{Angelo Carollo}
\author[1]{Davide Valenti}
\author[1,2,3]{Bernardo Spagnolo}

\address[1]{Department of Physics and Chemistry, University of Palermo, Viale delle Scienze, Ed. 18, I-90128 Palermo, Italy}
\address[2]{Radiophysics Department, Lobachevsky State University of Nizhni Novgorod, 23 Gagarin Avenue, Nizhni Novgorod 603950, Russia}
\address[3]{Istituto Nazionale di Fisica Nucleare, Sezione di Catania,  Via S. Sofia 64, I-90123 Catania, Italy}

\begin{abstract}
In this article we provide a review of geometrical methods employed
in the analysis of quantum phase transitions and non-equilibrium
dissipative phase transitions. After a pedagogical introduction to
 geometric phases and geometric information in the characterisation
of quantum phase transitions, we describe recent developments of
geometrical approaches based on mixed-state generalisation of the
Berry-phase, i.e. the Uhlmann geometric phase, for the investigation
of non-equilibrium steady-state quantum phase transitions
(NESS-QPTs ). Equilibrium phase transitions fall invariably into two
markedly non-overlapping categories: classical phase transitions and
quantum phase transitions, whereas in NESS-QPTs  this distinction may fade off. The approach described in this review, among other things, can quantitatively assess the quantum character of
such critical phenomena. This framework is applied to a paradigmatic
class of lattice Fermion systems with local reservoirs,
characterised by Gaussian non-equilibrium steady states. The
relations between the behaviour of the geometric phase curvature, the
divergence of the correlation length, the character of the 
criticality and the gap - either Hamiltonian or dissipative - are reviewed.
\end{abstract}

\begin{keyword}
Quantum geometric information \sep Geometric phase \sep Quantum phase transitions \sep Dissipative phase transitions \sep Quantum metrology

%% PACS codes here, in the form: \PACS code \sep code

%% MSC codes here, in the form: \MSC code \sep code
%% or \MSC[2008] code \sep code (2000 is the default)

\end{keyword}

\end{frontmatter}
\tableofcontents

%% \linenumbers

%% main text
\section{Introduction}
Several systems manifest phase transitions as the temperature or
other external parameters are modified, and are characterised by a qualitative changes in the system properties. Examples of phase transitions range from our mundane
experience of ice melting or the loss of ferromagnetism in iron to
the far more exotic superfluid Mott-insulator phase transitions in
optical lattices~\cite{Greiner2002}. 

Thermal phase transitions, i.e. phase transitions that occur at finite temperature, are characterised by a macroscopic order, such as magnetisation, or crystal structure, which gets destroyed by the thermal fluctuations~\cite{Nishimori2010,Mussardo2010,Chaikin1995,Goldenfeld1992,Stanley1987}. At zero temperature, a different type of phase transitions can take place as a non-thermal parameter, such as magnetic field, pressure or chemical composition, is changed. The result is an order in the ground state properties of the system which is destroyed exclusively by quantum fluctuations. Competition between different terms of the system's Hamiltonian, and the ordering thereof, are responsible for such fluctuations, which are a macroscopic manifestation of the Heisenberg uncertainty principle~\cite{Sachdev2011,Sondhi1997,Vojta2003,Belitz2005,Carr2011,Suzuki2013}.

In the past few decades, the studies of phase transitions at zero and finite temperature has been fuelled
by the formidable success of Ginzburg-Landau-Wilson theories,
local ordering, spontaneous symmetry breaking and the renormalisation
group~\cite{Wilson1974,Parisi1988,Zinn-Justin2002,Cardy1996} in
spelling out the nature of a wide variety of critical behaviours.

Such a standard approach has been recently paralleled by the development of a framework emerging from the cross-pollination of quantum information theory, information geometry and metrology. This approach is differential-geometric in essence and relies on two intertwined concepts: \emph{geometric phase} and \emph{fidelity}. The idea is to explore the critical features of many-body systems through the geometrical properties of its equilibrium state. The intuition behind it is relatively straightforward. Phase transitions are characterised by dramatic differences in some system's observable under infinitesimal variation of a control parameter, which are embodied in major structural changes of the state of the system. One could compare the states associated to infinitesimally close values of the parameters by some ``similarity'' function, or distance function. One would then be able to reveal these major structural changes in the state through the singular behaviour of such a distance function. The Uhlmann fidelity~\cite{Uhlmann1976,Alberti1983,Alberti1983a,Alberti1984,Wootters1981,Jozsa1994,Schumacher1995,Fuchs1996} provides a quantitative measure of such a distance, called the \emph{Bures metric}~\cite{Bures1969}.

Alternatively, in a quantum phase transitions (QPT), one may look at these dramatic changes as the manifestation of level crossings, occurring at the thermodynamic limit, which involve ground state and low lying part of the energy spectrum. In much the same way as mass-energy singularity
bends the geometry of the space-time, or a Dirac monopole twists the topology of the field configuration around it~\cite{Nakahara1990}, a level crossing bends the geometry of the ground-state manifold in a singular way. A quantitative account of this effect is provided by the geometric phase~\cite{Berry1984,Berry1989,Wilczek1989,Bohm2003}, or more precisely, its infinitesimal counterpart, the \emph{Berry curvature}. Driving the system close to or around these singularities results in dramatic effects on the state geometry, picked up by a quantum state in the form of geometric phase instabilities~\cite{Carollo2005,Hamma2006,Zhu2008}. Geometric phases are at the core of the characterisation of
topological phase transitions~\cite{Thouless1983,Bernevig2013,Chiu2016}, and have been
employed in the description and detection of QPT, both theoretically~\cite{Carollo2005,Pachos2006,Plastina2006,Hamma2006,Zhu2006,Reuter2007}
and experimentally~\cite{Peng2010}.

Singular curvature and singular metric are two complementary manifestations of the same exceptional behaviour of the quantum state arising across phase transitions~\cite{CamposVenuti2007}. These quantities are a direct expression of the geometry of the state manifold and require no a-priori notions of order parameters, or symmetry breaking. They do not even require a detailed knowledge on the dynamics of the physical model, (e.g. the Hamiltonian), but only on its kinematics (i.e. the state of the system, and its dependence on parameters).  This means that this conceptual framework can be freed from the boundaries of the standard Ginzburg-Landau-Wilson paradigm, and can be used to study a wide range of critical behaviours where a notion of local order does not hold, or cannot be easily identified.

This motivates the employment of geometric phases in the investigation of critical properties in a plethora  of different scenarios~\cite{Carollo2005,Hamma2006,Zhu2006,Pachos2006,Plastina2006,Cui2006,Chen2006,Yi2007,CamposVenuti2007,Reuter2007,Yuan2007,Cui2008,Furtado2008,Hu2008,Nesterov2008,Zhu2008,Paunkovic2008,Contreras2008,Ma2009,Oh2009,Nesterov2009,Cui2009,Quan2009,Wang2010b,Wang2010,Sjoqvist2010,Wang2010b,Basu2010,Lu2010,Li2010,Basu2010a,Zhong2010,Cucchietti2010,Peng2010,Yuan2010,Cheng2010a,Bandyopadhyay2011,Ribeiro2011,Lian2011,Tian2011,Li2011,Li2011a,Castro2011,Patra2011,Zhang2011,Zhang2012,Yuan2012,Requist2012,Shan2012,Tomka2012,Lian2012,Ma2012,Lian2012,Ma2013,Zhang2013,Zhang2013a,Liang2013,Zhang2013,AzimiMousolou2013,Jafari2013,Li2013,Zhang2013c,Ma2013,Sarkar2014,Shan2014,Hickey2014,Lue2015,Zhu2015,Yuan2015,Li2015,Wu2015,Li2015a,Ma2015,Yang2015,Zvyagin2016,Ya2016,Nie2017,Zeng2017,Alvarez2017,Liu2018,Carollo2018,Carollo2019,Zhang2018,Carollo2018a,Henriet2018,Cai2019}.\\
The idea that QPTs  could be explored through the Berry phase properties was first proposed and applied in the prototypical XY spin-1/2 chain~\cite{Carollo2005,Hamma2006,Pachos2006,Zhu2006,Reuter2007,Quan2009,Zhong2010,Patra2011} and extended to many other many-body systems, such as the Dicke model~\cite{Plastina2006,Chen2006}, the Lipkin-Meshkov-Glick model~\cite{Cui2006,Sjoqvist2010,Tian2011}, Yang-Baxter spin-1/2 model~\cite{Hu2008,Wang2010b}, quasi free-Fermion systems~\cite{Cui2008,Cui2009,Ma2009,Ma2013,Ma2015}, interacting Fermion models~\cite{Paunkovic2008,Lin-Cheng2010,Cheng2010a,Li2010,Zhang2011,Yuan2012,Yuan2015}, in ultracold atoms~\cite{Li2011,Li2011a,Li2013}, in spin chains with long range interactions~\cite{Ribeiro2011,Shan2012},  in cluster models~\cite{Nie2017}, in the spin-boson model~\cite{Henriet2018}, in the 1D compass-model~\cite{Wang2010,Lue2015}, and in connection to spin-crossover phenomena~\cite{Nesterov2009}. The critical properties of the geometric phase has also been studied in few-body systems interacting with critical chains~\cite{,Cucchietti2010,Yuan2010,Zhang2012,Zhang2013a,Zhang2013c,Wu2015,Zhu2015},  in non-Hermitian critical systems~\cite{Liang2013,Zhang2013}, in connection to dynamical phase transitions~\cite{Hickey2014,Zvyagin2016}, in the characterisation of topological phase transitions~\cite{Lian2012,Ma2012,Ma2013a,Shan2014,Yang2015,Li2015,Li2015a,Zeng2017,Leonforte2019,Leonforte2019a,Bascone2019,Bascone2019a}, in quenched systems~\cite{Basu2010,Basu2010a,Sarkar2014}, in non-equilibrium phase transitions~\cite{Tomka2012,Requist2012,Zhang2013b,Carollo2018,Cai2019}, in connection with entanglement~\cite{Castro2011,AzimiMousolou2013}, and renormalisation group~\cite{Jafari2013}.
 %\cite{Yi2007,CamposVenuti2007,Yuan2007,Furtado2008,Nesterov2008,Contreras2008,Oh2009,Lu2010,Zhong2010,Bandyopadhyay2011,Lian2012a,Ya2016,Alvarez2017,Liu2018}.

Beyond zero-temperature and thermal phase transitions, a novel type of criticalities have recently emerged in the context of non-equilibrium steady states~(NESSes) of open-driven dissipative systems~\cite{Prosen2008,Diehl2008,DallaTorre2010,Diehl2010a,Heyl2013,LeBoite2013,Carr2013,Ajisaka2014,Marcuzzi2014,Vajna2015,Dagvadorj2015,Weimer2015,Macieszczak2016,Jin2016,Rose2016,Bartolo2016,Maghrebi2016,Sieberer2016,Roy2017,Fink2017,Fitzpatrick2017,Rota2017,Overbeck2017,Foss-Feig2017,Jin2018,Rota2018,Minganti2018,Vicentini2018,Nagy2018,Casteels2018,Rota2019}.
In this context, one considers a quantum many body-system coupled to external reservoirs whose dynamics may be described {\em effectively} in terms of a Liouvillian master Eq.~\cite{Alicki2007,Breuer2007}. Under suitable conditions, the open system dynamics reaches a (possibly unique) {\em non-equilibrium steady state}.  The dynamical and static properties of the NESSes characterise the type of phases in which the systems lies.

In this setting, NESS quantum phase transitions~(NESS-QPTs) are revealed as dramatic structural changes of the Liouvillian steady state upon a slight modification of tuneable external parameters. The analogy with equilibrium phase transitions is straightforward. QPTs  are understood through the properties of the (unique) ground state of the Hamiltonian governing the dynamics of the system. Phase diagrams and criticalities are determined by the low-lying spectrum of excitations of the system's Hamiltonian. Similarly, in dynamics governed by a Liouvillian master equation, observable macroscopical changes in the properties of quantum systems are revealed by non-analytical dependences of the NESS with respect to external variables. Phase diagram and criticalities are thus governed by the dependence of the low lying spectrum of the Liouvillian on these parameters.
Despite the analogy with equilibrium QPTs, a comprehensive picture and characterisation of dissipative NESS-QPTs is lacking. This is partly due to their nature lying in a blurred domain, where features typical of zero temperature QPTs coexist with unexpected properties, some of which reminiscent of thermal phase transitions.

A natural approach to the investigation of such a novel scenario
would be to adapt tools used in the equilibrium settings. We will illustrate how the use of the geometric methods, such as geometric phase~\cite{Bohm2003,Berry1984}, and in particular its
mixed state generalisation, the Uhlmann geometric
phase~\cite{Uhlmann1986}, may provide a novel perspective in the investigation of NESS-QPTs.  Geometric phases and related geometrical tools, such as the Bures
metrics~\cite{Bures1969,Uhlmann1976,Braunstein1994}, have been
successfully applied in the analysis of many equilibrium phase
transitions~\cite{Zanardi2006,Zanardi2007,CamposVenuti2007,Gu2010,Dey2012}.
The Bures metrics have been employed in thermal phase
transitions~\cite{Janyszek1999,Ruppeiner1995,Quan2009a} and QPTs, both
in symmetry-breaking~\cite{Zanardi2006,Zanardi2007,CamposVenuti2007,Zanardi2007a,Gu2010,Dey2012,Kolodrubetz2013} as well as in topological phase transitions~\cite{Yang2008}.

Due to their mixed state nature, the NESSes require the use of a
concept of geometric phase in the density operators domain. Among
all possible
approaches~\cite{Uhlmann1986,Sjoqvist2000,Tong2004,Chaturvedi2004,Marzlin2004,Carollo2005a,Buric2009,Sinitsyn2009},
the Uhlmann geometric phase~\cite{Uhlmann1986} stands out for its
deep-rooted relation to information geometry and
metrology~\cite{Matsumoto1997,Hayashi2017}, whose tools have been
profitably employed in the investigation of QPTs and
NESS-QPTs~\cite{Zanardi2007,Kolodrubetz2013,Banchi2014,Marzolino2017}.
Uhlmann holonomy and geometric phase have been applied to the
characterisation of both topological and symmetry breaking
equilibrium
QPTs~\cite{Paunkovic2008,Huang2014,Viyuela2014,Andersson2016,Viyuela2014a,Budich2015a,Kempkes2016,Mera2017}.
Many proposals to measure the Uhlmann geometric phase have been put
forward~\cite{Tidstrom2003,Aberg2007,Viyuela2016}, and demonstrated
experimentally~\cite{Zhu2011}.

Motivated by this, we will illustrate the role of the mixed state analogue of the Berry curvature, the \emph{Uhlmann curvature}, in the characterisation of dissipative NESS-QPTs.  The infinitesimal local properties of the manifold of density matrices can be captured by an object called \emph{mean Uhlmann curvature} (MUC). The MUC is defined as the Uhlmann geometric phase per unit area of a
density matrix evolving along an infinitesimal loop. It turns out that this quantity has also a
fundamental interpretation in multi-parameter quantum metrology: it
marks the \emph{incompatibility} between independent parameters arising
from the quantum nature of the underlying physical
system~\cite{Ragy2016}. In this sense, one can consider the MUC as a measure of
``quantumness'' in the \emph{multi-parameter} estimation problem,
and its singular behaviour responds only to quantum fluctuations
occurring across a phase transitions. We will review these
ideas, by showing paradigmatic examples of Fermionic quadratic
dissipative Lioviullian models, some of which show rich NESS
features~\cite{Prosen2008,Diehl2008,Eisert2010,Banchi2014,Marzolino2014,Marzolino2017}.

This review focuses on the role of the geometric phases in the investigation of equilibrium and non-equilibrium quantum phase transitions. However, we wouldn't provide a fair account of the subject without touching upon the closely related subject of the fidelity approach~\cite{Zanardi2006,You2007,Zanardi2007,Zhou2008,Schwandt2009}, whose main achievements have been extensively reviewed in~\cite{Gu2010}. The original idea of the fidelity approach was to identify QPTs  from the remarkable sensitivity of the Loschmidt echo~\cite{Quan2006} and the ground-state overlap~\cite{Zanardi2006,Zhou2008} under the slightest changes of external parameters across criticality. This is a phenomenon which is strongly reminiscent of, and indeed related to, the well known Anderson orthogonality catastrophe~\cite{Anderson1967}. Since its early development this approach has been applied successfully to a wide variety of systems, ranging from quasi-free Fermions~\cite{Zanardi2007c,Cozzini2007}, the Dicke model~\cite{Zanardi2006,You2007,Liu2009,Dey2012}, matrix product state systems~\cite{Cozzini2007a}, Bose-Hubbard models~\cite{Buonsante2007,Lacki2014}, Stoner-Hubbard and BCS models~\cite{Paunkovic2008}. Various methods have been developed to evaluate the fidelity, such as exact diagonalisation, density matrix renormalisation group~\cite{Gu2010,Luo2017,Hickey2017}, quantum Monte Carlo methods~\cite{Schwandt2009,Troyer2015,Weber2018}, tensor network algorithms~\cite{Zhao2010,Su2013}. This provides the means to venture into the study of models where analytical solutions are not available~\cite{Gu2010,Luo2017,Hickey2017,Weber2018,Agarwala2019,Giudici2019}.

The fidelity approach has been employed in the investigation of first order phase transitions~\cite{Vicari2018}, in higher-than-second order QPTs ~\cite{Chen2008}, in finite temperature phase transitions~\cite{Zanardi2007b,Zanardi2007a,You2007,Paunkovic2008,Quan2009a} and non-equilibrium phase transitions~\cite{DeGrandi2010,Polkovnikov2011,Banchi2014,Hannukainen2017,Marzolino2017}.
The efficacy of this approach in signalling QPTs  that elude the Landau-Ginzburg-Wilson paradigm has been tested against Berezinskii-Kosterlitz-Thouless phase transitions~\cite{Yang2007,Fjaerestad2008,Wang2010d,Wang2012,Sun2015}, topological phase transitions~\cite{Abasto2008a,Yang2008,Zhao2009,Garnerone2009}, and in strongly correlated models whose critical features are yet to be understood~\cite{Rigol2009,Jia2011,Agarwala2019,Giudici2019}.
The universal scaling properties of quantities related to the fidelity approach, such as the fidelity susceptibility~\cite{Schwandt2009,Albuquerque2010,Rams2011}, the quantum geometric tensor~\cite{CamposVenuti2007,Gu2010} and the Fisher information matrix~\cite{Zanardi2008}, have been investigated and exploited in numerous scenarios.

Finally, one of the relevant features of these geometrical approaches is the potential experimental accessibility of many of their related quantities. Fidelity susceptibility~\cite{Zhang2008,Kolodrubetz2013,Gu2014}, Fisher information~\cite{Zanardi2007,Heyl2013}, geometric phases~\cite{Tidstrom2003,Aberg2007,Viyuela2016} and geometric curvature~\cite{Tran2017,Leonforte2019,Leonforte2019a} can be probed through experimentally viable procedures.  

%
%~\cite{Quan2006,Zanardi2006,Zhou2008,Zanardi2007,Cozzini2007,Cozzini2007a,Buonsante2007,Zanardi2007b,You2007,Zanardi2007c,CamposVenuti2007,Chen2007,Gu2008,Tribedi2008,Zanardi2007a,Yang2007,Tzeng2008,Ning2008,Paunkovic2008,Paunkovic2008a,Kwok2008,Zhou2008a,Zhou2008b,Fjaerestad2008,Chen2008,Abasto2008,CamposVenuti2008,Wang2009,Yang2008,Abasto2008a,Tzeng2008a,Hamma2008,Abasto2009,Kwok2008,Kwok2008a,Quan2009,Quan2009a,Ma2008,Ma2009,Gu2010}.

To keep the review as pedagogical as possible, we will mostly focus the discussion on criticalities occurring in transverse field models, namely, Ising and XY models in a
transverse magnetic field, which are paradigmatic examples
exhibiting zero-temperature continuous QPTs . This will also pave the
way to the second part of this review, where we will consider
non-equilibrium versions of similar models. 
More than fifty years since their appearance, it is astounding how
useful the transverse field models continue to be in the understanding of QPTs , non-equilibrium dynamics of quantum critical systems, and their connections to quantum information and geometric information.

\section{Continuous Phase Transition and Universality}
We will consider only continuous QPTs , i.e. phase transitions associated to an order parameter that vanishes continuously at criticality, as opposed to first order QPT, characterised instead by abrupt changes in the order parameter.
According to the standard classification, first order phase transitions are usually signalled by a finite discontinuity in the first derivative of the ground state energy density, whereas, continuous QPTs  are identified by continuous first derivatives but discontinuous second derivatives of the ground state energy density.

Since many of the salient features of continuous phase transitions
are common to classical and quantum phase transitions, let's take a
brief detour into the topic of phase transitions driven by thermal
fluctuations, and review some of the basic notions that will recur
in the discussion on QPTs .

Assume that, at a given temperature $T = T_{c}$, a classical system
with translation invariance symmetry in $d$-spatial dimensions has a
critical point (CP). Any such critical point is characterised by
several critical exponents, which identify most of the remarkable
features of the system behaviour around the
CP~\cite{Amit1984,Cardy1996}. Let $O(\bm{r})$, with $\bm{r}\in
\mathbb{T}^{d}$, denote an order parameter, i.e. a variable whose
thermal expectation value $\bk{O(\rr)}=0$ for $T \ge T_{c}$ and
$\bk{O(\rr)}\neq 0$ for $T  < T_{c}$. If $T  < T_{c}$, the
\emph{two-point correlation function} of the order parameter falls
off exponentially at large distances, i.e.
$\bk{O(\rr_{1})O(\rr_{2})}\sim\exp( -|\bm{r}_{1} -\bm{r}_{2}|/\xi)$,
where $\xi$ is the so called \emph{correlation length}, which is a
function of $T$. As the phase transitions is approached by varying
some parameter, several remarkable phenomena occur. One of most
notable is the divergence of $\xi$ as $T$ approaches $T_{c}$ from
above, namely, as $T\to T_{c}^{+}$, usually as a power law which is
characterised by the critical exponent, $\nu$, of correlation
length, i.e. $\xi\sim (T-T_{c})^{\nu}$. At criticality, i.e. at
$T=T_{c}$, the divergence of the correlation length indicates that
the two point correlation function decays only algebraically with
the distance, namely $\bk{O(\bm{r}_{1})O(\bm{r}_{2})}\sim
|\bm{r}_{1} -\bm{r}_{2}|^{-{d-2+\eta}}$, where $\eta$ is another
critical exponent. Apart from the above long-ranged spacial
correlation, similar scaling can be observed in the temporal
behaviours at criticality. They are characterised by the dynamical
critical exponent, $z$, which defines the scaling law of the system
response-time $\tau_{c}=\omega_{c}^{-1}$ to external perturbations,
which diverges like $\tau\sim \xi^{z}$ as $T \to T_{c}^{+}$, a
phenomenon-known as \emph{critical slowing down}. Correlation length
$\xi$ and response-time $\tau_{c}$ set the only characteristic
length scale and characteristic time scale close to the phase
transitions. Infinite correlation length and time at criticality
entails fluctuations on all length scales and timescales, and the
system is said to be scale-invariant. A direct consequence is that
every observable depends on the external parameters via power laws.
The corresponding exponents are the critical exponents.

Apart from $z$, $\nu$ and $\eta$, other critical exponents are:
$\beta$, which characterises the vanishing of the order parameter as
the critical temperature is approached from below, i.e.
$\bk{O(\rr)}\sim (T_{c}-T)^{\beta}$ as $T\to T_{c}^{-}$; $\alpha$
which defines the divergence of the singular part of the specific
heat $C\sim (T-T_{c})^{-\alpha}$ as $T \to T_{c}^{+}$. By denoting with
$h$ the conjugate field to the order parameter, i.e. the field which
couples to the $O(\rr)$ in the Hamiltonian of the system, one
defines the zero-field susceptibility
$\chi_{c}:=d\bk{O(\rr)}/dh|_{h=0}$, whose critical exponent is
$\gamma$. The latter characterises the divergence $\chi_{c}\sim
(T-T_{c})^{-\gamma}$ as $T \to T_{c}^{+}$. Exactly at $T = T_{c}$,
the order parameter scales with $h$ as $O(\rr)\sim
|h|^{\frac{1}{\delta}}$ as $h \to 0$. The critical exponents, whose
values characterise the type of criticality, are not independent of
each other, many of them are related to one another through a set of
\emph{scaling laws} or \emph{exponent relations}.

Moreover, it turns out that the complete set of critical exponents
not only are mutually dependent, but they are the same across a
whole class of phase transitions, called \emph{universal class}.
Indeed, one of the most astounding and far-reaching hallmark of
continuous phase transitions is the notion of \emph{universality}.
Its most direct implication is that critical exponents are only
determined by the symmetry of the order parameter, the
dimensionality of the system and the nature of the fixed point,
regardless of microscopic details of the Hamiltonian. The divergence
of the correlation length and characteristic time are responsible
for such universal behaviour. Close to criticality, every physical
quantity is averaged over all lengths that are smaller than the
physically relevant scale set by the correlation length. This
suggests that the universal critical behaviour can be satisfactorily
described by an effective theory that keeps only the asymptotic
long-wavelength degrees of freedom of the original Hamiltonian. On
the other hand, specific properties which depends on the microscopic
details of the model, like e.g. the critical temperature, are not
accessible through this coarse graining procedure, and are called
non-universal properties. The paradigm of universality shifts the
focus of the investigation from the specific model reproducing a
peculiar critical phenomenon-occurring in nature to the study of an
entire class of seemingly unrelated problems which are governed by
the same universal features. Beyond its undoubted theoretical
interest, this universality classification has an obvious practical
advantage: relevant informations regarding a specific Hamiltonian
model may be gleaned by exploring a simpler instance of the same
universality class.

The above consideration on finite temperature phase transitions can
be illustrated by turning to a specific example, the classical Ising
model with ferromagnetic next-neighbour interaction, described by
the Hamiltonian
\begin{equation}
H= -J\sum_{<\rr \rr'>}\sz_{\rr}\sz_{\rr'} - h \sum_{\rr}\sz_{\rr}
\end{equation}
where, $\sum_{<\rr \rr'>}$ denotes summation over next-neighbouring
sites of a $d$-dimensional hyper-toric lattice, and $\sz_{\rr}$ is a
Pauli $z$ matrix encoding a classical spin variable (assuming values
$\pm 1$), which lies on the $\rr-$th site of the lattice and which
couples to a magnetic field of intensity $h$. For a vanishing
magnetic field, the model shows a $\mathbb{Z}_{2}$ symmetry which
may be spontaneously broken in a ferromagnetic phase with non-zero
spontaneous magnetisation $m=1/N\langle \sum_{\rr}
\sz_{\rr}\rangle_{h\to 0}$. The system undergoes a phase transitions
as the critical temperature $T_{c}$ is crossed from below, from the
ferromagnetic phase ($T\le T_{c}$) to a $\mathbb{Z}_{2}$ symmetric
paramagnetic phase, with $m=0$ ($T>T_{c}$). In the special cases of
$d\le 2$, this model can be exactly solved with the transfer matrix
method. For $d=1$, no spontaneous magnetisation are exhibited at any
finite temperature $T>0$; whereas, at $T=0$ the system shows
long-range correlation, with a correlation length $\xi$ which
diverges exponentially as $T\to T_{c}=0$. For $d\ge2$, the classical
Ising model exhibits positive critical temperatures $T_{c}>0$. One
can appreciate in this example an instance of a general feature of
classical critical models, i.e. the existence of a lower critical
dimension, in this case $d_{c}^{l}=1$, which denotes the highest
integer dimension for which a model displays a vanishing critical
temperature. In the $d=2$ case, the model can be mapped to a
one-dimensional transverse quantum Ising chain, which can be solved
exactly, and provides a means of calculating all critical exponents.
For example, the correlation length exponent is $\nu=1$, while the
exponent associated to the spontaneous magnetisation
$m=|T-T_{c}|^{\beta}$ is $\beta=1/8$.

\section{Quantum vs Classical Phase Transitions}
Let us turn briefly to a fundamental question: to what extent a
quantum mechanical formulation of a model is necessary in order to
understand its critical phenomena, and to what extent will classical
physics suffice? Are there distinctive features of a quantum
critical phenomenon which cannot be understood in terms of a
classical statistical model?

Within classical statistical mechanics, the statistical properties
of a model are described in terms of its (canonical) partition
function $Z:=\Tr[ e^{- \beta H}]$, with $\beta:=1/k_{B} T$ and
$H=K+U$, which factorises into one term depending on the kinetic
energy $K$ only and another that depends only on the potential
energy $U$, i.e. $Z:=\Tr[ e^{- \beta K}]\Tr[ e^{- \beta U}]$. As a
result, the static properties of the system can be studied
independently of the dynamical ones. Moreover, the kinetic energy,
which is generically expressed as $K:=\sum_{j} p_{j}^{2}/(2m)$ in
terms of the generalised momenta $p_{i}$, contributes to the
partition function $Z$ with a Gaussian factor $e^{-\beta K}$, which
is analytic. Hence, any singular behaviour of $Z$ which results in a
phase transitions can only arise from the potential energy factor. In
particular, the dynamical exponent $z$ is independent of all of
the other critical exponents, and the static critical behaviour can
be studied by means of an effective functional of a time-independent
order parameter. By contrast, in quantum systems the situation
differs quite fundamentally, as in general $[K,U]\neq 0$. One then
sees that the statics and the dynamics of quantum systems are
intrinsically coupled and need to be treated together and
simultaneously. As a consequence, the dynamical exponent $z$ becomes
an integral part of the set of exponents of a given universality
class.

A way of telling whether quantum mechanics is important in the
description of a critical phenomenon is comparing the thermal
fluctuations of the order parameter and its quantum fluctuations,
which are set by the smallest energy scale of the relevant quantum
degrees of freedom. A rule of thumb is contrasting the two most
significant energy scales, namely $\hbar \omega_{c}$, the energy of
long-distance order parameter fluctuations, and the thermal energy,
$k_{B} T$. In practise, one says that the order parameter
fluctuations changes its character from quantum to classical when
$\hbar \omega_{c}$ falls below $k_{B} T$. However, for any thermal
phase transitions, the typical energy scale vanishes as $\hbar
\omega_{c}\sim \xi^{z}\sim |T-T_{c}|^{z \nu}$. Therefore quantum
mechanics becomes necessarily unimportant, and the critical
behaviour at the transitions is dominated by classical fluctuations.
This explains the name ``classical transitions'' for transitions
occurring at finite temperature.

\section{Quantum Phase Transitions}
The situation is different for transitions occurring at $T = 0$,
driven by a set of non-thermal control parameters
$\bm{\lambda}=\{\lambda_{1}\dots \lambda_{N}\}$, like pressure or
magnetic field. In this case the fluctuations in the order parameter
are dominated by quantum mechanics, which therefore justifies
calling this type of criticality ``quantum'' phase
transitions~\cite{Sachdev2011}.

Quantum phase transitions are therefore criticalities arising for
$T=0$, where the system lies in its Hamiltonian ground state. The
ground state is generally \emph{uniquely} determined by the values
of the parameters $\bm{\lambda}$ of its Hamiltonian
$H(\bm{\lambda})$, unless something ``singular'' happens in the
spectrum of the many-body system for some critical values
$\bm{\lambda}_{c}$.  Such a non-analiticity may be due to a simple
level crossing in the many-body ground state. This possibility can
only arise when a $\lambda$ couples to a conserved quantity of the
full Hamiltonian, i.e. $H(\lambda)=H_{0}+\lambda H_{1}$, with
$[H_{0},H_{1}]=0$. This means that, while the eigenvalues will
change as a function of the Hamiltonian parameters, the eigenstates
will be independent of $\lambda$. Hence, a level-crossing may well
occur, creating a point of non-analyticity of the ground state
energy, however, it will not determine critical singularities in the
correlations, and it gives rise to a first-order quantum phase
transitions. This is a type of transitions that also finite-size
systems can exhibit.

A totally different story is what happens for continuous phase
transitions, which are characterised by higher-order singularities
in the energy density. This occurs when a system ground state
energy, whose finite-size spectrum displays an avoided level
crossing, reaches an infinitely sharp transitions in the thermodynamic
limit. This involves infinitely many eigenstates of the many-body
system, and the thermodynamic limit is needed for such a singularity
to arise. In this case, it is the non-commutativity of the Hamiltonian
terms which is responsible for the quantum fluctuations which drive
the systems across the quantum phase transitions. One might think
that phase transitions occurring at zero temperature are not
physically relevant to the actual world. However, one can show that
many finite temperature features of a system can be gleaned through
the properties of its quantum critical point.

From the above considerations, the point of singularity in the
ground state energy density is associated with an energy scale
$\Delta$ which vanishes as $\lambda$ approaches a critical value
$\lambda_{c}$. This energy scale is generally identified by the
energy difference between the ground state and the first excited
state, i.e. the \emph{energy gap}, and its dependence on the system
parameter is generally algebraic in the proximity of the criticality
i.e.
\begin{equation}
\Delta\sim J |\lambda-\lambda_{c}|^{\nu z}.
\end{equation}
Here $J$ is an energy scale associated to the microscopic details of
system couplings, and $z$ and $\nu$ are critical exponents
characteristic of the critical point $\lambda_{c}$, which are
defined as follows. Similarly to continuous thermal phase
transitions, a QPTs may be characterised by an order parameter
$O(\bm{r},t)$, which is an observable whose expectation value
vanishes continuously, as a function $\lambda$, across the critical
point $\lambda_{c}$, going from one phase (the \emph{ordered phase})
to the other (the \emph{disordered phase}). One can define a length
scale $\xi$ which typically characterises the exponential decay of
the equal-time two-point correlation function of the ground state,
\begin{equation} G(\bm{r}-\bm{r}'):=\bk{O(\bm{r},t)
O(\bm{r}',t)}-\bk{O(\bm{r},t)} \bk{O(\bm{r}',t)} \sim
\frac{e^{-|\bm{r}-\bm{r}'|/\xi}}{{|\bm{r}-\bm{r}'| }^{d-2+\eta}}.
\end{equation} Here, $\eta$ is another critical exponent, characterising the
power-law decay of correlations $G(\bm{r})\sim
|\bm{r}|^{-{d-2+\eta}}$ at exactly $\lambda=\lambda_{c}$. In quantum
continuous phase one invariably observes the algebraic divergence of
the correlation length approaching the critical point \begin{equation} \xi\sim
|\lambda-\lambda_{c}|^{-\nu}. \end{equation} Similarly, one can define a time
scale $\tau_{c}$ for the decay of equal-space correlations at
quantum phase transitions 
\begin{equation} \tau_{c} \sim \Delta^{-1} \propto
\xi^z \propto |\lambda-\lambda_c|^{-\nu z}. 
\end{equation}
%

%\section{Transverse Ising and XY Models}

\section{Geometric Phase and Criticality}
A characteristic that all non-trivial geometric evolutions have in
common is the presence of non-analytic points in the energy
spectrum. At these points, the state of the system is not well
defined owing to their degenerate nature. One could say that the
generation of a geometric phase (GP) is a witness of such singular
points. Indeed, the presence of degeneracy at some point is
accompanied by curvature in its immediate neighbourhood and a state
that evolves along a closed path is able to detect it. These points
or regions are of great interest to condensed matter or molecular
physicists as they determine, to a large degree, the behaviour of
complex quantum systems. The geometric phases are already used in
molecular physics to probe the presence of degeneracy in the
electronic spectrum of complex molecules. Initial considerations by
Herzberg \& Longuet-Higgins~\cite{Herzberg1963} revealed a sign
reversal when a real Hamiltonian is continuously transported around
a degenerate point. Its generalization to the complex case was
derived by Stone~\cite{Stone1976} and an optimisation of the real
Hamiltonian case was performed by Johansson \&
Sj\"oqvist~\cite{Johansson2004,Johansson2005}.

Geometric phases have been associated with a variety of condensed
matter and solid-state phenomena. They are at the core of the
characterisation of topological phase
transitions~\cite{Thouless1983,Bernevig2013,Chiu2016}, and have been
employed in the description and detection of QPT, both
theoretically~\cite{Carollo2005,Pachos2006,Plastina2006,Hamma2006,Zhu2006,Reuter2007,Patra2011}
and experimentally~\cite{Peng2010}. However, their connection to
quantum phase transitions has been put forward only in the last decade~\cite{Carollo2005,Pachos2006}. It was further elaborated by Zhu~\cite{Zhu2006}, where the critical exponents were
evaluated from the scaling behaviour of geometric phases, and by
Hamma~\cite{Hamma2006}, who showed that geometric phases can be used
as a topological test to reveal quantum phase transitions. The use
of GP in QPTs can be heuristically understood as follows. As we have
seen, quantum phase transitions are associated by dramatic
structural changes of the system state, resulting from small
variations of control parameters. These critical changes are
accompanied by the presence of degeneracies in the ground state
energy density of the many-body system. The degeneracy are at the
origin of the non-analiticity of the ground-state wave function,
which   characterises the long-range quantum correlations at
criticality~\cite{Carollo2005,Pachos2006,Plastina2006,Hamma2006,Zhu2006,Reuter2007,Patra2011}.

In this section we explore the ability of geometric phase and
related quantities to reveal quantum critical phenomena in many-body
quantum systems. The use of geometric phases provides a new
conceptual framework to understand quantum phase transitions, and at
the same time suggests novel viable approaches to experimentally
probe criticalities. This maybe done through the evolution of the
quantum many-body system in the neighbourhood of a critical point,
in a way that does not take the system directly through a quantum
phase transitions. The latter is hard to physically implement as it
is accompanied by multiple degeneracies that can take the system
away from its ground state.

Moreover, the geometric phase approach is not based on the
identification of  an order parameter - and therefore does not
require a knowledge of symmetry breaking patterns - or more in
general on the analysis of any distinguished observable, e.g.,
Hamiltonian, but it is a purely geometrical characterisation.

This geometric phase approach has been applied in quite a few
explicit
models~\cite{Carollo2005,Plastina2006,Reuter2007,Patra2011}.
However, for the sake of simplicity we will discuss two of them
which stand out for their richness and at the same time for their
simplicity, i.e. the paradigmatic one dimensional Ising, and XY models in
transfer magnetic field~\cite{Carollo2005,Pachos2006} and the Dicke
model~\cite{Plastina2006}, whose geometric phase properties will be
briefly discussed in the next few sections.

The one-dimensional version of the transverse Ising model emerged for the first time in
the resolution of the two-dimensional nearest-neighbor ferromagnetic
classical Ising model. The row-to-row transfer matrix of the
two-dimensional classical model converges to the transverse Ising
chain in a suitable limit~\cite{Lieb1961}, and its exact solution
soon followed~\cite{Katsura1962}. This correspondence represents the
prototypical example of a quantum-to-classical mapping. The model
was employed, shortly later, to reproduce the ferroelectric
transitions in Potassium Dihydrogen Phosphate~\cite{DeGennes1963}.

The Ising model and its generalisation, the XY model, are analytically solvable and they offer enough control parameters to support geometric evolutions. Moreover, its rich
criticality structure the XY model includes the XX critical model. By explicit calculations one can observe that an excitation of the model gains a non-trivial geometric phase if and
only if it circulates a region of criticality, a feature which
expresses the topological origin of the geometric phase.
The generation of this phase can be traced down to the presence of a
conical intersection of the energy levels located at the XX
criticality in an equivalent way used in molecular systems. The
scaling of the geometric phase can be used to obtain the critical
exponents that completely characterise the critical behaviour. It is
not hard to generalise these results to the case of an arbitrary
spin system, which sheds light on the understanding of more general
systems, where analytic solutions might not be available.

\subsection{Berry's phase}\label{sec:BP}
\indent Historically, the definition of geometric phase was
originally introduced by Berry in a context of closed, adiabatic,
Schr\"odinger evolution. What Berry showed in his seminal
paper~\cite{Berry1984} was that a quantum system subjected to a
slowly varying Hamiltonian manifests in its phase a geometric
behaviour due to the structure of the Hilbert space. When the
Hamiltonian of a system evolves cyclically and slowly enough, any
(non-degenerate) eigenstate of the initial Hamiltonian evolves
adiabatically following the instantaneous eigenspace. When the
Hamiltonian returns to its original value, the system is eventually
brought back to the {\em ray} of its initial state, acquiring a
phase that, apart from the usual dynamical phase, is geometrical in
nature. This phase is called \emph{Berry's phase}.

Let's summarise the derivation of the Berry phase and show how this
geometric property comes naturally from the solution of the
Schr\"odinger equation. Consider a Hamiltonian $H(\bm{\lambda})$
depending on some external parameters $\bm{\lambda}=(\lambda_1,
\lambda_2,\dots, \lambda_n)$, and suppose that these parameters can
be varied arbitrarily inside a space $\mathcal{M}$ (the parameter
space). Assume that for each value of $\lambda$ the Hamiltonian has
a completely discrete spectrum of eigenvalues, given by the equation
\begin{equation}\label{eq:spectr}
H(\bm{\lambda})\ket{n(\bm{\lambda})}=\epsilon_n(\bm{\lambda})\ket{n(\bm{\lambda})},
\end{equation}
where $\ket{n(\bm{\lambda})}$ and $\epsilon_n(\bm{\lambda})$ are
smooth concatenations of eigenstates and eigenvalues, respectively,
of $H(\bm{\lambda})$, as functions of the parameters $\bm{\lambda}$.
 Suppose that the values of the parameters evolve smoothly
along a curve $\bm{\lambda}(\tau)\in\mathcal{M}$ $(\tau\in[a,b])$,
respecting the prescription of the adiabatic theorem, i.e that the
rate at which the parameters evolve is low compared to the time
scales of the Bohr frequencies of the system
$(\epsilon_n(\bm{\lambda})-\epsilon_m(\bm{\lambda}))$ (as usual
assume $\hbar=1$). The adiabatic theorem states that, under such a
regime, an eigenstate of the system evolves following ``rigidly" the
transformation of the Hamiltonian: a system, initially prepared in an
eigenstate with eigenvalue $\epsilon_n(\bm{\lambda}(a))$, remains at
any instant $t$ of its evolution in the eigenspace
$\epsilon_n(\bm{\lambda}(t))$, i.e. the eigenspace smoothly
connected with the initial eigenspace $\epsilon_n(\bm{\lambda}(a))$.

Thus, in the simplest case of non-degenerate eigenvalues, since the
eigenspace is one-dimensional, the evolution of any eigenstate is
specified  by the spectral decomposition~(\ref{eq:spectr}) \emph{up
to a phase factor}. The adiabatic approximation introduces a
constraint only on the direction of the vector state at any instant
of time, and the eigenvalue equation (\ref{eq:spectr}) implies no
relation between the phases of the eigenstates
$\ket{n(\bm{\lambda})}$ at different $\bm{\lambda}$'s. Thus, for the
present purpose any (smooth) choice of phases can be made. Then, a
state $\ket{\psi_n}$ initially in the eigenstate
$\ket{n(\bm{\lambda}(a))}$, evolves as
\begin{equation}\label{esp:state}
\ket{\psi(t)_n}\simeq\exp\left\{-i\int_{a}^{t}\epsilon_n(\bm{\lambda}(\tau))d\tau\right\}\exp{i\phi^{B}_n(t)}
\ket{n(\bm{\lambda}(t))},
\end{equation}
where the first phase factor is the usual dynamic one, and the
second one is an additional phase that is introduced to solve the
dynamics of the system.

The novel idea introduced by Berry was to recognise that this
additional phase factor has an \emph{inherent geometrical meaning}.
The crucial point is that this phase $\phi^{B}_n$ is non-integrable,
i.e. it cannot be written as a single valued function of
$\bm{\lambda}$. Its actual value must be determined as a function of
the path followed by the state during its evolution. The most
important thing is that this value depends only on the geometry of
this path, and not on the rate at which it is traversed. Or, in a
more formal way, it is independent of the parameterisation of the
path.

Under the assumption of the adiabatic approximation, this phase can
be determined by requiring that $\ket{\psi(t)_n}$ satisfies the
solution of the Schr\"odinger equation. A direct substitution of the
expression (\ref{esp:state}) into
\begin{equation}
H(\bm{\lambda}(t))\ket{\psi(t)_n}=i\frac{d}{dt}\ket{\psi(t)_n},
\end{equation}
leads to
\begin{equation}\label{esp:dgamma}
\frac{d\phi^{B}_n(t)}{dt}=
i\bra{n(\bm{\lambda}(t))}\frac{d}{dt}\ket{n(\bm{\lambda}(t))}.
\end{equation}
Therefore $\phi^{B}_n(t)$ can be represented by a path integral in
the parameter space $\mathcal{M}$,
\begin{equation}\label{esp:intgam}
\phi^{B}_n(t)=i\int_a^b\bra{n(\bm{\lambda}(t))}\frac{d}{dt}\ket{n(\bm{\lambda}(t))}dt=\int_{\bm{\lambda}(a)}^{\bm{\lambda}(b)}
\bm{A}^{B},
\end{equation}
where $\bm{A}^{B}:=\sum_{\mu} A^{B}_{\mu}d\lambda_{\mu}$ is called
\emph{Berry connection} (one-form), a differential form, whose
elements, with respect to the local coordinates $\{\lambda_{\mu}\}$,
are defined as
\begin{equation}\label{def:1form}
A_{\mu}^{B}=i\bra{n(\bm{\lambda})}\partial_{\mu}\ket{n(\bm{\lambda})},
\end{equation}
where $\partial_{\mu}:=\partial/\partial\lambda_{\mu}$. This phase
$\phi^{B}_n$ becomes physically relevant and non-trivial only when
the parameters are changed along a closed path, such that
$\bm{\lambda}(a)=\bm{\lambda}(b)$. Otherwise, the geometrical phase
can be factored out by choosing a suitable eigenbasis. This
non-trivial phase, \emph{the Berry phase}, is then given by
\begin{equation}\label{eq:BP}
\phi^{B}_n(C)=\oint_C \bm{A}^{B}.
\end{equation}
This is nothing but a line integral of a vector potential,
(analogous to the electromagnetical vector potential) around a
closed path in the parameter space. As this quantity is not
identically zero, it implies that the phase acquired is not
integrable in nature. It is not possible to define the phase as a
single valued function of the parameter space, because {\em the
phase depends on the previous history} of the state, i.e. on the
path that it has followed to arrive to this point. By exploiting the
Stokes theorem in the n-dimensional space $\mathcal{M}$, this
quantity can be written as an integral on the (oriented) surface
$\Sigma(C)$ bounded by the closed curve $C$
\begin{equation}
\phi^{B}_n(C)=\int_{\Sigma(C)} \bm{F}^{B},
\end{equation}
where $\bm{F}^{B}:=d\bm{A}^{B}=\frac{1}{2}\sum_{\mu\nu}
F^{B}_{\mu\nu} d\lambda_{\mu}\wedge d\lambda_{\nu}$ is the
\emph{Berry curvature} (differential two-form), where
$d\lambda_{\mu}\wedge d\lambda_{\nu}$ is the infinitesimal surface
element of $\Sigma(C)$, spanned by the two independent directions
$\lambda_{\mu}$ and $\lambda_{\nu}$ of $\mathcal{M}$, and
$[F^{B}_{\mu\nu}]$ is an antisymmetric tensor field (analogous to
the electromagnetic tensor field), with components
 \begin{equation}\label{eq:2form}
 F^{B}_{\mu\nu}:=\partial_{\mu} A_{\nu} - \partial_{\nu} A_{\mu}=\bk{\partial_{\mu} n(\bm{\lambda})|\partial_{\nu} n(\bm{\lambda})}-\bk{\partial_{\nu} n(\bm{\lambda})|\partial_{\mu} n(\bm{\lambda})},
 \end{equation}
or, in a coordinate independent way,
\begin{equation}
\bm{F}^{B} = \bra{d n(\bm{\lambda})}\wedge \ket{d n(\bm{\lambda})},
\end{equation}
where $\ket{d n}:=d\ket{ n}$. \footnote{Throughout this review we
will sometimes make use of ``$d$'' in the sense of exterior
derivative. In differential geometry, the exterior derivative is the
generalisation of the concept of differential applied to $k$-forms.
For a scalar $f$ (a differential $0$-form), the application of the
exterior derivative yields the usual differential of calculus, i.e.
$df=\sum_{\mu}\partial_{\mu}f d \lambda_{\mu}$. In general, the
exterior derivative is the unique mapping from $k$-forms to
$(k+1)$-forms, satisfying the following properties:
\begin{enumerate}
\item $df$ is the differential of $f$ for smooth functions $f$;
\item $d(df)=0$, (or in short $d^{2}=0$), for any (smooth) form $f$;
\item $d(f\wedge g)=df \wedge g + (-1)^{k} f \wedge dg$, where $f$ is a $k$-form.
\end{enumerate}
This leads to a generalised Stokes' theorem (or the generalised fundamental theorem of calculus), in the following form;\\
if $\mathcal{M}$ is a compact smooth orientable n-dimensional
manifold with boundary $\partial {\mathcal{M}}$, and $\omega$ is an
$(n-1)$-form on $\mathcal{M}$, then~\cite{Nakahara1990}
\[ \int _{\mathcal{M}} d\omega = \int _{\partial {\mathcal{M}}}\omega\,.  \]
} As already pointed out, $\ket{n(\bm{\lambda})}$ is only one choice
out of infinitely many possible concatenation of states that satisfy
the eigenvalue Eq.~(\ref{eq:spectr}). If
$\ket{n(\bm{\lambda})}$ is a solution of~(\ref{eq:spectr}),
$e^{i\alpha(\bm{\lambda})}\ket{n(\bm{\lambda})}$ also satisfies the
same eigenvalue problem. The freedom of choice of a particular
$\ket{n(\bm{\lambda})}$ is often referred to a \emph{gauge freedom},
in analogy to the $U(1)$ gauge freedom of the electromagnetic vector
potential. From the definition (\ref{def:1form}) it follows that the
Berry connection $A$ does depend on the gauge choice; indeed it
transforms according to
\begin{eqnarray}
\ket{n(\bm{\lambda})} & \rightarrow &\ket{n^\prime(\bm{\lambda})}=e^{-i\alpha(\bm{\lambda})}\ket{n(\bm{\lambda})}\nonumber\\
A_\mu^{B} & \rightarrow &
{A^{B}}^\prime_{\mu}=A_\mu^{B}+\partial_{\mu}\alpha(\bm{\lambda})\nonumber
\end{eqnarray}
Notice that such phase transformation reveals a gauge structure of
the Berry connection analogous to the one of an electromagnetic
vector potential. By following the electromagnetic analogy, we can
expect that, although $A$ is gauge dependent, the tensor field
$F^{B}_{\mu\nu}$ should be invariant under this transformations. It
is easily verified that under gauge change, one gets
\begin{equation}\label{eq:gaugeinvF}
    F^{B}_{\mu\nu}\to {F^{B}}^{\prime}_{\mu\nu}=\partial_\mu {A^{B}}^{\prime}_{\nu} - \partial_\nu {A^{B}}^{\prime}_{\nu}=\partial_{\nu} A_\mu -
    \partial^{2}_{\mu\nu}\alpha  - \partial_\nu A^{B}_\mu+\partial^{2}_{\nu\mu} \alpha= F^{B}_{\mu\nu},
\end{equation}
which consequently demonstrates the gauge independence of
$\phi^{B}_{n}(C)$ itself. This reinforces the idea that
$\phi^{B}_n(C)$ is indeed a physical observable effect, independent
of \emph{unessential phase choices}. Moreover, the expression of
$\phi^{B}_n(C)$ as a path integral in the parameter space guarantees
that it \emph{does not depend on the rate of traversal} of the
circuit $C$, provided the adiabatic approximation holds. Therefore,
the Berry phase, a natural consequence of the Schr\"odinger
evolution and the adiabatic approximation, respects the essential
requirements to be a geometric feature: (i) gauge independence, (ii)
parameterisation invariance. As we already mentioned, the eigenvalue
equation (\ref{eq:spectr}) implies no relation between the phases of
the instantaneous eigenstates $\ket{n(\bm{\lambda})}$ at different
$\bm{\lambda}$. It is Eq.~(\ref{esp:dgamma}) which imposes
constraints on the phase acquired by the time dependent eigenstates.
By subtracting the
 dynamical phase, this constraints can be rephrased in a
compact form. By absorbing the phase factor into the definition of
the eigenstates as follows
\begin{equation}\label{}
 \ket{\tilde{n}(t)}=e^{i\phi^{B}(t)}\ket{n(\bm{\lambda}(t))},
\end{equation}
Eq.~(\ref{esp:dgamma}) becomes a condition on the possible
eigenstates $\tilde{n}$ satisfying the time evolution along
$\phi^{B}(t)$
 \begin{equation}\label{eq:ParTraspCond}
  \bra{\tilde{n}(t)}\frac{d}{dt}\ket{\tilde{n}(t)}=0.
\end{equation}
This constraint is the \emph{parallel transport condition}, which
literally requires the time derivative of the instantaneous
eigenvector to have vanishing component along the direction of the
eigenvector itself. The term parallel transport is to be understood
in the sense that neighbouring states along the curve are chosen
``as parallel as possible''. This quantitatively means
that Eq.~(\ref{eq:ParTraspCond}) maximises the scalar product between
infinitesimally closed states
\begin{equation}\label{eq:maximiseParal}
|\bk{\tilde{n}(t)|\tilde{n}(t+dt)}|^{2}\simeq
1-2|\bra{\tilde{n}(t)}\frac{d}{dt}\ket{\tilde{n}(t)}|dt.
\end{equation}
Solving Eq.~(\ref{eq:ParTraspCond}) amounts to choosing a
particular smooth concatenation of eigenstates $\ket{\tilde{n}(t)}$
with a special property: each state and its neighbouring are
\emph{in phase}, i.e. $\arg{\bk{\tilde{n}(t)|\tilde{n}(t+dt)}}\simeq
0$. Although, \emph{locally} the states are in phase, a \emph{global
phase} accumulates as the path is traversed. If compared, the two
endpoints of this chain reveal a relative phase which is the Berry's
phase
\[\langle\tilde{n}(T)\ket{\tilde{n}(0)}=e^{i\phi^{B}(C)}.
\]
This is the original result of Berry: the state of the system, after
a closed adiabatic evolution, returns to a state $\tilde{n}(T)$ that
gains an irreducible part in its phase $\phi^{B}({\cal C})$, in
addition to the dynamical contribution. This phase, analogously to
the definition of phase that we have shown in the previous section,
has an inherent geometrical meaning, since it does not depend on
either the detail of time evolution or unessential phase
transformations.\\

\subsection{A Simple Two-Level System}
Before going into the details of the geometric phase for a many-body
system, it pays to have a brief detour to the simplest, yet
significant example of geometric phase, i.e. the one arising in a
two-level system. As in the previous subsection, we will deal with
geometric phases arising from the adiabatic evolution of the
eigenstate of a Hamiltonian. Consider a $2\times2$ Hamiltonian
$H(\bm{\lambda})$, where $\bm{\lambda}$ is a set of parameters. It
can always be expressed, up to an irrelevant identity matrix, as 
\begin{equation}
 H(\bm{\lambda})=\mathbf{n}(\bm{\lambda})\cdot
\bm{\sigma}=|\mathbf{n}| g(\theta,\varphi) \sz
g^\dagger(\theta,\varphi),\end{equation} where
$\tan{\theta}:=\sqrt{n_x^2+n_y^2}/ n_z$ and
$\tan{\varphi}:=n_y/n_x$. Here

\begin{equation}
  g=e^{-i\sz\varphi/2}e^{-i\sy\theta/2}
\end{equation}
   is a $SU(2)$
transformation which rotates the operator
$\mathbf{n}\cdot\vec{\sigma}$ to the z-direction, and the vector,
$\bm{\sigma}:=(\sx,\sy, \sz)$, of Pauli's operators is given by 
\begin{equation}
 \sx = \left(\begin{array}{cc}
0 & 1 \\
1 & 0
\end{array}\right) ,\Sp
\sy = \left(\begin{array}{cc}
0 & -i \\
i & 0
\end{array}\right) ,\Sp
\sz = \left(\begin{array}{cc}
1 & 0 \\
0 & -1
\end{array}\right).
\end{equation} Using this parameterisation one can represent the Hamiltonian as
a tridimensional vector on a sphere, centred  in the point of
degeneracy of the Hamiltonian ($|\mathbf{n}|=0$), (see
Fig~\ref{fig:Sphere}).

For $\theta=\varphi=0$ we have that $g=\one$ and the two eigenstates
of the system are given by $|+\rangle=(1,0)^T$ and
$|-\rangle=(0,1)^T$ with corresponding eigenvalues
$E_\pm=\pm|\mathbf{n}|/2$. Let us consider the evolution resulting
when a closed path $C$ is spanned adiabatically on the sphere.
Following the general considerations it is easy to show that the
only non-zero component of the Berry connection, $\mathbf{A}^{B}$,
is given by
\[
A^{B}_{\varphi}=\pm \frac{1}{2}\left(1-\cos\theta \right)
\]
that leads to the Berry  phase
\begin{figure}
\begin{center}
\includegraphics[width=0.45\textwidth]{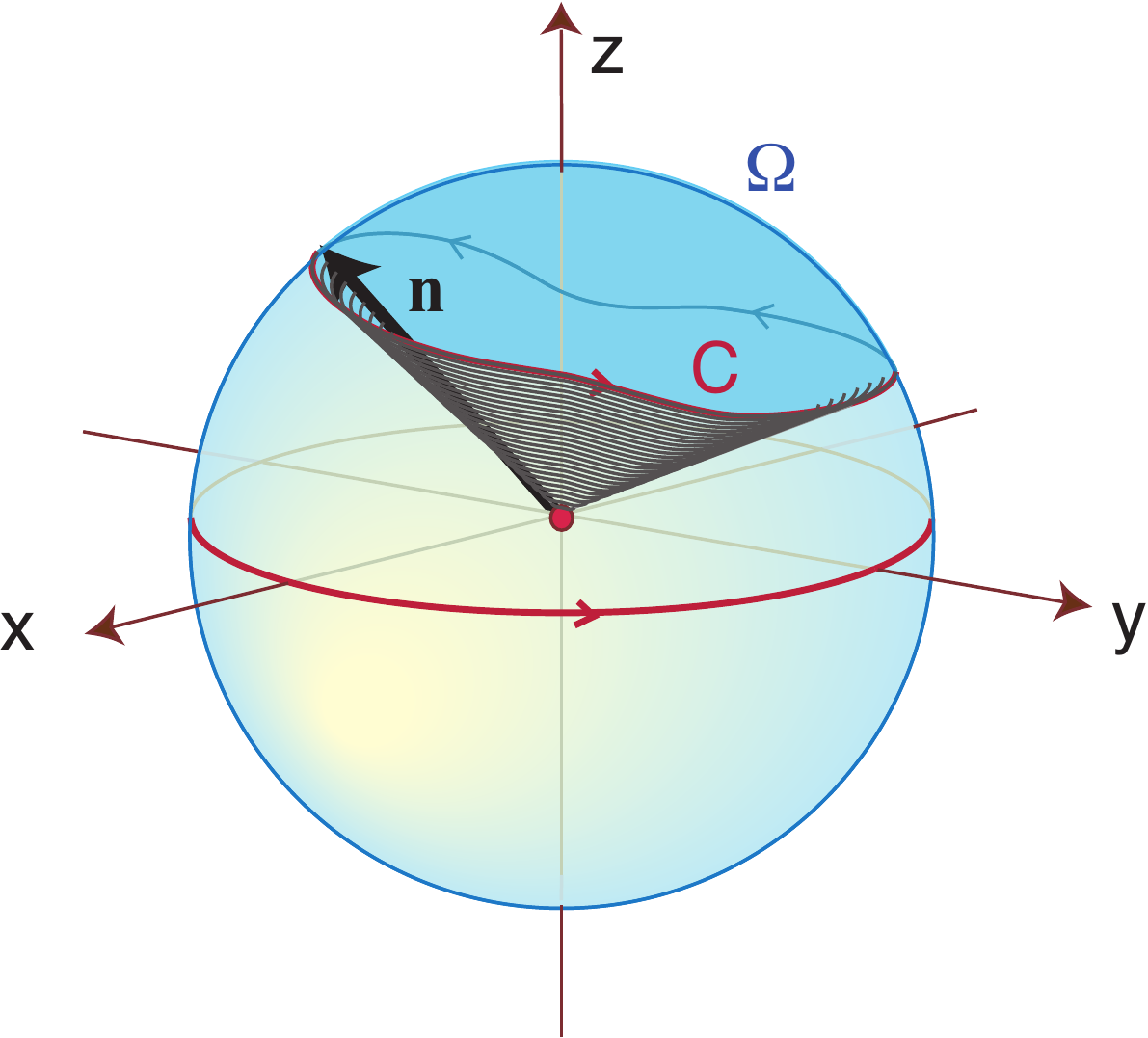}
\caption{The geometric phase is proportional to the solid angle
  spanned by the Hamiltonian with respect of its degeneracy point.}\label{fig:Sphere}
\end{center}
\end{figure}
\begin{equation}\label{spinphase}
\phi^{B}(C)=\oint_{C}\mathbf{A}^{B}\cdot\\dr=\frac{1}{2}
\int_{\Sigma(\theta,\varphi)}\!\!\!\sin{\theta}\;d\theta
d\varphi=\frac{\Omega}{2}.
\end{equation}
Here $\Omega=\int\!\!\int_{\Sigma}sin\theta d\theta d \varphi$ is
the solid angle enclosed by the loop, as seen from the degeneracy
point. In this expression the geometric origin of the Berry phase is
evident. Its value depends only on the path followed by the
parameters and not on the detail of the evolution, and indeed it
depends only on the way in which these parameters are changed in
relation with the degeneracy point of the Hamiltonian.

A particularly interesting case is the one in which the Hamiltonian
can be casted in a real form, corresponding in this case to
$\theta=\pi/2$. In this case the phase~(\ref{spinphase}) becomes
$\pi$ reproducing the sign change of the eigenstate, as this is
circulated around a point of degeneracy ($|\mathbf{n}|=0$), in
agreement with the Longuet-Higgins theorem.

\subsection{The XY Model and Its Criticalities}\label{sec:XY}
%{\color{red} shall we review XY earlier?, shall we make it a thourough intro?}
The first model that we will employ to illustrate the connection
between geometric phases and critical systems is the spin-1/2 chain
with XY interactions. This is a one-dimensional model where the
spins interact with their nearest neighbours via the Hamiltonian
\begin{equation}
\label{HXYModel} H = -\sum_{j=1}^{n} \left(\frac{1 + \delta}{2}\sx_j
\sx_{j+1} + \frac{1 - \delta}{2}\sy_j \sy_{j+1} +\frac{h}{2}\sz_j
\right),\nonumber
\end{equation}
where $n$ is the number of spins, $\sigma^\mu_j$ are the Pauli
matrices at site $j$, $\delta$ is the x-y anisotropy parameter and
$h$  is the strength of the magnetic field. This model was first
solved explicitly by Lieb, Schultz and Mattis~\cite{Lieb1961} and by
Katsura~\cite{Katsura1962}. Since the XY model is exactly solvable
and still presents a rich structure it offers a benchmark to test
the properties of geometric phases in the proximity of
criticalities.

In particular, we are interested in a generalization of Hamiltonian
(\ref{HXYModel}) obtained by applying to each spin a rotation of
$\varphi$ around the z-direction
\begin{equation}
 \label{eq:HXYphi}
H(\varphi) = g(\varphi)H g^\dag (\varphi)\quad  \textrm{ with }\quad
g(\varphi) = \prod_{j=1}^{n} e^{i\sz_j\varphi},
\end{equation}
in the same way as we did in the case of a single spin-1/2. The
family of Hamiltonians that is parameterised by $\varphi$ is clearly
isospectral and, therefore, the critical behavior is independent of
$\varphi$. This is reflected in the symmetric structure of the
regions of criticality shown in Figure~\ref{criticality}. In
addition, due to the bilinear form of the interaction terms we have
that $H(\varphi)$ is $\pi$-periodic in $\varphi$. The Hamiltonian
$H(\varphi)$ can be diagonalized by a standard
procedure~\cite{Carollo2005}, which can be summarised in  the
following three steps:
\begin{itemize}
\item the Jordan-Wigner transformation, which converts
 spin operators into Fermionic operators via the relations,
\begin{equation}\label{eq:JW} c_l:=\left(\prod_{m<l}\sz_m\right)(\sx_l+i\sy_l)/2, \quad
\{c_{j},c^{\dagger}_{l}\}=\delta_{jl},\quad \{c_{j},c_{l}\}=0; 
\end{equation}
 \item a Fourier transform,
\begin{equation} d_k=\frac{1}{\sqrt{n}}\sum_{l} c_l e^{-i2\pi l k/n}  \textrm{ ,
with } k=-0,\dots, n-1, \end{equation} with , \quad
$\{d_{k},d^{\dagger}_{k'}\}=\delta_{kk'}$, $\{d_{k},d_{k'}\}=0$;
\item a Bogoliubov transformation, which defines the Fermionic
operators, \begin{equation} b_k=d_k \cos\frac{\theta_k}{2}-id_{-k}^\dag
e^{i\phi}\sin\frac{\theta_k}{2}, \end{equation} where the angle $\theta_k$ is
defined as $ \theta_k:=\arccos (\eta_k/\varepsilon_k)$ with
$\eta_k:=\cos{2\pi k \over n}- h$ and
\end{itemize}
\begin{equation} \varepsilon_k:=\sqrt{\eta_k^2+ \delta^2 \sin^2{2\pi k \over n}},
\end{equation} is the energy of the single eigenmode $d_{k}$ of pseudo-momentum
$k$, called \emph{energy dispersion relation}. These procedures
diagonalise the Hamiltonian to a form
\begin{equation}
\label{Hdiagonal} H(\varphi)=\sum_{k=0}^{M}\varepsilon_k b_k^\dag
b_k.
\end{equation}
where either $M=n/2-1$, if $n$ is even, or $M=(n-1)/2$, if $n$ is odd.
The ground state $\ket{g}$ of $H(\varphi)$ is the vacuum
 of the Fermionic modes, $b_k$,
given by
\begin{equation}\label{eq:GSXY}
\ket{g}\!:=\bigotimes_{ k}\Big(\!\cos {\theta_k \over 2}
\ket{0}_{\!k}\ket{0}_{\!\!-k} \!\!-ie^{i\varphi} \sin {\theta_k
\over 2} \ket{1}_{\!k} \ket{1}_{\!\!-k}\!\Big),
\end{equation}
where $\ket{0}_{k}$ and $\ket{1}_k$ are the vacuum and single
excitation of the k-th mode, $d_k$, respectively. The energy gap is
clearly given by the minimum of the energy dispersion relation
$\varepsilon_k$. From (\ref{eq:GSXY}) one can interpret the ground
state as the direct product of $n$ spins, each one oriented along
the direction $(\varphi,\theta_k)$. The critical points in the XY
model are determined by the conditions under which the ground and the first
excited states become degenerate, which in this case amounts to a
vanishing energy dispersion relation $\varepsilon_k$. This is the
only condition in which singularities may arise. There are two
distinct regions of the space diagram that are critical. When
$\delta=0$, we have $\varepsilon_{k}=0$ for $-1\le h\le 1$, which is
a first-order phase transitions with an actual energy crossing and
critical exponents $z=2$ and $\nu=1/2$. The other critical region is
given by $h=\pm 1$ where one finds $\varepsilon_k=0$ for all
$\delta$. These are continuous phase transitions, with finite-size
ground-state spectrum having avoiding-crossing. When $\delta=1$ and
$h=\pm 1$, we obtain the Ising critical model with critical
exponents $z=1$ and $\nu=1$. Finally, let us consider the
criticality behaviour of the rotated $H(\varphi)$. Clearly, the
energy dispersion relation $\varepsilon_k$ does not depend on the
angle $\varphi$, as this parameter is related to an isospectral
transformation. Hence, the criticality region for the rotated
Hamiltonian is obtained just by a rotation around the $h$ axis. This
is illustrated in Figure \ref{criticality}, where the Ising-type
criticality corresponds now to two planes at $h =1 $ and $h=-1$ and
the XX criticality is along the $h$ axis for $|h|<1$.

\begin{figure}
\begin{center}
\includegraphics[width=\textwidth]{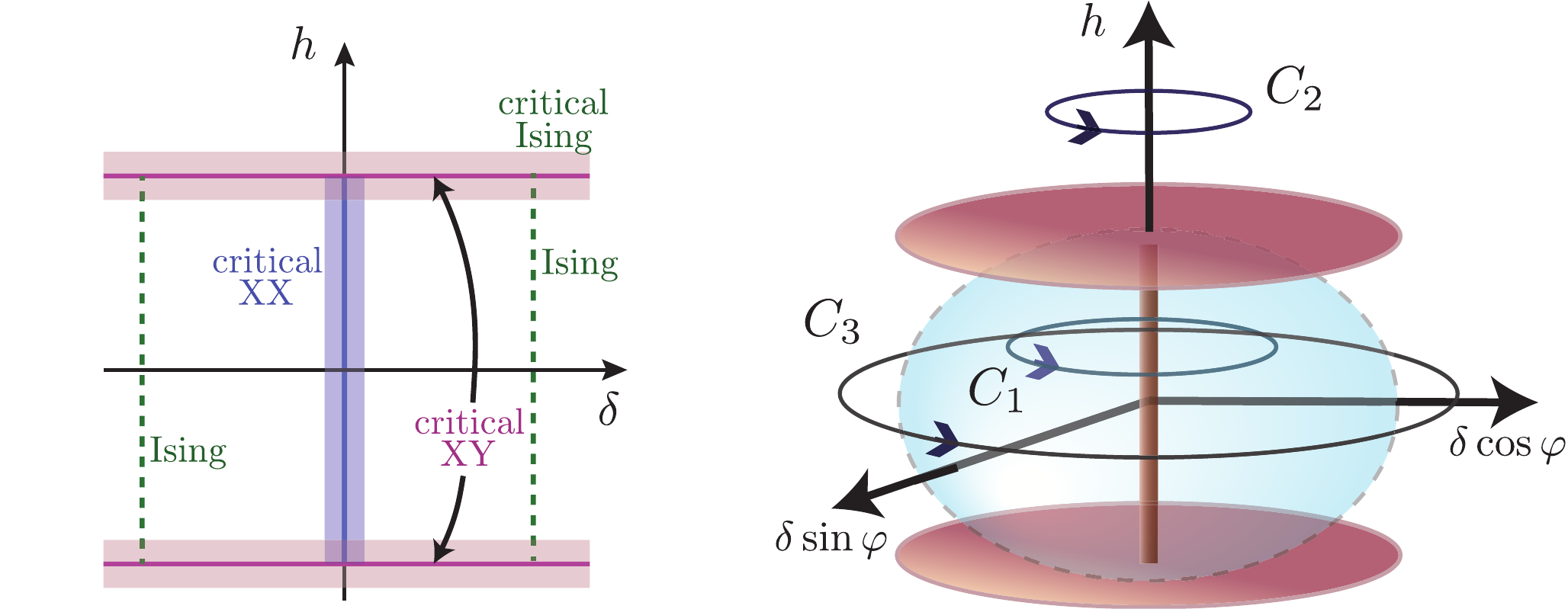}
\vspace{0.5cm} \caption{On the left, the regions of criticality of the $H$
Hamiltonian are presented as a function of the parameters $ h $ and
$\delta$. The corresponding regions of criticality for the Hamiltonian
$H(\varphi)$ are shown on the right, where $\varphi$ parameterises a rotation around the $h$
axis. Possible paths for the geometrical evolutions are depicted,
spanned by varying the parameter $\varphi$.} \label{criticality}
\end{center}
\end{figure}

\subsection{Geometric Phases and XY Criticalities}

Figure 2 depicts the critical points of the XY model. Now, we are
interested in obtaining looping trajectories in the parameter space
described by the Hamiltonian variables $h$, $\delta$ and $\varphi$.
The aim is to determine the geometric evolutions corresponding to
these paths and relate them to regions of criticality. An especially
interesting family of closed paths is the one in which loops
 circulates around the $h$ axis, as the parameter $\varphi$ varies
 from zero to $\pi$. Indeed, these paths may enclose the XX
criticality~\cite{Sachdev2011}, depending on whether $-1<h<1$. It is
possible to evaluate the corresponding geometric phases of the
ground and the first excited states as a function of $h$ and
$\delta$.

\begin{figure}
\begin{center}
\includegraphics[width=\textwidth]
{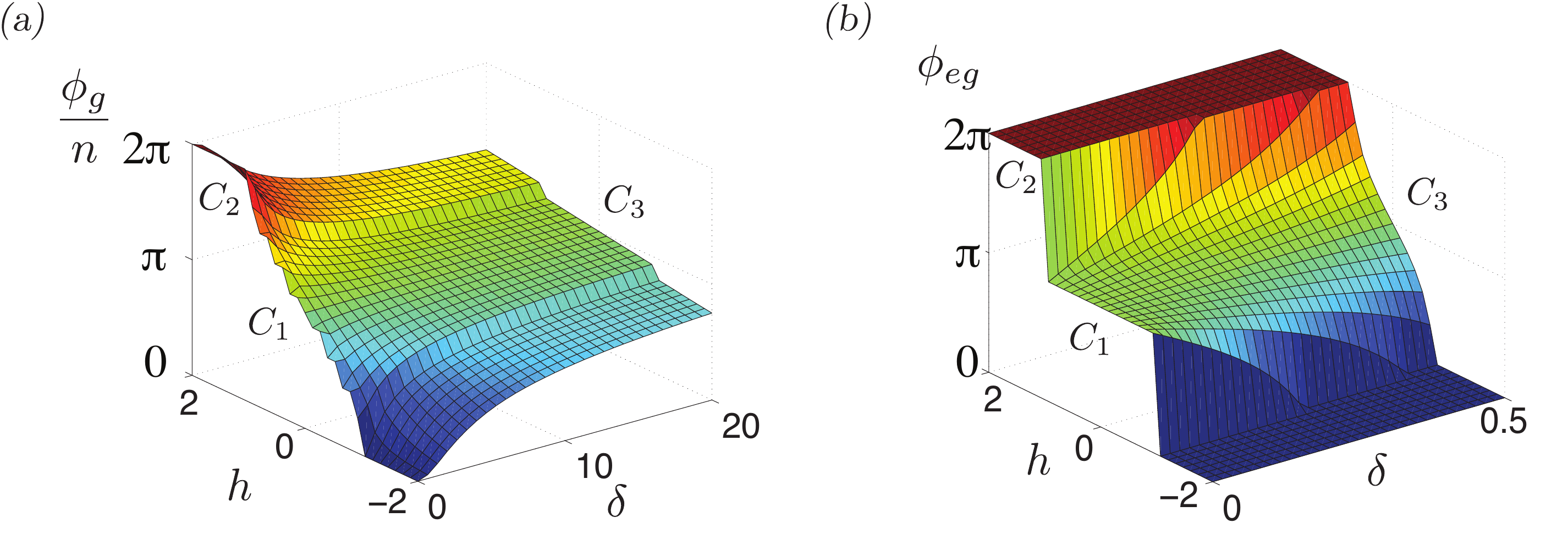} \vspace{0.5cm} \caption{The geometric phase corresponding
to the ground state (a) and the relative one between the ground and
excited state (c) as a function of the path parameters $h $ and
$\delta$. Each point of the surface corresponds to the geometrical
phase for a path that is spanned by varying $\varphi$ from $0$ to
$\pi$ for certain $h$ and $\delta$. The values of the geometric
phase corresponding to the loops $C_1$, $C_2$ and $C_3$ in Figure
\ref{criticality} are also indicated.} \label{Fig:Berry}
\end{center}
\end{figure}

Using the standard formula~\eqref{esp:intgam}, it is easy to show that the Berry phase
of the ground state $\ket{g}$ is given by
\begin{equation}
\label{geomphase} \phi_g =
-i\int_0^{2\pi}\bra{g}\frac{\partial}{\partial\varphi} \ket{g}\, d\varphi =
\sum_{k>0}\pi(1-\cos\theta_k).
\end{equation}
This result can be understood by considering the form of $\ket{g}$,
which is a tensor product of states, each lying in the two
dimensional Hilbert space spanned by $\ket{0}_{k}\ket{0}_{-k}$ and
$\ket{1}_k\ket{1}_{-k}$. For each value of $k(>0)$, the state in
each of these two-dimensional Hilbert spaces can be represented as a
Bloch vector with coordinates $(\varphi,\theta_k)$. A change in the
parameter $\varphi$ determines a rotation of each Bloch vector about
the $z$ direction. A closed circle will, therefore, produce an
overall phase given by the sum of the individual phases as given in
(\ref{geomphase}) and illustrated in Figure \ref{Fig:Berry}(a).

Of particular interest is the relative geometric phase between the
first excited and ground states given by the difference of the Berry
phases acquired by these two states. The first excited state is
given by
\begin{equation}\label{excitedstate}
\ket{e_{k_0}} =\ket{1}_{\!k_0}\ket{0}_{\!\!-k_0}
\bigotimes_{k\neq k_0}\Big(\!\cos {\theta_k \over 2}
\ket{0}_{\!k}\ket{0}_{\!\!-k} \!\!-ie^{i\varphi} \sin {\theta_k
\over 2} \ket{1}_{\!k} \ket{1}_{\!\!-k}\!\Big),
\end{equation}
with $k_0$ corresponding to the minimum value of the energy
dispersion function $\varepsilon_k$. The behavior of this state is
similar to a direct product of only $n-1$ spins oriented along
$(\varphi,\theta_k)$, where the state of the spin corresponding to
momentum $k_0$ does not contribute any more to the geometric phase.
Thus the relative geometric phase between the ground and the excited
state becomes
\begin{equation}
\label{connectionExcited} \phi_{eg} := \phi_e-\phi_g = -\pi (1-\cos
\theta_{k_0})\,.
\end{equation}
In the thermodynamic limit ($N \to \infty$), $\phi_{eg}$ takes the
form
\begin{equation}
\label{GPExcGrndSmallGamma} \phi_{eg}=\left\{\begin{array}{cl}
0,      &    \textrm{for $|h |>1-\delta^2$} \\
-\pi+{\pi h \delta \over
  \sqrt{(1-\delta^2)(1-\delta^2-h^2)}},     &     \textrm{for
  $|h |<1-\delta^2$}
\end{array}\right.
\end{equation}
where the condition $|h | > 1-\delta^2$ constrains the excited state
to be completely oriented along the $z$ direction resulting in a
zero geometric phase. As can be seen from Figure~\ref{Fig:Berry}(b),
the most interesting behavior of $\varphi_{eg}$ is obtained in the
case of $\delta$ small compared to $h$. In this case $\phi_{eg}$
behaves as a step function, giving either $\pi$ or $0$ phase,
depending on whether $|h|<1$ or $|h|>1$, respectively. This
behaviour is precisely determined by the topological property of the
corresponding loop, i.e. whether a critical point is encircled or
not. This property can therefore be used to witness the presence of
a criticality. More precisely, in the $|h |<1-\delta^2$ case, one
can identify the first contribution with a purely topological term,
while the second is an additional geometric
contribution~\cite{Bohm2003}. In other words, this first part gives
rise to phase whose character depends only on whether the loops can
be adiabatically shrunk to a point, i.e. on whether it is
contractible. This is a purely topological character of the
trajectory traced by the $(\varphi,\theta_k)$ coordinates. In
particular if $n$ circulations are performed then the topological
phase is $n\pi$, where $n$ is the winding number. The second term is
geometric in nature and it can be made arbitrarily small by tuning
appropriately the couplings $h$ or $\delta$. This idea is
illustrated in figure~\ref{conical}, where the energy surface of
ground and first excited state is depicted. The point of degeneracy
is the intersection of the two surfaces. This is the point where the
energy density is not analytical. Consider the case of a family of
loops converging to a point. In the trivial case in which the
limiting point does not coincide to a degeneracy, the corresponding
Berry phase converges to zero. If instead, the degeneracy point is
included, the Berry phase tends to a finite value~\cite{Hamma2006}.

To better understand the properties of the relative geometric phase,
we focus on the region of parameters with $\delta\ll 1$. In this
case, it can be shown~\cite{Carollo2005} that the Hamiltonian, when
restricted to its lowest energy modes, can be casted in a {\em real}
form and, for $|h|<1$, its eigenvalues present a {\em conical
intersection} centered at $\delta=0$.
\begin{figure}
\begin{center}
\includegraphics[width=4in]
{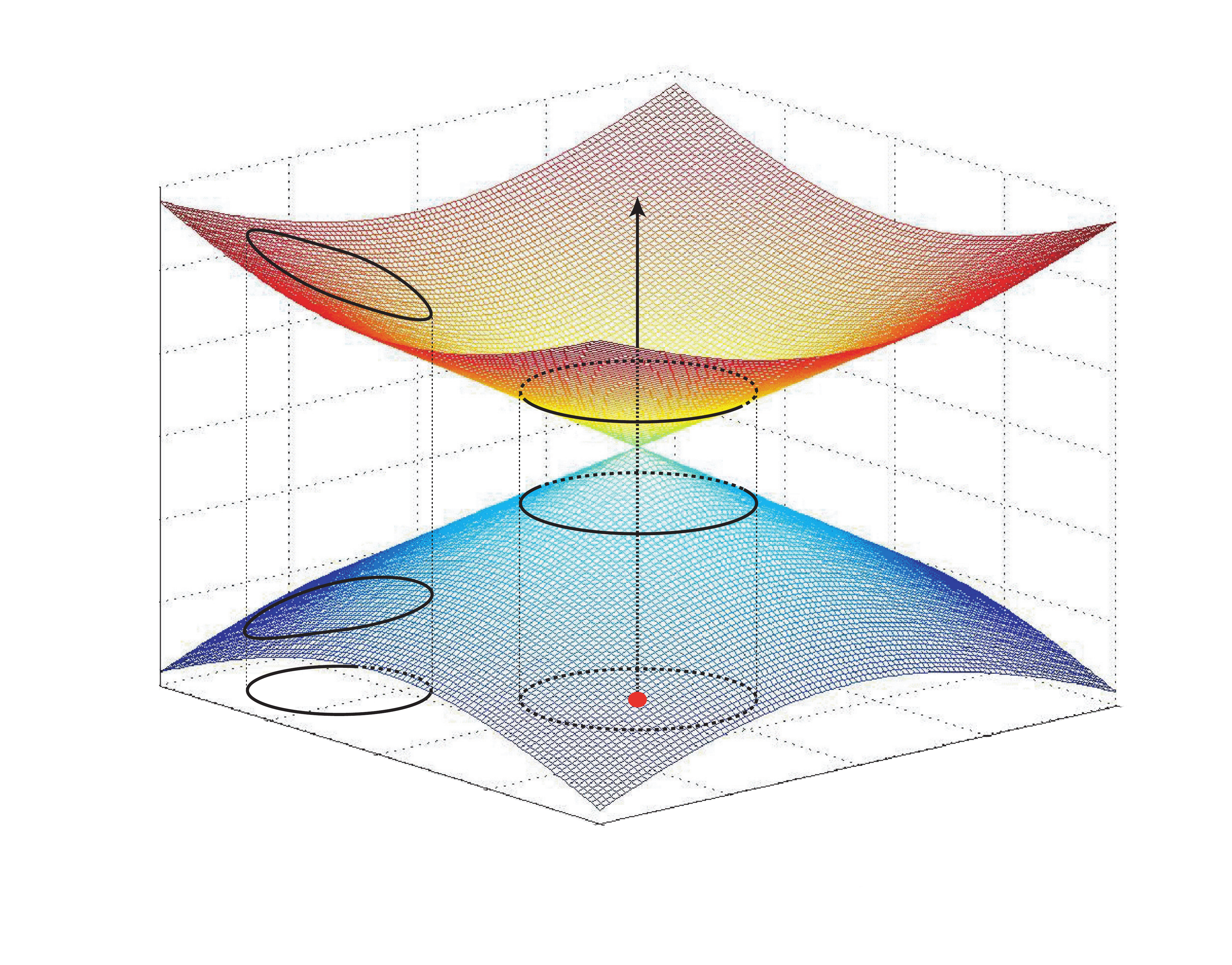} \vspace{0.5cm} \caption{The conical intersection
between the two lowest energy levels of the Hamiltonian as a
function of the parameters. A contractible loop, i.e. a loop that
can be continuously deformed to a point of the domain, produces a
zero geometric phase. A non-trivial geometric phase is obtained for
non-contractible loops.} \label{conical}
\end{center}
\end{figure}
It is well known that, when a closed path is taken within the real
domain of a Hamiltonian, a topological phase shift $\pi$ occurs only
when a conical intersection is enclosed. In the present case, the
conical intersection corresponds to a point of degeneracy where the
XX criticality occurs and it is revealed by the topological term in
the relative geometric phase $\phi_{eg}$. It is worth noticing that
the presence of a conical intersection indicates that the energy gap
scales linearly with respect to the coupling $\gamma$ when
approaching the degeneracy point. This implies that the critical
exponents of the energy, $z$, and of the correlation length, $\nu$,
satisfy the relation $z\nu=1$ which is indeed the case for the XX
criticality~\cite{Sachdev2011}. In the following we will see that
geometric phases are sufficient to determine the exact values of the
critical exponents and thus provide a complete characterization of
the criticality behavior.

\subsection{General Considerations}
One can show that the vacuum expectation value of a Hermitian
operator, $O$, can be written in terms of a geometric phase. The
only requirements posed on $O$ are that it should not commute with
the Hamiltonian and it should be able to transform the ground state
in a cyclic fashion. This is a rather general result that can be
used to study the critical models.

To show that let us extend the initial Hamiltonian, $H_0$, of the
model in the following way \begin{equation} H = H_0 +\lambda O\,. \end{equation} Turning to
the interaction picture with respect to the $O$ term we obtain 
\begin{equation}
 H_{ \textrm{int}(\lambda)} = U(\lambda t)H_0 U^\dagger(\lambda t)\,,
\end{equation} where $U(\lambda t) = \exp(-i \lambda O t)$. From the cyclicity
requirement, there exists a time $T$ such that the unitary rotation
$U(\lambda T)$ brings the ground state $\ket{\psi}$ to itself, i.e.
$U(\lambda T)\ket{\psi} =\ket{\psi}$, producing eventually the
desired cyclic evolution. The Berry phase that results from this
evolution is given by~(\ref{eq:BP}), and thus, one obtains
\begin{equation}
\varphi^{B} =-i\oint \bra{\psi} U^{\dagger}(\lambda t)dU(\lambda
t)\ket{\psi} = 2\pi i \bra{\psi} O \ket{\psi}\,. \label{Relation}
\end{equation}
From this expression, we see that the geometric phase is a simple
function of the vacuum expectation value of the operator that
generates the circular paths. This expression can be easily inverted
to finally give the expectation value of $O$ as a function of the
geometric phase~(\ref{Relation}).

One can easily verify this relation for the simple case of a
spin-1/2 particle in a magnetic field. When the direction of the
magnetic field is changed adiabatically and isospectrally then the
state of the spin is guided in a cyclic path around the
$z$-direction. The generated phase is given by $\varphi_B = i 2\pi
\cos \theta$, where $\theta$ is the fixed direction of the magnetic
field with respect to the $z$ direction. On the other hand, one can
easily evaluate that the expectation value of the operator
$(1-\sigma^z)/2$ that generates the cyclic evolution is given by
$\bra{\psi}(1-\sigma^z)/2\ket{\psi} = (1-\cos\theta)/2$, which
verifies relation (\ref{Relation}), as for example $\lambda T=2\pi$.

This connection has far reaching consequences. It is expected that
intrinsic properties of the state will be reflected in the
properties of the geometric phases. The latter, as they result from
a physical evolution, can be obtained and measured in a conceptually
straightforward way. To illustrate this we shall focus on critical
phenomena in spin systems. Indeed, the presence of critical points
can be detected by the behaviour of specific geometric evolutions
and the corresponding critical exponents can be extracted. This
comes as no surprise as one can choose geometric phases that
correspond to the correlations of the system (expectation values of,
e.g. $\sigma^z_1 \sigma^z_L$) from where the correlation length and
the critical behaviour can be obtained.

Let us apply this idea to the XY model, where the rotations are
generated by the operator $O:=\sum_j \sigma^z_j$. One can easily see
that the resulting geometric phase is proportional to the total
magnetization \begin{equation} M_z = \bra{\psi} \sum_j \sigma^z_j \ket{\psi}. 
\end{equation}
 It is well known~\cite{Sachdev2011} that the magnetization $M_{z}$
can serve as an order parameter from which one can derive all the
critical properties of the XY model just by considering its scaling
behaviour. Indeed, Zhu~\cite{Zhu2006} has considered the scaling of
the ground-state geometric phase of the XY model from where he
evaluated the Ising critical exponents. As it has been shown here,
this is a general property that can be applied to any critical
system.

\subsection{Dicke model and Geometric Phases}\label{sec:DickeGP}
In this section, we illustrate yet another model in which the
properties of a quantum phase transitions can be observed and
investigated through the geometric phase. We will discuss the
thermodynamic and finite size scaling properties of the geometric
phase in the adiabatic Dicke model (DM)~\cite{Plastina2006},
describing the super-radiant phase transitions for an $n$-qubit
register coupled to a slow oscillator mode. One can show that, in
the thermodynamic limit, the Berry phase has a topological feature
similar to the one highlighted in the previous sections for the case
of the XY model. A non-zero geometric phase is obtained only if a
path in parameter space encircles the critical
point. Furthermore, in this context one can show that precursors of
this critical behaviour, for a system with finite size, exist and the
scaling law of the Berry phase can be obtained in the leading order
in the $1/n$ expansion.

Let's consider a system which consists of $n$ two-level systems, a
qubit register or an ensamble of indistinguishable atoms, coupled to
a single bosonic mode. The Hamiltonian is given by, in
unit such that $\hbar=c=1$, is given by
\begin{equation}
    H={\omega} a^{\dagger}a+\Delta S_x+
   \frac{\lambda}{\sqrt{n}}(a^{\dagger}+a)S_z,
    \label{1}
    \end{equation}
where $\Delta$ is the transitions frequency of the qubit, $\omega$ is
the frequency of the oscillator and $\lambda$ is the coupling
strength. The qubit operators are expressed in terms of total spin
components ${S}_{k}=\sum_{j=1}^{n}{\sigma}^{k}_{j}$, where the
${\sigma}^{k}_{j}$ 's ($k=x,y,z$) are the Pauli matrices used to
describe the $j$-th qubit.
%These operators obey the usual
%commutation relations $[S_k,S_l]=2i\epsilon_{klm} S_m$.
A $\pi/2$ rotation around the $y$ axis shows that $H$ is canonically
equivalent to the standard formulation of the Dicke Hamiltonian
\cite{Dicke1954}, including counter-rotating terms.

After the first derivation due to Hepp and Lieb
\cite{Hepp1973,Hepp1973a}, the thermodynamic properties of the DM
have been studied by many
authors\cite{Wang1973,Duncan1974,Gilmore1976,Orszag1977,Sivasubramanian2001,Liberti2004,Liberti2005}.
In the thermodynamic limit ($n\rightarrow \infty$), the system
exhibits a second-order phase transitions at the critical point
$\lambda_c=\sqrt{\Delta\omega/2}$, where the ground state changes
from a normal to a super-radiant phase in which both the field
occupation and the spin magnetization acquire macroscopic values.
The continued interest in DM stems from the fact that it displays a
rich dynamics where many non-classical effects have been predicted
\cite{Schneider2002,Emary2003,Emary2003a,Frasca2004,Hou2004,Busek2005},
and from its broad range of applications \cite{Brandes2005}.
Investigations on the ground state entanglement of the Dicke model
have been also performed~\cite{Lambert2004,Reslen2005,Vidal2006},
pointing out a scaling behavior around the critical point.

In this section we will outline the topological character of the
geometric phase of the Dicke model in the adiabatic regime ($\Delta
\gg \omega$), and illustrate the scaling law of the geometric phase
close to the critical point for a
system with finite size.\\
In order to generate a Berry phase one can change the Hamiltonian by
means of the unitary transformation:
\begin{equation}\label{ut}
    U(\varphi)=\exp{\left(-i\frac{\varphi}{2}S_x\right)},
\end{equation}
where $\varphi$ is a slowly varying parameter, moving from $0$ to $2
\pi$. The transformed Hamiltonian can be written as
\begin{equation}
   {H}(\phi)=U^\dag(\varphi) H U(\varphi)=\frac{\omega}{2}\left[p^2+q^2+\bm{Q}\cdot
   \mathbf{S}\right],
    \label{ht2}
\end{equation}
where $\bm{Q}=\left(D,\frac{L q}{\sqrt{n}}\sin{\varphi},\frac{L
q}{\sqrt{n}}\cos{\varphi}\right)$ is an effective vector field.
Here,
 $D=2\Delta/\omega$ and $L=2\sqrt{2}
\lambda / \omega$ are dimensionless parameters, and the Hamiltonian
of the free oscillator field is expressed in terms of canonical
variables $q =(a^{\dagger}+a)/\sqrt{2}$ and
$p=i(a^{\dagger}-a)/\sqrt{2}$, that obey the quantisation condition
$[q,p]=i$.

In the adiabatic limit \cite{Liberti2006}, where one
assumes a {\it slow} oscillator and work in the regime $D\gg1$, the
Born-Oppenheimer approximation can be employed to write the ground
state of $H(\phi)$ as:
\begin{equation}\label{deco}
    |\Psi_{tot}\rangle=\int d q \, \psi(q) |q\rangle \otimes |\chi(q,\phi)\rangle\,.
\end{equation}
Here, $|\chi (q,\phi)\rangle$ is the state of the ``fast
component''; namely, the lowest eigenstate of the ``adiabatic''
equation for the qubit part, displaying a parametric dependence on
the slow oscillator variable $q$,
\begin{equation}\label{adiaham}
    \bm{Q}\cdot \mathbf{S}|\chi(q,\phi)\rangle=E_l(q)|\chi (q,\phi)\rangle \,.
\end{equation}
As the qubits are indistinguishable, it is easy to prove that the
ground state can be expressed as a direct product of $n$ identical
factors,
\begin{equation}\label{qubitstates}
|\chi(q,\phi)\rangle=|\chi (q,\phi)\rangle_1\otimes |\chi
(q,\phi)\rangle_2\otimes\dots\otimes|\chi (q,\phi)\rangle_n\,.
\end{equation}
Each component can be written as
\begin{equation}\label{gsdgen}
|\chi (q,\phi)\rangle_j=
\sin{\frac{\theta}{2}}|\uparrow\rangle_j-\cos{\frac{\theta}{2}}e^{i\zeta}
|\downarrow\rangle_j, \end{equation} where $|\uparrow\rangle_j$ and
$|\downarrow\rangle_j$ are the eigenstates of $\sigma_j^z$ with
eigenvalues $\pm1$, and where
\begin{eqnarray}\label{gsdgen2}
&& \cos{\theta} := {{\frac{Lq\cos{\varphi}}{\sqrt{n} E(q)}}}\,,\\
&& \zeta := \arctan{\frac{L q \sin{\varphi}}{\sqrt{n} D}}\,.
\end{eqnarray}
Here, $E(q)$ is related to the energy eigenvalue of Eq.
(\ref{adiaham}) as
\begin{equation}\label{eival}
    E_{l}(q)= - n E(q)=- n \sqrt{D^2+\frac{L^2q^2}{n}}\,.
\end{equation}
In the Born-Oppenheimer approach, this energy eigenvalue constitutes
an effective adiabatic potential felt by the oscillator together
with the original square term
\begin{equation} V_{l}(q)=\frac{\omega}{2}\left[q^2- n E(q)\right].\end{equation}
Introducing the dimensionless parameter $ \alpha={L^2}/{2 D}$, one
can show that for $\alpha\leq1$, the potential $V_{l}(q)$ can be
viewed as a broadened harmonic well with minimum at $q=0$ and
$V_{l}(0)=-n\Delta$. For $\alpha>1$, on the other hand, the coupling
with the qubit splits the oscillator potential producing a symmetric
double well with minima at $\pm
q_m=\pm\frac{\sqrt{n}D}{L}\sqrt{\alpha^2-1}$ and $
V_{l}(q_m)=-\frac{n\Delta}{2}\left(\alpha+\frac{1}{\alpha}\right)$.\\\indent
As last step in the Born-Oppenheimer procedure, we need to
evaluate the ground-state wave function for the oscillator,
$\psi_0(q)$, that has to be inserted in Eq. (\ref{deco}) to obtain
the ground state of the composite system. This wave function is the
normalized solution of the one-dimensional time independent
Schr\"odinger equation
\begin{equation}\label{se}
H_{ad}\psi_{0}(q)=\left(-\frac{\omega}{2}\frac{d^2}{dq^2}+
V_l(q)\right)\psi_{0}(q)=\varepsilon_{0}\psi_{0}(q) \, ,
\end{equation}
where $\varepsilon_{0}$ is the lowest eigenvalues of the adiabatic
Hamiltonian defined by the first equality.\\
Once this procedure is carried out for every value of the rotation
angle $\phi$, the Berry phase of the ground state is obtained as
\begin{equation}\label{bp}
    \phi^{B} = i\int_cd\varphi\la\Psi_{tot} |\frac{d}{d\varphi} |\Psi_{tot}\ra
    =\int_{-\infty}^{+\infty}dq\psi_0^2(q)\int_0^{2\pi}d\varphi A^{B}(q,\varphi)\,,
\end{equation} where we introduced a $q$-parametrised Berry connection, given
by \begin{equation}\label{bp2}
    A^{B}(q,\varphi):=i\la\chi_l(q,\varphi)|\frac{d}{d\varphi}|\chi_l(q,\varphi)\ra=
    -{n}\frac{d\zeta}{d\varphi}\cos^2{\frac{\theta}{2}}\nonumber\\
     =-\frac{nD}{2E(q)}\frac{\frac{L q}{\sqrt{n}}\cos{\varphi}}{E(q)-\frac{L q}{\sqrt{n}}\cos{\varphi}}\,.
\end{equation} Substituting this expression into Eq.(\ref{bp}), one finds
\begin{equation}\label{bp3}
    \phi^{B}=n\pi\left(1+\frac{\langle S_x\rangle}{n}\right)\,,
\end{equation}
where the average magnetisation per spin is
\begin{equation}\label{sxm}
\frac{\langle S_x\rangle}{n}=-\int_{-\infty}^\infty \psi_{0}^2(q)
\frac{D}{E(q)} dq \, .
\end{equation}
Notice that Eq.~(\ref{bp3}) holds in general, its validity relying
on the form of the rotation operator $U(\varphi)$ of Eq.~(\ref{ut})
and not being restricted to the adiabatic regime. In the
thermodynamic limit, one can show that
\begin{equation}
 \frac{\langle S_x\rangle}{n}=\left\{%
\begin{array}{ll}
    -1 & \hbox{$(\alpha\leq 1)$} \\
    \hbox{$-\frac{1}{\alpha}$} & \hbox{$(\alpha> 1),$} \\
\end{array}%
\right.
\end{equation}
and thus, for $n\rightarrow \infty$, the BP is given
by~\cite{Liberti2006a}
\begin{equation}
 \frac{\phi^{B}}{n} \Bigr |_{n\rightarrow \infty} =\left\{%
\begin{array}{ll}
    0 & \hbox{$(\alpha\leq 1)$}\\
    \hbox{$\pi(1-\frac{1}{\alpha})$} & \hbox{$(\alpha> 1)$}\,, \\
\end{array}%
\right.
\end{equation}

It is worth stressing once again, that for the thermodynamic limit
this result holds independently of the adiabatic approximation,
whose use is needed here to obtain the finite-size behaviour.
Numerical results for the scaled Berry phase are plotted in
Fig.(\ref{berry}) as a function of the parameter $\alpha$, for
$D=10$ and for different values of $n$, in comparison with the
result for the thermodynamic limit.
\begin{figure}
\begin{center}
 \includegraphics[width=0.7\textwidth]{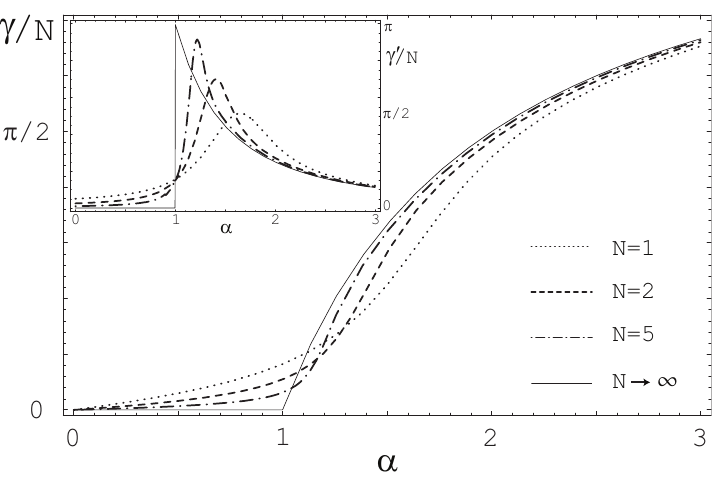}\\
 \caption{\label{berry} Numerical results for the scaled Berry phase as
a function of the parameter $\alpha$, for $D=10$ and for different
values of $n$, in comparison with the result for
$n\rightarrow\infty$. Berry's phase increases with the coupling,
and, in the thermodynamic limit, its derivative becomes
discontinuous at the critical value $\alpha=\alpha_{c}:=1$. The
inset shows the derivative of the Berry phase with respect to
$\alpha$.}
\end{center}
\end{figure}
\begin{figure}
\begin{center}
\includegraphics[width=0.55\textwidth]{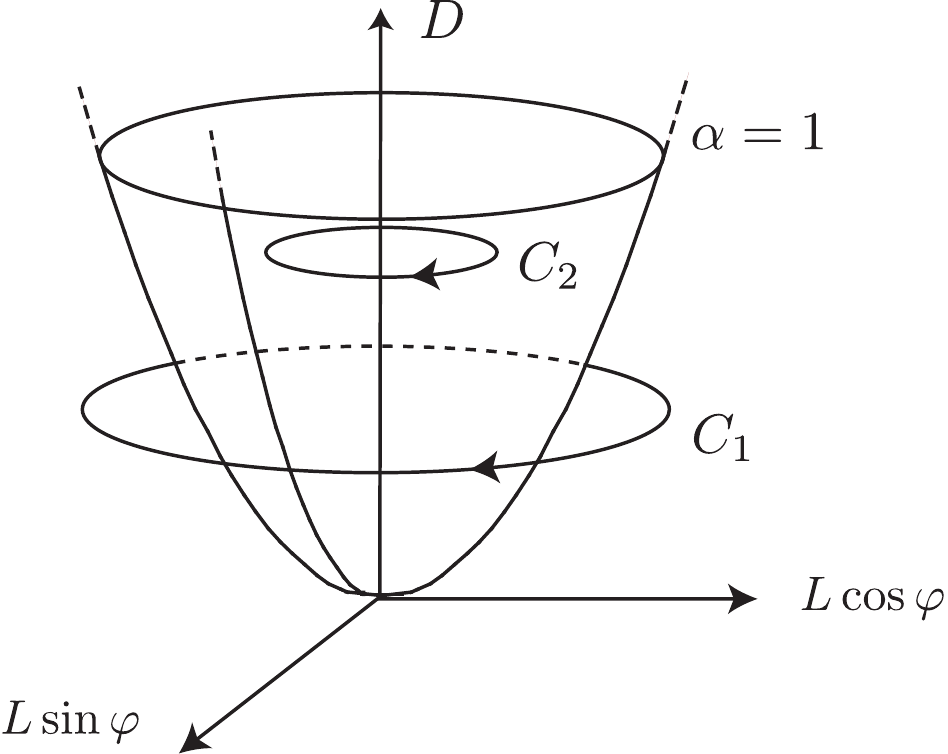}\\
\caption{\label{parab} A qualitative illustration of the paths
followed by the parameters of the Hamiltonian due to the application
of $U(\phi)$. The paraboloid corresponds to the value
$\alpha=L^2/2D=1$, for which the Hamiltonian shows a critical
behavior. If the parameters follow a path, e.g. $C_1$, encircling
the paraboloid, then the system acquires a non-trivial Berry phase,
which tends to $\pi$ for $\alpha\gg 1$. The path $C_2$ gives rise to a zero Berry phase (in
the thermodynamic limit).}
\end{center}
\end{figure}
One can see that the Berry phase increases as the coupling strength
growths and, in the thermodynamic limit, its derivative becomes
discontinuous at the critical value $\alpha=\alpha_c:=1$. This
results agree with the expected behaviour of the geometric phase
across the critical point. Notice that, in the thermodynamic limit,
a non-trivial Berry phase is only obtained when a region of
criticality is encircled, as for the path $C_1$ in Fig.~\ref{parab}.
Indeed, in the enlarged parameter space generated by the application
of the unitary operator $U(\phi)$ of Eq. (\ref{ut}), the critical
point corresponds to the paraboloid $\alpha = \frac{L^2}{2 D} =1$.
As the radius of the path is determined by $\alpha$, one can see
that, in the limit $n\rightarrow \infty$, the Berry phase is zero in
the normal phase ($\alpha \leq 1$) and is non-zero in the
super-radiant phase, i.e. if the path encloses the critical region.
This behaviour is indeed reminiscent of the topological features
displayed by the geometric phase of the XY model described in the
previous sections.

It is worth considering also the finite-scaling behaviour of Berry
phase at the critical point. In order to obtain an analytic
estimation of Berry phase as a function of $n$, one can expand the
adiabatic potential in Eq.(\ref{se}) in powers of $1\over n D$ and
by using the expressions of the perturbation coefficients $c_{k}$,
one obtains an anharmonic oscillator potential
\begin{equation}\label{adpot}
    U_l(q)=\frac{2}{\omega}V_l(q)\simeq -nD+(1-\alpha)q^2+\frac{\alpha^2}{2nD}q^4\,.
\end{equation}
The eigenvalue problem defined by this potential can be solved with
the help of Symanzik scaling \cite{Liberti2006a,Simon1970}. This is
done by rewriting Eq.(\ref{se}) into the equivalent form
\begin{equation}\label{hamscaling}\left[-\frac{d^2}{dx^2}+\mu
x^2+x^4\right]\psi_{0}(x;\mu) =e_{0}\left(\mu\right)\psi_{0}(x;\mu),
\end{equation}
where the scaled position is $x:= q \left ( \frac{\alpha^2}{2 nD}
\right)^{1/6}$, while
$\mu:=\left(\frac{2nD}{\alpha^2}\right)^{2/3}(\alpha_c-\alpha)$.
Finally, the energies in Eq.(\ref{se}) and (\ref{hamscaling}) obey
the scaling relation
\begin{equation}\label{sr}
\frac{2}{\omega}
\varepsilon_{0}(\alpha,nD)=-nD+\left(\frac{\alpha^2}{2nD}\right)^{1/3}e_{0}
\left(\mu\right).
\end{equation} Since $\mu \rightarrow 0$ at
the critical point, we can consider the $x^2$ term to be a
perturbation and employ the Rayleigh-Schr\"{o}dinger perturbation
theory. This yields the expansion $ e_{0}(\mu)=\sum_{k=0}^\infty
c_k\mu^k $, where  the coefficients $c_k$ can be obtained after
solving the equation for a purely quartic oscillator. It is easy to
get $c_0= e_{0}(0)\simeq1.06036$ and $c_1=\int_{-\infty}^\infty
q^2\phi_{0}^2(q;0)dq=e_0^\prime(0)\simeq 0.36203$.
\begin{figure}
\begin{center}
 \includegraphics[width=0.8\textwidth]{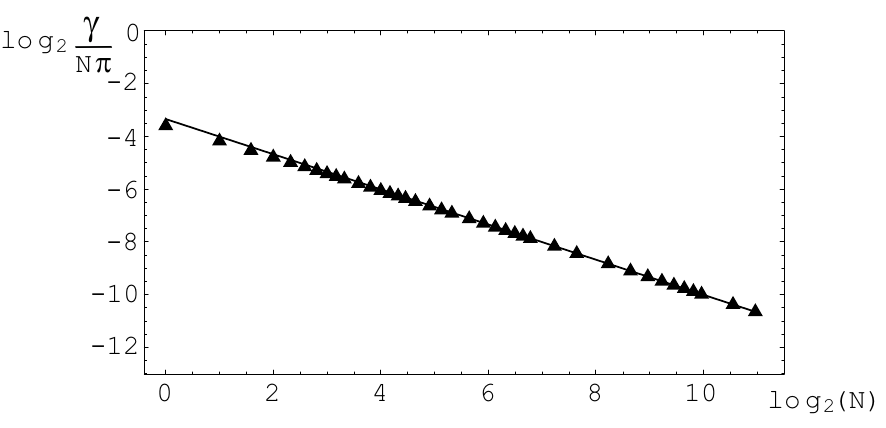}\\
 \caption{\label{gafn} Scaling of the Berry phase
 as a function of $n$ at the critical point $\alpha=1$, for
 $D=10$. For ease of comparison, the continuous plot shows
the analytic expression of Eq. (\ref{BPscal}).}
\end{center}
\end{figure}

It can be shown that the coefficients the expansion of $ e_{0}(\mu)$ completely
determine the average value of every physical observable at the
critical point \cite{Liberti2006a}. In particular, a similar
expansion applied to $\langle S_x \rangle$, allows one to
write~\cite{Plastina2006}
\begin{equation}\label{sxsca}
\frac{\langle
S_x\rangle}{n}\simeq-1+\frac{2c_1}{(2nD)^{2/3}}-\frac{2c_0}{(2nD)^{4/3}}.
\end{equation}
Thus, one obtains the leading orders in the finite size scaling of
the Berry phase as
\begin{equation}\label{BPscal}
    \frac{\phi^{B}}{n}\approx\pi\left[\frac{2 c_1}{(2nD)^{2/3}}-\frac{2
    c_0}{(2nD)^{4/3}}\right].
\end{equation}
This expression shows how the scaled geometric phase goes to zero as
$n$ increases and how the singular thermodynamic behaviour is
approached at $\alpha=\alpha_c=1$. The leading critical behaviour of
the Berry phase, $\phi^{B}/n \sim  n^{-2/3}$ is confirmed in Fig.
(\ref{gafn}) by comparison with the numerical evaluation of the
geometric phase obtained from Eqs. (\ref{bp3})-(\ref{sxm}). In fact,
including also the second order correction, scaling as $n^{-4/3}$,
reproduces the numerical results even for small values of $n$.

Besides the scaling relation at the critical point $\alpha=1$, one
can also obtain the leading $1/n$ correction to the thermodynamic
limit of $\frac{\phi^{B}}{n}$ for small and large values of $\alpha$, i.e., for path of very small and very large
radii~\cite{Plastina2006}. Since the oscillator localises around
$q=0$ for $\alpha \ll1$, while its wave function is split in two
components peaked around $\pm q_m$ for $\alpha \gg 1$, one
gets~\cite{Plastina2006}
\begin{equation}
\frac{\phi^{B}}{n} -  \frac{\phi^{B}}{n} \Bigr |_{n\rightarrow \infty} \approx \left\{%
\begin{array}{ll}
    \hbox{$\frac{\pi \alpha}{2 nD}$} & \hbox{$(\alpha\ll 1)$} \\
    \hbox{$- \frac{\pi}{nD \alpha^2}$} & \hbox{$(\alpha \gg 1)$}\,, \\
\end{array}%
\right.
\end{equation}
Thus, far from the critical point (that is, well inside the normal
and super-radiant phase), the Berry phase per qubit reaches its
thermodynamic limit as $n^{-1}$. Therefore, the topological
behaviour of $\phi^{B}$ can be inferred even for a relatively small
number of qubits. To summarise, this example shows that the
behaviour of the geometric phase in correspondence of the critical
region for the Dicke model, confirms the general connection between
geometric phase and quantum critical phase transitions. Indeed,
geometric phase and QPTs  share the common feature of both appearing
in presence of a singularity in the energy density of the system.
This heuristic argument motivates - once again - the need to explore
the use of Berry phase as a tool to signal and investigate critical
features of certain models. However, strictly speaking,
singularities in the energy density of many body systems only appear
in the thermodynamic limit. It is therefore interesting to notice
that, in a \emph{finite scale} regime, such a connection between
Berry phase and QPTs  can still be drawn, owing to the geometric phase
sensitivity to the increase of the parametric manifold curvature, as
the thermodynamic limit is approached. Studying the Berry phase in
this regime has clearly theoretical interest and obvious
experimental motivations. In the case of the Dicke model, it is
found that the geometric phase start to show, already at finite
sizes, the topological character which is typically manifested at
the thermodynamic limit. Moreover, studying the scaling of the Berry
phase as a function of the system size, one can identify its
critical exponent.

\section{Quantum Phase Transition and Information Geometry}
For the sake of completeness, and to clearly pave the ground to the
last part of this review, we will introduce another approach to the
study of quantum phase transitions, i.e. the so called
\emph{fidelity
approach}~\cite{Zanardi2006,Zanardi2007,You2007,Zhou2008}, which has
been successfully employed in
classical~\cite{Janyszek1999,Ruppeiner1995,Quan2009a} and quantum
phase
transitions~\cite{Zanardi2007c,Zanardi2006,Zanardi2007,CamposVenuti2007,Gu2010,Dey2012},
both in
symmetry-breaking~\cite{Zanardi2006,Zanardi2007,CamposVenuti2007,Zanardi2007a,Gu2010,Dey2012,Kolodrubetz2013}
as well as in topological phase transitions~\cite{Yang2008}. As a
distance measure, the fidelity describes how close two given quantum
states are. Therefore, it is natural to expect that the fidelity can
be used to characterise the drastic changes occurring in quantum
many body states going through QPTs, regardless of what type of
order parameters, if any, characterises the type of phase
transitions. In the fidelity approach, quantum phase transitions are
identified by studying the behaviour of the overlap between two
ground states corresponding to two slightly different set of
parameters; at QPTs  a drop of the fidelity with scaling behaviour is
observed and quantitative information about critical exponents can
be extracted \cite{CamposVenuti2007}. As for the geometric phase
approach, the fidelity approach neither is based on the knowledge of
an order parameter, nor it requires a notions of symmetry breaking.
It is indeed an approach purely based on the distinguishability of
pure states, as well as density matrices, and it does not even
require the knowledge of the Hamiltonian itself. In a sense, it is a
purely kinetic approach, as opposed to the traditional methods which
rely on the information derived from the dynamics of the many-body
systems.

\subsection{Metric on the Hilbert Space: the Fubini-Study metric}\label{sec:FS}
In a given Hilbert space, there is a natural \emph{gauge
invariant metric} that can be defined in terms of elements of the
projective Hilbert space $\mathcal{P}$. Given any two $\psi_{1}$ and
$\psi_{2}$ in the Hilbert space, the distance $d_{FS}$ between any
two points  $p_{1}:=\ket{\psi_{1}}\bra{\psi_{1}}$ and
$p_{2}:=\ket{\psi_{2}}\bra{\psi_{2}}$ in~$\mathcal{P}$ is defined by
\begin{equation}\label{eq:dFS}
d_{FS}(p_{1},p_{2})=\inf_{\alpha_{1},\alpha_{2}}||e^{i\alpha_{1}}\psi_{1}
- e^{i\alpha_{2}}\psi_{2}|| = \sqrt{2 - 2|\bk{\psi_{1}|\psi_{2}}|},
\end{equation}
where minimisation is obtained when $e^{i\alpha_{1}}\psi_{1}$ and
$e^{i\alpha_{2}}\psi_{2}$ are \emph{in phase}, according to
Pancharatnam's criterion, i.e. when
$\bra{e^{i\alpha_{1}}\psi_1}e^{i\alpha_{2}}\psi_2\rangle$ is a real
and positive number.
 Clearly, $d_{FB}(p_{1}, p_{2})\ge0$ with equality holding if and only if $p_{1}=p_{2}$. Also $d_{FS}(p_{1}, p_{2})= d_{FS}(p_{2}, p_{1})$. The triangle inequality, for any $p_{1}=\ket{\psi_{1}}\bra{\psi_{1}}$, $p_{2}=\ket{\psi_{2}}\bra{\psi_{2}}$ and $p_{3}=\ket{\psi_{3}}\bra{\psi_{3}}$  in $\mathcal{P}$, is implied by the following chain of relations
\begin{equation}
    d_{FS}(p_{1},p_{2}) + d_{FS}(p_{2}, p_{3}) = ||\psi_{1}-\psi_{2}|| + ||\psi_{2}-\psi_{3}|| \ge ||\psi_{1}-\psi_{3}|| \ge d_{FS}(p_{1}, p_{3}),
\end{equation}
where $\psi_{2}$ is in phase with $\psi_{1}$, and $\psi_{3}$ with
$\psi_{2}$. Hence $d_{FS}$ is a metric on $\mathcal{P}$, called the
\emph{Fubini-Study distance}~\cite{Fubini1904,Study1905}, which can
be expressed as $d_{FB}(p_{1},
p_{2})^{2}=2(1-\sqrt{\pr(p_{1},p_{2})})$, where
$\pr(p_{1},p_{2}):=|\bk{\psi_{1}|\psi_{2}}|^{2}$ is the so called
\emph{transitions probability}. The latter
quantifies the probability to get an affirmative answer when testing
whether the system is in the state $p_{1}$ if it was actually in
state $p_{2}$, or viceversa. In other words, it quantifies the statistical
distinguishability between pure states. The \emph{Fubini-Study
distance} is indeed a geometrical measure of \emph{statistical
indistinguishability} between pure quantum
states~\cite{Wootters1981,Jozsa1994,Braunstein1994}. In terms of
$d_{FS}(p_{1},p_{2})$, the projective space, corresponding to the
subspace spanned by $\psi_{1}$ and $\psi_{2}$, is described by a
2-sphere with unit radius embedded in a 3-dimensional Euclidian
space. In such a sphere, $d_{FS}(p_{1},p_{2})$ is the straight-line
distance separating $p_{1}$ and $p_{2}$. \\\indent Suppose that
$\psi_{1}$ and $\psi_{2}$ are such that $\Pi(\psi_{1})$ and
$\Pi(\psi_{2})$ are infinitely close in $\mathcal{P}$. Then
Eq.~(\ref{eq:dFS}) defines a Riemannian metric on $\mathcal{P}$
called the \emph{Fubini-Study metric}. To obtains its metric
coefficients, consider a curve $\mathcal{C}$ in $\mathcal{P}$
parameterised in the interval $[s_a,s_b]$, and let $C:=\psi(s)$ be
any of its lift in the Hilbert space. By Taylor expanding,
\begin{equation}
\bk{\psi(s)|\psi(s+ds)}=1 + \bk{\psi |
\dot{\psi}}ds+\frac{1}{2}\bk{\psi | \ddot{\psi}}ds^{2} +
\mathcal{O}(ds^{3}),
\end{equation}
where $\dot{\psi}:=d\psi/ds$. Also, differentiating
$\bk{\psi(s)|\psi(s)}=1$ twice yields
\begin{equation} 
\begin{aligned}
\bk{\psi|\dot{\psi}}+ \bk{\dot{\psi}|\psi}=0,\\
\bk{\psi|\ddot{\psi}}+\bk{\ddot{\psi}|\psi} + 2
\bk{\dot{\psi}|\dot{\psi}} =0,\label{eq:diffnorm}
\end{aligned} 
\end{equation} 
hence,
\begin{equation}
\begin{aligned}
dl^{2}:=&2(1-\sqrt{|\bk{\psi(s)|\psi(s+ds)}|^{2}})=\\=&2-2\sqrt{1+\left(\bk{\dot{\psi}|\psi}+\bk{\psi|\dot{\psi}}\right) ds + \left(\bk{\dot{\psi}|\psi}\bk{\psi|\dot{\psi}} + \frac{1}{2}\bk{\ddot{\psi}|\psi} + \frac{1}{2}\bk{\psi|\ddot{\psi}}\right)ds^{2}}=\\
%=&\left(\bk{\dot{\psi}|\dot{\psi}} - \bk{\dot{\psi}|\psi}\bk{\psi|\dot{\psi}}\right)ds^{2}=\bra{d\psi}(\one - \kb{\psi})\ket{d\psi}\\
=&\bra{\dot{\psi}}(\one - \kb{\psi})\ket{\dot{\psi}}ds^{2}\\
=&\sum_{\mu\nu}g_{\mu\nu}d\lambda_{\mu}d\lambda_{\nu}\,,
\end{aligned} 
\end{equation} 
 where $\{\lambda_{\mu}\}\in\mathcal{M}$ are a set of local parameters labelling $\mathcal{P}$ in the neighbourhood of $\Pi(\psi)$, and

\begin{equation}
  g_{\mu\nu}:= \Re\,\,Q_{\mu\nu}
\end{equation}
   is a positive-definite real matrix, where
 \begin{equation}\label{eq:qgt}
Q_{\mu\nu}:= \bra{\partial_{\mu}\psi}(\one -
\kb{\psi})\ket{\partial_{\nu}\psi}
\end{equation}
is a positive semi-definite Hermitian matrix, called the
\emph{quantum geometric tensor}~\cite{Provost1980,Berry1989}. This
quantity, by definition, is gauge invariant, and its imaginary part
coincides, up to a factor $1/2$, to the Berry curvature
 \begin{equation}
\Im\,\,Q_{\mu\nu}= \frac{\bk{\partial_{\mu}\psi|\partial_{\nu}\psi}
- \bk{\partial_{\mu} \psi |\partial_{\nu}\psi}
}{2}=\frac{F_{\mu\nu}^{B}}{2}.
\end{equation}
Notice that, the Fubini-Study metric can also be expressed as
$dl^2_{FB}=\bk{u(s) | u(s)}ds^2$,  where
$\ket{u}=\ket{\dot{\psi}}-\bk{\psi | \dot{\psi}} \ket{\psi}$ is the
component of the tangent vector $\dot{\psi}(s)$ orthogonal to
$\psi(s)$. Alternatively,
\begin{equation}
\ket{u}=\mathcal{D}_{s}\ket{\psi}:=\ket{\dot{\psi}}+ i
A_{s}^{B}\ket{\psi},
\end{equation}
where $\mathcal{D}_{s}:=d/ds + i A_{s}^{B}$ is the covariant
derivative and $A_{s}^{B}=i\bk{\psi|\dot{\psi}}$ is the Berry
connection. Using this metric one can derive the geodesics in the
space $\mathcal{P}$, which amounts to finding the path connecting
two states which minimises the following length:
\begin{equation}
  \label{eq:geodint}
D_{FS} :=\min
\int_{\Pi(\psi(s_a))}^{\Pi(\psi(s_a))}dl_{FS}=\int_{s_a}^{s_b}\sqrt{\bk{u(s)|u(s)}}ds.
\end{equation}
It turns out that the length of the geodesics arc, connecting the two
end points $\psi_{1}$ and $\psi_{2}$, is given by
\begin{equation}\label{eq:FSlength}
D_{FS}(p_{1},p_{2})=\int_{\mathcal{C}} dl_{FS} =
\arccos{|\bk{\psi_{1}|\psi_{2}}|}=\arccos{\sqrt{\pr(p_{1},p_{2})}},
\end{equation}
which is called \emph{Fubini-Study length} or \emph{Fubini-Study
angle}, which is itself a distance on the projective Hilbert space.

\subsection{Fubini-Study distance as a statistical distance}\label{sec:FSstat}
Let's digress onto a specific aspect of the Fubini-Study distance.
The Fubini-Study distance provides a measure of statistical
(in-)distinguishability between pure quantum states. Assume that one
wishes to perform a finite set of experiments to distinguish between
two states $\psi_{1}$ and $\psi_{2}$. To this extent, one needs some
specific set of measurements, or equivalently a set of observables,
and then use the results to define a statistical distance between
the states. However, it is clear that this distance will depends on
the choice of the observable as well as on the states. A solution to
this problem would be to single out an exceptional set of
measurements which maximises the resulting statistical distance. By
definition this will be the distance between the states.
\\\indent Assume that a chosen observable $O$ has $n$ non-degenerate orthogonal eigenstates $\ket{k}$ in terms of which we can expand both states. When the state is $\psi_{i}$, according to the standard Born rule, the probability distribution $P_{i}:=\{p_{i}(k), k=1\dots n\}$ to obtain the $k$-th outcome in the measurement is $p_{i}(k):=|\bk{\psi_{i}|k}|^{2}$. Each state $\psi_{i}$, therefore, results in a distinct outcome probability distribution. The point now is to quantify by means of a statistical distance the degree of distinguishability of these probability distributions in an operationally meaningful way. Two popular choices that accomplish this task are the Bhattacharyya distance,
\begin{equation}
D^{Bha}_{O}(\psi_{1},\psi_{2}) :=\arccos\left(\sum_{k}
\sqrt{p_{1}(k)p_{2}(k)}\right)=\arccos B(P_{1},P_{2})
\end{equation}
and the Hellinger distance
\begin{equation}
D^{H}_{O}(\psi_{1},\psi_{2}) :=\left(\sum_{k} \left(\sqrt{p_{1}(k)}-
\sqrt{p_{2}(k)}\right)^{2}\right)^{\frac{1}{2}}=\sqrt{2-2
B(P_{1},P_{2})}
\end{equation}
both monotonous functions of the Bhattacharyya coefficient, which
can be computed from the square roots of the probabilities,
\begin{equation}
B(P_{1},P_{2}):=\sum_{k} \sqrt{p_{1}(k)p_{2}(k)}= \sum_{k}
|\bk{\psi_{1}| k}| |\bk{k |\psi_{2}} | \le |\bk{\psi_{1}|\psi_{2}}
|.
\end{equation}
There are several optimal measurements that saturate the inequality
above. A solution is found by any observable $O$ having either of
the state $\psi_{i}$ as one of its eigenstates, in which case the
Bhattacharyya distance and the Hellinger distance collapse to the
Fubini-Study length $D_{FB}$~(\ref{eq:FSlength}) and Fubini-Study
distance $d_{FB}$~(\ref{eq:dFS}), respectively. This establishes the
Fubini-Study metric as a measure of the distinguishability of pure
quantum states in the sense of statistical
distance~\cite{Wootters1981}. More precisely, what the Fubini-Study
distance measures is the experimental distinguishability of two
quantum states, assuming no limitations on the type of measurements
one can perform. In practice, a measurement device available to a
laboratory may correspond to a limited subset of observables only,
and this device may be subject to various sources of imperfections.
In this case, the Fubini-Study geometry may not provide the correct
measure of experimental distinguishability, but still it is
relevant, as it provides information on what we can know in general,
without knowledge of the specific physical system.

\subsection{Fubini-Study metric and quantum phase transitions}

Let us consider a smooth family $H(\bm{\lambda})$ of Hamiltonians
labelled by a set of parameters in a manifold
$\,\bm{\lambda}\in{\cal M}$, in the Hilbert-space $\cal H$ of the
system. If $|\Psi_0(\lambda)\rangle\in{\cal H}$ denotes the, unique
for simplicity, ground-state of $H(\lambda),$ one defines a
one-to-one mapping $\Psi_0\colon{\cal M}\rightarrow{\cal
H}/\lambda\rightarrow|\Psi_0(\lambda)\rangle$ associating to each
set of parameters the ground-state of the corresponding Hamiltonian.
This map can be seen also as a map between a point in ${\cal M}$ and
an element of projective space $P{\cal H}$. As already pointed out
in the previous subsections, the projective Hilbert space is
equipped with a metric, the Fubini-Study distance
\begin{equation}
d_{FS}(\psi,\psi'):=  \sqrt{2-2 |\langle\psi,\psi'\rangle|}\, ,
\label{fidelity}
\end{equation}
which quantifies the maximum amount of statistical
distinguishability between the pure states $|\psi\rangle$ and
$|\psi'\rangle.$  It provides the statistical distance between the
probability distributions of the outcomes associated to
$|\psi\rangle$ and $|\psi'\rangle$, optimised over all possible
measurement strategies. Moreover, as we have already seen, this
result easily extends to mixed states, by replacing the Fubini-Study
distance with its natural density matrix generalisation, the
Bures-distance.

These non-trivial notions allow one to identify the projective
Hilbert space geometry with a geometry in the information space, the
larger the distance between $|\psi\ra$ and $|\psi'\ra$ the more
statistically distinguishable two states are. The remarkable
consequences of this simple observation is that \emph{the distance
encodes information on any of the infinitely many possible order
parameters characterising the phase transitions}. At the critical
point, a small difference between the parameters labelling the
Hamiltonian results in a greatly enhanced distinguishability of the
corresponding ground states. This is quantitatively revealed by the
state overlap and in turn by the behaviour of the metric.

Let $\psi$ and $\psi+d\psi$ be two infinitesimally closed states
in the parameter manifold. In section~\ref{sec:FS}, it has been
shown that their elementary distance in the parameter space can be
expressed as: 
\begin{equation}
 d_{FB}^{2}(\psi,\psi+d\psi)=dl^{2}=\sum_{\mu\nu}g_{\mu\nu}d\lambda_{\mu}d\lambda_{nu}
\end{equation} where metric tensor $g_{\mu\nu}=\Re Q_{\mu\nu}$ is the real part
of the {\em quantum geometric tensor} $Q_{\mu\nu}$ introduced in
section~\ref{sec:FS}, i.e.
\begin{equation}
Q_{\mu\nu}=  \langle\partial_\mu \psi|
\partial_\nu \psi\rangle -\langle \partial_\mu
\psi|\psi\rangle\langle\psi|
\partial_\nu \psi\rangle \ .
\label{eq:QGT}
\end{equation}
As explicitly pointed out in section~\ref{sec:FS}, the imaginary
part of the quantum geometric tensor is $\Im  \,
Q_{\mu\nu}=\frac{F_{\mu\nu}}{2}$, where $F_{\mu\nu}$ is nothing but
the Berry curvature 2-form. This provides a unifying framework for
the understanding of the fidelity and geometric phase approaches to
the quantum phase transitions.

A perturbative expansion provides a simple heuristic explanation as
to why one observes a singular behaviour of the quantum geometric
tensor at QPTs . By using the  first order perturbative expansion
\begin{equation}\label{eq:PertExp}
|\Psi_0(\bm{\lambda}+\delta\bm{\lambda})\rangle\sim
|\Psi_0(\bm{\lambda})\rangle +\sum_{n\neq
0}(E_0(\bm{\lambda})-E_n(\bm{\lambda}+d\bm{\lambda}))^{-1}
|\Psi_n(\bm{\lambda})\rangle\langle\Psi_n(\bm{\lambda})|\delta
H|\Psi_0(\bm{\lambda})\rangle, \end{equation} where $\delta H:=
H(\bm{\lambda}+\delta\bm{\lambda})-H(\bm{\lambda})$, one obtains for
the entries of the quantum geometric tensor (\ref{eq:QGT}) the
following expression
\begin{equation}
Q_{\mu\nu}(\bm{\lambda})= \sum_{n\neq 0}\frac{\langle
\Psi_0(\bm{\lambda})|\partial_\mu H|
\Psi_n(\bm{\lambda})\rangle\langle\Psi_n(\bm{\lambda})|\partial_\nu
H| \Psi_0(\bm{\lambda})\rangle} {[E_n(\bm{\lambda})-
E_0(\bm{\lambda})]^2} \ . \label{pert}
\end{equation}

Continuous QPTs  occur when, for some specific values of the
parameters, the energy gap 
\begin{equation}
 \Delta_{n}(\bm{\lambda}_{c}):=E_{n}(\bm{\lambda}_{c})-E_{0}(\bm{\lambda}_{c})\ge0
\end{equation}
  vanishes in the thermodynamic limit.
This amounts to a vanishing denominator in Eq. (\ref{pert}) that may
break down the analyticity of the metric tensor entries.  This
heuristic argument has been first put forward in~\cite{Zanardi2007}
specifically for the Riemannian tensor $g_{\mu\nu}$ and for the
Berry Curvature $F_{\mu\nu}$ in~\cite{Hamma2006}.  An argument based
on more firm grounds can also be formulated in terms of scaling
properties of the quantum geometric tensor~\cite{CamposVenuti2007}.

To get further insight about the physical origin of these
singularities we notice that the metric tensor  (\ref{eq:QGT}) can
be cast in an interesting covariance matrix form \cite{Provost1980}.
Generically, changing the Hamiltonian from $H(\bm{\lambda})$ to
$H(\bm{\lambda}+\delta\bm{\lambda})$ within the same phase no
level-crossings occur. In this case, the unitary operator
$U(\bm{\lambda},
\delta\bm{\lambda}):=\sum_n|\Psi_n(\bm{\lambda}+\delta\bm{\lambda}\rangle\langle\Psi_n(\bm{\lambda})|$
can adiabatically map the eigenspace at the point
$\bm{\lambda}\in\mathcal{M}$ onto those at
$\bm{\lambda}+\delta\bm{\lambda}.$ In terms of the corresponding
Hermitian generators $X_\mu:= i(\partial_\mu U)U^\dagger$, the Fubini-Study metric tensor (\ref{eq:QGT}) takes the form 
\begin{equation}
 g_{\mu\nu}=(1/2) \langle \{ \bar{X}_\mu ,\, \bar{X}_\nu\} \rangle,
\end{equation} where $\bar{X}_\mu:=X_\mu -\langle X_\mu \rangle$. In other words,
$g_{\mu\nu}$ can be identified with the (symmetric) covariance
matrix of the observables $X_{\mu}$. The differential line element
$dl^2$ can be expressed as the variance of the operator-valued
differential one form $\bm{X}:=i(dU)U^\dagger=\sum_{\mu} X_{\mu}
d\lambda_{\mu}$, i.e., $dl^2=\langle \bar{X}^2 \rangle.$ The
operator $\bm{X}$ is the generator of the mapping between sets of
eigenbases corresponding to infinitesimally closed points
$\bm{\lambda}$ and $\bm{\lambda} + d \bm{\lambda} $. The smaller the
difference between these eigenbases for a given parameter variation,
the smaller the variance of $\bm{X}.$ At QPTs  one expects to have the
maximal possible difference between $|\Psi_0(\lambda)\rangle$ and
$|\Psi_0(\lambda+\delta\lambda)\rangle$, i.e., many ``unperturbed''
eigenstates $|\Psi_n(\lambda)\rangle$ are needed to build up the
``new'' GS; accordingly the variance of $\bm{X}$ can get very large,
possibly divergent. In this sense, $\bm{X}$ acquires the
significance of an \emph{order parameter}, and  $dl^2$ can be
interpreted as its susceptibility.

\subsection{Quantum Phase Transition and Super-Extensitivity of the Quantum Geometric Tensor}\label{sec:SEQ}

In this section we will derive a bound useful to establish a
connection between the quantum geometric tensor $Q$ and QPTs . Let's
consider a system of size $L^{d}$ (with dimensionality $d$). Since
$Q(\bm{\lambda})$ is an Hermitian non-negative matrix one has
\begin{equation}\label{eq:Qineq0} 
|Q_{\mu\nu}|\le\| Q\|_{\infty}= \bm{u}^{\dagger}
\cdot Q  \cdot \bm{u}, 
\end{equation} 
where $\|B\|_{\infty}$ stands for the
largest singular value of a matrix $B$, and
$\bm{u}=\{u_{\mu}\}_{\mu=1}^{{\rm {dim}}{\cal M}}$, with
$\bm{u}^{\dagger}\cdot \bm{u}=1$ is the normalised eigenvector of
$Q$ with the largest eigenvalue. One can define the corresponding
combination of parameters
$\bar{\lambda}:=\sum_{\mu}u_{\mu}\lambda_{\mu}$, which, loosely
speaking, is the direction on the parameter manifold encoding the
maximal ``responsiveness'' of the geometry. Let's denote
$\bar{\p}:=\p/\p\bar{\lambda}$, then

\begin{equation}
  \bar{\p} H=\sum_{\mu}(\partial_{\mu}H) u_{\mu}.
\end{equation}
   From Eq.
(\ref{pert}) and the above inequality~(\ref{eq:Qineq0}),
\begin{eqnarray}
|Q_{\mu\nu}| & \le & \sum_{n>0}\Delta_{n}^{-2}|\langle\Psi_{0}|\bar{\p} H|\Psi_{n}\rangle|^{2}\le\Delta_{1}^{-2}\sum_{n>0}|\langle\Psi_{0}|\bar{\p} H|\Psi_{n}\rangle|^{2}\nonumber \\
 & = & \Delta_{1}^{-2}(\langle\bar{\p} H\bar{\p} H^{\dagger}\rangle-|\langle\bar{\p} H\rangle|^{2}),\label{ineq}
 \end{eqnarray}
 %: \Delta_1^{-2}  \rm{Var}^2(\bar{\p} H)$
where the angular brackets denote the average over
$|\Psi_{0}(\bm{\lambda})\rangle$. Now, a crucial assumption is that
the operator $\bar{\p} H$ is \emph{local} i.e., $\bar{\p}
H=\sum_{j}\bar{\p} V_{j}$ where $V_{j}$ are operators with local
support around the site $j$. Then the last term in Eq. (\ref{ineq})
reads \begin{equation} \sum_{i,j}(\langle\bar{\p} V_{i}\bar{\p}
V_{j}^{\dagger}\rangle-\langle\bar{\p} V_{i}\rangle\langle\bar{\p}
V_{j}^{\dagger}\rangle), \end{equation} If the ground state is
\emph{translationally invariant}, this last quantity can be written
as $L^{d}\sum_{r}K(r):=L^{d}K$, where \begin{equation} K(r):=\langle\bar{\p}
V_{i}\bar{\p} V_{i+r}^{\dagger}\rangle-\langle\bar{\p}
V_{i}\rangle\langle\bar{\p} V_{i+r}^{\dagger}\rangle \end{equation} is
independent of $i.$ For gapped systems i.e.,
$\Delta_{1}(\infty):=\lim_{L\to\infty}\Delta_{1}(L)>0$ the
correlation function $G(r)$ is rapidly decaying \cite{Hastings2004}
and therefore $K$ is finite and independent of the system size.
Using (\ref{ineq}) it follows that for these non-critical systems
$|Q_{\mu\nu}|$ {\em cannot grow, as a function of $L,$ more than
extensively}.
Indeed one has that 
\begin{equation}\label{eq:BoundQ} 
\lim_{L\to\infty}|Q_{\mu\nu}|/L^{d}\le
K\Delta_{1}^{-2}(\infty)<\infty. 
\end{equation}
Conversely if 
\begin{equation} \lim_{L\to\infty}|Q_{\mu\nu}|/L^{d}=\infty 
\end{equation}
  i.e.,$|Q_{\mu\nu}|$ grows super-extensively, then either $\Delta_{1}(L)\rightarrow0$
or $K$ cannot be finite. In both cases the system has to be gapless.
Summarizing: {\em a super-extensive behavior of any of the
components of $Q$ for systems with local interaction implies a
vanishing gap in the thermodynamic limit}~\cite{CamposVenuti2007}.

This behaviour has been observed in a variety of
systems~\cite{Zanardi2006,Zanardi2007,You2007,Zhou2008}, and amounts
to a critical fidelity drop at the QPTs . The extensive behaviour of
the fidelity drop within a normal phase is strongly reminiscent of a
well known physical phenomenon, the Anderson {\em orthogonality
catastrophe}~\cite{Anderson1967}. Namely, as the dimension of the
system increases, the fidelity, i.e. the overlap, between two
infinitesimally neighbouring ground states $\Psi_{0}(\bm{\lambda})$
and $\Psi_{0}(\bm{\lambda}+d\bm{\lambda})$ approaches zero, no
matter how small the difference in parameters $\bm{\lambda}$ is, so
that two ground states are mutually orthogonal in the thermodynamic
limit.

This is a well known feature of systems in many-body physics having
infinitely many degrees of freedom~\cite{Anderson1967}. The fact
that two physical states corresponding to two arbitrarily close sets
of parameters, two arbitrarily similar physical situations, must
become orthogonal to each other in the thermodynamic limit, is not
a distinctive feature of a critical point. It is indeed a behaviour
which is present across the whole phase diagram, hence also between
two states belonging to the same phase. Hence, in itself, this
characteristic has little to do with QPTs.  Despite its emphatic
expression, the ``orthogonality catastrophe'' is much less a
dramatic and unusual peculiarity as its name would suggest. From a
quantum information perspective, it is easier to appreciate how
typical such a behaviour must be, given the infinite dimensionality
of the Hilbert-space that many-body states explore.

What is indeed qualitatively different in QPTs  is the rate at which
the fidelity vanishes in the thermodynamic limit. It is only the
presence of a dramatic, large scale change in the ground state
properties of the system which may allow for a \emph{super-extensive
increase} of the metric, and the consequent rate of reduction of the
state overlap. Loosely speaking, a criticality results in an
orthogonality catastrophe that is expressed on a qualitatively
greater scale. The intuition behind this change of scale may be
gleaned as follows. A local perturbation to a many-body Hamiltonian
far from a critical point may only result in \emph{local}
modifications to the state of the system, i.e. modifications which
are within a region of the size of the correlation length $\xi$.
Such local changes contribute to the reduction of the fidelity with
infinitesimal amounts, that, when accrued, are enough to provide an
increase in the total metric with a rate of up to $L^{d}$ in the
system size $L$. A higher rate is only possible when a local
perturbation generates \emph{non-local} changes on the system
states,  i.e. when the correlation length $\xi$ diverges and the
response of the system to a local perturbation brings in
contributions from degrees of freedom at every scale.

In the following section, we will briefly illustrate the above
considerations by using the simple, yet physically relevant many
body Hamiltonian: the $XY$ spin-chain model.

\subsection{ XY Model and Information Geometry}
To illustrate explicitly how divergencies of $g_{\mu\nu}$ may
arise~\cite{Zanardi2007}, let's go back to the XY model already
discussed in section~\ref{sec:XY}, which for the sake of convenience
we will rewrite here \begin{equation} \label{HXYModel} H = -\sum_{j=1}^{n}
\left(\frac{1 + \delta}{2}\sx_j \sx_{j+1} + \frac{1 -
\delta}{2}\sy_j \sy_{j+1} +\frac{h}{2}\sz_j \right), \end{equation} where $n$
is the number of spins, $\sigma^\mu_j$ are the Pauli matrices at
site $j$, $\delta$ is the x-y anisotropy parameter and $h$  is the
strength of the magnetic field. We already pointed out that the $XY$
model may be converted through the Jordan-Wigner
transformation~(\ref{eq:JW}) into the quasi-Free Fermion model, 
\begin{equation}
 \label{HXYModel} H = -\sum_{j=1}^{n} \left[ ( c_j^{\dagger} c_{j+1}
+ \delta c^{\dagger}_j  c^{\dagger}_{j+1} + H.c.) +h (2
c_j^{\dagger}c_{j} -1)\right]. \end{equation} Ground state, and in general
thermal states of quasi-free Fermion models fall within the general
class of  Gaussian Fermion states. We will introduce in the last
chapter a general framework which allows for the derivation of the
main geometric properties of such models. We will not give the
details of the derivation at this stage, and we will only state the
main result, which in this specific case can be derived directly.
For the sake of completeness, we will consider the rotated model
$H(\varphi)$ in~(\ref{eq:HXYphi}). Indeed from~(\ref{eq:QGT}) and
the form of the ground state~(\ref{eq:GSXY}), one gets 
\begin{equation}
 Q_{\mu\nu}=g_{\mu\nu}+ \frac{i}{2}F_{\mu\nu}\,, \end{equation} where 
\begin{equation}
 g_{\mu\nu}= \frac{1}{4}\sum_{k}\left(\partial_{\mu} \theta_{k}
\partial_{\nu} \theta_{k}+\sin^{2}\theta_{k}\partial_{\mu}\varphi
\partial_{\nu}\varphi \right)\,, \end{equation} \begin{equation} F_{\mu\nu}=
\frac{1}{2}\sum_{k}\left(\partial_{\mu}
\theta_{k}\partial_{\nu}\varphi - \partial_{\nu}
\theta_{k}\partial_{\mu}\varphi  \right)\sin\theta_{k}\,, \end{equation} with
the angle $\theta_k$ being defined as $
\theta_k:=\arccos(\eta_k/\varepsilon_k)$, $\eta_k:=\cos{q_{k}}- h$,
$ \varepsilon_k:=\sqrt{\eta_k^2+\delta^2 \sin^2 q_{k}}$, and
$q_k=2\pi k/n$.

One finds that the only non-vanishing derivatives are
$(\partial_{h}\theta_k)=\delta\sin q_k / \varepsilon_k^2$,
$(\partial_{\delta}\theta_k)=
\sin{q_k}(\cos{q_k}-{h})/\Lambda_\nu^2$, and obviously
$\partial_{\varphi}\varphi=1$.

In the thermodynamic limit, $g_{\mu\nu}$ can be calculated
analytically by replacing the discrete variable $q_k$ with a
continuous variable $q$ and substitute the sum with an integral,
i.e., $\sum_{k=1}^M\to[n/(2\pi)]\int_0^\pi\mathrm{d}q$. At critical
points, this cannot be generally done, due to singular behaviour of
terms involved in the sums. The resulting integrals leads to
analytical expressions which differ in the two regions $|{h}|<1$ and
$|{h}|>1$~\cite{Zanardi2007,Kolodrubetz2013}, 
\bea &&
g_{\varphi\varphi}= \frac{n}{8} \left\{
\begin{array}{cc}
\frac{|\delta|}{|\delta|+1}, & |h|<1\\
\frac{\delta^2}{1-\delta^2}\left( \frac{|h|} {\sqrt{h^2-1+\delta^2}}
- 1 \right) , & |h|>1
\end{array}
\right.
\label{eq:g_phiphi}\nonumber\\
&&  g_{hh}= \frac{n}{16} \left\{\begin{array}{cc}
      \frac{1}{|\delta| (1-h^2)}, & |h|<1\\ \frac{|h| \delta^2}
{(h^2-1)(h^2-1+\delta^2)^{3/2}}, & |h|>1
\end{array}\right. \label{eq:g_hh}\nonumber\\
&& g_{\delta \delta}= \frac{n}{16} \left\{\begin{array}{cc}
      \frac{1}{ |\delta| (1+|\delta|)^2}, & |h|<1\\ \left(
    \frac{2}{(1-\delta^2)^2} \Big[ \frac{|h|}{\sqrt{h^2-1+\delta^2}} - 1 \Big] -
     \frac{|h| \delta^2}{(1-\delta^2)(h^2-1+\delta^2)^{3/2}} \right), & |h|>1
\end{array}\right. \label{eq:g_gg}\nonumber\\
&&  g_{h\delta}= \frac{n}{16} \left\{\begin{array}{cc} 0, & |h|<1\\
\frac{-|h| \delta}{h (h^2-1+\delta^2)^{3/2}} , & |h|>1
\end{array}\right. \label{eq:g_hg}.
\eea
The metric as a whole shows a non-analytical behaviour across both
the critical regions $|h|=1$ and $\delta=0$. To visualise more
clearly such singular behaviour, it is convenient to compute the
scalar curvature, which provides a global property of the metric in
each point of the phase diagram. The scalar curvature $R$, which is
the trace of the Ricci curvature tensor \cite{Nakahara1990}, yields
the following expressions 
\bea
R=&&-\frac{8}{n} \frac{1}{|\delta|} \qquad |h|<1\\
R =&& \frac{8}{n}\left[ 4 + \frac{5h}{\sqrt{h^2 + \delta^2 - 1}} - 2
\frac{\left(h^2 + h \sqrt{h^2 + \delta^2 - 1} - 1 \right)}{\delta^2}
\right] \qquad |h|>1. \eea Note that the curvature diverges on the
segment $|{h}|\leq1,\delta=0$ and it is discontinuous on the lines
${h}=\pm1$. Indeed, $\lim_{|{h}|\to1^+}R=8/n(4+ 3h/|\delta|)$,
$\lim_{|{h}|\to1^-}R=-8/n 1/|\delta|$.

\section{Mixed States and Phase Transitions.}
Up to this point, this review has dealt with models at zero temperature, whose description relies on the properties of the unique ground state, i.e. a pure quantum state. A natural extension of the above considerations regards problems which calls for a description of the system in terms of mixed states rather than a pure
quantum state. The simplest example of such a scenario would be the
thermal state of a quantum system, and a natural generalisation of
the above consideration would be to study phase transitions at finite
temperature. As we will discuss the next few sections, the framework
that we will introduce provides a way to analyse the even more
general senario of open system phase transitions, in which the properties of non-equilibrium dynamics yields mixed stable states~\cite{Magazzu2015,Spagnolo2018,Valenti2018,Guarcello2016,Spagnolo2017,Spagnolo2015,Spagnolo2019}. To this end, we
will need to bring in new tools, and we will start by introducing a
mixed state generalisation of the geometric phase, i.e. the Uhlmann
phase.

\subsection{Definition of Mixed Geometric Phase via State Purification}

The first definition of geometric phase for mixed state has been
proposed by Uhlmann~\cite{Uhlmann1986,Dittmann1999}. In his
formulation a mixed state is allowed to perform any kind of
physically admissible evolution. Therefore, it is truly general and
applicable to the most general setting. However, admittedly, this
definition relies on a rather abstract approach which somehow
obscures its physical interpretation. Still, many proposals to
measure it have been already put
forward~\cite{Tidstrom2003,Aberg2007,Viyuela2016}, and demonstrated
experimentally~\cite{Zhu2011,Kiselev2018}. 

The formulation of Uhlmann geometric
phase relies on the concept of ``purification''. According to this
concept, any mixed state $\rho$ can be regarded as the ``reduced
density matrix'' of a pure state lying in an enlarged Hilbert space.
Essentially, one looks for larger, possibly fictitious, quantum
systems from which the original mixed states are seen as reductions
of pure states. For density operators there is a standard way to do
so by use of the Hilbert-Schmidt operators, or by Hilbert-Schmidt
maps from an auxiliary Hilbert space into the original one.

\subsection{Purification}
Let's start with reviewing some basic idea of the purification
procedure. Let $\mathcal{H}$ be a complex Hilbert space of finite
dimension $n$ with the usual scalar product $\langle...\rangle$ and
let ${\cal B}(\mathcal{H})$ be the algebra of linear operators
acting on $\mathcal{H}$. We remind that formally a general
(\emph{mixed} or \emph{pure}) state is defined as positive linear
trace class operator $\rho\in {\cal B}(\mathcal{H})$ such that
$\Tr{\rho}=1$. In this formalism, a \emph{pure state} (or \emph{rank
one density operator}) is any state $\omega\in {\cal
B}(\mathcal{H})$ for which also $\omega^2=\omega$ holds. Using the
standard notation used in section~\ref{sec:BP}, a pure state $\omega$
is denoted with $\ket{\psi}\bra{\psi}$, $\ket{\psi}\in \mathcal{H}$
being the only eigenstate of $\omega$ with eigenvalue 1.

\emph{A purification} of a mixed state $\rho\in {\cal
B}(\mathcal{H})$ is a a \emph{lift} to pure state
$\ket{\psi}\bra{\psi}$ in a larger space ${\cal B}(\mathcal{H}')$
embedding $\rho\in {\cal B}(\mathcal{H})$. To achieve purification,
it is sufficient to consider an auxiliary Hilbert space
$\mathcal{H}_{aux}$, at least of the same dimension $n$, and then
consider the tensor product space:
\begin{equation}\label{eq:TensSpace}
    \mathcal{H}\otimes \mathcal{H}_{aux},\qquad n=dim \mathcal{H} \le dim \mathcal{H}_{aux}.
\end{equation}
A reduction to $\mathcal{H}$ means performing the partial trace over
the auxiliary space. Now, let $\one_{aux}$ be the identity operator
in $\mathcal{H}_{aux}$, then a state $\ket{\psi}\in
\mathcal{H}\otimes \mathcal{H}_{aux}$ is said to \emph{purify}
$\rho$, if for any operator $O \in {\cal B}(\mathcal{H})$
\begin{equation}\label{purif}
    \Tr (O\rho)=\bra{\psi} O\otimes\one_{aux}\ket{\psi},
\end{equation}
or equivalently if $\rho=\Tr_{aux} \ket{\psi}\bra{\psi}$, where
$\Tr_{aux}$ is the partial trace over the auxiliary space
$\mathcal{H}_{aux}$.\footnote{A distinguished way to choose a
purification, called \emph{standard purification}, is to require
\begin{equation}\label{eq:StandPur}
    \mathcal{H}_{aux}=\mathcal{H}, \qquad \bar{\mathcal{H}}=\mathcal{H}\otimes \mathcal{H}_{aux}.
\end{equation}
When not otherwise specified we will consider only standard
purifications.}
%where $H^*$ is the dual space of $H$, i.e. the space of bra
%operators $\bra{\psi}$, i.e. the linear forms defined as
%$f_{\psi}[H\to\mathbb{C}]$:
%\begin{equation}\label{eq:bra}
%    \bra{\psi}:f_{\psi}(\chi)=\langle \psi|\chi\rangle \quad\in
%    \mathbb{C}\quad \textrm{ and }\psi,\chi \in H.
%\end{equation}

However, it can be formally more convenient to work with a
different notation. Indeed, being of finite dimension,
$\bar{\mathcal{H}}=\mathcal{H}\otimes \mathcal{H}$ is canonically
isomorphic to ${\cal B}(\mathcal{H})$. This can be made explicit by
fixing two arbitrarily chosen orthonormal basis
$\phi_{1},\phi_{2},...$ in $\mathcal{H}$ and
$\phi_{1}',\phi_{2}',...$ in $\mathcal{H}_{aux}$. Given any operator
$\w\in {\cal B}(\mathcal{H})$,
\begin{equation}\label{eq:isomorp}
    \ket{\psi_{\w}}=\sum\ket{\phi_{i}}\otimes\ket{\phi_{j}'}\cdot\bra{\phi_{i}}\w\ket{\phi_{j}'}\qquad
    \in \bar{\mathcal{H}}.
\end{equation}
The canonical scalar product in $\bar{\mathcal{H}}$ is equivalent to
the Hilbert-Schmidt scalar product $(\w_{1},\w_{2})$ in ${\cal
B}(\mathcal{H})$
\begin{equation}\label{eq:scalarP}
    (\w_{1},\w_{2}):=Tr\left(\w_{1}\cdot \w_{2}^{\dag}\right)
    =\sum\bra{\phi_{i}}\w_{1}\ket{\phi_{j}'}\bra{\phi_{j}'}\w_{2}^{\dag}\ket{\phi_{i}}
    = \langle \psi_{\w_{1}}|\psi_{\w_{2}}\rangle,
\end{equation}
and the partial trace over the auxiliary space is given by:
\begin{eqnarray}\label{eq:PartialTr}
    \Tr_{aux}\left(\ket{\psi}\bra{\psi}\right)&=&\sum
    \ket{\phi_{i}}\bra{\phi_{k}}\cdot\bra{\phi_{i}}\w\ket{\phi_{j}'}\bra{\phi_{j}'}\w^{\dag}\ket{\phi_{k}}\\
    &=&\w^{\dag}\cdot \w.
\end{eqnarray}
Therefore, given this isomorphism $\w\leftrightarrow \psi_{\w}$, in
the following we will refer as \emph{standard purification} or
\emph{amplitude} of a density matrix $\rho$ either an operator
$\w\in{\cal B}(\mathcal{H})$, for which
\begin{equation}\label{eq:StPurification}
    \rho=\w^{\dag}\cdot \w,
\end{equation}
or its isomorphic counterpart $\psi_{\w} \in \bar{\mathcal{H}}$
defined by Eq.~\eqref{eq:isomorp}.

A crucial point to stress is that, given a mixed state, the
construction of a standard purification is \emph{by no means
unique}. From a formal point of view, it can be easily checked that
any $\w'=U\cdot \w$, for a given unitary operator $U\in U(n)$,
represents a standard purification of the same state $\rho$. This is
somehow expected, as the purification, from a physical point of
view, represents a ``complete information'' on the global system
described by the \emph{global} Hilbert space $\mathcal{H}\otimes
\mathcal{H}_{aux}$, whereas $\rho$ describes only a part of this
compound. Therefore, $\rho$ is expected to contain only that part of
the ``information'' which can be ``locally'' stored in one of the
subsystem $\mathcal{H}$. This becomes physically obvious by
considering that the transformation $U$ in
$\bar{\mathcal{H}}=\mathcal{H}\otimes \mathcal{H}_{aux}$, looks just
like a \emph{local} change of basis in $\mathcal{H}_{aux}$, which by
no means can affect the state $\rho$ in $\mathcal{H}$.

In the next section, I will often stress the implications of this
\emph{``one to many'' relation} between mixed states and their
purifications. Indeed, for what concerns the definition of mixed
state geometric phase it will become crucial to establish a
criterion which diminish such an ambiguity, by selecting distinguished
set of purifications.

\subsection{Parallel Transport of Density Matrices and Uhlmann Geometric Phase}\label{sec:UlmParaTr}
Given this definition of purification, it would be natural to
generalise the concept of geometric phase for a chain of density
matrices, by referring to their purifications. Indeed, as
purifications are, by definition, pure, we could just
straightforwardly apply the definition of Berry phases exploited in
the previous sections. Unfortunately, this \emph{programme} does not
generate an unambiguous value of the geometric phase, on account of
the lack of uniqueness in the purification procedure. The problem,
is, therefore, to select among all possible ones a distinguished set
of purification. A solution to this problem was proposed by
Uhlmann~\cite{Uhlmann1976,Uhlmann1986,Uhlmann1989,Uhlmann1991}. His
idea is based on the concept of parallel transport.

Let's start by considering a path of density operator $\rho(s)$,
$s\in[s_{a},s_{b}]$, and its purified path
\begin{equation}\label{eq:prurif-curve}
\rho(s) \to \w(s)
\end{equation}
i.e. such that $\rho(s)=\w^{\dag}(s)\cdot \w(s)$. By the previous
argument, not only~(\ref{eq:prurif-curve}) represents a purification
but also every unitarily transformed path
\begin{equation}\label{eq:prurif-gauged-curve}
    \w(s) \to U(s)\cdot \w(s).
\end{equation}
By analogy with the idea of gauge transformation used in
section\label{sec:BPadiab} for pure states, it is natural to refer
to~(\ref{eq:prurif-gauged-curve}) as a \emph{gauge transformation}
for mixed states, and in general the freedom in the choice of an
amplitude $\w$ as \emph{gauge freedom}. Notice, that the set of
possible gauge transformations that could be adopted in the case of
the pure states were mere multiplications by complex phase factor's
$e^{i\alpha}\in U(1)$, i.e. a $U(1)$ gauge freedom. This allowed for
the description of the Berry phase in terms of an underlying
\emph{Abelian} gauge theory, which by analogy could be compared with
the usual electromagnetic $U(1)$ gauge field. The much wider choice
of a general unitary operator $U\in U(n)$, in the present case,
calls for the more convoluted $U(n)$ gauge structure. We will show,
that the natural setting underlying the definition of the Uhlmann
geometric phase is within the theory of  \emph{holonomies}, i.e. the
non-Abelian generalisation of the geometric phase.

Let $\w_{1},\w_{2},..,\w_{m}$ be a finite subdivision of the
curve~(\ref{eq:prurif-curve}), i.e. a path ordered subset of
operators~(\ref{eq:prurif-curve}). Notice that these operators have
norm $||\w_{i}||^{2}:=(\w_{i},\w_{i})=1$, due to the normalisation
condition of $\rho_{i}$. Let us consider the discrete chain of pure
states $\psi_{\w_{i}}$
\begin{eqnarray}\label{eq:UhlmGphase}
    \xi&=&(\w_{1},\w_{2})(\w_{2},\w_{3})\dots(\w_{m-1},\w_{m})(\w_{m},\w_{1})\\
    &=&\langle \psi_{\w_{1}}|\psi_{\w_{2}}\rangle\langle
    \psi_{\w_{1}}|\psi_{\w_{3}}\rangle\dots\langle \psi_{\w_{m-1}}|\psi_{\w_{m}}\rangle\langle
    \psi_{\w_{m}}|\psi_{\w_{1}}\rangle.
\end{eqnarray}
One could be tempted to consider the complex argument of the
functional $\xi$, and take the limit from the discrete chain
$\w_{1}\dots \w_{n}$ to a continuous evolution, and identify this
limit with the "mixed state geometric phase" of the path $\rho(s)$.
However, each \emph{gauged transformed}
path~(\ref{eq:prurif-gauged-curve}) generally produces a different
$\tilde{\xi}\neq\xi$. A sensible criterion to diminish this
arbitrariness is needed.\\\indent Uhlmann introduced a parallel
transport criterion, analogous to the parallel transport condition
for pure states, which is able to single out a specific set of
purified paths and uniquely identifies the geometric phase. In fact,
if one tries to purify two density operators, $\rho_{1}$ and
$\rho_{2}$, simultaneously, say with $\psi_{\w_{1}}$ and
$\psi_{\w_{2}}$, the purification ambiguity can be partially lifted
by choosing them to be ``as near as possible'' to each
other~\cite{Bures1969,Uhlmann1976,Araki1982,Alberti1983}. Given
$\ket{\psi_{\w_{1}}}$, there is a $\ket{\psi_{\w_{2}}}$ with maximal
overlap
\begin{equation}\label{eq:MaximalBures}
    |\langle\psi_{\w_{1}}|\psi_{\w_{2}}\rangle|\ge |\langle\psi'_{\w_{1}}|\psi'_{\w_{2}}\rangle|
\end{equation}
or, equivalently, with minimal Fubini-Study distance
$d^{2}_{FS}(\psi_{\w_{1}},\psi_{\w_{2}})=2-2|\langle\psi_{\w_{1}}|\psi_{\w_{2}}\rangle|$
among all pair of vectors, $\ket{\psi'_{\w_{1}}}$,
$\ket{\psi'_{\w_{2}}}$ simultaneously purifying $\rho_{1}$ and
$\rho_{2}$. Ulhmann describes this situation by calling the pair
$\psi_{\w_{1}}$ and $\psi_{\w_{2}}$ \emph{parallel}, as a shorthand
for ``as parallel as possible''~\cite{Uhlmann1986}.
 This criterion, therefore, allows one
to distinguish, within all purifications $\w_{i}$ of the curve $\rho(s)$,
 the exceptional ones $\tilde{\w}_{i}$ for which the overlap
between an element of the purified chain $\tilde{\w}_{i}$ and the
neighbouring one $\tilde{\w}_{i+1}$ is maximised. If this
condition is fulfilled for the whole chain, the remaining
arbitrariness is in a \emph{regauging} $\tilde{\w}_{i}\to
e^{i\alpha_{i}}U\cdot \tilde{\w}_{i}$ of the subdivision, with
$\alpha_{i} \in\mathbb{R}$ and a unitary operator $U$ independent of
the index $i$. This, however, leaves the quantity
\begin{equation}\label{eq:UhlmGphaseGaugeInv}
    \tilde{\xi}=(\tilde{\w}_{1},\tilde{\w}_{2})(\tilde{\w}_{2},\tilde{\w}_{3})...(\tilde{\w}_{m-1},\tilde{\w}_{m})(\tilde{\w}_{m},\tilde{\w}_{1})
\end{equation}
\emph{invariant}. Therefore, $\tilde{\xi}$ is uniquely defined by
the discrete chain of state $\rho_{i}=\w_{i}^{\dag}\w_{i}$ and it is
meaningful to regard $\Phi_{g}=\arg{\tilde{\xi}}$ as a mixed state
generalisation of the geometric phase.

One can also sharpen the condition of parallel
transport~(\ref{eq:MaximalBures}) by making use of the remaining
regauging degree of freedom. One can, indeed, require two
neighbouring purifications to be in phase, in the Pancharatnam
criterion, i.e.
\begin{equation}\label{eq:UhlmGphaseGaugeInv1}
    (\tilde{\w}_{i},\tilde{\w}_{i+1})=\langle\psi_{\w_{i}}|\psi_{\w_{i+1}}\rangle \ge 0.
\end{equation}
For such a parallel purification, the mixed geometric phase becomes:
\begin{equation}\label{eq:discreteInphase}
    \Phi_{g}=\arg{(\tilde{\w}_{N},\tilde{\w}_{1})}\,.
\end{equation}
The condition~(\ref{eq:UhlmGphaseGaugeInv1}) is equivalent to requiring
that
\begin{equation}\label{eq:preLength}
  ||\psi_{\w_{1}}-\psi_{\w_{2}}|| + ||\psi_{\w_{2}}-\psi_{\w_{3}}|| + \dots + ||\psi_{\w_{N-1}}-\psi_{\w_{N}}||,
\end{equation}
attains its minimum. Going to the limit of finer and finer
subdivisions, Eq.~(\ref{eq:preLength}) converges to the length
of the curve of the purification~(\ref{eq:prurif-curve}). Therefore
the purification is parallel if and only if it solves the following
variational problem,
\begin{equation}\label{eq:length}
D_{B} =\int_{\rho(s_{a})}^{\rho(s_{b})} dl_{B}:= \min  \int_{s_a}^{s_b}\sqrt{\bk{\dot{\psi}_{\w(s)}|\dot{\psi}_{\w(s)}}}ds\,, %= \min  \int_{s_a}^{s_b}\sqrt{(\dot{\w}(s),\dot{\w}(s))}ds.
\end{equation}
%condition results in a parallelity constraint for ``infinitely neighboured'' purifications, i.e.
%\begin{equation}\label{eq:UhlmannParallCond}
%    \lim_{\delta\to
%    0}||\psi_{\w(s+\delta)}-\psi_{\w(s)}||\delta^{-1}=\sqrt{\langle\dot{\psi}_{\w(s)}|\dot{\psi}_{\w(s)}\rangle}
%    \quad  \textrm{ miminal}
%\end{equation}
where $\psi_{\w(s)}$ is a purified path of $\rho(s)$, and the dots
denote derivatives with respect to $s$. The resulting minimal length
$D_{B}$ is called \emph{Bures length} or \emph{Bures
angle}~\cite{Bures1969,Uhlmann1992}. Therefore a purification
$\psi_{\w(s)}$ is called ``parallel'' or ``parallel transported''
if, for every gauged purification $\psi'_{\w(s)}$ of $\rho(s)$, it
holds
\begin{equation}\label{eq:UhlmannParallCond1}
\sqrt{\langle\dot{\psi}_{\w(s)}|\dot{\psi}_{\w(s)}\rangle}\le
\sqrt{\langle\dot{\psi'}_{\w(s)}|\dot{\psi'}_{\w(s)}\rangle}
    \quad \forall s.
\end{equation}
It is plain to derive a condition for parallel purification, which
is easier to handle. Suppose that $\psi_{\w(s)}$ is a parallel
purification, then $\psi'_{\w(s)}=U(s)\psi_{\w(s)}$, with
$U(s)=\one\otimes U'$ unitary, is another purification of $\rho(s)$.
Inserting this into Eq.~(\ref{eq:UhlmannParallCond1}) leads to
\begin{equation}\label{eq:UhlmannParallCond2}
0\le \langle\psi_\w|B^{\dag} B|\psi_\w\rangle
+i\left(\langle\dot{\psi}_\w|B|\psi_{\w}\rangle-\langle\psi_{\w}|B|\dot{\psi}_\w\rangle\right),
\end{equation}
where $B$ is the Hermitian generator of $U$, i.e. $B(s):=i\dot{U}(s)
U^{\dagger}(s)$. This inequality is valid if and only if
$\langle\dot{\psi}_\w|B|\psi_{\w}\rangle=\langle\psi_{\w}|B|\dot{\psi}_\w\rangle$
for all Hermitian operators $B=\one\otimes B'$. In the language of
the Hilbert-Schmidt space, this condition becomes $
\Tr\left(\dot{\w}\w^{\dag}B'-\w\dot{\w}^{\dag}B'\right)=0$ for all
$B'$ Hermitian, which essentially means nothing other than
\begin{equation}\label{eq:UhlmannParallCond5}
\dot{\w}\w^{\dag}=\w\dot{\w}^{\dag}.
\end{equation}
This condition, together with the normalisation of $\rho(s)$,
implies that $(\w(s),\w(s+\delta s))\approx1$, thus guaranteeing
that each $\w(s)$ and its neighbour $\w(s+\delta s)$ are in phase in
the Pancharatnam sense. The \emph{Ulhmann mixed geometric phase}
results just in the residual phase difference between initial and
final state, i.e.
\begin{equation}\label{eq:UhlmannGeoPh}
\Phi_{g}=\arg{\left(\w(s_{b}),\w(s_{a})\right)}\,,
\end{equation}
with $s_{a}$ and $s_{b}$ initial and final value of the parameter,
respectively.

\section{Fidelity and Bures Metric}
According to the Uhlmann parallelism, two amplitudes $\w_{1}$ and
$\w_{2}$ are called parallel if they maximise their Hilbert-Schmidt
scalar product, among those simultaneously purifying $\rho_{1}$ and
$\rho_{2}$. A very important byproduct of this maximisation
procedure is the so called fidelity~\cite{Uhlmann1976}, defined as
\begin{equation}\label{eq:fidelitydef}
    \mathcal{F}(\rho_{1},\rho_{2}):= \max_{\w_{1},\w_{2}} |(\w_{1},\w_{2})| = \max_{\psi_{\w_{1}},\psi_{\w_{2}}} |\bk{\psi_{\w_{1}}|\psi_{\w_{2}}}|\,.
\end{equation}
This is a very crucial quantity in quantum information and in
quantum estimation theory. It provides an operationally well-defined
distance between quantum states, in terms of statistical
distinguishability of quantum states. An explicit expression for the
above maximal value has been proven by
Uhlmann~\cite{Uhlmann1976,Jozsa1994}
\begin{equation}\label{eq:fidelity}
\mathcal{F}(\rho_{1},\rho_{2})=\Tr{\sqrt{\sqrt{\rho_{2}}\rho_{1}\sqrt{\rho_{2}}}},
\end{equation}
which readily shows how the fidelity depends on $\rho_{1}$ and
$\rho_{2}$ only. The proof of the above expression relies on the
following simple lemma.

\begin{Lemma}\label{lemma:HS}
\emph{For any operator $B$, and any unitary $U$, $|\Tr{(B
U)}|\le\Tr|B|$, with equality attained for $U=V^{\dagger}$, where
$B=:|B|V$ is the polar decomposition of $B$, with $|B|:=\sqrt{B
B^{\dagger}}$.}
\end{Lemma}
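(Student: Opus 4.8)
The plan is to reduce the whole statement to the spectral decomposition of the positive operator $|B|$ together with the elementary fact that a unitary cannot enlarge norms. First I would insert the polar decomposition $B=|B|V$ into the trace and absorb the second unitary into the first, writing $\Tr(BU)=\Tr(|B|VU)=\Tr(|B|W)$ with $W:=VU$, which is again unitary because it is a product of unitaries. This converts the task into bounding $|\Tr(|B|W)|$ for a \emph{fixed} positive semi-definite $|B|$ and an \emph{arbitrary} unitary $W$, which is the clean form I want to attack.

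Next I would diagonalise $|B|=\sum_i \lambda_i\kb{e_i}$ with $\lambda_i\ge 0$ and $\{\ket{e_i}\}$ orthonormal, so that $\Tr(|B|W)=\sum_i \lambda_i\,\bk{e_i|W|e_i}$. The single estimate that does all the work is that every diagonal matrix element of a unitary has modulus at most one: by Cauchy--Schwarz, $|\bk{e_i|W|e_i}|\le \|W\ket{e_i}\|\,\|\ket{e_i}\|=1$, since $W$ preserves norms. Combining this with the triangle inequality yields $|\Tr(|B|W)|\le \sum_i \lambda_i\,|\bk{e_i|W|e_i}|\le \sum_i \lambda_i=\Tr|B|$, which is precisely the asserted bound.

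Finally, for the equality clause I would simply take $U=V^{\dagger}$, so that $W=VV^{\dagger}=\one$ and $\Tr(BU)=\Tr(|B|\one)=\Tr|B|$ is attained exactly. For completeness I would remark that tracing back the two inequalities shows equality forces $\bk{e_i|W|e_i}=1$ — hence $W\ket{e_i}=\ket{e_i}$ — for every $i$ with $\lambda_i>0$, i.e. $W$ must reduce to the identity on the support of $|B|$; the choice $W=\one$ is the canonical such solution and the one relevant to the fidelity formula.

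The computation is essentially routine, and the only point I would be genuinely careful about is the polar decomposition when $B$ is singular. In that case the factor $V$ is a priori only a partial isometry defined on the range of $|B|$, and one must extend it to a bona fide unitary on all of $\mathcal{H}$ (always possible in finite dimension by matching the orthogonal complements of $\ker B$ and $\ker B^{\dagger}$). Since the bound only ever pairs $V$ with $|B|$, whose action annihilates the ambiguous subspace, this extension leaves both the inequality and the equality case untouched, so no generality is lost.
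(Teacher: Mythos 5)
Your proof is correct, but it takes a genuinely different route from the paper's. The paper proves the inequality in a single operator-level step: it splits $|B|=|B|^{\frac12}\,|B|^{\frac12}$ and applies the Cauchy--Schwarz inequality for the Hilbert--Schmidt scalar product, $|\Tr(|B|^{\frac12}\cdot|B|^{\frac12}VU)|\le\sqrt{\Tr|B|\,\Tr(U^{\dagger}V^{\dagger}|B|VU)}=\Tr|B|$, the last equality following from the unitary invariance of the trace; the equality clause is dismissed as immediate. You instead diagonalise $|B|=\sum_i\lambda_i\kb{e_i}$ and reduce everything to the elementary facts that $|\bra{e_i}W\ket{e_i}|\le1$ for unitary $W=VU$ and that the triangle inequality then gives $|\sum_i\lambda_i\bra{e_i}W\ket{e_i}|\le\sum_i\lambda_i$. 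Both arguments are ultimately Cauchy--Schwarz in disguise, but yours is applied at the vector level after a spectral decomposition, while the paper's is applied once at the operator level. What your route buys: the equality analysis comes essentially for free ($W$ must act as the identity on the support of $|B|$, up to the common-phase caveat if one only demands $|\Tr(BU)|=\Tr|B|$), and you correctly flag the point the paper silently glosses over, namely that for singular $B$ the polar factor $V$ is only a partial isometry and must be extended to a genuine unitary --- harmless here since $V$ only ever acts against $|B|$. What the paper's route buys: compactness (one displayed line) and the fact that it never needs an eigenbasis, so it transfers verbatim to settings where diagonalisation is awkward.
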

\begin{proof}
The equality follows straightforwardly from the condition stated,
whereas the inequality arises from
\begin{equation}\label{eq:BIneq}
|\Tr{B U}| =
|\Tr(|B|VU)|=|\Tr(|B|^{\frac{1}{2}}|B|^{\frac{1}{2}}VU)|\le\sqrt{\Tr|B|
\Tr(U^{\dagger}V^{\dagger}|B|V U )}=\Tr|B|,
\end{equation}
where the second relation is the Cauchy-Schwartz inequality for the
Hilbert-Schmidt scalar product.\end{proof}

To prove Eq.~(\ref{eq:fidelity}), we define
$\w_{i}=:\sqrt{\rho_{i}}U_{i}$, with $i=(1,2)$ the polar
decompositions of two purifications of $\rho_{1}$ and $\rho_{2}$.
Then, by Lemma~\ref{lemma:HS}
\[
    |\Tr(\w_{1}^{\dagger}\w_{2})|=|\Tr(\sqrt{\rho_{1}}\sqrt{\rho_{2}}U_{2}U_{1}^{\dagger})|\le\Tr|\sqrt{\rho_{1}}\sqrt{\rho_{2}}|=\Tr\sqrt{\sqrt{\rho_{1}}\rho_{2}\sqrt{\rho_{1}}}.
\]
The equality is attained for $U_{2}U_{1}^{\dagger}=V^{\dagger}$,
where
$\sqrt{\rho_{1}}\sqrt{\rho_{2}}=:|\sqrt{\rho_{1}}\sqrt{\rho_{2}}|V$.

Two important limiting case of the fidelity are worth mentioning
explicitly. The first one is when we deal with pure states,
$\rho_{1}=\kb{\psi_{1}}$ and $\rho_{2}=\kb{\psi_{2}}$. In this case,
the fidelity reduces to the standard overlap
$\mathcal{F}(\rho_{1},\rho_{2})=|\bk{\psi_{1}|\psi_{2}}|$. Slighlty
more generally, if just one of the two states is pure,
$\rho_{1}=\kb{\psi_{1}}$, then
$\mathcal{F}(\rho_{1},\rho_{2})^{2}=|\bk{\psi_{1}|\rho_{2}|\psi_{1}}|$,
which is the probability that the state $\rho_{2}$ will score
positively if tested on whether it is in the pure state $\rho_{1}$.
It serves as a figure of merit in many statistical estimation
problems.\\\indent The second example is when $\rho_{1}$ and
$\rho_{2}$ commute, i.e. when they are simultaneously diagonal,
$\rho_{i}=\sum_{k} p_{i}(k) \kb{k}$. In this case, the fidelity
reduces to 
\begin{equation}
 \mathcal{F}(\rho_{1},\rho_{2})=\sum_{k}\sqrt{p_{1}(k)}\sqrt{p_{2}(k)}=B(P_{1},P_{2}),
\end{equation} i.e.  the Bhattacharyya coefficient of the classical statistical
distributions $P_{i}:=\{p_{i}(k),k=1\dots n\}$.

The fidelity also enjoys a number of quite desiderable properties
for a measure of statistical distinguishability~\cite{Jozsa1994}:
\begin{enumerate}%[noitemsep]
\item $0\le \mathcal{F}(\rho_1,\rho_2)\le 1$;
\item $\mathcal{F}(\rho_1,\rho_2)=1$ iff $\rho_1 =\rho_2$ and $\mathcal{F}(\rho_1,\rho_2)=0$ iff $\rho_1$ and $\rho_2$ have
orthogonal supports;
\item Symmetry, $\mathcal{F}(\rho_1, \rho_2) = \mathcal{F}(\rho_2, \rho_1)$;
\item Strong concavity, $\mathcal{F}(\sum_{j}p_{j}\rho_{j}, \sum_{j}q_j\rho'_j) \ge \sum_{j}\sqrt{p_{j}q_{j}}\mathcal{F}(\rho_{j},\rho'_{j})$;\label{item:SConc}
\item Multiplicativity, $\mathcal{F} (\rho_1 \otimes \rho_2 , \rho_3 \otimes \rho_4 ) = \mathcal{F} (\rho_1 , \rho_3 ) \mathcal{F} (\rho_2 , \rho_4 )$;
\item Unitary invariance, $\mathcal{F}(\rho_1,\rho_2) = \mathcal{F}(U\rho_1 U^{\dagger},U\rho_2U^{\dagger})$;
\item Monotonicity, $\mathcal{F}( \Phi(\rho_1), \Phi(\rho_2)) \le \mathcal{F}(\rho_1,\rho_2)$, where $\Phi$ is a trace preserving CP map.\label{Item:Mono}
\end{enumerate}
Property~\ref{item:SConc} also implies concavity, i.e.
$\mathcal{F}(\sum_{j}p_{j}\rho_{j}, \rho') \ge
\sum_{j}\sqrt{p_{j}q_{j}}\mathcal{F}(\rho_{j},\rho')$.
Property~\ref{Item:Mono} is a key entry, it means that the fidelity
cannot grow under any type of physical process, i.e. unitary
transformations, generalised measurements, stochastic maps and
combinations thereof. This is a crucial demand for any bona-fide
measure of distinguishability. It is, indeed, physically
unacceptable that any stochastic map, including measurements, may
contribute in increasing the distinguishability of two
states.\\\indent To explicitly see in what sense the fidelity is a
measure of statistical distinguishability~\cite{Fuchs1995} let's
follow a similar argument exposed in section~\ref{sec:FSstat}. In
the case of two pure states $\psi_{1}$ and $\psi_{2}$, it has been
shown that the overlap $|\bk{\psi_{1}|\psi_{2}}|$ provides a measure
of the experimental (in)-distinguishability of two quantum states,
assuming no limitations on the type of measurements one can perform.
One can show that the same applies to the fidelity in the case of
two mixed states $\rho_{1}$ and $\rho_{2}$. One assumes a specific
measurement process, and defines a statistical measure of
distinguishability between the two resulting outcome distributions.
This measure clearly depends on the choice of the measurement
process as well as on the states. One can then select the optimal
measurement strategy that maximises the distinguishability
according to some figure of merit, and define the latter as the measure
of distinguishability between the states.\\\indent For simplicity we
will assume both states to be full-rank (i.e. invertible) density
matrices. We will allow for the most general type of measuring
device, i.e. a positive operator valued measurement
(POVM)~\cite{Nielsen2000} $\{E_{k}, k=1\dots n\}$. A given density
matrix $\rho_{i}$ responds to such a measurement process with a
probability distribution $P_{i}:=\{p(\rho_{i},E_{k}),  k=1\dots n
\}$, where $p_{i}(k):=\Tr(\rho_{i}E_{k})$. The optimal POVM is the
one that produces two distribution $P_{1}$ and $P_{2}$ which are the
most statistically distinguishable. As in the case of pure states,
the figure of merit of choice is the Bhattacharyya coefficient
\[
B(P_{1},P_{2})=\sum_{k} \sqrt{p_{1}(k) p_{2}(k)}\,,
\]
which has to be minimised over the POVMs. For a generic unitary operator $U$,
rewriting
\begin{align*}
p_{1}(k):=\Tr\left( (U\sqrt{\rho_{1}} \sqrt{E_{k}})
(U\sqrt{\rho_{1}} \sqrt{E_{k}})^{\dagger} \right)
\end{align*}
yields the following chain of relations,
\begin{align*}
B(P_{1},P_{2})&=\sum_{k}\Tr\left( (U\sqrt{\rho_{1}} \sqrt{E_{k}})
(U\sqrt{\rho_{1}} \sqrt{E_{k}})^{\dagger} \right)^{\frac{1}{2}}
\Tr\left( (\sqrt{\rho_{2}} \sqrt{E_{k}}) (\sqrt{\rho_{2}}
\sqrt{E_{k}})^{\dagger} \right)^{\frac{1}{2}}\\&\ge
\sum_{k}|\Tr\left( (U\sqrt{\rho_{1}} \sqrt{E_{k}}) (\sqrt{\rho_{2}} \sqrt{E_{k}})^{\dagger} \right)| = \sum_{k}|\Tr\left( U\sqrt{\rho_{1}} E_{k} \sqrt{\rho_{2} } \right)|\\
&\ge \Big|\Tr\Big(\sum_{k} U\sqrt{\rho_{1}} E_{k} \sqrt{\rho_{2} }
\Big)\Big|= |\Tr(U\sqrt{\rho_{1}}\sqrt{\rho_{2} })|,
\end{align*}
where the second line is due to the Cauchy-Schwarz inequality for the Hilbert-Schmidt scalar product, with the equality being attained if condition (a): $\sqrt{\rho_{2}} E_{k} \propto U \sqrt{\rho_{1}} E_{k}$ is fullfilled. The remaining relations arise from the linearity of the trace and completeness property of the POVMs, and the second inequality can be saturated only if (b): $\Tr\Big(U\sqrt{\rho_{1}} E_{k} \sqrt{\rho_{2} } \Big)\ge 0$ $\forall k$.\\
Notice that the above inequalities are fulfilled by any unitary $U$.
Therefore, if it has to be a chance of attaining equality in them,
$U$ had better be chosen so as to maximise
$|\Tr(U\sqrt{\rho_{1}}\sqrt{\rho_{2} })|$. From Lemma 1, we know
that this is achieved by
\begin{equation}\label{eq:UOpt}
U= \sqrt{
\sqrt{\rho_{2}}\rho_{1}\sqrt{\rho_{2}}}\rho_{2}^{-\frac{1}{2}}\rho_{1}^{-\frac{1}{2}}.
\end{equation}
It can be checked that, with this unitary operator both condition (a)
and (b) can be satisfied by a set of POVMs $E_{k}$, which are
projective measurements onto the eigenbasis of the following
Hermitian operator
\begin{equation}\label{eq:MOpt}
M:= \rho_{2}^{-\frac{1}{2}}\sqrt{
\sqrt{\rho_{2}}\rho_{1}\sqrt{\rho_{2}}}\rho_{2}^{-\frac{1}{2}}.
\end{equation}
The end result is $B(P_{1},P_{2})=\mathcal{F}(\rho_{1},\rho_{2})$.\\
This finally establishes the interpretation of the fidelity as a
statistical measure of distinguishability. This parallels the
discussion we made regarding the Fubini-Study metric in
section~\ref{sec:FSstat}, when the states to be distinguished were
pure, i.e. $\rho_{i}=\kb{\psi_{i}}$. In that case, it was found that
the Bhattacharyya coefficient distinguishing the probability
distributions for of the optimal measurement apparatus equalled the
overlap $|\bk{\psi_{1}|\psi_{2}}|$. These two solutions are
consistent. However, while for pure states several optimal
measurements are possible, here the observable $M$ providing the
optimal distinguishability is uniquely defined. We have derived its
expression, and it corresponds to the \emph{geometric mean} of the
operators $\rho_{2}^{-1}$ and $\rho_{1}$.

\subsection{The Bures Metric}
The fidelity provides a natural way of defining a distance on the
space of density matrices. The definition~(\ref{eq:fidelitydef}) of
the fidelity is based on a suitably optimised overlap between pure
states. We could therefore borrow the considerations on the
Fubini-Study metric exposed in section~\ref{sec:FSstat}, and apply
them verbatim to purifications. Once the optimisation over the
purification is carried out, the result is the definition of two
Riemannian distances, called \emph{Bures distance}
\begin{equation}\label{eq:BuresDist}
    d_{B}(\rho_{1},\rho_{2}):=\sqrt{2-2\mathcal{F}(\rho_{1},\rho_{2})};
\end{equation}
and \emph{Bures length} or \emph{Bures angle}
\begin{equation}\label{eq:BuresLength}
    D_{B}(\rho_{1},\rho_{2}):=\arccos{\mathcal{F}(\rho_{1},\rho_{2})}.
\end{equation}
Clearly, these are the  generalisations of the Fubini-Study distance
$d_{FS}$ and Fubini-Study length $D_{FS}$, respectively, when the
states $\rho_{1}$ and $\rho_{2}$ are allowed to be mixed. Like in
the case of the Fubini-Study metric, the first distance $d_{B}$
measures the length of a straight chord, while $D_{B}$ measures the
length of a curve within the manifold of density matrices. By
construction, they are Riemannian distances, and are consistent with
the same Riemannian metric. Moreover, notice that they are both
monotonously decreasing functions of the fidelity. This means that
$d_{B}$ and $D_{B}$ can be regarded as distances that measure the
statistical distinguishability between two quantum states. This is
further confirmed by the monotonicity property~\ref{Item:Mono},
which entails that both  $d_{B}$ and $D_{B}$ are non-decreasing
under stochastic maps, i.e.under any physically meaningful quantum
operations.\\\indent With the confidence that we are investigating a
relevant definition of distance, let's turn to the Riemannian metric
induced by the Bures distance, or equivalently by the Bures length.
In the limit of two density matrices infinitesimally apart $\rho(s)$
and $\rho(s+ds)$, both $d_{B}$ and $D_{B}$ converge to the
infinitesimal length
\begin{equation}\label{eq:Buresmetric}
dl^{2}_{B}:=\min \Tr(\dot{\w}^{\dagger}\dot{\w})ds^2,
\end{equation}
in agreement with the expression~(\ref{eq:length}), where the
minimum is attained by the amplitudes $\w(s)$ fulfilling the
parallel transport condition~(\ref{eq:UhlmannParallCond5}). This is
the \emph{Bures metric}. It is easy to check that
condition~(\ref{eq:UhlmannParallCond5}) is fullfilled by the
following ansatz~\cite{Uhlmann1989,Dabrowski1989,Dabrowski1990}
\begin{equation}\label{ansatz1}
\dot{\w} = G \w,\qquad G^{\dagger}=G.
\end{equation}
$G$ can be determined by differentiating $\rho=\w \w^{\dagger}$ and
inserting~(\ref{ansatz1}), which yields:
 \begin{equation}\label{eq:G}
 \dot{\rho}= G\rho+\rho G.
 \end{equation}
The quantity $G$, which may be called the \emph{parallel transport
generator} (PTG) is implicitly defined as the (unique) operator
solution of~(\ref{eq:G}) with the auxiliary requirement that
\begin{equation}\label{eq:AuxG} \bk{\psi | G | \psi} = 0, \textrm{ whenever
}\rho\ket{\psi}=0. \end{equation} In terms of $G$, the Bures metric can be cast
in the following forms
\begin{equation}\label{eq:Buresmetric1}
dl^{2}_{B}:=\Tr(\w^{\dagger}G^{2} \w )ds^2=\Tr(G^{2} \rho
)ds^2=\frac{1}{2}\Tr(G \dot{\rho} )ds^2\,,
\end{equation}
where $g_{\mu\nu}$ is the Bures metric tensor. Assume that the
$\rho(\bm{\lambda})$ is a collection of density matrices labelled by
a set of parameters $\bm{\lambda}:=\{\lambda_{\mu}\}$
belonging to a manifold $\mathcal{M}$. The components of the Bures
metric on the induced manifold are given by
\begin{equation}\label{eq:Buresmetric2}
    dl^{2}_{B}=:\sum_{\mu\nu} g_{\mu\nu} d\lambda_{\mu} d\lambda_{\nu}, \qquad g_{\mu\nu}=\frac{1}{2}\Tr(\{G_{\mu},G_{\nu}\} \rho ),
\end{equation}
where $\{.,.\}$ is the anti-commutator, and $G_{\mu}$ is the
restriction of $G$ along the coordinate $\lambda_{\mu}$, and it is
determined by the analog of Eq.~(\ref{eq:G}), i.e.
$\p_{\mu}{\rho}= G_{\mu}\rho+\rho G_{\mu}$. We can also raise $G$ to
the rank of an operator-valued differential one-form, defined as
$\bm{G}:=\sum_{\mu}G_{\mu} d\lambda_{\mu}$, which clearly obeys
\[
d\rho=\bm{G}\rho+\rho \bm{G}.
\]
This expression can be easily solved in the basis of eigenvalues of
the density matrix $\rho=\sum_{k} p_{k}\kb{k}$ 
\begin{align}\label{eq:GEigen}
\bra{j}\bm{G}\ket{k}=\sum_{p_{j}>0,p_{k}>0}\frac{\bra{j}d\rho\ket{k}}{p_{j}+p_{k}}
\end{align} 
where the restriction $p_{j}>0,p_{k}>0$ on the summation derives
from the auxiliary condition~(\ref{eq:AuxG}). Casting this
expression into Eq.~(\ref{eq:Buresmetric2}) leads to the following
explicit form for the Bures metric
tensor~\cite{Sommers2003,Safranek2017}, \begin{equation}\label{eq:BuresTensor}
g_{\mu\nu}=\frac{1}{2}\sum_{p_{j}>0,p_{k}>0}\frac{\bra{j}\p_{\mu}\rho\ket{k}\bra{k}\p_{\nu}\rho\ket{j}}{p_{j}+p_{k}}\,
. \end{equation} Now, let's cast the expression~(\ref{eq:BuresTensor}) in a
form amenable to interesting considerations. Let us first
differentiate the density matrix
$\p_{\mu}\rho=\sum_{k}(\p_{\mu}p_{k}| k\ra\la k |+p_{k}| \p_{\mu} k
\ra\la k |+p_{k}|k\ra\la \p_{\mu} k|)$ and consider the matrix
element $(\p_{\mu}\rho)_{\mu\nu}$. Notice that $\la
j|k\ra=\delta_{j,k}\Rightarrow\la \p_{\mu} j|k\ra=-\la
i|\p_{\mu}k\ra;$ whence $\la j|
\p_{\mu}\rho|k\ra=\delta_{j,k}\,\p_{\mu}p_{j}+\langle
j|\p_{\mu}k\rangle(p_{k}-p_{j}).$ Plugging this expression back into
(\ref{eq:BuresTensor}) yields 
\begin{equation}
 g_{\mu\nu}=\frac{1}{4}\sum_{p_{k}>0}p_{k}\left(\frac{\p_{\mu}p_{k}}{p_{k}}\right)
\left(\frac{\p_{\mu}p_{k}}{p_{k}}\right)
+\frac{1}{2}\sum_{\substack{p_{j}>0\\p_{k}>0}} \la j |\p_{\mu}k\ra
\la\p_{\nu} k| j \ra\frac{(p_{j}-p_{k})^{2}}{p_{j}+p_{k}}\,
.\label{start} 
\end{equation}

This expression provides an interesting distinction between a
classical and a quantum contribution. Indeed, the first term
in~(\ref{start}) is the so called {\em Fisher-Rao} metric. This is
the metric induced by both the Hellinger distance and the Bhattacharyya
distance between the infinitesimally close probability distributions
$\{ p_{k}\}$ and $\{ p_{k}+dp_{k}\}$. The second term takes
into account the generic non-commutativity of $\rho$ and
$\rho^{\prime}:=\rho+d\rho.$ Thus, one may refer to these two terms
as the classical and non-classical contributions to the metric,
respectively. When $[\rho^{\prime},\,\rho]=0$ the problem reduces to
an effective classical problem and the Bures metric obviously
collapses into the Fisher-Rao metric.

One can draw an interesting connection between the
metric~(\ref{eq:BuresTensor}) and a quantity of quantum information
known as \emph{quantum Chernoff bound}~\cite{Audenaert2007}. This is
the generalisation of the classical Chernoff bound used in
information theory~\cite{Chernoff1952,Cover2006}. Consider an
experimental procedure aiming at distinguishing  two quantum states
$\rho_{1}$ and $\rho_{2}$, where a large number $n$ of copies are
provided, and collective measurement are allowed for. In the limit
of very large $n$, the probability of error in discriminating
$\rho_{1}$ and $\rho_{2}$ decays exponentially as
$P_{err}\sim\exp(-n\xi_{QCB})$, where $\xi_{QCB}$ denotes the
quantum Chernoff bound. The Chernoff bound generates a metric over
the space of quantum states naturally endowed with an operationally
well defined character: \emph{The farther apart two states lie
according to this distance, the smaller is the asymptotic error rate
of a procedure that attempts to tell them apart.}

In \cite{Audenaert2007} it has been proved that 
\begin{equation}
 \exp(-\xi_{QCB})=\min_{0\le s\le1}{\rm
{tr}\left(\rho_{1}^{s}\rho_{2}^{1-s}\right)\le{\cal
F}(\rho_{1},\rho_{2})}\,, \end{equation} which, for infinitesimally close states
$\rho_{1}=\rho$ and $\rho_{2}=\rho+d\rho,$ yields 
\begin{equation}
 dl_{QCB}^{2}:=1-\exp(-\xi_{QCB})=\sum_{\mu\nu}g^{QCB}_{\mu\nu}d\lambda_{\mu}d\lambda_{\nu}\,,
\end{equation} where \begin{equation} g^{QCB}_{\mu\nu}=\frac{1}{2}\sum_{j,k}\frac{\la j|
\p_{\mu}\rho|k\ra \la k|
\p_{\nu}\rho|j\ra}{(\sqrt{p_{j}}+\sqrt{p_{k}})^{2}}.\label{QCB-metric}
\end{equation}
  This expression shows both distinguishability distances,
 the quantum Chernoff bound metric and the Bures metric~(\ref{eq:BuresTensor}), share similar structures. They are identical except for the denominators, where $p_{j}+p_{k}$ are replaced by $(\sqrt{p_{j}}+\sqrt{p_{k}})^{2}.$ The following inequalities $(\sqrt{p_{m}}+\sqrt{p_{n}})^{2}\ge p_{n}+p_{m}$
and $2(p_{j}+p_{k})\ge(\sqrt{p_{j}}+\sqrt{p_{k}})^{2}$ imply the
equivalence of these two metric tensors, i.e. 
\begin{equation}
 \frac{g_{\mu\nu}}{2}\le g_{\mu\nu}^{QCB}\le g_{\mu\nu}.
\label{ineq:metric}\end{equation} Therefore one expects the two
distinguishability measures to convey equivalent information as far
as local properties of the manifold of quantum states are concerned.

\section{Uhlmann Connection and Uhlmann Curvature}\label{sec:UC}
%Let's go back to the problem of defining
In the closed path  $\rho_{\lambda(s)}$, initial and final
amplitudes are related by a unitary transformation, i.e.
$\w_{\lambda(s_{b})}=\w_{\lambda(s_{a})}V_{\gamma}$. If the path of
amplitudes $\w_{\lambda(s)}$ fullfills the Uhlmann condition,
$V_{\gamma}$ is a \emph{holonomy}, i.e. the non-Abelian
generalisation of Berry phase~\cite{Uhlmann1986}. The holonomy is
expressed as \begin{equation}\label{eq:holonomy}
V_{\gamma}=\mathcal{P}e^{i\oint_{\gamma} \bm{A}}\, , \end{equation} where
$\mathcal{P}$ is the path ordering operator and $\bm{A}$ is the
Uhlmann connection one-form. The Uhlmann connection can be derived
from the following ansatz~\cite{Uhlmann1989,Dittmann1999}
\begin{equation}\label{ansatz2}
d \w = i \w \bm{A} +  \bm{G} \w \,,
\end{equation}
which is the generalisation of~(\ref{ansatz1}) when the parallel
transport condition is lifted. By differentiating $\rho=\w
\w^{\dagger}$ and using the defining property of $\bm{G}$ (see
Eq.~(\ref{eq:G})), it follows that $\bm{A}$ is Hermitian and it is
implicitly defined by the equation 
\begin{equation} \bm{A} \w^{\dagger} \w +
\w^{\dagger} \w \bm{A} =i(d\w^{\dagger} \w -\w^{\dagger} d\w), 
\end{equation}
 with the auxiliary constraint that $\bk{\psi' | A | \psi'} = 0$, for
$\w\ket{\psi'}=0$. From Eq.~(\ref{ansatz2}), it can be  checked that
$\bm{A}$ obeys the expected transformation rule of non-Abelian gauge
potentials, 
\begin{equation} 
\bm{A}\to U^{\dagger}(s) \bm{A} U(s)+i
U^{\dagger}(s)dU(s), \quad \textrm{ under }\quad \w(s)\to \w(s)U(s),
\end{equation} 
and that $\bm{G}$ is gauge invariant. The analog of the Berry
curvature, the Uhlmann curvature two-form, is defined as 
\begin{equation}
 \bm{F}:=d\bm{A}+i \bm{A}\wedge \bm{A}= \frac{1}{2} \sum_{\mu\nu}
F_{\mu\nu}d\lambda_{\mu}\wedge d\lambda_{\nu}. 
\end{equation} 
Its components
$F_{\mu\nu}=\p_\mu A_\nu -\p_\nu
A_\mu+i\left[A_{\mu},A_{\nu}\right]$ can be understood in terms of
the Uhlmann holonomy per unit area associated to an infinitesimal
loop in the parameter space. Indeed, for an infinitesimal
parallelogram $\gamma_{\mu \nu}$, spanned by two independent
directions $\hat{e}_{\mu}\delta_{\mu}$ and
$\hat{e}_{\nu}\delta_{\nu}$ in the manifold, it reads 
\begin{equation}
  F_{\mu\nu}=\lim_{\delta \to 0} i \frac{1-V_{\gamma_{\mu,\nu}}}{\delta_{\mu}\delta_{\nu}},
\end{equation} 
where $\delta \to 0$ is a shorthand of
$(\delta_{\mu},\delta_{\nu})\to (0,0)$. While the Abelian Berry
curvature $\bm{F}^{B}$ is a gauge invariant (as one expect from the
electromagnetic field analogy), the Uhlmann curvature $\bm{F}$ is
only \emph{gauge covariant}, i.e. it transforms as
\begin{align} \bm{F}\to
U^{\dagger}(s) \bm{F} U(s),  \textrm{ under } \w(s)\to \w(s)U(s)\, .
\end{align} 
This is a direct consequence of the gauge covariance of any
non-Abelian holonomy~\cite{Bohm2003,Nakahara1990}, 
\begin{equation}
 V_{\gamma}=\mathcal{P}e^{i\oint_{\gamma} \bm{A}}\to
U^{\dagger}_{t}\mathcal{P}e^{i\oint_{\gamma} \bm{A}}\, U_{t} . 
\end{equation}

The standard approach to the definition of bona-fide observables
associated to non-Abelian gauge fields is to resort to the Wilson
loop $W_{\gamma}:=\Tr{V_{\gamma}}$, i.e. the trace of the holonomy
operator associated to an arbitrary loop. Thanks to the cyclic
property of the trace, the Wilson loop is evidently gauge invariant.
It would then be tempting to define a local gauge invariant
quantity, analogue of Uhlmann curvature, by considering the Wilson
loop per unit area of an infinetimal loop in the parameter space.
This would lead to the trace of curvature, which, however, in the
case of the Uhlmann holonomy is always trivial, i.e.
$\Tr{\bm{F}}=0$.\\\indent Alternatively, one may consider another
gauge invariant quantity, the Uhlmann geometric
phase~(\ref{eq:UhlmannGeoPh}), which in terms of the Uhlmann
holonomy reads,
\begin{equation}\label{Uphi}
\varphi^{U}[\gamma]:=\arg{\Tr{\left[\w_{\lambda(0)}^{\dagger}\w_{\lambda(T)}\right]}}=\arg{\Tr{\left[\w_{\lambda(0)}^{\dagger}\w_{\lambda(o)}V_{\gamma}\right]}}
.
\end{equation}
and evaluate the phase per unit area on an infinitesimal closed
loop, which reads
\[
 \mathcal{U}_{\mu\nu}:=\lim_{\delta \to 0} \frac{\varphi^{U}[\gamma_{\mu\nu}]}{\delta_{\mu}\delta_{\nu}} = \Tr{\left[\w_{\lambda(0)}^{\dagger}\w_{\lambda(0)}F_{\mu \nu}\right] }.
\]
This is by definition a gauge invariant. I will call it \emph{mean
Uhlmann curvature} (MUC), on account of the expression 
\begin{equation}
 \mathcal{U}_{\mu\nu}= \Tr{\left( \rho F_{\mu \nu}\right)}=\langle
F_{\mu\nu}\rangle \end{equation} that $\mathcal{U}_{\mu\nu}$ takes in the
special gauge $\w_{0}=\sqrt{\rho(0)}$.

By taking the external derivative of the expression~(\ref{ansatz2})
and by using the property $d^{2}=0$, it can be shown that
\begin{align*}
0=d^{2}\w &= i \w d\bm{A} +id \w \wedge \bm A + d \bm G \, \w - \bm G \wedge d \w \nonumber\\
&=i \w d\bm{A} +i\big(i\w \bm{A} +  \bm{G} \w\big) \wedge \bm A + d \bm G \, \w - \bm G \wedge  \big(i \w \bm{A} +  \bm{G} \w\big)\nonumber\\
&=i \w \big( d\bm{A} +i \bm{A}\wedge \bm{A}) + \big(d \bm{G} -
\bm{G}\wedge \bm{G}\big)  \w \nonumber
\end{align*}
which leads to~\cite{Uhlmann1991}
\begin{align*}
\w \bm{F} &= i\left(d\bm{G}-\bm{G}\wedge \bm{G}\right) \w,\\
\bm{F} \w^{\dagger} &= -i \w^{\dagger}\left(d\bm{G}+\bm{G}\wedge
\bm{G}\right).
\end{align*}
Multiplying the above expressions by $\w^{\dagger}$ and $\w$,
respectively, and taking the trace yields
\begin{equation}\label{MUC}
\bm{\mathcal{U}} = \Tr(\w \bm{F}\w^{\dagger}) = -i\Tr(\rho
\bm{G}\wedge \bm{G}),
\end{equation}
where $\,\,\bm{\mathcal{U}}:=1/2
\sum_{\mu\nu}\mathcal{U}_{\mu\nu}d\lambda_{\mu}\wedge
d\lambda_{\nu}$ is a real-valued two-form, whose components are
\begin{equation}\label{eq:MUC2} 
\mathcal{U}_{\mu\nu}=-i\Tr(\rho
[G_{\mu},G_{\nu}])\,. 
\end{equation} 
Notice the striking similarity with the
expression~(\ref{eq:Buresmetric2}) with which the mean Uhlmann
curvature shares the same origin and mathematical structure.
Similarly to~(\ref{eq:BuresTensor}), it is easy to derive an
expression for the MUC in the eigenbasis of the density matrix by
making use of the expression~(\ref{eq:GEigen}), i.e.
\begin{equation}\label{eq:UEigen}
\mathcal{U}_{\mu\nu}=-i\sum_{\substack{p_{j}>0\\p_{k}>0}}(p_{j}-p_{k})\frac{\bra{j}\p_{\mu}\rho\ket{k}\bra{k}\p
_{\nu}\rho\ket{j}}{(p_{j}+p_{k})^{2}}\, . 
\end{equation}
From the common mathematical structure of the Bures metric
$g_{\mu\nu}$ in~(\ref{eq:Buresmetric2}) and the mean Uhlmann
curvature~(\ref{eq:MUC2}), it is quite tempting to define the
following positive (semi)-definite Hermitian tensor
\begin{equation}\label{eq:GQGT} Q_{\mu\nu}:= \Tr(\rho G_{\mu}G_{\nu}) =
g_{\mu\nu} + \frac{i}{2}\mathcal{U}_{\mu\nu}. \end{equation} This is clearly a
mixed state generalisation of the quantum geometric
tensor~(\ref{eq:qgt}) defined in section~\ref{sec:FS}. In the
following, we will indistinctly refer to both Eq.~(\ref{eq:qgt}) and
to its mixed state generalisation (see Eq.~(\ref{eq:GQGT})) as quantum
geometric tensor.

\section{Multi-parameter quantum state estimation}\label{sec:Fisher}
This section aims at providing a different physical perspective to
the quantities introduced in the previous sections. It turns out
that objects such as the Bures metric and the mean Uhlmann curvature
are intimately related to a pivotal quantity in quantum parameter
estimation problems, namely the quantum Fisher information.

Estimation theory is the discipline that studies the accuracy by which a given set of physical parameters can be evaluated. When the parameters to be estimated 
belongs to an underlying quantum physical system one falls in the realm of quantum estimation theory, or quantum metrology~\cite{Helstrom1976}. Quantum parameter 
estimation finds applications in a wide variety of fields, from fundamental physics~\cite{Udem2002,Katori2011,Giovannetti2004,Aspachs2010,Ahmadi2014}, to gravitational 
wave interferometry~\cite{Schnabel2010,Aasi2013}, thermometry~\cite{Correa2015,DePasquale2016}, spectroscopy~\cite{Schmitt2017,Boss2017}, 
imaging~\cite{Tsang2016,Nair2016,Lupo2016}, to name a few. Exploiting remarkable features of quantum systems as probes may give an edge over the accuracy of 
classical parameter estimation. Exploring this possibility plays a pivotal role in the current development of quantum technology~\cite{Caves1981,Huelga1997,Giovannetti2006,Paris2009,Giovannetti2011,Toth2014,Szczykulska2016,Pezze2016,Nichols2018,Braun2018}. In multi-parameter 
quantum estimation protocols, several variables are simultaneously evaluated, in a way which may outperform individual estimation strategies with equivalent 
resources~\cite{Humphreys2013,Baumgratz2016}, thereby motivating the use of such protocols in a variety of diverse contexts~\cite{Humphreys2013,Baumgratz2016,Pezze2017,Apellaniz2018}.

The use of peculiar quantum many-body states as probes in quantum metrology can enhance the accuracy in parameter estimation~\cite{Zanardi2008,Braun2018}. Conversely, 
one may think of using quantum metrological tools in the study and characterisation of many-body systems. Noteworthy instances of many-body quantum systems are those 
experiencing quantum phase transitions. Indeed, quantum parameter estimation, with its intimate relation with geometric information, provides a novel and promising approach to 
investigate equilibrium~\cite{Carollo2005,Zhu2006,Hamma2006,Zanardi2006,CamposVenuti2007,CamposVenuti2008,Zanardi2007,Zanardi2007a,Garnerone2009a,Rezakhani2010,Bascone2019,Bascone2019a} 
and out-of-equilibrium~\cite{Magazzu2015,Magazzu2016,Guarcello2015,Spagnolo2015,Spagnolo2017,Spagnolo2018,Valenti2018,Consiglio2016,Spagnolo2019} quantum critical phenomena~\cite{Banchi2014,Marzolino2014,Kolodrubetz2017,Carollo2018,Carollo2018a,Marzolino2017}.

The quantum Fisher information matrix $J(\bm \lambda)$  defines a
figure of merit of the estimation precision of parameters labelling
a quantum state, known as the quantum Cram\'er-Rao (CR) bound
~\cite{Helstrom1976,Holevo2011,Paris2009}. Given a set of locally
unbiased estimators $\{\hat{\bm\lambda}\}$ of the parameters
$\bm\lambda\in\mathcal{M}$, the covariance matrix
$\Cov_{\bm\lambda}[\hat{\bm{\lambda}}] _{\mu\nu}=\langle
(\hat{\lambda}_{\mu}-\lambda_{\mu})(\hat{\lambda}_{\nu}-\lambda_{\nu})\rangle$
is lower bounded, in a matrix sense, as follows
\begin{equation}\label{eq:CRB}
\Cov_{\bm\lambda}[\hat{\bm{\lambda}}] \ge J(\bm\lambda)^{-1}.
\end{equation}

It turns out that such a matrix is mathematically equivalent, except
for pathological case~\cite{Safranek2017}, to the Bures metric tensor
$g$, or precisely 
\begin{equation} 
J_{\mu\nu}(\bm\lambda)=4g_{\mu\nu}. 
\end{equation}

For single parameter estimation, the quantum Cram\'er-Rao
bound~(\ref{eq:CRB}) can always be saturated by a suitable optimal
POVM. However, in a multi-parameter scenario this is not always the
case, the above inequality cannot always be attained. This is due to
the non-commutativity of measurements associated to independent
parameters. It turns out that, within a relatively general setting,
known as \emph{quantum local asymptotyc
normality}~\cite{Hayashi2008,Kahn2009,Gill2013,Yamagata2013}, the
multi-parameter quantum Cram\'er-Rao bound~(\ref{eq:CRB}) is
attainable iff~\cite{Ragy2016} \begin{equation} \mathcal{U}_{\mu\nu}=0 \qquad
\forall \lambda_{\mu}, \lambda_{\nu}~. \end{equation} In this sense,
$\mathcal{U}_{\mu\nu}$ marks the \emph{incompatibility} between
$\lambda_{\mu}$ and $\lambda_{\nu}$, and such an incompatibility arises
from the inherent quantum nature of the underlying physical system.

\indent From a metrological point of view, $\mathcal{U}_{\mu\nu}$ marks the \emph{incompatibility} between $\lambda_{\mu}$ and $\lambda_{\nu}$, where such an incompatibility arises from the inherent quantum nature of the underlying physical system.\\
\indent In particular, one can show that for an \emph{N-parameter estimation model}, the deviation of the \emph{attainable} multi-parameter 
bound from the Cram\'er-Rao bound can be estimated by the quantity $ R:=|| 2i\, J^{-1}\mathcal{U}||_\infty$, where
\begin{align}\label{UBound}
0\le R\le 1.
\end{align}
Indeed, $R$ provides a figure of merit which measures the \emph{amount of incompatibility} within a parameter estimation model. The lower limit condition, $R=0$, is equivalent 
to the compatibility condition, i.e. Eq.~\eqref{MUC}. On the other hand, when the upper bound of Eq.~\eqref{UBound} is saturated, i.e. $R=1$, it maximizes the discrepancy between 
the CR bound, that could be attained in an analogous classical multi-parameter estimation problem, and the actual multi-parameter quantum CR bound. In this sense, this bound 
marks the \emph{condition of maximal incompatibility}. When this condition is met, the indeterminacy arising from the quantum nature of the estimation problem reaches the order 
of $||J^{-1}||_\infty$, i.e. the same order of magnitude of the Cram\'er-Rao bound~(\ref{eq:CRB}). In other words, this implies that the indeterminacy due to the quantum incompatibility 
arises at an order of magnitude which cannot be neglected.

This is particularly relevant, considering that the scope of optimal
schemes is minimising the parameter estimation error. This can only
be done by designing strategies which strive for the higher possible
rate of growth of $J(n)$ with the number $n$ of available resources.
When the condition~(\ref{eq:CRB}) of maximal incompatibility holds,
it implies that the quantum indeterminacy in the parameter
estimation problem remains relevant even in the asymptotic limit
$n\to\infty$.

\subsection{Formulation of the problem}
It is often the case that a physical variable of interest is not
directly accessible, either for experimental limitations or due to
fundamental principles. When this happens one could resort to an
indirect approach, inferring the value of the variable after
measurements on a given probe. This is essentially a parameter
estimation problem whose solution may be found using methods from
classical estimation theory \cite{Cramer1946} or, when quantum
systems are involved, from its quantum counterpart
\cite{Helstrom1976}.

The solution of a parameter estimation problem amounts to find an estimator, {\em i.e} a mapping $\hat{\bm{\lambda}}=\hat{\bm{\lambda}} (x_1,x_2,...)$, from the set 
$\chi$ of measurement outcomes into the space of parameters $\bm\lambda \in\mathcal{M}$.  Optimal unbiased estimators in classical estimation theory are those 
saturating the Cram\'er-Rao (CR) inequality, 
\begin{equation}\label{eq:CCRB}
\Cov_{\bm\lambda}[\hat{\bm{\lambda}}] \geq J^{c} (\bm \lambda)^{-1}, 
\end{equation} 
which poses a lower bound 
on the mean square error $\Cov_{\lambda} [\hat{\bm{\lambda}}]_{\mu\nu} =
E_{\lambda} [(\hat{\lambda} - \lambda)_\mu(\hat{\lambda}-\lambda)_\nu]$ in terms of the Fisher information (FI) 
\begin{equation} J^{c}_{\mu\nu}(\bm\lambda) = \int_\chi 
d\hat{\bm{\lambda}}(x)\, p(\hat{\bm{\lambda}}|\lambda) \partial_\mu \log p(\hat{\bm{\lambda}}|\lambda) \partial_\nu \log p(\hat{\bm{\lambda}}|\lambda)\:. 
\end{equation}
 The expression~(\ref{eq:CCRB}) should be understood as a matrix inequality. In general, one writes
\[
\tr(W\Cov_{\bm\lambda}[\hat{\bm{\lambda}}] )\ge\tr(W J^{c}(\bm\lambda)^{-1}),
\]
where $W$ is a given positive definite cost matrix, which allows the uncertainty cost of different parameters to be weighted.

In the classical estimation problem, both in the single parameter case, and in the multi-parameter one, the CR bound can be attained in the asymptotic limit of an infinite 
number of experiment repetitions using the maximum likelihood estimator \cite{Kay1993}. However, an interesting difference between single and multi-parameter 
metrology arises due to correlations between the variables. Indeed, it may well happen that the off-diagonal elements of the Fisher information matrix are
non-vanishing. Hence, there are statistical correlations between the parameter estimators. In a protocol in which all variables but $\lambda_{\mu}$ are precisely 
known, the single-parameter CR bound implies that the best attainable accuracy in estimating $\lambda_{\mu}$ is given by $\text{Var}(\hat{\bm\lambda}) \geq 1/J^{c}_{\mu\mu}$. 
However, in a scenario in which all parameters are simultaneously estimated, one finds that the ultimate precision is lower bounded by $\text{Var}(\hat{\bm\lambda}) \geq 
(J^{c}(\bm\lambda)^{-1})_{\mu\mu}$. A straightforward calculation shows that, for positive-definite matrices, $(J^{c}(\bm\lambda)^{-1})_{\mu\mu} \geq 1/J^{c}(\bm\lambda)_{\mu\mu}$, 
where the inequality is saturated only for vanishing off-diagonal elements. In the limit of a large number of experiment repetitions the CR bound is attainable. This means 
that the equivalence between the simultaneous and the individual protocols in the asymptotic limit holds only if the Fisher information is a diagonal matrix, i.e. if the estimators 
are not correlated~\cite{Cox1987}.

Obviously, any given real positive definite matrix can be transformed via an orthogonal rotation into a diagonal matrix. This clearly implies that there is always a combination 
of the parameters for which the Fisher information matrix is diagonal. However, this choice should be contrasted with the physical opportunity of performing such a rotation, as the 
choice of the parameters we are interested in may arise as a result of physical considerations and in this sense determines a preference in a specific basis.

The underlying quantities used in the derivation of classical Fisher information are parameter-dependent probability-distributions of the data, whereas the objects involved in 
the quantum estimation problems are density operators $\rho(\bm\lambda)$  labelled by ${\bm\lambda\cal \in M}$. Hence, a further difficulty of quantum estimation protocols 
is devising the optimal measurement strategy which gathers from the density matrix the greatest amount of information on the labelling parameters. For single parameter estimation, 
the solution is quite straightforward. If one maximises the classical Fisher information over all possible quantum measurements, the result is the so-called quantum Fisher information 
(QFI). The key object involved in the calculation of the QFI is the so-called \emph{symmetric logarithmic derivative} (SLD), $L_{\mu}$, a Hermitian operator which is implicitly defined as 
the solution of the equation 
\begin{equation}
 \partial_\mu \rho(\bm\lambda )
= \frac12 \left\{ \rho (\bm\lambda) L_\mu (\bm\lambda) + L_\mu
(\bm\lambda) \rho (\bm\lambda) \right\}\:. \end{equation} 
The above equation,
apart from a factor $1/2$, is identical to the defining
Eq.~(\ref{eq:G}) of the parallel transport generator $G_{\mu}$.
However, a relatively benign difference between $G_{\mu}$ and
$L_{\mu}$ arises from the auxiliary condition~(\ref{eq:AuxG}). This
may cause a sizeable discrepancy between their behaviours in some
pathological cases, which may occur around points of the state
manifold, where $\rho(\bm\lambda)$ undergoes a change of
rank~\cite{Safranek2017}.

The QFI can be calculated using the formula: 
\begin{equation}
 J_{\mu\mu}(\bm{\lambda})=\frac{1}{2}\Tr\rho(\bm{\lambda})\{L_{\mu}(\bm{\lambda}),L_{\mu}(\bm{\lambda})\}\,.
\end{equation} One can always choose the projective measurement in the
eigenbasis of the SLD which yields FI equal to the QFI. Hence, the
QFI determines the ultimate achievable estimation precision of the
parameter on density matrices $\rho(\bm\lambda)$ in the asymptotic
limit of an infinite number of experiment repetitions. In a
multiparameter scenario, a direct generalization of single parameter
CR bound leads to the multiparameter QFI CR
bound~\cite{Helstrom1976,Holevo2011,Paris2009}, that reads
\begin{equation}\label{eq:SCRB}
\tr(W \Cov(\hat{\bm{\lambda}}))\ge\tr(W J^{-1}),
\end{equation}
 where
\begin{equation}
  J_{\mu\nu}=\frac{1}{2}\Tr\rho\{L_{\mu},L_{\nu}\},
\end{equation}
 is the quantum Fisher information matrix (QFIM) and $W$ is the cost
matrix.

\subsection{Multi-parameter Incompatibility: a Measure of Quantumness}\label{sec:quant} Unlike the single parameter case, in the multi-parameter scenario the quantum CR bound cannot 
always be saturated. Intuitively, this is due to the incompatibility of the optimal measurements for different parameters. A sufficient condition for the saturation is indeed $[L_{\mu},L_{\nu}]=0$, 
which is however not a necessary condition. Within the comprehensive framework of quantum local asymptotic normality (QLAN)~\cite{Hayashi2008,Kahn2009,Gill2013,Yamagata2013}, 
a necessary and sufficient condition for the saturation of the multi-parameter CRB is given by $\mathcal{U}_{\mu\nu}=0$ for all $\mu$ and $\nu$~\cite{Ragy2016}. \\
\indent Here, we show explicitly  that the ratio between $\mathcal{U}_{\mu\nu}$ and $J_{\mu\nu}$ provides a figure of merit for the discrepancy between an attainable multi-parameter 
bound and the single parameter CRB quantified by $J^{-1}$. We will confine ourself to the broad framework of QLAN, in which the \emph{attainable} multi-parameter bound is given by 
the so called Holevo Cram\'er-Rao bound (HCRB)~\cite{Helstrom1976,Holevo2011,Paris2009}. For a $N$-parameter model, the HCRB can be expressed as~\cite{Hayashi2008}
\begin{equation}
\tr(W \Cov(\hat{\bm{\lambda}}))\ge C_{H}(W),
\end{equation}
where
\begin{equation}
C_{H}(W):=\min_{\{X_{\mu}\}}\{\tr (W \Re Z) +||(W \Im Z)||_{1}\}.
\end{equation}
The $N\times N$ Hermitian matrix $Z$ is defined as 
\begin{align}\label{defZ} Z_{\mu\nu}:=\Tr (\rho X_{\mu}X_{\nu})
\end{align}
where $\{X_{\mu}\}$ is an array of $N$ Hermitian operators on $\HH$ satisfying the unbiasedness conditions $\Tr(\rho X_{\mu})=0$, $\forall \mu$ and 
$\Tr (X_{\mu} \partial_{\nu}\rho)=\frac{1}{2}\Tr \rho \{X_{\mu}, L_{\nu}\}=\delta_{\mu\nu}$, $\forall \mu,\nu$, and $||B||_{1}$ denotes the sum of all singular 
values of $B$. %, which in the case of an antisymmetric operator $X$ like $\Im Z$, is equivalent to $||WX||_{1}=\tr\sqrt{W^{1/2} X^{\dagger} W X W^{1/2}}.$
If one chooses for $\{X_{\mu}\}$ the array of operators $\tilde{X}_{\mu}:=\sum_{\nu} [J^{-1}]_{\mu\nu} L_{\nu}$, it yields
\begin{equation}\label{Z}
Z\to\tilde{Z}:=J^{-1} I  J^{-1} =  J^{-1} - i 2 J^{-1} \mathcal{U} J^{-1},
\end{equation}
where $I$ is the matrix of elements $I_{\mu\nu}:=\Tr{(\rho L_{\mu} L_{\nu})}$, and $\mathcal{U}$ is the skew-symmetric real matrix 
$\mathcal{U}_{\mu\nu}=\frac{i}{4}\Tr(\rho[L_{\mu},L_{\nu}])$. If one indicates by $\mathcal{D}(W):=C_{H}(W) - \tr{W J^{-1}}$  the discrepancy between the 
attainable multi-parameter HCRB and the CRB, one can write the following bounds
\begin{equation}\label{R}
0 \le \mathcal{D}(W)\le 2 ||W\, J^{-1} \mathcal{U} J^{-1}||_{1}\le \tr{(W J^{-1})} R,
\end{equation}
where $R=||2i \mathcal{U}J^{-1}||_\infty\, $, and the first inequality is saturated iff $\mathcal{U}=0$~\cite{Ragy2016}.
\\
\indent One can show that~\cite{Carollo2019}
\begin{align}\label{R1}
0 \le R \le 1.
\end{align}
When the upper bound~\eqref{R1} is saturated, i.e. condition $R=1$ is met, it implies that
\begin{equation}\label{AbsIneq}
\mathcal{D}(W) \simeq \tr(W J^{-1}),
\end{equation}
which means that the discrepancy $\mathcal{D}(W)$ reaches the same order of magnitude of the CR bound itself. This limit marks the \emph{condition of 
maximal incompatibility} for the two-parameter estimation problem, arising from the quantum nature of the underlying system. In the opposite limit $R=0$, 
the parameter model is \emph{compatible}, and the discrepancy between the quantum CR bound, and its classical counterpart vanishes. Therefore, $R$ 
provides a figure of merit which quantifies the quantum contribution to the indeterminacy of multi-parameter estimations.\\

\emph{Proof of Eq.~\eqref{R1}.}
The lower bound comes straightforwardly from Eq.~\eqref{R}. For the upper bound, notice that $Z=\{Z_{\mu\nu}\}$ in Eq.~\eqref{defZ} is a positive semi-definite 
matrix, since $\forall \vec{a}=\{a_\mu\}_{\mu=1}^N\in\mathbb{C}^N$, $\vec{a}^\dag\cdot Z\cdot \vec{a} = \Tr(\rho A^\dag A)\ge 0$, with $A:=\sum_\mu a_\mu X_\mu$. 
Then, from Eq.~\eqref{Z}, 
\begin{equation}\label{ZR}
J^{1/2}\tilde{Z}J^{1/2}:= \one - i 2 J^{-1/2} \mathcal{U} J^{-1/2}\ge0.
\end{equation}
Since $i 2 J^{-1/2} \mathcal{U} J^{-1/2}$ is a skew-symmetric Hermitian matrix, its eigenvalues are either zero or real numbers that occur in $\pm$ pairs. Then, 
from Eq.~\eqref{ZR} we deduce that these eigenvalues lie within the interval $\{-1,1\}$. Moreover, $J^{-1} \mathcal{U}$ is a diagonalisable matrix with the same 
eigenvalues of $J^{-1/2} \mathcal{U} J^{-1/2}$. Indeed, if $J^{-1/2} \mathcal{U} J^{-1/2}=U^\dag D U$, with $D$ diagonal, then $J^{-1} \mathcal{U}=S^{-1}DS$, 
where $S=U J^{1/2}$. Hence, $R=||i 2 J^{-1}\mathcal{U}||_\infty=||i 2 J^{-1/2} \mathcal{U} J^{-1/2}||_\infty\le 1$ \QED
\\
For the special case of a two-parameter model, in the eigenbasis of $J$, with eigenvalues $j_{1}$ and $j_{2}$, it holds
\begin{equation}
2iJ^{-1}\mathcal{U}=\left(\begin{array}{cc}j_{1}^{-1}&0 \\0 &
j_{2}^{-1}\end{array}\right)\left(\begin{array}{cc} 0
&\mathcal{U}_{12} \\-\mathcal{U}_{1 2} & 0\end{array}\right)
 = \left(\begin{array}{cc} 0
&2i\frac{\mathcal{U}_{1 2}}{ j_1} \\-2i\frac{\mathcal{U}_{1 2}}{j_2} & 0\end{array}\right).
\end{equation}
It follows that
\begin{equation}\label{AbsIneq}
R=||2i J^{-1} \mathcal{U}||_{\infty}= \sqrt{\frac{\Det \,2 \mathcal{U}}{\Det{J}}}.
\end{equation}
Hence, $\sqrt{\Det \,2 \mathcal{U}/{\Det{J}}}$ provides a figure of merit which measures the \emph{amount of incompatibility} between two independent parameters 
in a quantum \emph{two-parameter} model. \\
\indent For self-adjoint operators $B_{1},\dots,B_{N}$, the Schrodinger-Robertson's uncertainty principle is the inequality~\cite{Robertson1929}
\begin{equation}
\Det\left[\frac{1}{2} \Tr \rho \{B_{\mu},B_{\nu}\}\right]_{\mu,\nu=1}^{N}\ge\Det\left[-\frac{i}{2} \Tr\rho[B_{\mu},B_{\nu}]\right]_{\mu,\nu=1}^{N},
\end{equation}
which, applied to the SLD $L_{\mu}$'s, yields
\begin{equation}\label{DetIneq}
\Det J \ge \Det 2\, \mathcal{U}.
\end{equation}
For $N=2$, the inequality~(\ref{DetIneq}) \label{AbsIneq} is equivalent to the upper-bound of Eq.~\eqref{R1}, and if saturated, it implies the 
\emph{condition of maximal incompatibility} for the two-parameter estimation problem.\\

Another interesting inequality relates the eigenvalues of $J$ (and
hence of $g$) with those of $\mathcal{U}$. The QGT\footnote{We are
deliberately assuming $L_{\mu}=2 G_{\mu}$, thus neglecting possible
discrepancy arising due to variations in the rank of
$\rho$~\cite{Safranek2017}.} \begin{equation} Q_{\mu\nu}:=\Tr\rho G_{\mu}G_{\nu}
=\frac{1}{4} \Tr\rho L_{\mu}L_{\nu}= \frac14
J_{\mu\nu}+\frac{i}{2}\mathcal{U}_{\mu\nu} \end{equation} is a positive
(semi)-definite Hermitian matrix. Hence, by definition $J\ge - 2 i\,
\mathcal{U}$, in a matrix sense. It follows that~\cite{Horn2013}
%\begin{equation}
%    j_{i}\ge  2|\, u_{i}|,
%\end{equation}
%where $j_{i}$ ($g_{i})$ and $u_{i}$ are the i-th eigenvalues of $J$
%($g$) and $\mathcal{U}$, respectively, ordered according to
%$j_{1}\le j_{2}\le\dots\le j_{N}$ and $u_{1}\le u_{2}\le\dots\le
%u_{N}$.  In particular, for i=1, one gets
\begin{equation}
    || J ||_{\infty}\ge 2 || i\, \mathcal{U}||_{\infty}.
\end{equation}

 \section{Thermal States and Classical Phase Transitions.}
In this section, we would like to give a glance at the extension of
the geometric information approach to finite temperature. This
approach is indeed directly linked to the approach developed in
classical phase transitions~\cite{Ruppeiner1995,Brody1995}.
 The natural generalisation of the geometric information approach to finite temperature is done by replacing the Fubini-Study metric with its mixed state generalisation, the Bures metric.
 In temperature driven phase transitions, the density matrices $\rho_{\beta_{1}}$ and $\rho_{\beta_{2}}$ at different temperatures $\beta_{i}=(\kappa_{B}T_{i})^{-1}$ generally commute\footnote{Here one considers Hamiltonians which do not explicitly depend on temperature. For example, effective mean field Hamiltonians with temperature dependent parameters may result in non-commuting density matrices $[\rho_{\beta_{1}},\rho_{\beta_{2}}]\neq 0$. The BCS Hamiltonians are one such case~\cite{Mera2017}}, and the
 Uhlmann fidelity ${\cal F}(\rho_{\beta_{1}},\rho_{\beta_{2}}):= \Tr[\rho_{\beta_{2}}^{1/2}\rho_{\beta_{1}} \rho_{\beta_{2}}^{1/2}]^{1/2}$ reduces to the Bhattacharyya coefficient
${\cal F}(\rho_0,\rho_1)=\sum_k \sqrt{p_k^{1} p_k^{2}}$, where the
$p_k^i$ are the eigenvalues of $\rho_{\beta_{i}}$. Correspondingly,
when two states differing just by two infinitesimally closed
temperatures $\beta_{1}=\beta$ and $\beta_{2}=\beta+d\beta$ are
considered, the Bures metric collapses into the Fisher-Rao metric, 
\begin{equation}
 dl_{B}=\frac{1}{4}\sum_{p_{k}>0}\frac{(\p_{\beta}
p_{k})^{2}}{p_{k}}d\beta^{2}=\frac{1}{4}\left[ \la H^{2}\ra -\la
H\ra^{2}\right]d\beta^{2}\,.\label{eq:FRmetric} 
\end{equation} 
In particular,
when $\rho_\beta=Z^{-1}(\beta) \exp(-\beta H)=Z^{-1}(\beta)
\sum_{k}e^{-\beta E_{k}}|k\rangle\langle k|$, with $Z(\beta):=\Tr
\exp(-\beta H)$,  where $E_{k}$ and $|k\rangle$ are the eigenvalues
and eigenvectors of the Hamiltonian, one easily sees that 
\begin{equation}
 \p_{\beta}p_{k}=\p_{\beta}\left(\frac{e^{-\beta E_{k}}}{Z}\right)=-
p_{k}\Big[E_{k}-(\sum_{j}E_{j}p_{j})\Big]=p_{k}\left[\langle
H\rangle_{\beta}-E_{k}\right]\,, \end{equation} therefore~(\ref{eq:FRmetric})
can be written
as~\cite{Ruppeiner1995,Brody1995,You2007,Zanardi2007}, 
\begin{equation}
 dl_{B}=\frac{1}{4}\sum_{k}p_{k}\left[\langle
H\rangle_{\beta}-E_{k}\right]^{2}d\beta^{2}= \frac{1}{4}\left[ \la
H^{2}\ra_{\beta}-\la
H\ra_{\beta}^{2}\right]d\beta^{2}=\frac{d\beta^{2}}{4}T^{2}c_{V}(\beta)\,.\label{eq:FRmetric1}
\end{equation}
 This is quite a remarkable relation which provides a connection between a geometric information measure of distinguishability and a macroscopic thermodynamic quantity $c_V.$\\
More importantly, it also provides a direct connection with
classical critical phenomena. Thermal phase transitions are signalled
by non-analytical behaviour of the specific heat, which may be
picked up as singularity of the Bures/Fisher-Rao metric. This simple
fact shows that information geometry at large provides a
comprehensive framework under which both quantum and classical
critical phenomena can be readily investigated.

\section{Susceptibility and mean Uhlmann curvature}
The MUC is a geometrical quantity, whose definition relies on a rather formal definition of holonomies of density matrices. In spite of its abstract formalism, the MUC has interesting connections to a physically relevant object which is directly observable in experiments, the susceptibility. In a thermal equilibrium setting, one can rigorously relate the MUC to the dissipative part of the dynamical susceptibility~\cite{Leonforte2019,Leonforte2019a}. Indeed, by using linear response theory, one can consider the most general scenario of a system with a Hamiltonian $\HH_0$, perturbed as follows
\begin{equation}
\label{eq:pert} 
\HH = \HH_0 + \sum_\mu \hat{O}_\mu \lambda_{\mu},
\end{equation}
where $\{\hat{O}_{\mu}\}$ is a set of observables of the system, and
$\{\lambda_{\mu}\}$ is the corresponding set of perturbation
parameters. We are considering the system in thermal equilibrium, i.e. $\rho = \frac{e^{-\beta \HH}}{\mathcal{Z}} $, where $\mathcal{Z} = \Tr [ e^{-\beta \HH} ]$ is the partition function. 
The dissipative part of the dynamical susceptibility, with respect to $\hat{O}_\mu$, is defined as:
\begin{equation}
\chi''_{\mu \nu} (t) = \frac{1}{2 \hbar} \langle [ \hat{O}_{\mu} (t), \hat{O}_{\nu} ] \rangle_0,
\end{equation}
One can show that the Fourier transform of the dissipative part of the dynamical susceptibility has the following expression in the Lehmann representation
\begin{equation} 
\chi_{\mu \nu}''(\omega,\beta)= \frac{\pi}{\hbar}\sum_{i j} (\hat{O}_{\mu})_{i j} (\hat{O}_{\nu})_{j i} (p_i - p_j) \,\,\delta \left(\omega + \frac{E_i - E_j}{\hbar}\right),
\end{equation}
where $p_i$'s are the eigenvalues of the density matrix in the Boltzmann-Gibbs ensemble, i.e. $p_i=e^{-\beta E_i}/Z$, and $E_i$'s are the corresponding Hamiltonian eigenvalues. For thermal states, one can exploit the identity 
\begin{align}
\frac{p_i- p_j}{p_i + p_j}= \int_{-\infty}^{+\infty} d\omega \tanh \left( \frac{\hbar \omega \beta}{2} \right) \,\,\delta \left(\omega + \frac{E_i - E_j}{\hbar}\right),
\end{align}
which leads to the following relation between the $\chi_{\mu \nu}''(\omega,\beta)$ and the MUC,
\begin{equation} 
\label{ulmsuc1} 
\mathcal{U}_{\mu \nu} = \frac{i}{ \hbar \pi}  \int_{-\infty}^{+\infty} \frac{d\omega}{\omega^2} \tanh^2 \left( \frac{\hbar \omega \beta}{2} \right)\chi_{\mu \nu}''(\omega,\beta) ,
\end{equation}
where the set of perturbations $\{\lambda_\mu\}$ in~(\ref{eq:pert})
plays the role of the parameters in the derivation of
$\mathcal{U}_{\mu \nu}$. By means of the fluctuation-dissipation
theorem~\cite{Altland2006}, one can further derive an expression
for Eq.~\eqref{ulmsuc1} in terms of the dynamical structure factor,
$S_{\mu \nu}(\omega,\beta) = \int_{-\infty}^{+\infty} d t e^{i \omega t}
S_{\mu \nu} (t)$ (i.e. the Fourier transform of the correlation
matrix $S_{\mu \nu}(t) = \langle \hat{O}_\mu (t) \hat{O}_\nu (0)\rangle$),
which reads
\begin{equation} \label{MUCCorr1} \mathcal{U}_{\mu \nu} =
\frac{i}{2 \pi \hbar }  \int_{-\infty}^{+\infty} \frac{d\omega}{\omega^2}
\tanh^2 \left( \frac{\hbar \omega \beta}{2} \right) ( S_{\mu \nu}(\omega,\beta)
- S_{\nu \mu}(-\omega,\beta) ).
\end{equation}
Eqns.~(\ref{ulmsuc1}) and~(\ref{MUCCorr1}) provide a means to explore experimentally the geometrical properties of physical systems via the dissipative part of the dynamical susceptibility, and the imaginary part of the (off-diagonal)-dynamical structure factor.  The dynamical susceptibility is a well studied quantity which has been observed experimentally in a wide variety of materials~\cite{Yim2006,Coldea2010,Lake2010,Han2012,Dai2015,Halg2015}. Using such an experimentally accessible quantity provides a robust method to explore the geometric features of criticality in strongly correlated materials.

\section{Dissipative Non-Equilibrium Phase Transitions}
In contrast to equilibrium critical phenomena, less is known in case
of non-equilibrium phase transitions. Indeed, a generalised treatment
is not known, lacking an analog to the equilibrium free energy. Thus
the rich and often surprising variety of non-equilibrium phase
transitions  observed in
physical~\cite{Woodcock1985,Chrzan1992,Schweigert1998,Blythe2002,Whitelam2014,Zhang2015a},
chemical,
biological~\cite{Egelhaaf1999,Marenduzzo2001,Barrett-Freeman2008,Woo2011,Mak2016,Battle2016},
as well as socioeconomic
systems~\cite{Llas2003,Baronchelli2007,Scheffer2012}, has to be
studied for each system individually.

The paradigm of universality, legitimised by its astounding success
in equilibrium phase transitions, does not find an equivalently
comprehensive framework within non-equilibrium phenomena. This
results in a large variety of universality classes without general
tools for their characterisation \cite{Odor2004,Lubeck2004}. For
instance, algebraically decaying correlation functions are not
peculiar of critical phenomena \cite{Prosen2010}. Also the specral
gap of the Liouvillian, an open system generalisation of the
Hamiltonian gap, may vanish in the thermodynamic limit in the whole
phase diagram, with critical points resulting only in a faster
convergence~\cite{Prosen2008,Znidaric2015}.

%Among the wide variety of non-equilibrium statistical physics paradigm, one stands out for its proximity with the standard framework of equilibrium statistical mechanics: the investigation of stable dynamics under non-thermalising noisy environment~\cite{Henkel2008,Henkel2010}. Under such a paradigm, non-equilibrium phase transitions are defined through the non-analytical features of non-equilibrium steady states.

The specific domain of quantum many-body physics in recent years
has witnessed a growing interest in non-equilibrium phenomena. The
reason can be credited to at least two main causes. On the one hand,
the unprecedented level of accuracy reached in nowadays experimental
techniques provides many mature platforms for the investigation and
manipulation of many-body interacting systems. Indeed, several set of
available tools have enabled the development of Liouvillian
engineering, which in addition to coherent Hamiltonian dynamics also
includes controlled dissipation in many-body quantum systems.
Examples of suitable experimental platforms for the implementation
and simulation of such an open many-body framework range from
ultra-cold atoms in optical lattices~\cite{Diehl2008,Bloch2008}, to
ion traps~\cite{Barreiro2011,Schindler2012}, to cavity
microarrays~\cite{Hartmann2006,Greentree2006,Angelakis2007,Underwood2012,Houck2012,Raftery2014,Fitzpatrick2017}
and Rydberg atoms~\cite{Weimer2010,Dudin2012}. On the other hand,
the study of non-equilibrium quantum phenomena can arguably be
related to important open challenges in many-body physics, ranging
from high temperature super-conductivity to quantum computation in
condensed matter setups.

There are two major framework within which non-equilibrium quantum
many-body-physics are generally investigated. One of the most
popular approach \cite{Eisert2015} considers a large, ideally
infinite, system which is initially kept in an equilibrium state. A
perturbation may then be applied to the system either via a sudden
quench of the system Hamiltonian, or by a periodic application of an
external field, or still by the coupling to a suitably structured
reservoir. The resultant system time evolution is then observed, and
the outset - or the lack thereof - of stationary or metastable
long-time behaviour provides several quantitative and qualitative
information on the non-equilibrium properties of the many-body
system.

Another, more direct approach, which we will consider here, consists
instead in coupling a finite or infinite system to several external
reservoirs which may be described {\em effectively} in terms of a
master equation~\cite{Alicki2007,Breuer2007}. Under suitable
conditions, the open system dynamics reaches a possibly unique
{\em non-equilibrium steady state}.  The dynamical and static
properties of the NESSes to which the system eventually relaxes are
the central object of this second course of investigations.

There are several techniques involving many levels of assumptions
and approximations in deriving the effective system's dynamics of
open systems interacting with a reservoir~
\cite{Alicki2007,Breuer2007}. The standard approach, mostly used in
quantum optical settings, results in a local-in-time Markovian
linear differential equations for the system's density matrix, the
so-called quantum Liouville equation. The most general Markovian
form of such an equation is sometimes referred to as the Redfield
equation. A more mathematically appealing form, which manifestly
preserves the complete positivity of the density matrix, and can be
derived from the Redfield model with the additional {\em secular}
approximation, is the so called Lindblad equation
\cite{Lindblad1976}.

In this setting, non-equilibrium criticalities are identified as
dramatic structural changes of the Liouvillian steady state due to
small modification of tuneable external parameter of the system. The
analogy with equilibrium phase transitions is straightforward. For
zero temperature, QPTs  are understood through the properties of the
(unique) ground state of the Hamiltonian
$\mathcal{H}(\bm{\lambda})$, $\bm{\lambda}\in\mathcal{M}$ governing
the dynamics of the system, \begin{equation} \frac{d\rho}{dt} = -i
[\mathcal{H}(\bm{\lambda}),\rho]\,. \end{equation} Phase diagram and
criticality are determined by the low-lying spectrum of excitations
of the system's Hamiltonian. It is the singular properties of the
ground state, as a function of the Hamiltonian parameters
$\bm{\lambda}\in\mathcal{M}$, which are associated to the macroscopic
observable effects typical of criticality. These manifest themselves
through divergence of correlation length and generally occur if the
gap of the Hamiltonian closes.

Similarly, in a dynamics governed by a Liouvillian master equation
\begin{equation} 
\frac{d\rho}{dt} = \mathcal{L}({\bm{\lambda}})  \rho \,, 
\end{equation}
 which generally depends on a set of external parameters
$\bm{\lambda}\in\mathcal{M}$, the family of (possibly unique) NESSes
$\rho_{s}(\bm{\lambda})$ are themselves labelled by the same
parameters. Observable macroscopical behaviour in the physical
properties of a many-body quantum system are associated to
non-analytical dependences of $\rho(\bm{\lambda})$ in the manifold
$\mathcal{M}$.

From a mathematical point of view, the Liouvillian ${\cal
L}(\bm{\lambda}) $ is a linear {\em non-Hermitian} super-operator
acting on the space of density matrices, and a NESS is defined as
its ``eigen-density-matrix'' with \emph{zero eigenvalue}, 
\begin{equation}
 \mathcal{L}({\bm{\lambda}})  \rho_{s}({\bm{\lambda}}) =0\,. \end{equation} The
spectral resolution of the Liouvillian operator
$\mathcal{L}_{\bm{\lambda}}$ provides information on the uniqueness
of the stable state and on the asymptotic decay rates which governs
the system's relaxation towards the NESS(es). The smallest of such
rates, denoted by $\Delta_{\mathcal{L}}$, is the so called
\emph{Liouvillian spectral gap} and determines the dominant time
scale $\tau_{c}\sim\Delta_{\mathcal{L}}^{-1}$ of the dissipative
dynamics.  This quantity is the closest open system analogue to the
Hamiltonian gap. Pretty much in the spirit of QPTs , NESS criticalities
are accompanied by the vanishing of $\Delta_{\mathcal{L}}$, a
phenomenon known as \emph{critical slow down}. Although generally
accepted as an indicator of dissipative phase transitions at
large~\cite{Diehl2008,Verstraete2009,Diehl2010a,Honing2012,Horstmann2013,Bardyn2013},
a vanishing dissipative gap is not a distinguishing feature of NESS
criticality, and it may be observed across phases characterised by
short range
correlations~\cite{Prosen2008,Prosen2010,Znidaric2015,Marzolino2017,Carollo2018}.

Now we will consider systems whose interaction with an environment leads
to a time evolution governed by a Lindblad master
equation~\cite{Breuer2007},
\begin{equation}\label{eq:Lindblad}
\frac{d\rho}{dt} =\mathcal{L}\rho= -i [\mathcal{H},\rho] +
\sum_{\alpha}(2 \Lambda_{\alpha}\rho \Lambda_{\alpha}^{\dagger}
-\{\Lambda_{\alpha}^{\dagger}\Lambda_{\alpha},\rho \}),
\end{equation}
where $\rho$ is the density matrix of the system, $\mathcal{H}$ is
its Hamiltonian, and the Lindblad operators (or jump operators)
$\Lambda_{\alpha}$ determine the interaction between the system and
the bath.  The dissipative dynamics is completely determined by the
jump operators $\Lambda_{\alpha}$, whose physical origin is prone
to different interpretations: From a microscopic point of view they
can be regarded as the effective action of a full Hamiltonian
dynamics of system and bath, where the degrees of freedom of the
reservoir have been traced out. Here, three major approximations are
needed: System and environment are initially in an uncorrelated
state, system and bath interacts weakly (Born approximation), and
the equilibration time of environment is short compared to other
time scales (Markov approximation). A second more versatile
interpretation originates from the concept of digital
simulators~\cite{Feynman1982,Lloyd1996,Weimer2010,Bloch2012,Blatt2012,Houck2012,Aspuru-Guzik2012},
where a set of arbitrary (local) Lindblad operators
$\Lambda_{\alpha}$ can be explicitly implemented in terms of local,
tailored interactions~\cite{Weimer2010}. From a mathematical point
of view, Eq.~(\ref{eq:Lindblad}) is the most general form of time
evolution described by a quantum dynamical semigroup, i.e., a family
of completely positive trace-preserving maps ${\cal E}_{t}$, which
is strongly continuous and satisfies ${\cal E}_{t_{1}}{\cal
E}_{t_{2}} ={\cal E}_{t_{1}+t_{2}}$~\cite{Lindblad1976,Alicki2007}.

Despite the above mentioned analogies, however, non-equilibrium QPTs 
are of a nature different from the standard QPTs  at zero temperature,
and their investigation requires a substantial change of approach,
both conceptually and methodologically. At a conceptual level,
stationary states are the result of coherent dynamics dominated by
incoherent dissipative processes. The response to a small
perturbation is primarily described by relaxation mechanisms rather
than the result of adiabatic modifications of the (ground) state.
From a technical point of view, the {\em non-Hermitian} nature of
the Liouvillian superoperator ${\cal L} $,  as opposed to pure
eigenvectors of a Hermitian Hamiltonian operator $H$, calls for the
development of alternative strategies, as the usual spectral theorem
and perturbation theory simply do not apply. This is quite a
daunting task to tackle in its full generality.

However, by confining oneself to the physically relevant case of
quadratic Liouvillian models of Fermions and spin lattices one is
able to state results with a significant level of
generality~\cite{Eisert2010,Honing2012,Horstmann2013}. This
parallels the central role that quasi-free models play in the
standard theory of quantum phase transitions. Specific models
belonging to this class indeed display rich non-equilibrium
features, non-trivial topological properties and
NESS-QPTs ~\cite{Diehl2008,Prosen2008,Horstmann2013,Bardyn2013}.

\section{Gaussian Fermionic States}
Let's make a brief detour to introduce the formalism of Gaussian
Fermionic states (GFS). To this end we consider systems of $n$ Fermionic
particles described by creation and annihilation operators
$c_{j}^{\dag}$ and $c_{j}$. These operators obey the canonical
anti-commutation relations, \begin{equation} \{c_{j},c_{k}\}=0\qquad
\{c_{j},c^{\dag}_{k}\}=\delta_{jk}\,. \end{equation} A convenient formulation
for quadratic models is in terms of the Hermitian Majorana
operators, which are defined as 
\begin{equation} w_{2j-1}:=c_{j}+c^{\dagger}_{j}\,,\qquad
w_{2j}:=i(c_{j}-c_{j}^{\dagger})\,, 
\end{equation} 
which, as generators of a
Clifford algebra, satisfy the following anti-commutation relations
\begin{equation} 
\{w_{j},w_{k}\}=2\delta_{jk}\,. 
\end{equation} 
We will consider an Hamiltonian quadratic in the Fermionic operators
\begin{equation}\label{eq:QuadH} 
\mathcal{H}:=\sum_{jk=1}^{2n} H_{jk}w_{j} w_{k}=
\bm{w}^{T} H \bm{w}\,, \qquad H=H^{\dagger}=-H^{T}\,, 
\end{equation} 
where
$\bm{w}:=(w_{1}\dots w_{2n})^{T}$ is an array of Majorana operators
and $H$ is a $2n\times 2n$ Hermitian antisymmetric matrix.
Quadratic Hamiltonian models describe quasi-free Fermions and are
known to be exactly solvable. Their ground states and thermal states
are Gaussian Fermionic states. Indeed, Gaussian Fermionic states are
defined as states that can be expressed as
\begin{equation}\label{eq:GFS1}
\rho=\frac{e^{-\frac{i}{4}\bm{w}^{T} \Omega \bm{w}}}{Z}\,,\qquad
Z:=\Tr[ e^{-\frac{i}{4}\bm{w}^{T} \Omega \bm{w}}]
\end{equation}
 where $\Omega$ is a $2n\times 2n$ real antisymmetric matrix (see~\ref{app:FGS}). The thermal state of the quadratic model is obtained by the simple identification $\Omega=-4i\beta H$. Obviously, the converse is always true, a Gaussian state is the thermal state of a suitable quadratic Hamiltonian, which is sometime referred to as parent Hamiltonian~\cite{Bardyn2013}.
Gaussian states are completely specified by the two-point
correlation function
\begin{equation}
 \Gamma_{jk}:=1/2\Tr{(\rho[w_{j},w_{k}])}\,, \qquad \Gamma=\Gamma^{\dagger}=-\Gamma^{T}\,,
\end{equation}
 where the matrix $\Gamma:=\{\Gamma_{jk}\}_{1}^{2n}$ is a $2n\times 2n$ imaginary antisymmetric matrix.  All higher-order correlation functions of a Gaussian state can be obtained from  $\Gamma$ by Wick's theorem~\cite{Bach1994}.\\
 One can show that $\Gamma$ and $\Omega$ can be simultaneously cast in a canonical form by an orthogonal matrix $Q$, $Q^T=Q^{-1}$,
\begin{equation}\label{eq:Q} \Gamma=Q \bigoplus_{k=1}^{n}\left(\begin{array}{cc}0
& i\gamma_k \\- i\gamma_k & 0\end{array}\right)Q^{T},\quad \Omega=Q
\bigoplus_{k=1}^{n}\left(\begin{array}{cc}0 & \Omega_{k} \\-
\Omega_k & 0\end{array}\right)Q^{T} \end{equation} where $\pm\gamma_{k}$ are
the eigenvalues of $\Gamma$ and $\pm i\Omega_{k}$ are the
eigenvalues of $\Omega$. Indeed, the two matrices are related as,
\begin{align}\label{eq:corrQ}
\Gamma = \tanh\left(i\frac{\Omega}2\right)~.
\end{align} 
and their eigenvalues can be expressed as $\gamma_{k}=\tanh{(\Omega_{k}/2)}$, which implies that $|\gamma_{k}|\le1$. Moreover let 
\begin{equation}
 \bm{z}=(z_{1},\dots,z_{2n})^{T}:=Q\bm{w} 
 \end{equation} 
 be the Majorana
Fermions in the eigenmode representation. With respect to these
Fermions the Gaussian state can be expressed as,
\begin{align} 
  \rho &= \prod_k \frac{1-i \gamma_k\,z_{2k-1} z_{2k}}2~.
  \label{eq:GFS}
\end{align} 
Hence, a Gaussian Fermionic state can be factorised into a tensor
product $\rho=\bigotimes_{k}\rho_{k}$ of density matrices of the
eigenmodes $\rho_{k}:=\frac{1-i \gamma_k\,z_{2k-1} z_{2k}}2$.
Notice that for $\gamma_k=\pm 1$, one has $\Omega_k = \pm\infty$,
making the definition~(\ref{eq:GFS1}) of Gaussian state not well
defined, unlike Eq.~\eqref{eq:GFS}, showing that the latter offers a
regular parameterisation even in those extremal points. Notice that
$|\gamma_{k}|=1$ corresponds to a Fermionic mode
$\tilde{c}_{k}=1/2(z_{2k-1}+z_{2k})$ being in a pure state, as  it
is clear from the following explicit expression for the purity of
the states $\rho_{k}$,
\begin{align}
  \Tr[\rho_{k}^2] =\frac{1+\gamma_{k}^{2}}{2}~.
  \label{e.purity}
\end{align}
This implies the following basis-independent expression of the
purity
\begin{align}
  \Tr[\rho^2] =\prod_{k}\frac{1+\gamma_{k}^{2}}{2}=
  \sqrt{\det\left(\frac{\one+\Gamma^2}2\right)}~.
  \label{e.purity}
\end{align}

 \section{Dissipative Markovian Quadratic Models}~\label{sec:QuadLind}
We will discuss dissipative Fermionic models, described by a
Lindblad master equation
\begin{equation}\label{eq:Lindblad1}
\frac{d\rho}{dt} =\mathcal{L}\rho= -i [\mathcal{H},\rho] +
\sum_{\alpha}(2 \Lambda_{\alpha}\rho \Lambda_{\alpha}^{\dagger}
-\{\Lambda_{\alpha}^{\dagger}\Lambda_{\alpha},\rho \}),
\end{equation}
whose global dynamics is quadratic in Fermion operators. This means
that the Hamiltonian considered will be of the
type~(\ref{eq:QuadH}), and the set of jump operators
$\Lambda_{\alpha}$ will be linear in the Fermion operators, i.e. 
\begin{equation}
 \Lambda_{\alpha}=\bm{l}_{\alpha}^{T}\bm{w}, \textrm{ with
}\bm{l}_{\alpha}:=(l_{1}^{\alpha},\dots,l_{2n}^{\alpha})^{T}~, 
\end{equation}
 where $\bm{l}_{\alpha}$ denotes a set of 2n-dimensional complex
vectors. We assume that $H$ and $\bm{l}_{\alpha}$'s depend on a set
of parameters $\bm{\lambda}\in\mathcal{M}$ which defines the
underlying dissipative model. Due to the quadratic dependence on the
Fermionic operators, the Liouvillian can be diagonalised exactly and
its stable state is Gaussian. This has been proven in full
generality in
Ref.~\cite{Prosen2008a,Prosen2010a,Zunkovic2010,Banchi2014} using a
formalism called ``third quantisation''. This essentially consists
in the development of an operator algebra acting on the space of
density matrices which mimics the algebraic properties of the
Fermionic operators acting on the ordinary Fock space. In the
following, we will review the way in which the stable state
$\rho_{s}$ is obtained within this formalism, which will provide the
natural parameterisation necessary for the subsequent developments.
The Liouvillian~\cite{Prosen2010a} can be written as a quadratic
form in terms of the following set of $2n$ creation and annihilation
super-operators
\begin{equation} \begin{aligned}
\hat{a}^\dagger_j \,\cdot := -\frac{i}2 \,W\,\Big\{w_j,
\cdot\Big\}~,   && \hat{a}_j \,\cdot := -\frac{i}2 \,W\,\Big[w_j,
\cdot\Big]~, \label{e.super}
\end{aligned} \end{equation} 
where $[.,.]$ and $\{.,.\}$ are the usual commutator and
anti-commutator, respectively and 
\begin{equation} W:=i^n\prod_{j=1}^{2n}
w_j\qquad 
\end{equation}
is an Hermitian operator satisfying the following
properties 
\begin{equation} W=W^{\dagger}~,\quad W^{2}=\one~,\quad
\{W,w_{k}\}=0\quad \forall\, k~. 
\end{equation}

From the above properties one can prove that the super-operators
defined in Eq.~\eqref{e.super} satisfy the canonical
anti-commutation relations, \begin{equation} \{\hat{a}_j, \hat{a}_k\} =
0~,\qquad\{\hat{a}_j^\dagger, \hat{a}_k\} = \delta_{jk}~, \end{equation} and
thus they reproduce the algebra of ordinary Fermionic operators.
Notice, however, that the space on which $\hat{a}_{k}$ and
$\hat{a}_{k}^{\dagger}$ act is the Hilbert-Schmidt space of linear
operator $\mathcal{B}(\mathcal{H})$, to which the set of density
matrices $\rho$ belongs.

Let's denote by $\mathcal{R}$ the $4^{n}$-dimensional subspace of
$\mathcal{B}(\mathcal{H})$ spanned by $\prod_j w_j^{s_j}$,
($s_j\in\{0,1\}$). One can regard this subspace as a linear Hilbert
space whose elements, denoted by $\sket{\bm s}$ are normalised with
respect to the Hilbert-Schmidt inner product, i.e. $\sbraket{\bm
s}{\bm s} \equiv \Tr[s^\dagger s]= 1$ for $\sket{\bm s}\in\mathcal
R$. Notice that the vacuum of the Fermionic super-operators, i.e.
the state $\sket{0}$ such that $\hat{a}_{k}\sket{0}=0$, corresponds
to the completely mixed state $\sket{0} \propto\one$. Moreover, one
can verify that the superoperator $a_k^\dagger$ is the Hermitian
conjugate of $a_k$ in $\mathcal R$.

Using the above formalism, a lengthy but straightforward calculation
shows that the Lindbladian equation Eq.~(\ref{eq:Lindblad1}) can be
explicitly expressed in the following bilinear form in terms of
$\hat{a}_{k}$ and $\hat{a}_{k}^{\dagger}$, \begin{equation}\label{eq:QuadL}
 \mathcal L = -\sum_{jk} \left(X_{jk}\,\hat{a}_j^\dagger \hat{a}_k +
  Y_{ij}\,\hat{a}_j^\dagger \hat{a}_k^\dagger/2\right)
\end{equation} where
\begin{align}
X&:=4[i H + \Re{(M)}]~,\qquad X=X^{*}\label{eq:X}\\
Y&:=-i8\Im{(M)}~, \qquad Y=Y^{\dagger}=-Y^{T}\label{eq:Y}
\end{align}
and 
\begin{equation}\label{eq:BathM} 
M_{jk}:=(\sum_{\alpha}
\bm{l}_{\alpha}\otimes\bm{l}_{\alpha}^{\dagger})_{jk}=\sum_{\alpha}
l^{\alpha}_{j}(l^{\alpha}_{k})^{*}~,\qquad M=M^{\dagger}\ge 0~, 
\end{equation}
 is a positive semidefinite matrix called \emph{bath matrix}.

Under certain condition derived in~\cite{Prosen2010a}, the
dissipative dynamics admits a unique non-equilibrium steady state
solution $\rho_{s}$ such that $\mathcal{L}\rho_{s}=0$, and such a
state is Gaussian. The two point correlation function of the NESS is
obtained from the solution of the following (continuous) Lyapunov
equation: \begin{equation}\label{eq:Lyap} X\Gamma + \Gamma X^{T} = Y~. \end{equation} As
shown in the next section the correlation matrix $\Gamma$ plays also
a central role in the diagonalisation of the Liouvillian.

Notice that the real matrix $X$ does not have to be diagonalisable.
However, for convenience one can safely assume that this is the
case. Indeed, in the explicit models that we will consider in the
next sections, this assumption is always satisfied. Nevertheless,
the generality of the consideration that will follow will not be
affected if this condition is lifted, as it will be argued in the~\ref{app:Spec}.

Let $\bm{a}:=(\hat{a}_{1},\dots,\hat{a}_{2n})^{T}$ be the array of
Fermionic annihilation super-operators, and let $U$ be the
invertible matrix that diagonalises $X$, i.e. 
\begin{equation}
 X=U\,D_{X}\,U^{-1}\qquad D_{X}:=\diag(\{x_k\}_{k=1}^{2n}), \end{equation} where
$x_k\in \mathbb{ C}$ are the eigenvalues of $X$. One can show that
the following non-unitary Bogoliubov
transformation~\cite{Blaizot1986}, 
\begin{equation}
 \left\{\begin{array}{l}{\bm b} \,\,=\, U^{-1}( {\bm a}-\Gamma\,{\bm a}^\dagger),\\
{\bm b}^\times \!\!= U^{T}{\bm a}^{\dagger},
\end{array}\right.
\end{equation} 
where $\bm{b}:=(\hat{b}_{1},\dots,\hat{b}_{2n})^{T}$, brings
$\mathcal L$ to the diagonal form 
\begin{equation}\label{eq:LDiag} \mathcal L =
-\sum_k x_k\, \hat{b}_k^\times \hat{b}_k~. 
\end{equation}
Notice that, due to the non-unitarity of the Bogoliubov transformation, the operators
$\hat{b}_j$ and $\hat{b}_j^\times$ satisfy the canonical
anti-commutation relations, however $\hat{b}_j^\times \neq
\hat{b}_j^\dagger$. Pretty much in the spirit of the standard
Bogoliubov transformation, the unnormalized steady state
$\rho_{s}$ is the vacuum of the annihilation super-operators ${\bm
b}$, i.e. $\hat{b}_j \sket{\rho_{s}} = 0$, $\forall j=1,\dots,2n$.
From the operator form of the Bogoliubov
transformation~\cite{Blaizot1986}, one finds
\begin{align}
  \rho_{s}  = e^{-\frac12 {\bm a}^\dagger\cdot \Gamma{\bm a}^\dagger}\!({\one})~
  \label{e.ss}
\end{align}
where, as noted earlier, the identity operator is the vacuum of $\bm
a$. Due to the explicit form of the super-operators \eqref{e.super},
in the next section we will show that the above state is a Gaussian
Fermionic state and that its two point correlation functions $\Tr
\rho_{s} [w_i,w_j]$ are given by $\Gamma_{ij}$, i.e. by the solution
of the continuous Lyapunov Eq.~\eqref{eq:Lyap}.

According to~\cite{Prosen2010a}, the condition of uniqueness of the
steady state is 
\begin{equation} 
\Delta_{\mathcal{L}}:=2\min_{j}{\Re{(x_{j})}}> 0, 
\end{equation}
  where the $x_{j}$'s are the eigenvalues of $X$, and $\Delta_{\mathcal{L}}$ is the Liouvillian spectral gap. When this condition is met, any other state will eventually decay into the NESS in a time scale $\tau\sim1/\Delta_{\mathcal{L}}$.
In the thermodynamic limit $n\to\infty$ a vanishing gap
$\Delta(n)\to 0$ may be accompained, though not necessarily, by
non-differentiable properties of the
NESS~\cite{Prosen2008,Znidaric2015}. For this reason, the scaling of
$\Delta_{\mathcal{L}}(n)$ has been used as an indicator of NESS
criticality~\cite{Prosen2010,Znidaric2011,Horstmann2013,Cai2013,Znidaric2015}.
NESS-QPTs  have been investigated through the scaling of the Bures
metrics~\cite{Zanardi2007a,CamposVenuti2007}, whose
super-extensivity has been connected to a vanishing
$\Delta_{\mathcal{L}}$~\cite{Banchi2014}. Along the same line, it is
also possible to demonstrate that a similar relation exists between
the scaling properties of the dissipative gap and the mean Uhlmann
curvature~\cite{Carollo2018,Carollo2018a,Leonforte2019,Leonforte2019a,Bascone2019}. Essentially, one will develop in
the context of NESS-QPTs  an argument similar to the one established
in section~\ref{sec:SEQ}, for the zero-temperature quantum phase
transitions. There  the expression~(\ref{eq:BoundQ}) defines a
necessary relation between the scaling of the quantum geometric
tensor and a vanishing Hamiltonian gap. Here we will provide a
relation connecting the scaling properties of the (generalised)
quantum geometric tensor and the dissipative gap
$\Delta_{\mathcal{L}}$.

\subsection{Diagonalisation of the Lindblad Equation}
Following the notation introduced in section~\ref{sec:QuadLind}, the
Liouvillian~\eqref{eq:Lindblad1} can be re-expressed as
\begin{equation} \begin{aligned}
  \mathcal L = -\frac12\begin{pmatrix} {\bm a}^\dagger & \bm a \end{pmatrix}
  \,
  \begin{pmatrix}
    X & Y \\ 0 & -X^T
  \end{pmatrix}
  \,
  \begin{pmatrix} \bm a \\ {\bm a}^\dagger \end{pmatrix}  -\frac12 \Tr X~.
\end{aligned} \end{equation} 

Consider the following invertible transformation \begin{equation} T:=
\begin{pmatrix}
    \one & \Gamma\\0&\one
  \end{pmatrix}~,\qquad
  T^{-1}:=  \begin{pmatrix}
    \one & -\Gamma\\0&\one
  \end{pmatrix}~,
\end{equation}
%
 % \begin{equation} \begin{aligned}
%   \begin{pmatrix}
%    X & Y \\ 0 & -X^T
%  \end{pmatrix}
%  =
%  \begin{pmatrix}
%    U & -\Gamma\,U^{-T}\\0&U^{-T}
%  \end{pmatrix}
%  \,
%   \begin{pmatrix}
%     D_{X} & 0 \\ 0 & -D_{X}
%  \end{pmatrix}\,
%  \begin{pmatrix}
%    U^{-1} & U^{-1}\,\Gamma\\0&U^T
%  \end{pmatrix}~.
%  \label{e.diaglind}
% \end{aligned} \end{equation} 
If $\Gamma$ is the matrix solution of \eqref{eq:Lyap}, then
 \begin{equation} \begin{aligned}
   \begin{pmatrix}
    X & Y \\ 0 & -X^T
  \end{pmatrix}
  =\,\hat{T}^{-1} \,\begin{pmatrix}
     X & Y-X\Gamma - \Gamma X^{T} \\ 0 & -X^{T}
  \end{pmatrix}\,\hat{T}=
  \hat{T}^{-1}\,
   \begin{pmatrix}
     X & 0 \\ 0 & -X^{\hat{T}}
  \end{pmatrix}\,
  \hat{T}\,.
  \label{e.diaglind}
 \end{aligned} \end{equation} 
Therefore,  one straightforwardly sees that the
matrix~(\ref{e.diaglind}) is diagonalised by the following
transformation
 \begin{equation} \begin{aligned}
      \hat S &:=\,\begin{pmatrix}
    U^{-1} &0\\0&U^T
      \end{pmatrix}\,\hat{T}\,=\,
      \begin{pmatrix}
    U^{-1} & U^{-1}\, \Gamma \\0&U^{T}
      \end{pmatrix}
\label{e.bogohat}
\end{aligned} \end{equation} 
One can show that $\hat S$ is a non-unitary Bogoliubov
transformation~\cite{Blaizot1986}, which amounts to verify that
$\hat S$ fulfils the following condition (see Eq.(2.6) of
\cite{Blaizot1986})
\begin{equation} \begin{aligned}\label{eq:GroupProp}
  \hat S\,\Sigma\,\hat S^{T} = \Sigma \qquad \textrm{ where }\quad \Sigma :=  \begin{pmatrix} 0 &\one \\\one &0 \end{pmatrix}~.
\end{aligned} \end{equation} 
This transformation leads to the definition of a new set of creation
and annihilation super-operators as
\begin{equation} \begin{aligned}
  \begin{pmatrix} \bm b \\ {\bm b}^\times \end{pmatrix} &= \hat S \,
    \begin{pmatrix} \bm a \\ {\bm a}^\dagger \end{pmatrix}~.
      \label{e.bogodiag}
\end{aligned} \end{equation} 
Since $\mathcal S$ is a non-unitary Bogoliubov transformation, the
operators $\hat{b}_j$ and $\hat{b}_j^\times$ satisfy the canonical
anti-commutation relations, but $\hat{b}_j^\times \neq
\hat{b}_j^\dagger$. Moreover, by employing the relation $
\begin{pmatrix} {\bm a}^\dagger & \bm a \end{pmatrix} =
  \begin{pmatrix} \bm a \\ {\bm a}^\dagger \end{pmatrix}^T\,\Sigma^x$, together with the property~(\ref{eq:GroupProp}), one finds that
\begin{equation} \begin{aligned}
  \mathcal L = -\frac12\begin{pmatrix} {\bm b}^\times & \bm b \end{pmatrix}
  \,
  \begin{pmatrix}
    D_{X} & 0 \\ 0 & -D_{X}
  \end{pmatrix}
  \,
  \begin{pmatrix} \bm b \\ {\bm b}^\times \end{pmatrix}  -\frac12 \Tr X~,
\end{aligned} \end{equation} 
i.e.,
\begin{equation} \begin{aligned}
  {\mathcal L = -\sum_j x_j \,\hat{b}^\times_j\, \hat{b}_j}~.
  \label{e.lindiag}
\end{aligned} \end{equation} 
To the canonical transformation \eqref{e.bogodiag} it
corresponds an operator acting on the Fock space, which thanks to
Eq.~(2.16) of \cite{Blaizot1986}, can be written into the form
\begin{equation} \begin{aligned}
  \hat{b}_j &= \mathcal S\, \hat{a}_j\, \mathcal S^{-1}~, &
  \hat{b}_j^\times &= \mathcal S\, \hat{a}_j^\dagger\, \mathcal S^{-1}~,
\end{aligned} \end{equation} 
where
\begin{equation} \begin{aligned}
  \mathcal S = :\exp\left(-\frac12 {\bm a}^\dagger \,\Gamma\, {\bm a}^\dagger +
  {\bm a}^\dagger\,(U-1)\,{\bm a}\right):~,
\end{aligned} \end{equation} 
and $:\exp(\cdot):$ denotes the normal ordering of the exponential.

By exploiting the above operator, one is then able to explicitly
express the vacuum of the Bogoliubov operators  ${\bm b}$, i.e. the
stationary state $\rho_{s}$ of the Liouvillian, in terms of the
original super-operators $\bm a$. Recall that the vacuum of $\bm a$,
i.e. the element $\sket{\bm 0} \in \mathcal R$ such that $\hat{a}_i
\sket{\bm 0} = 0$, $\forall j=1,\ldots2n$, is the completely mixed
state. It also fulfils the property $\sbra{\bm 0}\mathcal L = 0$.

The vacuum of the Bogoliubov operators $\bm b$ can be readily
obtained from the operator $\mathcal S$: $\sket\rho_{s}=\mathcal
S\sket{\bm 0}$. Indeed, as $\hat{a}_j \sket{\bm 0}=0$, one has $
\hat{b}_j\sket{\rho_s} = \mathcal S \hat{a}_j\mathcal S^{-1}\mathcal
S\sket{\bm 0} = 0$. Hence,
\begin{equation} \begin{aligned}
  \sket{\rho_{s}} =
  \mathcal S \sket{\bm 0} = e^{-\frac12 {\bm a}^\dagger\,\Gamma\,{\bm a}^\dagger}
  \sket{\bm 0}~.
  \label{e.superss}
\end{aligned} \end{equation} 
 The state \eqref{e.superss} is exactly \eqref{eq:GFS}, as will be shown in the following.
Thanks to the transformation $Q$ defined in \eqref{eq:Q}, one can
write $\Gamma$ in a canonical form with respect to the set of
Fermion eigenmodes $\bm{z}$. By using the definition
\eqref{e.superss}, one gets 
\begin{equation}
   \frac12 {\bm a}^\dagger \, \Gamma \, {\bm a}^\dagger \rho =\frac18\left(\bm w\cdot \Gamma\bm w\rho + 2 \bm w\cdot \Gamma\rho \bm w +
   \rho \bm w \cdot \Gamma \bm w\right)=\sum_k \mathcal G_k(\rho)~,
\end{equation} where 
\begin{equation}
    \mathcal G_k(\rho):= \frac{i}4 \gamma_k \big[z_{2k-1}z_{2k}\rho
   +z_{2k-1}\rho z_{2k}-z_{2k}\rho z_{2k-1}+
   \rho z_{2k-1}z_{2k}\big]~.
\end{equation} One can verify the following two properties, 
\begin{equation}
 \left\{\begin{array}{l} \mathcal G_k(\one) = i \,\gamma_k
\,z_{2k-1}z_{2k}~,
\\ \mathcal G_k(z_{2k-1}z_{2k}) = 0~,\end{array}\right.
\end{equation} which streighforwardly leads to
\begin{equation} \begin{aligned}
  \rho_s \propto e^{-\frac12 \bm{a}^\dagger \Gamma \bm{a}^\dagger} \sket{0} \propto
  \prod_{k} e^{-\mathcal G_k} \one =
  \prod_k \left(1 - i \,\gamma_k \,z_{2k-1}z_{2k} \right)~,
  \label{e.rhoz}
\end{aligned} \end{equation} 
thus recovering Eq.~\eqref{eq:GFS}.

\subsection{Liouvillian Spectrum}\label{sec:LSpec}
The conditions for the existence and uniqueness of \eqref{e.rhoz}
have been derived in \cite{Prosen2010a}. We now review those
conditions and express them in terms of the spectral gap
$\Delta_{\mathcal{L}}$.

The correlation matrix $\Gamma\in M_{2n}(\mathbb{C})$ is the
matrix solution of Eq.~\eqref{eq:Lyap}. This equation acquires a
familiar linear matrix representation, when expressed through the so
called (non-canonical) ``vectorising"  isomorphism 
\begin{equation}
  \textrm{vec}\colon M_{2n}({\mathbb{C}})\rightarrow
({\mathbb{C}}^{2n})^{\otimes\,2}\,/\, |i\rangle\langle j|\rightarrow
|i\rangle\otimes|j\rangle, \end{equation} which vectorises a matrix.
 This is also a Hilbert-space isomorphism, namely

\begin{equation}
  \langle \V A,\V B\rangle = ( A,\,B):=\tr(A^\dagger B).
 \end{equation}
 One can directly check that

\begin{equation}
  \V{ABC} = \left(\,A\otimes C^T\,\right)\, \V{B}.
\end{equation}
 Applying the vectorising isomorphisim to both sides of continuous
Lyapunov Eq.~\eqref{eq:Lyap} one then gets 
\begin{equation}\label{syl1}
\hat{X} \V{\Gamma}=\V{Y},\qquad \hat{X}:=(X\otimes {\mathbf{1}}
+{\mathbf{1}}\otimes X),  
\end{equation}
where
$\hat{X}\in{\rm{End}}({\mathbb{C}}^{2n})^{\otimes\,2}\cong
M_{4n^2}({\mathbb{C}}).$ There are three distinct operators involved
in the above formalism, and correspondingly there are three
different definitions of spectral gaps, which are described in the
following.
\begin{enumerate}
\item The map $X\colon {\mathbb{C}}^{2n}\rightarrow  {\mathbb{C}}^{2n},$ a $2n\times 2n$ {\em real} diagonalizable matrix. Its spectrum
is $\{x_j\}_{j=1}^{2n}\subset{\mathbb{C}}$ and, because of reality,
is invariant under complex conjugation. From the non-negativity of
the bath matrix $M$ one can prove that $\Re\,x_j\ge0,\forall j.$
(see~\ref{app:Spec}).

\item The map $\hat{X}=X\otimes {\mathbf{1}} +{\mathbf{1}}\otimes X\in{\rm{End}}({\mathbb{C}}^{2n})^{\otimes\,2}\cong M_{4n^2}({\mathbb{C}}),$
 a $4n^2\times 4n^2$ matrix. Since $X$ is assumed diagonalisable, also $\hat{X}$ will be so, and its spectrum is $\{ x_i+x_j\}_{i,j=1}^{2n}\subset{\mathbb{C}}$. The minimum of its modulus is clearly given by $\Delta_{\hat{X}}:=\min_{i,j} |x_i+x_j|.$
Importantly, diagonalisability of $\hat{X}$ straightforwardly
implies
\begin{align}
\Delta_{\hat X}^{-1} = \|\hat{X}^{-1}\|_\infty~.
\label{eq:Deltaxhat}
\end{align} 
Moreover, for the  uniqueness of the steady state we must have
$\hat{X}$ invertible i.e., $\Delta_{\hat{X}}>0.$

\item The Liouvillian ${\cal L}\colon {\rm{End}}(  ({\mathbb{C}}^2)^{\otimes n} )\rightarrow{\rm{End}}(  ({\mathbb{C}}^2)^{\otimes n} ),$
 a $2^{2n}\times 2^{2n}$ matrix. As it can be seen from Eq.~(\ref{e.lindiag}), its spectrum can be defined through the array occupation numbers $\bm{n}:=( n_{1},\dots, n_{2n} )^{T}$, where $n_{k}=0,1$ are the eigenvalues of Bogoliubov number operators $b^{\times}_{k} b_{k}$. Its spectrum is given by
\begin{equation} \begin{aligned}
{\rm{Sp}}({\cal L})= \{-x_{\bm{n}}:=\bm{n}^{T}\cdot \bm{x}\in
\mathbb{C}\} \quad  \textrm{ where }  \quad \bm{x}:=(x_{1},\dots
,x_{2n})^{T},  \textrm{ with } x_k\in {\rm{Sp}}(X). \label{e.spectrum}
\end{aligned} \end{equation} 
Notice that $0\in {\rm{Sp}}({\cal L})$ i.e., $\cal L$ is always
non-invertible and the steady state (e.g., the Gaussian state
${\bm{n}}={\bm{0}}$) is in the kernel of $\cal L$. If this latter
is one-dimensional, i.e. a unique steady state, the gap of  $\cal L$ can be
defined as $ {\Delta}_{\cal L}:=\min_{  {\bm{n}} \neq {\mathbf{0}}
}\, |x_{\mathbf{n}}|. $
%From this we see that \eqref{e.C} has a unique solution iff the linear map $\hat{X}$ over $({\mathbb{C}}^n)^{\otimes\,2}$  is invertible i.e., $0\notin {\rm{sp}}\, (\hat{X}).$

\end{enumerate}
Notice that, on account of the stability of the physical system, one
is expected to have $\Re\,x_j\ge0,\forall j.$ Indeed, this is the
quantum equivalent of the classical Lyapunov stability condition,
where the time-scale for convergence $\rho(t)\to\rho(\infty)$ is
dictated by $\tilde{\Delta}^{-1}$, with $\tilde\Delta=
\min_{{\mathbf{n}} \neq {\mathbf{0}}} \Re\, x_{{\mathbf{n}}}.$

It is not hard to show, that the three distinct definitions of
spectral gaps just described effectively collapse into each other.
\begin{Proposition}\label{prop:Delta}
If $\Delta = \min_j 2\Re(x_j) > 0 $ then
\begin{equation}
 \Delta={\Delta}_{\cal L}= \Delta_{\hat{X}}~.
 \label{eq:prop1}
\end{equation}
\end{Proposition}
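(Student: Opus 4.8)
The plan is to reduce all three quantities to elementary statements about the spectrum $\{x_j\}_{j=1}^{2n}$ of the real matrix $X$ and then compare them. Two facts recorded earlier are the whole engine of the proof: since $X$ is real its spectrum is invariant under complex conjugation, and non-negativity of the bath matrix $M$ forces $\Re x_j\ge 0$; the hypothesis $\Delta=\min_j 2\Re x_j>0$ upgrades this to $\Re x_j\ge \Delta/2>0$ for every $j$. I would first restate that $\hat X=X\otimes{\mathbf 1}+{\mathbf 1}\otimes X$ has spectrum $\{x_i+x_j\}_{i,j}$ and that ${\rm Sp}(\mathcal L)=\{-\sum_k n_k x_k : n_k\in\{0,1\}\}$, so that $\Delta_{\hat X}$ and $\Delta_{\mathcal L}$ are, respectively, the minima of $|x_i+x_j|$ over all pairs and of $|x_{\bm n}|$ over all $\bm n\neq\bm 0$.

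For $\Delta_{\hat X}=\Delta$ the argument is short. For the lower bound, $|x_i+x_j|\ge \Re(x_i+x_j)=\Re x_i+\Re x_j\ge 2\min_k\Re x_k=\Delta$ for all $i,j$, hence $\Delta_{\hat X}\ge\Delta$. For the upper bound I would choose an index $k_\ast$ with $\Re x_{k_\ast}=\min_k\Re x_k$: if $x_{k_\ast}$ is complex its conjugate $\bar x_{k_\ast}$ also lies in the spectrum by the conjugation symmetry, and $|x_{k_\ast}+\bar x_{k_\ast}|=2\Re x_{k_\ast}=\Delta$; if $x_{k_\ast}$ is real one takes $i=j=k_\ast$ and $|2x_{k_\ast}|=2x_{k_\ast}=\Delta$. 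Either way $\Delta_{\hat X}\le\Delta$, so $\Delta_{\hat X}=\Delta$.

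The same two inequalities drive $\Delta_{\mathcal L}=\Delta$, but this is where the real difficulty lies. The clean lower bound is $|x_{\bm n}|\ge \Re x_{\bm n}=\sum_k n_k\Re x_k$, and the conjugate-pair excitation $\bm n=e_{k_\ast}+e_{\overline{k_\ast}}$ again saturates the value $2\Re x_{k_\ast}=\Delta$, giving $\Delta_{\mathcal L}\le\Delta$. The obstacle is the matching lower bound: a single excitation $\bm n=e_k$ yields only $|x_k|\ge\Re x_k\ge\Delta/2$, so the naive estimate falls a factor of two short of $\Delta$. The resolution I would invoke is that the physically relevant relaxation modes of $\mathcal L$ — those acting within the Hermitian, parity-even sector in which the density operators live — carry an \emph{even} number of Bogoliubov super-fermions, so the admissible $\bm n\neq\bm 0$ obey $\sum_k n_k\ge 2$; then $\Re x_{\bm n}\ge 2\min_k\Re x_k=\Delta$ and the lower bound closes.

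I would therefore spend most of the effort justifying this parity restriction through the third-quantisation structure: the steady state $\sket{\rho_s}=e^{-\frac12{\bm a}^\dagger\Gamma{\bm a}^\dagger}\sket{0}$ is a superposition of even numbers of super-fermions, and $\mathcal L$ preserves super-fermion parity, so single-excitation (odd) modes decouple from the evolution of physical states. The remaining care is to verify that the minimising even mode is always realisable as a conjugate pair achieving $2\Re x_{k_\ast}$ — the one point at which the conjugation symmetry of ${\rm Sp}(X)$ is indispensable, and where a lone real slowest rapidity (with no conjugate partner at the minimal real part) would have to be treated as a separate, borderline case.
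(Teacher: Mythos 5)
Your treatment of $\Delta_{\hat{X}}=\Delta$ is correct and complete, and in fact more careful than the paper's, which dispatches that half with ``by a similar argument''; your diagonal choice $i=j$ for a real minimal eigenvalue is exactly the case the paper never mentions, and it has no analogue for $\mathcal{L}$ because $n_k\in\{0,1\}$ forbids exciting the same mode twice. Where you genuinely diverge from the paper is the Liouvillian gap, and here your route is different in kind: the paper never invokes parity at all. Its proof writes $|x_{\bm n}|\ge|\Re(\bm{n}^{T}\cdot\bm{x})|$, asserts that this bound ``can be saturated'' by configurations consisting only of complex-conjugate pairs $x_p^{\pm}$, and then minimises over those configurations. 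Read literally, that establishes only the upper bound $\Delta_{\mathcal{L}}\le 2\min_p\Re x_p$ together with the trivial lower bound $\Delta/2$; the single-excitation (odd) configurations it silently discards are precisely the obstruction you identify, so your ``factor of two'' worry is not a quirk of your approach --- it is the hole in the paper's own argument. Your parity-superselection argument (the super-operators $\hat{a}_j,\hat{a}_j^{\dagger}$ flip Majorana parity, $\mathcal{L}$ preserves it, physical density matrices and the steady state are even, hence admissible excitations have $\sum_k n_k\ge 2$) is the justification the paper's restriction is missing, and it is indeed the structure underlying third quantisation.

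Two caveats, the important one of which you flagged yourself. First, parity amounts to reading $\Delta_{\mathcal{L}}$ as the gap of the even sector, whereas the paper defines it as the unrestricted minimum $\min_{\bm{n}\neq\bm{0}}|x_{\bm n}|$; under that literal definition the proposition actually fails whenever $X$ has an eigenvalue with $|x_k|<2\min_j\Re x_j$. This is not exotic: for purely dissipative models with no Hamiltonian, $X=4\Re(M)$ is real symmetric with real spectrum, so $\Delta_{\mathcal{L}}=\min_k x_k=\Delta/2$ --- the paper's own example in the section on a vanishing dissipative gap without criticality is of exactly this type. Second, your unresolved borderline case (a simple real minimal rapidity with no conjugate or degenerate partner) also defeats the even-sector version, since the cheapest even excitation then costs $x_{k_*}+\min_{j\neq k_*}\Re x_j>\Delta$, so equality fails in the opposite direction. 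Neither your proof nor the paper's closes this case; the statement as written really needs the parity reading of $\Delta_{\mathcal{L}}$ \emph{plus} a genericity assumption that the minimal rapidity belongs to a conjugate or degenerate pair (automatic in the Hamiltonian-dominated and translation-invariant settings the paper actually studies, where $x_k\approx\pm 4i\varepsilon_k$). So your proposal is sounder than the paper's proof where they overlap, but it is not a complete proof of the proposition as stated --- and, to be fair, no complete proof exists without an added hypothesis.
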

\begin{proof}
$|x_{\mathbf{n}}|=|  \bm{n}^{T}\cdot \bm{x} |\ge
|\Re(\bm{n}^{T}\cdot \bm{x}) |.$ The first bound can be saturated by
choosing the $n_j$'s in such a way that only a set $P$ of complex
conjugated pairs $x^\pm_p$ of eigenvalues are present. In this case
$ |\Re(\bm{n}^{T}\cdot \bm{x})|=2\sum_{p\in P} \Re\,x_p$, where we
used  the  assumption $(\forall p)\, \Re\,x_p\ge 0.$ Using again
positivity of all the terms, this sum can be made as small as
possible by choosing $|P|=1$ and minimising over $p=1,\ldots,n.$
This shows that ${\Delta}_{\cal L}=\min_{\mathbf{n}}
|x_{\mathbf{n}}|=2 \min \{\Re\,x_p\}_{p=1}^n.$ By a  similar
argument one shows that $\Delta_{\hat{X}}=\min
\{|x_i+x_j|\}_{i,j=1}^{2n}$ is given by the same expression i.e.
${\Delta}_{\cal L}= \Delta_{\hat{X}}$. Finally
$\Delta=2\min_{\mathbf{n}} \Re\,
x_{\mathbf{n}}\equiv2\tilde\Delta=2\min_p \Re\, x_p=\Delta_{\cal
L}$.
\end{proof}

\section{Geometric Properties of the Steady States}\label{sec:GeoNESS}
We would like to transfer to the framework of NESS-QPTs  the insight
that we have learnt from the geometric information approach and the
geometric phase methods that so far have been applied to the
equilibrium case. The idea would be to explore the properties of the
metrics and the properties of the geometric phases pretty much in
the spirit of the equilibrium phase transitions. The natural
candidate for the definition of a metric is clearly the Bures
metric, as the intuition built from QPTs  in open systems would
suggest~\cite{Cozzini2007,Zanardi2007b,Zanardi2007a,Zanardi2007b}.
This has been done in Ref.~\cite{Banchi2014}.

A completely different story applies to the geometric phase, as a
natural candidate in the mixed state domain does not exist. In the
context of mixed quantum states, it is necessary to exploit
unorthodox concepts of geometric phases and many possible
definitions of the mixed state geometric phase have been put
forward~\cite{Uhlmann1986,Sjoqvist2000,Carollo2003,Carollo2004,Tong2004,Chaturvedi2004,Marzlin2004,Carollo2005a,Buric2009,Sinitsyn2009}.
Which definition is best suited in this context depends largely on
the type of information that one wants to pursue. In this context
the Uhlmann GP~\cite{Uhlmann1986} stands out for its deep rooted
relations with the fidelity approach and for its relations with
quantum estimation theory.

Motivated by this, we will concentrate on the mean Uhmann curvature,
which has been already introduced in section~\ref{sec:UC}. Rather
than exploring the geometric phase itself, which provides insight on
the global properties of the mixed state manifold, one can look at the Uhlmann curvature, as it conveys information on the
local geometric structure of the parametric manifold. This choice is
ideally suited to the study of non-equilibrium phase transitions
which are related to local differential properties of the state
space.

The mathematical properties of the MUC makes it an ideal candidate
both at the conceptual and at the technical level. From a physical
point of view the mean Uhlmann curvature is gauge invariant, thus
ensuring that its behaviour is physically relevant and cannot emerge
as an artefact of the gauge choice. Moreover, technically the MUC is
much easier to handle than the full Uhlmann curvature, due to its
scalar nature as opposed to the non-Abelian structure of the
curvature.

We will review the general formulas which unify, within the same
framework, Bures metrics and mean Uhlmann curvature over the set of
Gaussian Fermionic states. This will be needed in order to discuss
in the following sections how the scaling of the metric and the
curvature provide information on the closing of $\Delta_{\mathcal
L}$ and the divergences of the two-point correlations. Finally one
can see this in action, by applying such a theoretical framework to
exactly solvable models. This  demonstrates that NESS
phase diagram can be accurately mapped by studying  the scaling
behaviour and the singularities of the metric tensor $g$ and of the
$\mathcal{U}$: critical lines can be identified and the different
phases distinguished.

Moreover, with joint information of both $g $ and $\mathcal{U}$ and
from insight derived from quantum estimation theory, a concept of
``quantum-ness'' of the NESS-QPTs  can be introduced. The aim of this
approach is to glean an insight into the character of the fluctuations
driving the non-equilibrium phase transitions.

The calculation of both Uhlmann curvature and Bures metrics in large
Hilbert spaces is quite a daunting task. Standard
approaches~\cite{Braunstein1994} are computationally not viable in
many-body setups, where the quest for effective methods to evaluate
both metric and curvature is the subject of active
research~\cite{Ercolessi2013}. We will show that in the physically
relevant class of Gaussian Fermionic states this can be accomplished
in a surprisingly efficient way~\cite{Carollo2018a}.

\subsection{Mean Uhlmann Curvature and Bures Metric of Gaussian Fermionic States}\label{sec:MUCGFS}
Before discussing the geometric properties of Gaussian Fermionic
states, let's recall some basic properties of the correlation
function. For  GFS, all odd-order correlation functions are zero,
and all even-order correlations, higher than two, can be obtained
from covariance matrix $\Gamma$ by Wick's theorem~\cite{Bach1994} , i.e. 
\begin{equation} \Tr(\rho
\omega_{k_{1}}
\omega_{k_{2}}...\omega_{k_{2p}})=\Pf(\Gamma_{k_{1}k_{2}\dots
k_{2p}}), \qquad  1 \le k_{1} < . . . < k_{2p} \le 2n. 
\end{equation}
$\Gamma_{k_{1}k_{2}\dots k_{2p}}$ is the corresponding $2p \times
2p$ submatrix of $\Gamma$ and $\Pf(\Gamma_{k_{1}k_{2}\dots k_{2p}})^{2}
= \det (\Gamma_{k_{1}k_{2}\dots k_{2p}})$ is the Pfaffian. An
especially useful case is the four-point correlation function
\begin{equation}\label{Pf}
    \Tr{(\rho\omega_{j} \omega_{k}\omega_{l}\omega_{m})}= a_{jk}a_{lm}-a_{jl}a_{km}+a_{jm}a_{kl},
\end{equation}
where $a_{jk}:=\Gamma_{jk}+\delta_{jk}$.

To derive a convenient expression for the Uhlmann curvature and the
Bures metric for Gaussian Fermionic states, first let's recall their
expression in terms of the parallel transport
generator~(\ref{eq:G}): \begin{equation} \left\{\begin{array}{l}
g_{\mu\nu}:=\,\,\Re\,Q_{\mu\nu}\,;
\\\mathcal{U}_{\mu\nu}:=2\,\Im\,Q_{\mu\nu}\,; \end{array}\right.
\qquad \qquad Q_{\mu\nu}=\Tr{\rho G_{\mu}G_{\nu}}, \end{equation} where
$Q_{\mu\nu}$ is the generalised quantum geometric
tensor (QGT) (see Eq.~(\ref{eq:GQGT})).

 The starting point is to derive the generator $\bm G$ in terms of the two point correlation matrix $\Gamma$. Due to the quadratic dependence of the density matrix in $\bm{\omega}$ (see Eq.~(\ref{eq:GFS})), and following the arguments of~\cite{Monras2013,Jiang2014,Carollo2018a}, it can be shown that $\bm G$ is a quadratic polynomial in the Majorana Fermions
 \begin{equation}\label{SLDGS}
 \bm{G}_{}=: \frac{1}{4}\bm{\omega}^{T}\cdot \bm{K} \bm{\omega} + \bm{\zeta}_{}^{T}\bm{\omega} +\bm{\eta}_{},
 \end{equation}
 where $\bm{K}=\sum_{\mu}K_{\mu}d\lambda_{\mu}$, and $K_{\mu}:=\{K_{\mu}^{jk}\}_{jk=1}^{2n}$ are a set of $2n\times2n$ Hermitian anti-symmetric matrices, $\bm{\zeta}=\bm{\zeta}_{\mu} d\lambda_{\mu}$, with $\bm{\zeta}_{\mu}$ a $2n$ real vector, and $\bm{\eta}=\eta_{\mu}d\lambda_{\mu}$ a real valued one-form. Note that any odd-order correlation function for a Gaussian Fermionic state vanishes identically, then

\begin{equation}
  \langle  \omega_{k}\rangle = \Tr{(\rho \omega_{k})}=0 \qquad \forall  k=1 \dots 2n\,.
\end{equation}
 By differentiating the above equation, one readily shows that the
linear term in~(\ref{SLDGS}) is identically zero
\begin{equation*}
 0=\Tr{(\omega_{k} d \rho )}= \Tr(\omega_{k}\{\bm{G},\rho\}) = \Tr(\rho\{\bm{\zeta}^{T}\bm{\omega},\omega_{k}\})=\zeta^{k}\,,
\end{equation*}
where $\zeta^{k}$ is the $k$-th component of $\bm{\zeta}$, and in
the third equality one takes into account that the third order
correlations vanish. The quantity $\bm \eta$ can be determined from
the trace preserving condition, i.e. \begin{equation} d\,\Tr\,
\rho=\Tr{(d{\rho})}=2\Tr{(\rho \bm{G})}=0\,, \end{equation} which, after
plugging in Eq.~(\ref{SLDGS}), leads to
\begin{equation}\label{eta}
\bm\eta=-\frac{1}{4}\Tr{(\rho \bm{\omega}^{T}\bm{K}
\bm{\omega})}=\frac{1}{4}\Tr{(\bm{K}_{}\, \Gamma)}.
\end{equation}
 In order to determine $\bm K_{}$, let's take differential of $\Gamma_{jk}=1/2\Tr{(\rho[\omega_{j},\omega_{k}])}$, then
 \begin{equation} \begin{aligned}
 d\Gamma_{jk}=\frac{1}{2}\Tr{(d \rho[\omega_{j},\omega_{k}])}&=\frac{1}{2}\Tr{(\{\rho,\bm G_{}\}[\omega_{j},\omega_{k}])}\nonumber\\
 &= \frac{1}{8}\Tr{(\{\rho,\bm{\omega}^{T}\bm K_{}\bm{\omega}\}[\omega_{j},\omega_{k}])}+ \bm\eta_{}\frac{1}{2}\Tr{(\rho[\omega_{j},\omega_{k}])}\nonumber\\
 &=\frac{1}{8}\sum_{lm}K_{}^{lm}\Tr{(\{\rho,[\omega_{l},\omega_{m}]\}[\omega_{j},\omega_{k}])}+\bm\eta_{} \Gamma_{jk}\nonumber\\
 &=(\Gamma \bm K_{} \Gamma-\bm K_{})_{jk} + \left[\bm \eta - \frac{1}{4}\Tr{(\bm K_{}\, \Gamma)}\right]\Gamma_{jk},
 \end{aligned} \end{equation} 
 where the last equality is obtained with the help of Eq.~(\ref{Pf}) and using the antisymmetry of $\Gamma$ and $ \bm K$ under the exchange of $j$ and $k$. Finally, according to Eq.~(\ref{eta}), the last term vanishes and we obtain the following (discrete time) Lyapunov equation
\begin{equation}\label{DLE}
d\Gamma = \Gamma \bm{K}_{}\Gamma -\bm{K}_{}.
\end{equation}
The above equation can be formally solved by \begin{equation}\label{eq:SolK} \bm
K_{}=(\Ad_{\Gamma}-\one)^{-1}(d \Gamma), \end{equation} where
$\Ad_{\Gamma}(X):=\Gamma X \Gamma^{\dagger}$ is the adjoint action.
In the eigenbasis of $\Gamma$, (i.e.
$\Gamma\ket{j}=\gamma_{j}\ket{j}$) it reads
\begin{equation}\label{Kappa}
\bk{j |\bm K_{}|k}=(\bm
K_{})_{jk}=\frac{(d\Gamma)_{jk}}{\gamma_{j}\gamma_{k}-1}=
-\frac{d\Omega_{k}}{2}\delta_{jk}+\tanh{\frac{\Omega_{j}-\Omega_{k}}{2}}
\bk{j|dk},
\end{equation}
where, in the second equality, we made use of the relation
$\gamma_{k}=\tanh{(\Omega_{k}/2)}$, which yields the following
diagonal $(d\Gamma)_{jj}=(1-\gamma_{j}^{2})d\Omega_{j}$ and
off-diagonal terms $(d\Gamma)_{jk}=(\gamma_{k}-\gamma_{j})\bk{j|
dk}$. This expression is well defined everywhere except for
$\gamma_{j}=\gamma_{k}=\pm 1$, where the Gaussian state $\rho$
becomes singular, i.e. it is not full rank. In this condition, the
expression~(\ref{Kappa}) for the generator $\bm G$ may become
singular. Nevertheless,  the boundness of the function
$|\tanh{\frac{\Omega_{j}-\Omega_{k}}{2}}|\le 1$ in Eq.~(\ref{Kappa})
shows that such a singularity is relatively benign. Thanks to this,
we can show that the condition $\gamma_{j}=\gamma_{k}=\pm 1$
produces, at most, removable singularities in the QGT
(see.~\cite{Safranek2017}). This allows the quantum geometric tensor
to be extended by continuity from the set of full-rank density
matrices to the submanifolds with $\gamma_{j}=\gamma_{k}=\pm 1$.

Knowing the expression for the parallel transport generator $\bm G$,
we can calculate the QGT by plugging
$G_{\mu}=\frac{1}{4}[\bm{\omega}^{T}K_{\mu} \bm{\omega} -
\Tr{(K_{\mu}\cdot \Gamma)]}$ into $Q_{\mu \nu}:=\Tr{(\rho
G_{\mu}G_{\nu})}$. Making use of Eq.~(\ref{Pf}) and exploiting the
antisymmetry of both $\Gamma$ and $\bm K$ under the exchange of
Majorana Fermion indices leads to~\cite{Carollo2018}
\begin{align}\label{eq:QGT1}
Q_{\mu\nu}&=\frac{1}{8}\Tr{[(\one-\Gamma)K_{\mu}(\one+\Gamma)K_{\nu}]}\\
&=\frac{1}{8}\sum_{jk}(1-\gamma_{j})(1+\gamma_{k})K^{jk}_{\mu}K^{kj}_{\nu}\nonumber\\
&=\frac{1}{8}\sum_{jk}\frac{(1-\gamma_{j})(1+\gamma_{k})}{(1-\gamma_{j}\gamma_{k})^{2}}(\partial_{\mu}\Gamma)_{jk}(\partial_{\nu}\Gamma)_{kj},\nonumber
\end{align}
where the last equality is obtained by plugging in
Eq.~(\ref{Kappa}). Let's have a closer look at the QGT in the limit
of $(\gamma_{j},\gamma_{k})\to \pm(1,1)$. The boundness of $\bm
K_{jk}$ and the multiplicative factors $(1\pm \gamma_{j})$
in Eq.~(\ref{eq:QGT1}) cause each term  to
vanish with $|\gamma_{j}|\to 1$. This means that the QGT has a well defined value in the
above limit, and we can safely extend by continuity the QGT to the
sub-manifolds $(\gamma_{j},\gamma_{k})=\pm(1,1)$.

The explicit expression of $Q_{\mu\nu}$ produces the following
results for the Bures metrics
\begin{equation} \begin{aligned}\label{eq:BuresGFS}
g_{\mu\nu}&=\Re({Q_{\mu\nu})}=\frac{1}{8}\Tr{(K_{\mu}K_{\nu}-\Gamma K_{\mu}\Gamma K_{\nu})}\\
&=-\frac{1}{8}\Tr{(\partial_{\mu}\Gamma K_{\nu})}\nonumber\\
&=\frac{1}{8}\sum_{jk}\frac{(\partial_{\mu}\Gamma)_{jk}(\partial_{\nu}\Gamma)_{kj}}{1-\gamma_{j}\gamma_{k}},\nonumber
\end{aligned} \end{equation} 
which in a parameter independent way reads \begin{equation}\label{eq:BuresGFS1}
dl^{2}=\sum_{\mu\nu}g_{\mu\nu}d\lambda_{\mu}d\lambda_{\nu}=\frac{1}{8}\Tr\left[d\Gamma\frac{1}{\one-\Ad_{\Gamma}}(d\Gamma)
\right] \,, \end{equation} 
where $d\Gamma:=\sum_{\mu} {\p_{\mu}\Gamma d\lambda_{\mu}}$.
The Eq.~(\ref{eq:BuresGFS1}) has been obtained by
substituting the formal solution~(\ref{eq:SolK}) of $\bm{K}$ in the
second equality of Eq.~(\ref{eq:BuresGFS})
. The above expression was
derived by a different procedure by Banchi et al.~\cite{Banchi2014}.
For the MUC the explicit expression is
\begin{equation} \begin{aligned}\label{MUCCov}
\mathcal{U}_{\mu\nu}&=2\Im({Q_{\mu\nu})}=\frac{i}{4}\Tr{(\Gamma[K_{\mu},K_{\nu}])}\\
&=\frac{i}{4}\sum_{jk}\frac{\gamma_{k}-\gamma_{j}}{(1-\gamma_{j}\gamma_{k})^{2}}(\partial_{\mu}\Gamma)_{jk}(\partial_{\nu}\Gamma)_{kj}\,.
\end{aligned} \end{equation} 
Also the above expression can be cast in a parameter-independent
way. Exploiting Eq.~(\ref{eq:SolK}) leads to
\begin{equation} \begin{aligned}
\bm{\mathcal{U}}&=\frac{i}{4}\Tr{(\Gamma \bm{K}\wedge
\bm{K})}\\\nonumber &=\frac{i}{4}\Tr{\left[\Gamma
\frac{1}{\one-\Ad_{\Gamma}}(d\Gamma) \wedge
\frac{1}{\one-\Ad_{\Gamma}}(d\Gamma) \right]}.
\end{aligned} \end{equation} 
The above Eq.~(\ref{eq:BuresGFS}) reproduces known results for
the thermal states~\cite{Zanardi2007a} and for pure
states~\cite{Zanardi2007c,Cozzini2007}. On the other hand, no
previous account of a close form expression of the Uhlmann curvature
were known in literature for the case of Fermionic Gaussian states,
they being in the equilibrium or out of equilibrium condition. As
expected, formula~(\ref{MUCCov}) reduces to the correct expression
in the case of pure states, provided that the appropriate matrix
$\Gamma$ is
considered~\cite{Carollo2005,Pachos2006,Hamma2006,Zhu2006}.
\subsection{Super-Extensivity of the (Generalised) Quantum Geometric Tensor}
The above results apply to the general class of Gaussian Fermionic
states. In this section, we will derive some results which are
specific to the quadratic Liouvillian models considered. The aim of
this section is to connect the kinematic properties embodied by the
quantum geometric tensor to the dynamical features of the underlying
physical model. More specifically we will derive a bound similar to
the one obtained in section~\ref{sec:SEQ}, which relates the
super-extensivity of the generalised  quantum geometric tensor to
the dissipative gap.

As in the case of Eq.~(\ref{eq:qgt}) also $Q(\bm{\lambda})$ defined
in Eq.~(\ref{eq:GQGT}) is a Hermitian non-negative matrix. Thus one has
\begin{equation}\label{eq:Qineq} |Q_{\mu\nu}|\le\| Q\|_{\infty}= \bm{u}^{\dagger}
\cdot Q  \cdot \bm{u}, \end{equation} where $\bm{u}=\{u_{\mu}\}_{\mu=1}^{{\rm
{dim}}{\cal M}}$, with $\bm{u}^{\dagger}\cdot \bm{u}=1$ is the
normalised eigenvector of $Q$ with the largest eigenvalue. One can
define the corresponding combination of parameters
$\bar{\lambda}:=\sum_{\mu=1}^{dim{\mathcal{M}}}u_{\mu}\lambda_{\mu}$
and the corresponding directional derivative
$\bar{\p}:=\p/\p\bar{\lambda}$, so that
\begin{equation}
  \bar{\p} \Gamma=\sum_{\mu}(\partial_{\mu}\Gamma) u_{\mu},
\end{equation}
   and from Eq.
(\ref{pert}) and the above inequality~(\ref{eq:Qineq}) we get
\begin{align}\label{eq:QGT2}
\|Q\|_{\infty}
=\frac18\Tr\left[(\one-\Gamma)\frac1{\one-\Ad_{\Gamma}}(\bar{\p}
\Gamma) (\one+\Gamma)\frac1{\one-\Ad_{\Gamma}}(\bar{\p} \Gamma)
\right]\,.
 \end{align}
Let's express Eq.~\eqref{eq:QGT2} in a form amenable to further
manipulations, by employing the vectorization isomorphism. Notice
that, under such an isomorphism 
\begin{equation} 
\Ad_{\Gamma}(A):=\Gamma
A\Gamma^{\dagger}\stackrel{ \textrm{vec}}{\longrightarrow}
\,\left(\Gamma \otimes \Gamma^T\,\right) \V{A} = - \left(\Gamma
\otimes \Gamma\right) \V{A}\,,\nonumber 
\end{equation} 
and Eq.~(\ref{eq:QGT2}) becomes
\begin{align}
\|Q\|_{\infty}=&\,\frac18 \,\,\V{\bar{\p}\Gamma}^{\dagger}\cdot\left(\frac{(\one+\Gamma)\otimes(\one+\Gamma)}{\one+\Gamma\otimes\Gamma} \right) \cdot \V{\bar{\p}\Gamma}\nonumber\\
\le& P_{\Gamma} \| \V{\bar{\p}\Gamma}\|^{2}\nonumber\\
\le& 2n P_\Gamma\,\,\| \bar{\p}\Gamma\|^2_\infty~, \label{eq:Qbound}
\end{align}
where 
\begin{equation} 
P_{\Gamma}:= \frac18 \left
\|\frac{(\one+\Gamma)\otimes(\one+\Gamma)}{\one+\Gamma\otimes\Gamma}
\right\|. 
\end{equation} 
In the first inequality of Eq.~(\ref{eq:Qbound}), we used
the definition of operator norm, while in the second we have
employed the fact that $\|\V A \|=\|A\|_2$ and $\|A\|_2\le\sqrt{2n}
\|A\|_\infty.$

The bound~(\ref{eq:Qbound}) still is not specific to dissipative
quadratic Liouvillian. In order to relate Eq.~(\ref{eq:Qbound}) to
the dynamical properties of the Liouvillian \eqref{e.lindiag} one
could differentiate Eq.~\eqref{syl1}
\begin{equation}
\V {\p_{\mu}\Gamma}= \hat{X}^{-1} \V{\p_{\mu}Y}- \hat{X}^{-1}
\p_{\mu}\hat{X}  \V \Gamma~. \label{eq:dGamma}
\end{equation}
Through the above equation, one realises that $\p_{\mu}{\Gamma}$ is
the solution of a continuous Lyapunov equation, similar
to~(\ref{eq:Lyap}), which provides a convenient way to calculate it
numerically once $\Gamma$, $X$, $Y$ and $\p_{\mu}X$ are known, i.e.
\begin{align}\label{eq:dGLyap}
X\, \left(\p_\mu \Gamma\right) + \left(\p_\mu \Gamma \right)\,X^T =
\partial_\mu Y -\left(\p_\mu X\right)\, \Gamma - \Gamma\,
\left(\p_\mu X^T\right)~.
\end{align}
Taking norms in Eq.~(\ref{eq:dGamma}) leads to \footnote{
$\|O\|_\infty:= \sup_{\|v\|=1}\|Ov\|=$largest singular value of $O;$
Notice that $\|O v\|\le\|O\|_\infty \|v\|.$
$\|O\|^2_2={{\rm{Tr}}(O^\dagger O) }=$sum of the squares of the
singular values of $O.$ }
\begin{align}
\|\p_{\mu}\V{\Gamma}\|  &\le
\|\hat{X}^{-1}\|_\infty(\|\p_\mu\V{Y}\| + \|\p_\mu\hat{X}\|_\infty
\|\V{\Gamma}\|)\nonumber\\ &=
 \|\hat{X}^{-1}\|_\infty(\|\p_\mu{Y}\|_2 + \|\p_\mu\hat{X}\|_\infty \|{\Gamma}\|_2)\nonumber \\ & \le
\sqrt{2n} \|\hat{X}^{-1}\|_\infty(\|\p_\mu{Y}\|_\infty +
\|\p_\mu\hat{X}\|_\infty \|{\Gamma}\|_\infty) \nonumber\\&\le
\sqrt{2n} \|\hat{X}^{-1}\|_\infty(\|\p_\mu{Y}\|_\infty +
\|\p_\mu\hat{X}\|_\infty)~,
\end{align}
where, relations $\|\V A \|=\|A\|_2$, $\|A\|_2\le\sqrt{2n}
\|A\|_\infty$ and $\|\Gamma\|_\infty\le1$ have been employed; the
latter following from Eq.~(\ref{eq:corrQ}).

Essentially, the upper bound for $\p_{\mu}\Gamma$, obtained above,
only depends on the system parameters and their differentials,
i.e., $X,\,dX$ and $Y,\,dY$. Finally one derives the following
bound
\begin{equation}\label{eq:dGammaB}
\|\p_{\mu}\V{\Gamma}\|^2\le 2 n \|\hat{X}^{-1}\|_\infty^2
(\|\p_{\mu}{Y}\|_\infty + 2\|\p_{\mu}{X}\|_\infty)^2,
\end{equation}
where the relation $\|\p_{\mu}\hat{X}\|_\infty=\|
\p_{\mu}X\otimes{\mathbf{1}} +{\mathbf{1}}\otimes
\p_{\mu}X\|_\infty\le 2\|\p_{\mu}X\|_\infty$ has been used.

Now, we finally wrap all the latest results around: by plugging
Eq.~(\ref{eq:dGammaB}) in~(\ref{eq:QGT2}) and by employing
relation~(\ref{eq:Deltaxhat}) and Proposition~\ref{prop:Delta} of
section~\ref{sec:LSpec}, one eventually obtains the following upper
bound which relates the behaviour of $\Delta(n)$ to $|Q_{\mu\nu}|$, i.e.
\begin{align}
  \frac{|Q_{\mu\nu}|}n \le 2\frac{P_{\Gamma}}{\Delta_{\mathcal{L}}^2}\,
  \left(\|dY\|_\infty+2\|dX\|_\infty\right)^2~.
  \label{eq:BoundGQGT}
\end{align}
The latter is the relation that was anticipated earlier: it is the
dissipative analogue of the inequality for zero-temperature QPT
derived in section~\ref{sec:SEQ}, where it was shown that
super-extensivity of the quantum geometric tensor $Q_{\mu\nu}$
implies the vanishing of the energy gap~\cite{CamposVenuti2007} and
the outset of a phase transitions. The above bound connects the
\emph{generalised} QGT to the dynamical feature of the dissipative
Liouvillian model. It is indeed a relation between the kinematics
expressed by the geometry of the NESS and the dynamics, embodied by
the dissipative gap. Specifically, this bound shows that, if
$P_{\Gamma}\simeq \mathcal{O}(1)$, a scaling of $|Q_{\mu\nu}|\propto
n^{\alpha+1}$ entails a dissipative gap that vanishes at least as
$\Delta_{\mathcal{L}}\propto n^{-{\alpha}/2}$, establishing a link
between the dynamical properties of the NESS-QPTs  and the geometric
property $Q_{\mu\nu}$.\\\indent Needless to say, the above bound on
the QGT immediately determines bounds on both the Bures metric and
on the mean Uhlmann geometric phase, \bea
|g_{\mu\nu}| &= |\Re Q_{\mu\nu}|\le |Q_{\mu\nu}|\,,\\
|\mathcal U_{\mu\nu}|&=|\Im Q_{\mu\nu}|\le |Q_{\mu\nu}|\,, \eea
whose scaling properties can thus be related to the NESS-QPTs. 

It is important, however, to stress that $\Delta_{\mathcal{L}}$ is
an entirely different quantity from the Hamiltonian gap, linked to
the scaling of $Q_{\mu\nu}$ for zero-temperature-QPTs.  A complete
understanding of the relation between the Liouvillian gap and the
Hamiltonian gap ruling equilibrium QPTs  is still lacking. Notice,
indeed, that in the non-dissipative case the spectrum $ \textrm{Sp}(X)$ is purely
imaginary. From the perspective of the Liouvillian dynamics,
this implies an identically vanishing dissipative gap
$\Delta_{\mathcal{L}}\equiv 0$. This contrasts with the na\"ive
attempt of formulating a general equilibrium/non-equilibrium QPT
criterion which levels the dissipative gap to the same status of a
Hamiltonian gap in standard QPTs.  Moreover, unlike in equilibrium
QPTs~\cite{Hastings2004}, where super-extensivity is  a {\em
sufficient} condition for criticality at $T=0$, in the dissipative
case $|Q_{\mu\nu}|=\mathcal O(n^{1+\alpha}), \,(\alpha>0)$ only
implies $\Delta_{\mathcal L}=\mathcal O(n^{\alpha/2})$, but it does
not necessarily imply criticality. Indeed, in NESS-QPT, a closure of
the gap generally neither is implied by criticality nor implies
it~\cite{Prosen2008,Carollo2018}.

On the other hand, one can see that in the case of translationally
invariant models, where a notion of criticality in the
thermodynamic limit is easier to
handle~\cite{Eisert2010,Honing2012}, further progress can be done.
There, the problem of relating the geometric properties to the
dynamical features of the model can be bypassed, in favour of a
direct relation between the geometry and the divergence of the
correlation length~\cite{Carollo2018}.

Note that in the non-diagonalisable case a correction to
Eq.~(\ref{eq:Deltaxhat}) should be considered, which adds an extra
polynomial dependence in
\eqref{eq:prop1}\cite{Prosen2010a,Banchi2014}. However, this
variation does not affect the qualitative and quantitative
consequences of the bound \eqref{eq:BoundGQGT}: super-extensivity of
the quantum geometric tensor entails a vanishing
 Liuvillean gap.

\section{Translationally Invariant Models}
Before turning to specific models where the above general
considerations can be exemplified, we would like to draw the
attention to an important subclass of quadratic Liouvillian
Fermionic models, namely those enjoying the translational invariance
symmetry. In a translationally invariant system one can employ the
whole wealth of powerful tools stemming out of the Fourier transform
and work directly in the thermodynamic limit. This enables one to
quantitatively define criticality in terms of singularities in the
quasi-momentum space, thereby secluding the kinematics of the
NESS-QPTs  from the dynamical properties of the model.
The most natural notion of many-body criticality is in terms of diverging correlation length, which in a translationally invariant system is relatively straightforward to handle. This way of defining criticality enables one to bypass the difficulties arising from the ambiguous relation between NESS-QPTs  and the vanishing dissipative gap.\\
The object of investigation is the covariance matrix, which in a translationally invariant system can be conveniently studied through its Fourier components. It is the non-analytical behaviour in the Fourier basis which conveys information on the long-wavelength limit, i.e. on the divergence of the correlation length.\\
Consider an explicit translationally invariant $d$-dimensional
lattice of Fermions located at sites $\vv{r}\in\mathbb{Z}^{d}_{L}$,
and assume finite (or quasi-finite) range interaction. The system
size is $n=L^{d}$, and subsequently, one takes the thermodynamic
limit $L\to\infty$. One can define the covariance matrix over a
discrete quasi-momentum space. However the considerations on the
long-wavelength limit that will follow truly, apply only at the
thermodynamic limit: hence divergences of correlation lengths
manifest genuine quantum many-body effects.

To emphasise the translational property, let us relabel the Majorana
Fermions as 
\begin{equation}\label{eq:MFTI} 
\bm{\omega}_{\vv{r}}=\begin{pmatrix}
\omega_{r,1} \\ \omega_{r,2} \end{pmatrix}, \quad \textrm{ with  }
\left\{\begin{array}{l}\omega_{r,1}=c_{\vv{r}}+c_{\vv{r}}^{\dagger}
\\\omega_{r,2}=i(c_{\vv{r}}-c_{\vv{r}}^{\dagger})\end{array}\right.\,
\end{equation} 
where $\omega_{r,\beta}$, ($\beta=1,2$) are the two flavours of
Majorana Fermions on each site $\vv{r}$, and $c_{\vv{r}}$ and
$c_{\vv{r}}^{\dagger}$ are the annihilation and creation operator,
respectively, of the corresponding ordinary Fermion. Due to
translational invariance, the Hamiltonian may be written as
 \begin{equation}\label{eq:HTI}
 \mathcal{H}=\sum_{\vv{r},\vv{s}}\bm{\omega}^{T}_\vv{r} h(\vv{r}-\vv{s})\bm{\omega}_{\vv{s}},
\end{equation}
 where $h(\vv{r})=h(-\vv{r})^{\dagger}=h(\vv{r})^{*}$ are $2\times2$
imaginary matrices. Similarly the jump operators can be expressed as
 \begin{equation}\label{eq:JumpTI}
\Lambda_{\alpha}(\vv{r})=\sum_{\vv{s}}
\bm{l}_{\alpha}^{T}(\vv{s}-\vv{r})\bm{\omega}_{\vv{s}}, \end{equation} where
$\bm{l}_{\alpha}(\vv{r})$ are $2$-dimensional complex arrays.
Accordingly, the bath matrix are written as 
\begin{equation}
 [M]_{(\vv{r},\beta)(\vv{s},\beta')}=[m(\vv{r}-\vv{s})]_{\beta
\beta'}\qquad (\beta,\beta'=1,2) \end{equation} where
$m(\vv{r})=m^{\dagger}(-\vv{r})$ are the $2\times2$ matrices
$m(\vv{r}):=\sum_{\alpha,\vv{s}}\bm{l}_{\alpha}(\vv{s}-\vv{r})\otimes\bm{l}_{\alpha}^{\dagger}(\vv{s})$.

Since both Hamiltonian and bath matrix are circulant, so it is the
correlation matrix of the unique steady state solution, which reads 
\begin{equation}
 [\Gamma]_{(\vv{r},\beta)(\vv{s},\beta')}=[\gamma(\vv{r}-\vv{s})]_{\beta
\beta'}:=\frac12\Tr\left(\rho
[\omega_{\vv{r},\beta},\omega_{\vv{s},\beta'}]\right),\qquad \qquad
(\beta,\beta'=1,2). 
\end{equation} 
The latter can be conveniently expressed in
terms of its Fourier component, called the covariance symbol, as
\begin{align*}
\tilde{\gamma}{(\vv{\phi})} := \sum_{\vv{r}}
\gamma{(\vv{r})}e^{-i\vv{\phi} \cdot \vv{r}},
\end{align*}
where $\vv{\phi}\in [-\pi,\pi)$. In terms of the symbol functions,
the continuous Lyapunov equation reduces to a set of $2\times2$
matrix equations
\begin{equation}\label{CLEphi}
    \tilde{x}(\vv{\phi}) \tilde{\gamma}(\vv{\phi})+\tilde{\gamma}(\vv{\phi})\tilde{x}^{T} (-\vv{\phi})= \tilde{y}(\vv{\phi}),
\end{equation}
 where  $\tilde{x}(\vv{\phi})=2[ 2i \tilde{h}(\vv{\phi})+\tilde{m}(\vv{\phi})+\tilde{m}^{T}(-\vv{\phi})]$ and $\tilde{y}(\vv{\phi})=-4[\tilde{m}(\vv{\phi})-\tilde{m}^{T}(-\vv{\phi})]$ are the symbol functions of $X$ and $Y$, respectively, and $\tilde{h}(\phi)$, $\tilde{m}(\phi)=\sum_{\alpha} \tilde{\bm{l}}_{\alpha}\otimes \tilde{\bm{l}}_{\alpha}^{\dagger}$ and $\tilde{\bm{l}}_{\alpha}(\phi)$ are the Fourier components of $h(r)$, $m(r)$ and $\bm{l}_{\alpha}(r)$, respectively. Notice that $\tilde{m}(\phi)=\tilde{m}(\phi)^{\dag}=\sum_{\alpha} \tilde{\bm{l}}_{\alpha}\otimes \tilde{\bm{l}}_{\alpha}^{\dagger}\ge 0$  (positive semidefinite matrix).

 The spatial correlations between Majorana Fermions are then recovered from the inverse Fourier transform of the covariance symbol
 \begin{equation}\label{Corr}
 \gamma(\vv{r})=\frac{1}{(2\pi)^{d}}\int_{\mathbb{T}^{d}}d^{d}\phi \tilde{\gamma}(\vv{\phi})e^{i\vv{\phi} \cdot \vv{r}}.
 \end{equation}

Following~\cite{Eisert2010,Honing2012}, here we will define
criticality by the divergence of correlation length, which is
defined as \begin{equation}\label{eq:CorLen} \xi^{-1} := - \lim_{|r|\to \infty}
\frac{\ln {||\gamma(r)||}}{|r|}. \end{equation} In the thermodynamic limt,
the divergence may only arise as a consequence of the non-analytical
dependence of $\gamma(r)$ on the system parameters. Let's confine
ourselves to the case of a one-dimensional Fermionic chain.
 In order to derive informations on the large distance behaviour of the correlations, it is convenient to express the above integral~(\ref{Corr}) in the complex plane through the analytical continuation $e^{i\phi}\to z$. This results in the following expression for the correlation function

\begin{equation}
  \gamma(\vv{r})= \sum_{\bar{z}\in S_{1}}  \textrm{Res}_{\bar{z}}[z^{r-1}\tilde{\gamma}{(z)}],
\end{equation}
  where $  \textrm{Res}_{\bar{z}}$ indicates the residues of the poles inside the unit circle $S_{1}:=\{ z,\textrm{such that } |z|\le1\}$. Since $\tilde{\gamma}(z)$ is the solution of a finite dimensional matrix Eq.~(\ref{CLEphi}), it may only possess simple poles. Thus, the above expression may become singular only when an isolated pole of $\tilde{\gamma}(z)$ approaches the unit circle from the inside~\cite{Eisert2010,Honing2012}. This may happen for some specific critical values $\lambda=\lambda_{0}\in\mathcal{M}$. As $\lambda$ approaches $\lambda_{0}$ the correlation length $\xi$ diverges. One can show that the long wave-length behaviour is governed by the pole closest to unit circle $|\bar{z}_{0}|$, and indeed the correlation length is given by
\begin{equation}
\xi=\ln|\bar{z}_{0}|. 
\end{equation}

\subsection{Mean Uhlmann Curvature and Criticality in Translationally Invariant Models}
Let's now turn to the geometric properties of translationally
invariant models at criticality. In particular let's
concentrate on the mean Uhlmann curvature. We will show that the MUC
is sensitive to the criticality, but only in the sense of a truly
diverging correlation length. Indeed one can show that the Uhlmann
curvature is insensitive to the vanishing of the dissipative gap, if
the latter, as it may happen, is not accompanied by a diverging
correlation length. In this sense, the Uhlmann curvature confirms
its role as a witness of the purely kinematic aspects of the
criticality, and it is only indirectly affected by the dynamical
features of the NESS-QPTs. 

Thanks to the translational symmetry, one can exploit the formalism
of Fourier transform and derive a quite compact expression of the
MUC. By applying the convolution theorem on 
Eq.~(\ref{MUCCov}), one obtains the following expression for
the MUC \emph{per site}
\begin{equation}\label{iota}
\bar{\mathcal{U}}_{\mu\nu}:=\lim_{n\to\infty}
\frac{\mathcal{U}_{\mu\nu}}{n}=\frac{1}{(2\pi)}\int_{-\pi}^{\pi}d\phi
\,\,u_{\mu\nu}(\vv{\phi}),
\end{equation}
where
\begin{equation}\label{Uphi}
u_{\mu\nu}(\phi):=\frac{i}{4}\Tr\{\tilde{\gamma}{(\vv{\phi})}[\kappa_{\mu}{(\vv{\phi})},\kappa_{\nu}{(\vv{\phi})}]\}=\frac{i}{4}\Tr\{\kappa_{\nu}{(\vv{\phi})}[\tilde{\gamma}{(\vv{\phi})},\kappa_{\mu}{(\vv{\phi})}]\},
\end{equation}
In the above expression,  $\kappa_{\mu}{(\vv{\phi})}$  is the symbol
function of $K_{\mu}$, and it can be found as the operator solution
of the $2\times2$ discrete Lyapunov equation
\begin{equation}\label{DLEphi}
\partial_{\mu}\tilde{\gamma}{(\vv{\phi})}=\tilde{\gamma}{(\vv{\phi})}\kappa_{\mu}{(\vv{\phi})} \tilde{\gamma}{(\vv{\phi})} - \kappa_{\mu}{(\vv{\phi})}.
\end{equation}
In the eigenbasis of $\tilde{\gamma}(\phi)$, with eigenvalues
$\tilde{\gamma}_{j}$, the explicit solution of~(\ref{DLEphi}) reads
\begin{equation}
(\kappa_{\mu}(\phi))_{jk}=\frac{(\partial_{\mu}\tilde{\gamma}(\phi))_{jk}}{1-\tilde{\gamma}_{j}\tilde{\gamma}_{k}}.
\end{equation}
Notice that the diagonal terms $(\kappa_{\mu}(\phi))_{jj}$ provide
vanishing contributions to eq.~(\ref{Uphi}), bacause they commute with
$\tilde{\gamma}{(\vv{\phi})}$. Hence, Eq.~(\ref{Uphi}) can be cast
in the following basis independent form
\begin{equation}\label{Uphi1}
u_{\mu\nu}(\phi) =\left\{ \begin{array}{ll }
      \frac{i}{4}\frac{\Tr\{\tilde{\gamma}{(\vv{\phi})}[\partial_{\mu}\tilde{\gamma}{(\vv{\phi})},\partial_{\nu}\tilde{\gamma}{(\vv{\phi})}]\}}{(1-\Det{\tilde{\gamma}{(\vv{\phi})}})^{2}}& \Det{\tilde{\gamma}{(\vv{\phi})}}\neq 1\\
       0 & \Det{\tilde{\gamma}{(\vv{\phi})}}=1
\end{array} \right. .
\end{equation}
Notice that the condition $\Det{\tilde{\gamma}{(\vv{\phi})}}=1$ is
equivalent to having two eigenvalues of correlation matrix equal to
$(\gamma_{i},\gamma_{k})=\pm(1,1)$. This corresponds to the
situation, already discussed in section~\ref{sec:MUCGFS}, in which
two eigenmodes of the Gaussian state are pure. As already mentioned with regard to Eq.~(\ref{eq:QGT1}), such extremal
values cause no singularity in MUC, but they rather result in a
vanishing contribution to the MUC.

In the following, we will demonstrate that a singularity of
$\bar{\mathcal{U}}$ signals the occurrence of a criticality.
Specifically, employing the analytical extension in the complex
plane of $u_{\mu\nu}{(\phi)}$ leads to
\begin{equation}\label{eq:Uz}
\bar{\mathcal{U}}_{\mu\nu}=\sum_{\bar{z}'\in S_{1}}
 \textrm{Res}_{\bar{z}'}[z^{-1} u_{\mu\nu}(z)].
\end{equation}\indent
Notice that $u_{\mu\nu}(z)$ has at most isolated poles, due to its
rational dependence on $z$. Assume that, as
$\lambda\to\lambda_{0}\in\mathcal{M}$, a pole $\bar{z}_{0}$ of
$u_{\mu\nu}(z)$ approaches the unit circle from inside,  which is
the only condition under which $\bar{\mathcal{U}}$ is singular in
$\lambda_{0}$. One can show that, whenever a pole $\bar{z}_{0}$ of
$u_{\mu\nu}(z)$ approaches the unit circle, also a pole $\bar{z}$ of
$\tilde{\gamma}(z)$ approaches the same value, causing the
correlation length to diverge. Therefore the singular behaviour of
the Uhlmann phase necessarily represents a sufficient criterion for
a NESS-QPTs.  Notice also that such criticalities are necessarily
accompanied by the closure of the dissipative gap, however, the
converse is in general not true. Indeed, a singularity in the MUC
may only arise as the result of criticality and are otherwise
insensitive to a vanishing dissipative gaps.

Let's now prove, that in translationally invariant models a
vanishing dissipative gap is a \emph{necessary condition} for
criticality.
\begin{Proposition}
If there exists a pole $\bar{z}_{0}(\lambda)$ of
$\tilde{\gamma}(z)$, smoothly dependent of system parameters
$\lambda\in\mathcal M$,   such that $\lim_{\lambda\to\lambda_{0}}
|\bar{z}_{0}|=1$, then \\ $\Delta:=2 \min_{|z|=1,j}\Re{x_{j}(z)}=0$
for $\lambda=\lambda_{0}$.
\end{Proposition}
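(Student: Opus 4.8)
The plan is to trace the pole of the covariance symbol back to a degeneracy of the momentum-space Lyapunov operator, and then to read off the vanishing of $\Delta$ from the stability condition $\Re\,x_{j}\ge 0$. First I would vectorise the $2\times2$ Lyapunov equation~(\ref{CLEphi}) exactly as in~(\ref{syl1}): applying the isomorphism $\V{ABC}=(A\otimes C^{T})\V{B}$ to $\tilde{x}(\vv{\phi})\tilde{\gamma}(\vv{\phi})+\tilde{\gamma}(\vv{\phi})\tilde{x}^{T}(-\vv{\phi})=\tilde{y}(\vv{\phi})$ gives $\hat{x}(z)\,\V{\tilde{\gamma}(z)}=\V{\tilde{y}(z)}$, where under the continuation $e^{i\phi}\to z$ the term $\tilde{\gamma}\,\tilde{x}^{T}(-\vv{\phi})$ produces $\mathbf{1}\otimes\tilde{x}(1/z)$, so that $\hat{x}(z):=\tilde{x}(z)\otimes\mathbf{1}+\mathbf{1}\otimes\tilde{x}(1/z)$. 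Since the interaction is of finite range, $\tilde{y}(z)$ is a Laurent polynomial and hence analytic for $z\neq 0$; therefore $\V{\tilde{\gamma}(z)}=\hat{x}(z)^{-1}\V{\tilde{y}(z)}$ is analytic wherever $\hat{x}(z)$ is invertible, and consequently any pole $\bar z_{0}$ of $\tilde{\gamma}(z)$ must be a zero of $\det\hat{x}(z)$.

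The heart of the argument is a spectral computation at $\bar z_{0}$ once $|\bar z_{0}|=1$. The eigenvalues of the Kronecker sum $\hat{x}(z)$ are the sums $x_{i}(z)+x_{j}(1/z)$ with $x_{i}\in\mathrm{Sp}\big(\tilde{x}\big)$, so $\det\hat{x}(\bar z_{0})=0$ yields indices $i,j$ with $x_{i}(\bar z_{0})+x_{j}(1/\bar z_{0})=0$. Here I would use that $X$ is a \emph{real} matrix (Eq.~(\ref{eq:X})), so that $\tilde{x}(z)=\sum_{r}x(r)z^{-r}$ has real coefficients and obeys $\overline{\tilde{x}(z)}=\tilde{x}(\bar z)$; on the unit circle $\bar z=1/z$, whence $\tilde{x}(1/\bar z_{0})=\overline{\tilde{x}(\bar z_{0})}$ and $\mathrm{Sp}\big(\tilde{x}(1/\bar z_{0})\big)=\overline{\mathrm{Sp}\big(\tilde{x}(\bar z_{0})\big)}$. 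Thus $x_{j}(1/\bar z_{0})=\overline{x_{j'}(\bar z_{0})}$ for some $j'$, and the degeneracy condition becomes $x_{i}(\bar z_{0})+\overline{x_{j'}(\bar z_{0})}=0$. Taking real parts gives $\Re\,x_{i}(\bar z_{0})+\Re\,x_{j'}(\bar z_{0})=0$. Both summands are non-negative by the stability bound $\Re\,x_{j}\ge 0$ proven from $M\ge 0$ (see~\ref{app:Spec}), which in the thermodynamic limit holds for every eigenvalue of $\tilde{x}(e^{i\phi})$ along the whole circle. A sum of two non-negative numbers vanishing forces each to vanish, so $\Re\,x_{i}(\bar z_{0})=0$ at the circle point $\bar z_{0}$, and therefore $\min_{|z|=1,j}\Re\,x_{j}(z)=0$, i.e. $\Delta=0$.

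Finally I would invoke the smooth dependence of the pole to land this conclusion exactly at $\lambda_{0}$. For $\lambda$ near $\lambda_{0}$ the assumed pole $\bar z_{0}(\lambda)$ satisfies $\det\hat{x}\big(\bar z_{0}(\lambda);\lambda\big)=0$, and since both $\hat{x}$ and $\bar z_{0}(\lambda)$ depend continuously on $\lambda$, continuity transports this to $\det\hat{x}\big(\bar z_{0}(\lambda_{0});\lambda_{0}\big)=0$ with $|\bar z_{0}(\lambda_{0})|=1$; the previous paragraph then applies verbatim. The step I expect to be the most delicate is precisely this reduction ``pole $\Rightarrow$ $\det\hat{x}=0$ on the closed disc'': one must rule out that the residue is killed by a simultaneous zero of the numerator $\V{\tilde{y}(z)}$ (so that the singularity becomes merely removable) exactly at $\lambda_{0}$, which is why the continuity hypothesis on $\bar z_{0}(\lambda)$ is used rather than a bare evaluation at $\lambda_{0}$. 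The remaining ingredients---the Kronecker-sum spectrum of $\hat{x}$, the conjugation symmetry from reality of $X$, and the stability positivity---are routine once assembled in this order.
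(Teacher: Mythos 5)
Your proof is correct and follows essentially the same route as the paper's: vectorise the momentum-space Lyapunov equation, observe that any pole of $\tilde{\gamma}(z)$ must be a zero of $d(z)=\Det\hat{X}(z)$, transport this by continuity of the pole to $\lambda_{0}$, and conclude that $\Delta$ vanishes. The only difference is that you spell out the final spectral step explicitly (Kronecker-sum eigenvalues becoming $x_{i}+x_{j}^{*}$ on the unit circle via reality of $X$, combined with $\Re\,x_{j}\ge 0$), whereas the paper's proof of this Proposition defers that step to a citation of Prosen and only makes it explicit later, in its proof of Lemma~2.
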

\begin{proof}
Under the vectorising isomorphism, $A=a_{jk}\ket{j}\bra{k}\to
 \textrm{vec}{(A)}:=a_{jk}\ket{j}\otimes\ket{k}$, the continuous
Lyapunov Eq.~(\ref{CLEphi}) can be written as \begin{equation} \hat{X}(z)
\V{\tilde{\gamma}(\vv{z})}=\V{y(\vv{z})}, \end{equation} where
$\hat{X}(z):=x(z)\otimes\one+\one\otimes x(z^{-1})$. When
$\Det{\hat{X}(z)}\neq0$, the unique solution of the symbol function
is found simply as
\begin{equation}\label{sol}
\V{\gamma(z)}=\frac{\V{\eta(z)}}{d(z)},\quad \textrm{ where }
\V{\eta}:= \textrm{adj}(\hat{X}) \V{y}.
\end{equation}
Here $ \textrm{adj}(\hat{X})$ stands for the adjugate matrix of
$\hat{X}$ and $d(z):=\Det{\hat{X}(z)}$. The point in writing the
solution in this form is that, by construction, $x(z)$ and $y(z)$
are polynomials in $z$ and $z^{-1}$ with coefficients smoothly
dependent on the system parameters. Moreover, since determinant and adjugate
matrix are always polynomial functions of a matrix coefficients, it
results that also $\eta(z)$ and $d(z)$ will be two polynomials in
$z$ and $z^{-1}$.  Hence, $\tilde{\gamma}{(z)}$'s poles are to be
found among the roots $\bar{z}$ of $d(z)=0$. Thus, a
\emph{necessary} condition for criticality is that, for
$\lambda\to\lambda_{0}$, a given root $\bar{z}$ approaches the unit
circle $S_{1}$. This clearly means that for $\lambda=\lambda_{0}$
there must exist $\bar{z}_{0}$ such that $|\bar{z}_{0}|=1$ and
$d(z)=\Det{\hat{X}(\bar{z}_{0})}=0$, which implies a vanishing
dissipative gap $\Delta:=2 \min_{|z|=1,j}\Re{x_{j}({z})}$, where
$x_{j}({z})$'s are the eigenvalues of
$\tilde{x}(z)$~\cite{Prosen2010a}.
\end{proof}

On the other hand, the converse of the above proposition is not
true: \emph{a vanishing dissipative gap is not a sufficient
condition for criticality}, but only necessary. Indeed, it may well
be the case that all those roots $\bar{z}$ which approach the unit
circle as $\lambda\to\lambda_{0}$ are removable singularities of
(\ref{sol}). This would result in a finite correlation length, even
when $\Delta\to0$. The fact that this actually happens can be
readily checked with the example in section~\ref{sec:Example}.

We will next show that a singular behaviour of $\mathcal{U}$ with
respect to the parameters is a sufficient condition for criticality.
First of all, notice, from the Eq.~(\ref{Uphi1}), that
$u(\phi)$ may depend on the dynamics only through $\tilde{\gamma}$,
hence any closure of the gap which does not affect the analytical
properties of $\tilde{\gamma}$ cannot result in a singular behaviour
of $\mathcal{U} $(see also Lemma 2 in the following). We will just
need to show that a necessary condition for a singular behaviour of
$u(\phi)$ is $\Delta=0$.

Indeed, let's now show that the poles of $u_{\mu\nu}(z)$ with
$|z|=1$ are to be found only among the roots of $d(z)$. Assuming
$d(z)\neq0$, and plugging the unique solution~(\ref{sol}) into
Eq.~$(\ref{Uphi1})$ leads to
\[
u_{\mu\nu}(z) = \frac{N(z)}{D(z)}=\frac{i}{4}
\frac{d(z)\Tr\{\eta{(z)}[\partial_{\mu}\eta{(z)},\partial_{\nu}\eta{(z)}]\}}{(d(z)^{2}- \textrm{Det}{\eta{(z)}})^{2}},
\]
where the numerator $N(z)$ and denominator $D(z)$ are polynomials in
$z$ and $z^{-1}$ with smooth dependence on $\lambda$'s.

We will demonstrate the following: \emph{(i)} that all roots of
$d(z)$ such that $|z|=1$ are also roots of $D{(z)}$, and \emph{(ii)}
that any other roots of $D(z)$, such that $|z|=1$, are not poles of
$u_{\mu\nu}(z)$.  For the statement \emph{(i)}, it is just enough to
prove the following lemma.

\begin{Lemma}
If $d(z)=0$ with $|z|=1$, then $\eta(z)=0$.
\end{Lemma}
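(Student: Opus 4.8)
The plan is to prove the lemma by a continuity argument anchored on the physical bound $\|\tilde\gamma\|\le 1$, rather than by grinding through the adjugate directly. The natural starting point is the formal solution $\V{\tilde\gamma(z)}=\V{\eta(z)}/d(z)$ recorded in Eq.~\eqref{sol}, where, as stressed there, $\eta(z)$ and $d(z)=\Det{\hat X(z)}$ are genuine polynomials in $z$ and $z^{-1}$ and hence continuous. The whole proof then reduces to the observation that, wherever $d(z)\neq0$, one has the algebraic identity $\V{\eta(z)}=d(z)\,\V{\tilde\gamma(z)}$, so that controlling $\tilde\gamma$ on the unit circle as $z\to z_0$ controls $\eta(z_0)$.

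The key physical input I would establish first is that $\|\tilde\gamma(\phi)\|_\infty\le1$ for every real $\phi$ at which the symbol Lyapunov Eq.~\eqref{CLEphi} is uniquely solvable. This follows because the full correlation matrix $\Gamma$ of any Gaussian Fermionic steady state is Hermitian (it is imaginary antisymmetric, $\Gamma=\Gamma^{\dagger}=-\Gamma^{T}$) with eigenvalues $\pm\gamma_k$, $|\gamma_k|\le1$, so $\|\Gamma\|_\infty\le1$; and the Fourier transform that block-diagonalises the translationally invariant $\Gamma$ turns it into the direct sum (or direct integral) of the Hermitian $2\times2$ blocks $\tilde\gamma(\phi)$, whence $\sup_\phi\|\tilde\gamma(\phi)\|_\infty=\|\Gamma\|_\infty\le1$. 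In particular $\tilde\gamma$ stays uniformly bounded on the circle, independently of where the dissipative gap closes.

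With these two facts the conclusion is immediate. Fix $z_0=e^{i\phi_0}$ with $|z_0|=1$ and $d(z_0)=0$. Since $d$ is a nontrivial polynomial, its zeros on the circle are isolated, so for $\phi$ in a punctured neighbourhood of $\phi_0$ one has $d(e^{i\phi})\neq0$, the unique solution $\tilde\gamma(\phi)$ is the physical correlation block, and $\V{\eta(e^{i\phi})}=d(e^{i\phi})\,\V{\tilde\gamma(\phi)}$ with $\|\tilde\gamma(\phi)\|_\infty\le1$. Letting $\phi\to\phi_0$ and using continuity of $\eta$ together with $d(e^{i\phi})\to d(z_0)=0$, the right-hand side tends to $0$ while the left-hand side tends to $\V{\eta(z_0)}$; hence $\eta(z_0)=0$, which is the claim. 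Feeding $\eta(z_0)=0$ (and therefore $\Det{\eta(z_0)}=0$) into the expression $u_{\mu\nu}(z)=\frac{i}{4}\,d(z)\,\Tr\{\eta[\partial_{\mu}\eta,\partial_{\nu}\eta]\}/(d(z)^{2}-\Det{\eta(z)})^{2}$ then gives $D(z_0)=(d(z_0)^2-\Det{\eta(z_0)})^2=0$, which is exactly statement \emph{(i)}.

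The main obstacle, and the step I would write out most carefully, is justifying the uniform boundedness of $\tilde\gamma$ right at the critical parameter: one must argue that in a punctured neighbourhood of $\phi_0$ the Lyapunov solution really does coincide with the physical correlation block of norm $\le1$, and that $d$ does not degenerate to the identically zero polynomial (i.e. that the entire circle is not simultaneously critical, the excluded nongeneric situation). Once this punctured-neighbourhood boundedness is secured, the limit is routine, and it sidesteps the more delicate alternative of computing $\mathrm{adj}(\hat X(z_0))\V{y(z_0)}$ by hand, which would otherwise require identifying the left null vector of $\hat X(z_0)=\tilde x(\phi_0)\otimes\one+\one\otimes\tilde x(-\phi_0)$ and showing its orthogonality to $\V{y(z_0)}$.
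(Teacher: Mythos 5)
You take a genuinely different route from the paper: the paper argues pointwise at $z_{0}=e^{i\phi_{0}}$, identifying the eigenvector $\ket{0}$ of $\tilde{x}(\phi_{0})$ whose eigenvalue $x_{0}$ has $\Re\,x_{0}=0$, deducing $\bra{0}\tilde{y}(\phi_{0})\ket{0}=0$ from the positivity of the bath symbol $\tilde{m}$, and then computing $\textrm{adj}(\hat{X})$ explicitly in the eigenbasis (diagonalisability being supplied by Lemma~\ref{lemma:stability}), so that $\V{\eta}=\textrm{adj}(\hat{X})\V{y}=0$ follows by inspection. Your limiting argument is sound as a skeleton: if $\|\tilde{\gamma}(\phi)\|_{\infty}\le1$ holds on a punctured neighbourhood of $\phi_{0}$, then $\V{\eta(e^{i\phi})}=d(e^{i\phi})\,\V{\tilde{\gamma}(\phi)}\to 0$, and continuity of the Laurent polynomial $\eta$ forces $\eta(z_{0})=0$.

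The genuine gap is exactly the step you flag, and your proposed justification for it is circular at the only point where it is needed. You obtain $\|\tilde{\gamma}(\phi)\|_{\infty}\le 1$ by declaring $\tilde{\gamma}(\phi)$ to be a Fourier block of the correlation matrix of ``the'' Gaussian steady state. But the hypothesis $d(z_{0})=0$ with $|z_{0}|=1$ forces some $\Re\,x_{j}=0$, i.e. $\Delta=0$: you are sitting at a critical parameter value where existence and uniqueness of a translationally invariant Gaussian NESS in the thermodynamic limit is precisely what is no longer guaranteed, and asserting that the almost-everywhere-defined Lyapunov solutions assemble into a legitimate (positive) Gaussian state is equivalent to the bound $-\one\le\tilde{\gamma}\le\one$ you are trying to prove. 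The fix stays within your framework but must be operator-theoretic rather than an appeal to the physical state: on the punctured neighbourhood, invertibility of $\hat{X}(e^{i\phi})$ forces every eigenvalue of $\tilde{x}(\phi)$ to have strictly positive real part (take $j=k$ in the eigenvalues $x_{j}+x_{k}^{*}$), so, using $\tilde{x}^{T}(-\phi)=\tilde{x}(\phi)^{\dagger}$ to rewrite Eq.~\eqref{CLEphi} as $\tilde{x}\tilde{\gamma}+\tilde{\gamma}\tilde{x}^{\dagger}=\tilde{y}$, the unique solution admits the convergent representation
\begin{equation}
\tilde{\gamma}(\phi)=\int_{0}^{\infty}e^{-\tilde{x}t}\,\tilde{y}\,e^{-\tilde{x}^{\dagger}t}\,dt,
\qquad
\one\mp\tilde{\gamma}(\phi)=\int_{0}^{\infty}e^{-\tilde{x}t}\left(\tilde{x}+\tilde{x}^{\dagger}\mp\tilde{y}\right)e^{-\tilde{x}^{\dagger}t}\,dt,
\end{equation}
and the identities $\tilde{x}+\tilde{x}^{\dagger}-\tilde{y}=8\tilde{m}(\phi)\ge 0$ and $\tilde{x}+\tilde{x}^{\dagger}+\tilde{y}=8\tilde{m}^{T}(-\phi)\ge0$ give $\|\tilde{\gamma}(\phi)\|_{\infty}\le1$ --- notice that this is the same positivity of $\tilde{m}$ that drives the paper's proof. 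You should also state the standing assumption $d\not\equiv 0$ (so that zeros of $d$ on the circle are isolated); the paper's pointwise algebraic proof needs no such assumption and survives, for instance, the weak-dissipation limit in which it can fail.
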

\begin{proof}
For $|z|$=1, let's write explicitly $z=e^{i\phi}$. It is not hard to
show that from its definition, the matrix $\tilde{x}(\phi)$ enjoys
the following property
$\tilde{x}(\phi)^{\dagger}=\tilde{x}(-\phi)^{T}$. Correspondingly,
the eigenvalues of $\hat{X}$ are $x_{j}+x_{k}^{*}$ with $j,k=1,2$,
where $x_{j}$  are the eigenvalues of $\tilde{x}(\phi)$. Since $\Re
x_{j}\ge 0$, $\Det{\hat{X}}=0$ implies that there must exist an
eigenvalue $x_{0}$ of  $\tilde{x}(\phi)$ with vanishing real part,
hence $\Delta=2\min_{j}\Re\, x_{j} = 2\Re\, x_{0} = 0$.  If
$\ket{0}$ is the eigenstate of $\tilde{x}(\phi)$ with eigenvalue
$x_{0}$, then 
\begin{equation}
 x_{0}+x_{0}^{*}=\bra{0}\tilde{x}(\phi)+\tilde{x}(-\phi)^{T}\ket{0}=4\bra{0}\tilde{m}(\phi)+\tilde{m}(-\phi)^{T}\ket{0}=0~,
\end{equation}
  where in the second equality we used the definition $\tilde{x}(\vv{\phi}):=2[ 2i \tilde{h}(\vv{\phi})+\tilde{m}(\vv{\phi})+\tilde{m}^{T}(-\vv{\phi})]$ and the antisymmetry $\tilde{h}(\phi)=-\tilde{h}(-\phi)^{T}$.
 From the non-negativity of the $\tilde{m}(\phi)$ matrices, it follows that $\bra{0}\tilde{y}(\phi)\ket{0}=-4\bra{0}\tilde{m}(\phi)-\tilde{m}(-\phi)^{T}\ket{0}=0$.

In lemma~\ref{lemma:stability} of~\ref{app:Spec} it is
shown that when $2\Re x_{0}=0$, the geometric multiplicity of
$x_{0}$ is equal to its algebraic multiplicity, hence the $2\times2$
matrix $\tilde{x}(\phi)$ is diagonalisable. Then, let $\ket{j}$ be
the set of eigenstates with eigenvalues $x_{j}$. In the eigenbasis
$\ket{j}\otimes\ket{k}$, $j,k=0,1$ the adjugate matrix has the
following diagonal form, \be\nonumber  \textrm{adj}(\hat{X})=2
\begin{pmatrix}
|x_{0}+x_{1}^{*}|^{2} \Re(x_{1}) &0&0&0\\
0& 2 (x_{0}+x_{1}^{*})\Re(x_{1} x_{0})&0&0\\0&0& 2
(x_{1}+x_{0}^{*})\Re(x_{1} x_{0})&0\\0&0&0&|x_{1}+x_{0}^{*}|^{2}
\Re(x_{0})
\end{pmatrix}
\end{equation} and due to $\Re\, x_{0} =0$ all elements, but
$\bra{0,0} \textrm{adj}(\hat{X})\ket{0,0}$, vanish. On the other hand,
the element $\V{\tilde{y}}_{00}:=\bra{0}\tilde{y}\ket{0}=0$,
implying $\V{\eta}= \textrm{adj}(\hat{X}) \V{y}=0$.
\end{proof}
To prove statement \emph{(ii)}, we just need the following
proposition.
\begin{Proposition}
If $\bar{z}_{0}$ is a root of $D(z)$ with $|\bar{z}_{0}|=1$, and
$d(\bar{z}_{0})\ne 0$, then $u_{\mu\nu}(z)$ is analytic in $z_{0}$.
\end{Proposition}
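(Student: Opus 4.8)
The plan is to exploit the fact that, since $d(\bar{z}_{0})\neq0$, the covariance symbol $\tilde{\gamma}(z)=\eta(z)/d(z)$ and its parameter-derivatives $\partial_{\mu}\tilde{\gamma},\partial_{\nu}\tilde{\gamma}$ are all holomorphic in a neighbourhood of $\bar{z}_{0}$. Cancelling the common non-vanishing factor $d^{4}$ in $N/D$ (using $d^{2}-\Det\eta=d^{2}(1-\Det\tilde{\gamma})$ and $\Tr\{\eta[\partial_{\mu}\eta,\partial_{\nu}\eta]\}=d^{3}\Tr\{\tilde{\gamma}[\partial_{\mu}\tilde{\gamma},\partial_{\nu}\tilde{\gamma}]\}$), I would recast $u_{\mu\nu}$ in the form~\eqref{Uphi1}, i.e. $u_{\mu\nu}(z)=\frac{i}{4}\Tr\{\tilde{\gamma}[\partial_{\mu}\tilde{\gamma},\partial_{\nu}\tilde{\gamma}]\}/(1-\Det\tilde{\gamma})^{2}$. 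Numerator and denominator are now both holomorphic near $\bar{z}_{0}$, so the only possible obstruction to analyticity is the vanishing of $f:=1-\Det\tilde{\gamma}$; and since $\bar{z}_{0}$ is a root of $D$ with $d(\bar{z}_{0})\neq0$, this is precisely what occurs, $\Det\tilde{\gamma}(\bar{z}_{0})=1$. The proof then reduces to matching the orders of vanishing of $\Tr\{\tilde{\gamma}[\partial_{\mu}\tilde{\gamma},\partial_{\nu}\tilde{\gamma}]\}$ and $f^{2}$ at $\bar{z}_{0}$.

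Next I would invoke the physical positivity of the symbol on the unit circle. Writing $\bar{z}_{0}=e^{i\phi_{0}}$, the matrix $\tilde{\gamma}(\phi)$ is Hermitian with eigenvalues in $[-1,1]$ (the bound $|\gamma_{k}|\le1$ established for Gaussian states). Hence $\Det\tilde{\gamma}(\bar{z}_{0})=\tilde{\gamma}_{1}\tilde{\gamma}_{2}=1$ together with $|\tilde{\gamma}_{i}|\le1$ forces $\tilde{\gamma}_{1}=\tilde{\gamma}_{2}=\epsilon\in\{\pm1\}$, so $\tilde{\gamma}(\bar{z}_{0})=\epsilon\,\one$ — the ``pure-mode'' configuration already seen to give a removable singularity of the integrand. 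Decomposing $\tilde{\gamma}=\frac{1}{2}(\Tr\tilde{\gamma})\one+\hat{\gamma}$ into trace and traceless parts, the identity piece drops out of both the commutator and the trace, so that $\Tr\{\tilde{\gamma}[\partial_{\mu}\tilde{\gamma},\partial_{\nu}\tilde{\gamma}]\}=\Tr\{\hat{\gamma}[\partial_{\mu}\hat{\gamma},\partial_{\nu}\hat{\gamma}]\}$, a trilinear expression in $\hat{\gamma}$ alone.

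The crucial step is the order estimate on $\hat{\gamma}$. On the circle $\|\hat{\gamma}\|=\frac{1}{2}|\tilde{\gamma}_{1}-\tilde{\gamma}_{2}|$ is bounded by $h:=1-\frac{1}{2}\epsilon\Tr\tilde{\gamma}\ge0$, and since $\epsilon\Tr\tilde{\gamma}\le2$ attains its maximum $2$ at $(\phi_{0},\lambda_{0})$, the non-negative function $h$ has vanishing gradient there and is therefore $O(\|(\delta\phi,\delta\lambda)\|^{2})$. The bound $\|\hat{\gamma}\|\le h$ then forces $\hat{\gamma}$ together with all its first partial derivatives — both in $\phi$ and in the parameters $\lambda$ — to vanish at $(\phi_{0},\lambda_{0})$. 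Translating into holomorphic orders in $w:=z-\bar{z}_{0}$: from $\hat{\gamma}(\bar{z}_{0})=0$ and $\partial_{z}\hat{\gamma}(\bar{z}_{0})=0$ one gets $\hat{\gamma}(z)=O(w^{2})$, while $\partial_{\mu}\hat{\gamma}(\bar{z}_{0})=\partial_{\nu}\hat{\gamma}(\bar{z}_{0})=0$ give $\partial_{\mu}\hat{\gamma},\partial_{\nu}\hat{\gamma}=O(w)$; by the same extremality $f=1-\Det\tilde{\gamma}$ vanishes to second order, $f=O(w^{2})$.

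Putting these together yields the result: the numerator $\Tr\{\hat{\gamma}[\partial_{\mu}\hat{\gamma},\partial_{\nu}\hat{\gamma}]\}=O(w^{2})\,O(w)\,O(w)=O(w^{4})$, exactly matching $f^{2}=O(w^{4})$ in the denominator, so $u_{\mu\nu}$ extends holomorphically across $\bar{z}_{0}$ and the apparent pole is removable. I expect the delicate point to be precisely this order-matching: merely using $\tilde{\gamma}(\bar{z}_{0})=\epsilon\,\one$ makes the numerator vanish only to \emph{first} order (trace of a commutator), which is far from enough against the fourth-order zero of $f^{2}$; it is the extra two orders supplied by the extremality of the \emph{physical} eigenvalues at the boundary $\pm1$ — the vanishing of the entire first-order variation of $\hat{\gamma}$ — that close the gap. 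This also makes the analysis consistent with the removability already noted after Eq.~\eqref{Uphi1}.
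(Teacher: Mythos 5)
Your setup is sound and close in spirit to the paper's own proof: you correctly reduce $u_{\mu\nu}$ to the form of Eq.~\eqref{Uphi1} with holomorphic numerator and denominator, you deduce $\tilde{\gamma}(\bar{z}_{0})=\pm\one$ from $\Det\tilde{\gamma}(\bar{z}_{0})=1$ together with $|\tilde{\gamma}_{i}|\le 1$, and your extremality argument (an interior minimum of your non-negative function $h$ forces $\hat{\gamma}$ and all of its first derivatives in $\phi$ and $\lambda$ to vanish there) is valid. The flaw is in the final order-matching step. Your estimates are \emph{lower bounds on orders of vanishing}: ``numerator $=O(w^{4})$'' and ``$f^{2}=O(w^{4})$'' only say that each vanishes \emph{at least} to fourth order. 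For the ratio to be holomorphic you need $\mathrm{ord}(\mathrm{numerator})\ge\mathrm{ord}(f^{2})$, and two ``at least order $4$'' statements do not give that: nothing in your argument prevents $f^{2}$ from vanishing to order $8$ while the numerator vanishes only to order $4$, leaving a genuine pole. Concretely, extremality can only ever certify \emph{second-order} contact of $\tilde{\gamma}$ with $\epsilon\,\one$; it cannot exclude contact of higher (necessarily even) order $2n$ with $n\ge 2$. In that case the denominator is of order $4n\ge 8$, while your numerator estimate remains stuck at $O(w^{4})$ — your bounds do not improve with $n$, so the claimed ``exact matching'' fails precisely in the cases where it is needed.

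The paper closes this gap by tying both orders to the \emph{same} integer. It expands $\tilde{\gamma}(z)=\one+T\,(z-\bar{z}_{0})^{2n}+o\bigl((z-\bar{z}_{0})^{2n}\bigr)$, where $T$ is the \emph{first} non-vanishing Taylor coefficient; positivity of $\one-\tilde{\gamma}(z)$ on the unit circle forces the order $2n$ to be even and $T$ to be sign semi-definite, i.e.\ $T=t_{0}\one+\bm{t}\cdot\bm{\sigma}$ with $t_{0}<0$ and $\|\bm{t}\|\le|t_{0}|$. The crucial consequence you are missing is that a non-zero semi-definite matrix has non-vanishing trace, so $1-\Det\tilde{\gamma}=-2t_{0}\,w^{2n}+o(w^{2n})$ vanishes to order \emph{exactly} $2n$, making the denominator exactly of order $4n$; meanwhile the trace--commutator structure (the trace of a commutator vanishes, and $\Tr\{T[T,S]\}=0$ for any $S$) pushes the numerator up to order $6n$. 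This yields $u_{\mu\nu}(z)=-\frac{1}{4}\,\bm{t}\cdot(\partial_{\mu}\bm{t}\wedge\partial_{\nu}\bm{t})\,t_{0}^{-2}\,(z-\bar{z}_{0})^{2n}+o\bigl((z-\bar{z}_{0})^{2n}\bigr)$, which is analytic (indeed vanishing) at $\bar{z}_{0}$ uniformly in $n$. To repair your proof you need precisely this step: identify the first non-vanishing coefficient of $\tilde{\gamma}-\epsilon\,\one$, use its semi-definiteness to conclude its trace is non-zero, and let that single integer $2n$ control numerator and denominator simultaneously, rather than hard-coding the generic quadratic-contact orders.
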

\begin{proof}
Let $\bar{z}_{0}$ be a root of $D(z)$ with $|\bar{z}_{0}|=1$, with
the assumption that $d(\bar{z}_{0})\neq 0$. Notice that whenever
$d(z)\neq0$, $\tilde{\gamma}(z)$ in~(\ref{sol}) is the unique
solution of the Lyapunov Eq.~(\ref{CLEphi}). As such, it is
analytic in $z$ (and smoothly dependent on $\lambda$'s). Since 
\begin{equation}
 D(z):= (d(z)^{2}- \textrm{Det}{\eta{(z)}})^{2}
=d(z)^{4}[1-\Det\tilde{\gamma}(z)]^{2}~, \end{equation} we obviously have
$\Det{\tilde{\gamma}(\bar{z}_{0})}=1$. Just observe that if
$\gamma(z)$ is an analytic, smoothly dependent on the system
parameters $\lambda\in \mathcal{M}$ , $u_{\mu\nu}(z)$ may be
singular in $\bar{z}_{0}$ only if
$\Det{\tilde{\gamma}(\bar{z}_{0})}=1$. Assume then
$\Det{\tilde{\gamma}(\bar{z}_{0})}=1$, then either
$\gamma(\bar{z}_{0})=\pm\one$. Without loss of generality, we can
write $\tilde{\gamma}(z)=\one+T \,(z-\bar{z}_{0})^{2n}+
\mathcal{O}(z-\bar{z}_{0})^{2n}$, $n\in\mathbb{N}$, where
$T=T^{\dagger}$ is the first non-vanishing term of the Taylor
expansion of $\tilde{\gamma}(z)-\one$. The fact that this term must
be of even order ($2n$) is due to the positive semi-definiteness of
the $\one-\tilde{\gamma}(z) $ for $z\in S_{1}$. By expressing the
$2\times2$ matrix $T$ in terms of Pauli matrices, $T=t_{0}\one+
\bm{t}\cdot\bm{\sigma}$, where
$\bm{\sigma}:=(\sigma_{x},\sigma_{y},\sigma_{z})^{T}$, $t_{0}\in
\mathbb{R}$ and $\bm{t}\in \mathbb{R}^{3}$, the positive
semi-definiteness condition above reads: $t_{0}<0$ and
$||\bm{t}||\le |t_{0}|$. Plugging the Taylor expansion
in~(\ref{Uphi1}) and retaining only the first non-vanishing terms,
yields
\[
u_{\mu\nu}(z)=-\frac{1}{4}\frac{\bm{t}\cdot(\partial_{\mu}\bm{t}\wedge\partial_{\nu}\bm{t})}{t_{0}^{2}}(z-\bar{z}_{0})^{2n}
+ o(z-\bar{z}_{0})^{2n}.
\]
\end{proof}

We have thus proven that a non-analycity of $u_{\mu\nu}(z)$ in
$\bar{z}_{0}\in S_{1}$ is necessarily due to a pole $\bar{z}$ of
$\tilde{\gamma}(z)$ approaching $\bar{z}_{0}$, as
$\lambda\to\lambda_{0}$, resulting in a diverging correlation
length. Therefore, a singular behaviour of $\bar{\mathcal{U}}$ in
the manifold $\mathcal{M}$ is a sufficient criterion for
criticality.

% \begin{figure*}
%    \centering
%    \begin{subfigure}[t]{0.3\textwidth}
%        \centering
%        \includegraphics[width=\linewidth]{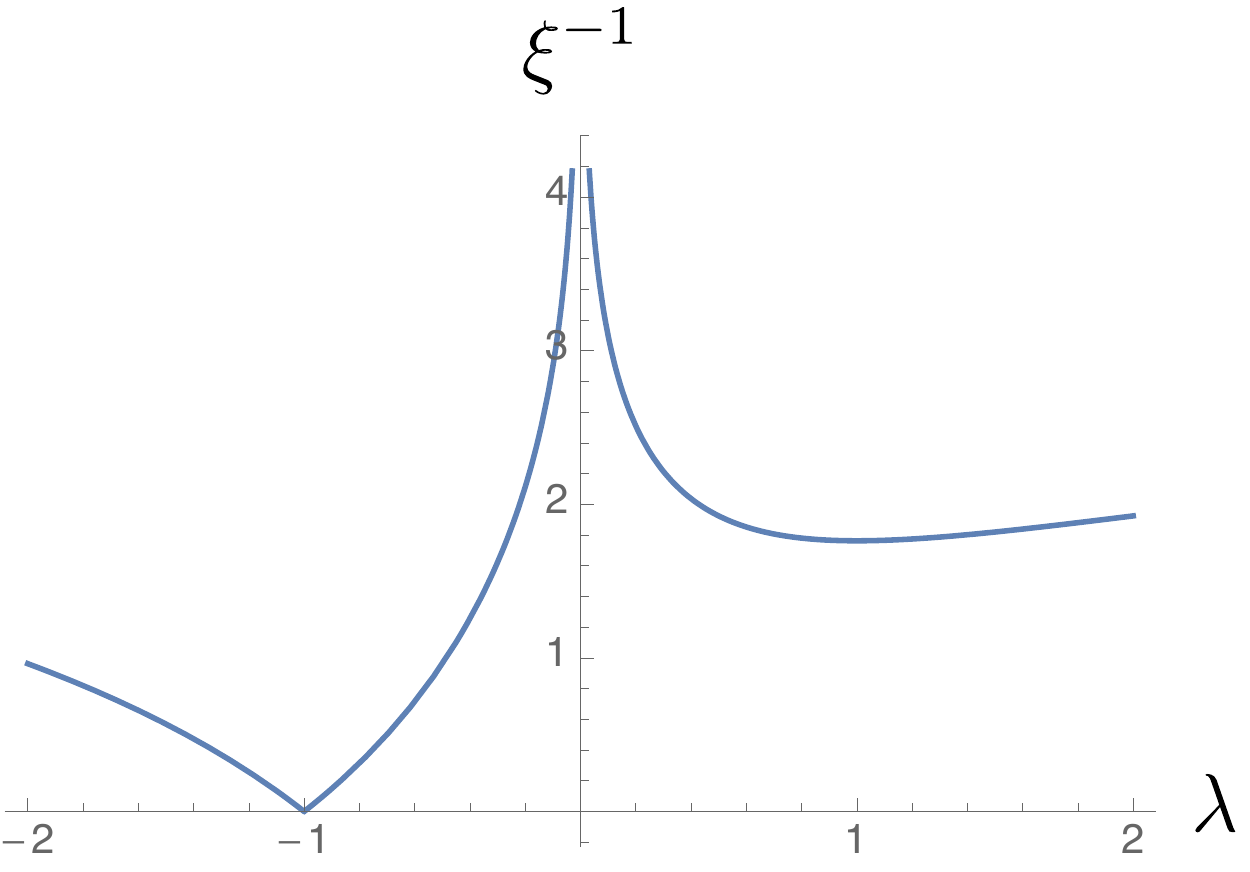}\label{TBD1}
%        %\caption{ Inverse of correlation length}
%    \end{subfigure}
%    \hfill
%    \begin{subfigure}[t]{0.3\textwidth}
%        \centering
%        \includegraphics[width=\linewidth]{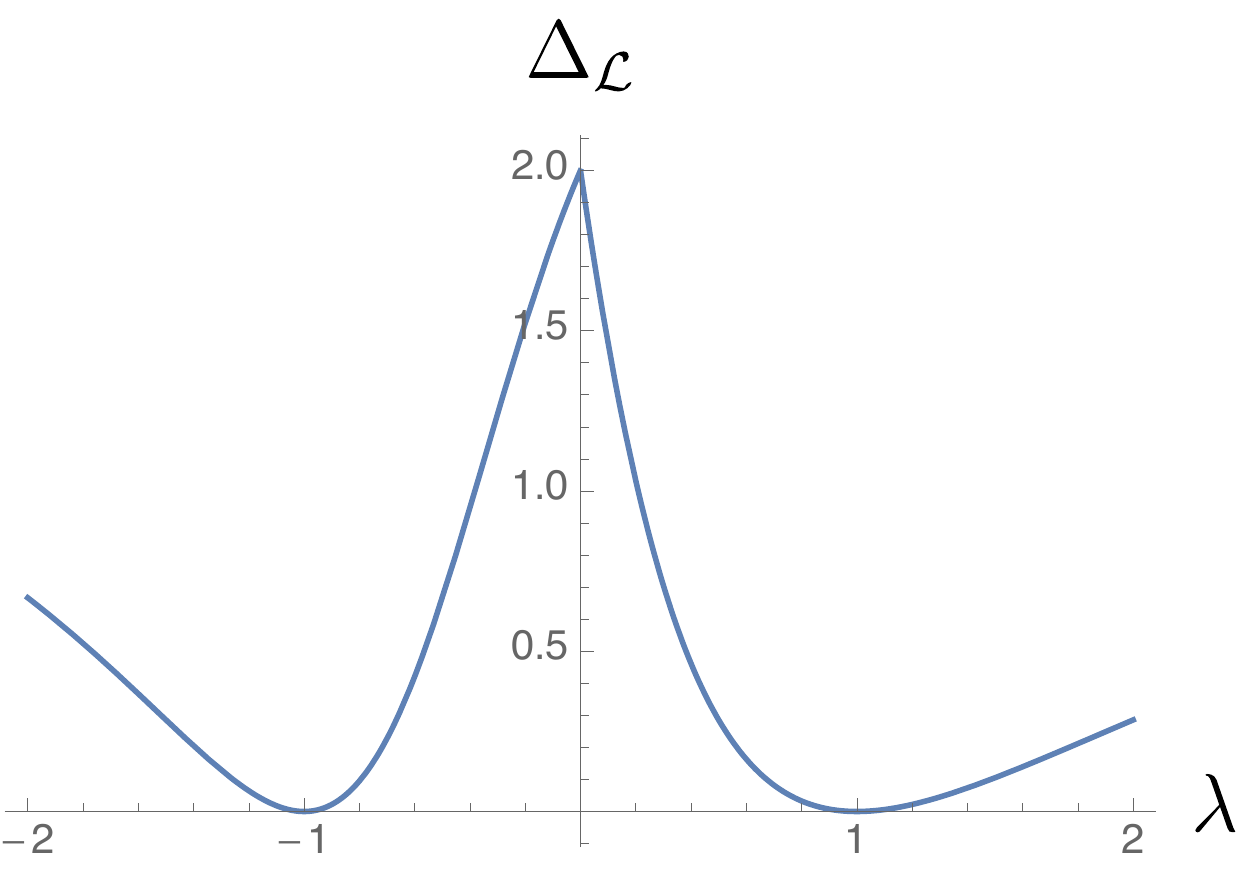}\label{TBD2}
%        %\caption{Dissipative Gap}
%    \end{subfigure}
%    \hfill
%    \begin{subfigure}[t]{0.3\textwidth}
%    \centering
%        \includegraphics[width=\linewidth]{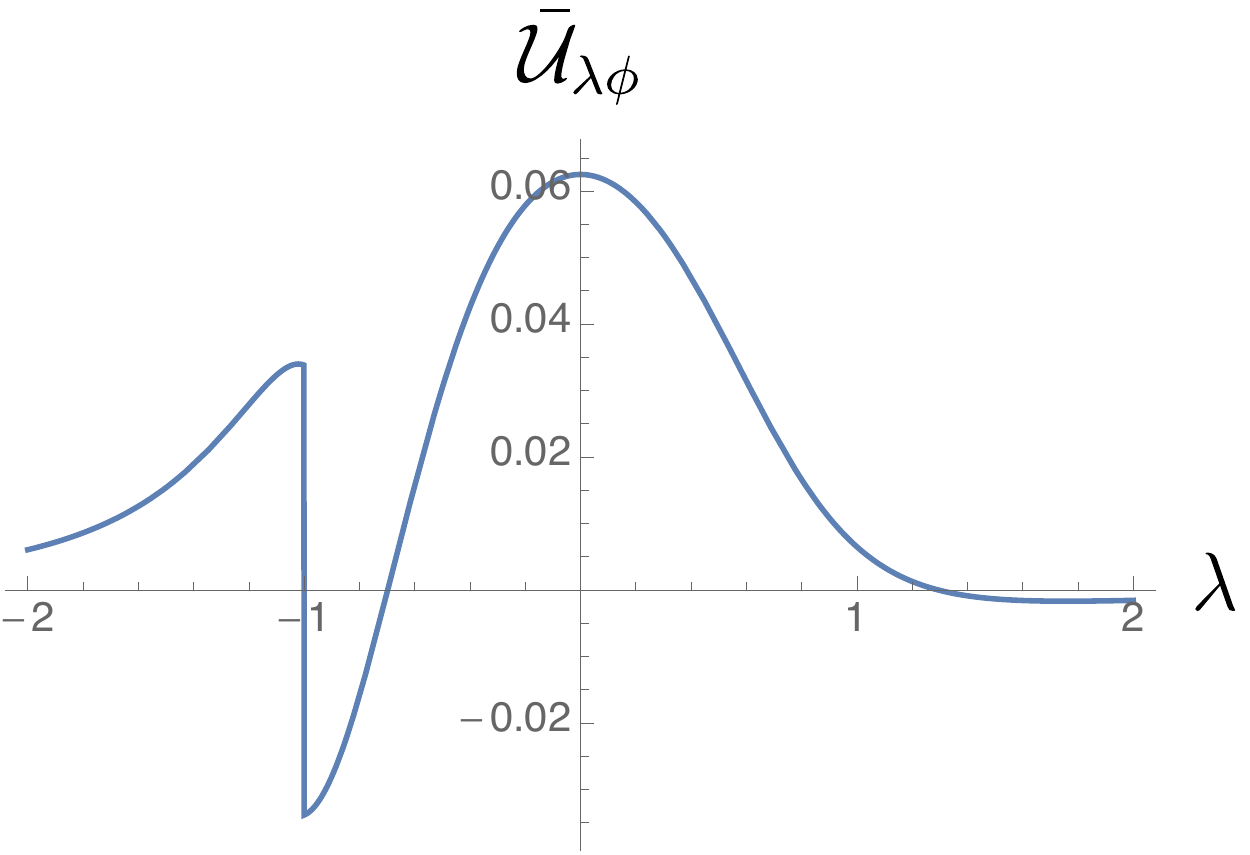}\label{TBD3}
%        %\caption{ MUC.}
%    \end{subfigure}
%    \caption{Model of a 1D Fermionic chain on a ring showing a closing dissipative gap that does not imply a diverging correlation length. The model considered is a simple extension of a model introduced in~\cite{Bardyn2013} discussed in section~\ref{sec:Example}. The inverse correlation length, the dissipative gap and the MUC are shown, respectevely, from the left to the right panel. The model is critical only for $\lambda=-1$, while the gap closes for both $\lambda=\pm 1$. As expected, the discontinuity of MUC captures the criticality, while it is insensitive to the vanishing gap, per s\'e.}\label{TBD}
%\end{figure*}
\section{Applications}
\subsection{Vanishing dissipative gap without criticality}\label{sec:Example}
The primary scope of this subsection is not discussing a model which
may be relevant per se, rather it serves to illustrate in a simple
translationally invariant case the ambiguous relation between
criticality and vanishing dissipative gap. As a byproduct, one may
also appreciate the sensitivity of the Uhlmann curvature to the
criticality and its unresponsiveness to the gap. Specifically,
 in this section we will describe an example of a 1D Fermionic system in
which the closure of the dissipative gap does not necessarily lead
to a diverging correlation length. Consider a chain of Fermions on a
ring geometry, driven uniquely by an engineered reservoir, i.e. with
\emph{no Hamiltonian}. The reservoir is described by the following
set of jump operators
\[
\Lambda(r)=[(1+\lambda)\bm{l}_{0}^{T}\bm{\omega}_{r}+\bm{l}_{1}^{T}\bm{\omega}_{r+1}+\lambda\bm{l}_{2}^{T}\bm{\omega}_{r+2}]/n(\lambda),
\]
where  $r=1,\dots,n$, $\bm{l}_{0}=(\cos{\theta},-\sin{\theta})^{T}$, $\bm{l}_{1}=\bm{l}_{2}=i(\sin{\theta},\cos{\theta})^{T}$, and $n(\lambda)=4(\lambda^{2}+\lambda+1)$, with $\lambda\in\mathbb{R}$, $\theta=[0,2\pi)$. %, $\bm{\omega}_{n+1}=\bm{\omega}_{0}$}
This is a simple extension of a model introduced
in~\cite{Bardyn2013}, which, under open boundary conditions, shows a
dissipative topological phase transitions for $\lambda=\pm1$. In the
thermodynamic limit $n\to\infty$, the eigenvalues of
$\tilde{x}(\phi)$ are $x_{1}=4(1+\lambda)^{2}/n(\lambda)^{2}$, and
$x_{2}=4(1+2\lambda\cos{\phi}+\lambda^{2})/n(\lambda)^{2}$, showing
a closure of the dissipative gap at $\lambda=\pm1$. For
$|\lambda|\neq 1$ the unique NESS is found by solving the continuous
Lyapunov Eq.~(\ref{CLEphi}). The symbol function, in a Pauli
matrix representation, results
$\tilde{\gamma}(\phi)=\bm{\gamma}^{T}\cdot\bm{\sigma}$, where
$\bm{\sigma}:=(\sigma_{x},\sigma_{y},\sigma_{z})^{T}$, and
\begin{figure*}[t]


\begin{center}
\subfigure{\includegraphics[width=0.32\textwidth]{TBDCorr-eps-converted-to.pdf}}
\subfigure{\includegraphics[width=0.32\textwidth]{TBDGap-eps-converted-to.pdf}}
\subfigure{\includegraphics[width=0.32\textwidth]{TBDuPh-eps-converted-to.pdf}}
\end{center}
\vspace{-0.2 cm} \caption{Model of a 1D Fermionic chain on a ring
showing a closing dissipative gap that does not imply a diverging
correlation length. This is the model discussed in
section~\ref{sec:Example} which is an extension of a model
introduced in~\cite{Bardyn2013}. The inverse correlation length, the
dissipative gap and the MUC are shown, respectively, from the left
to the right panel. The model is critical only for $\lambda=-1$,
while the gap closes for both $\lambda=\pm 1$. As expected, the
discontinuity of MUC captures the criticality, and it is otherwise
insensitive to a vanishing dissipative gap.}\label{TBD}
\end{figure*}
\[
    \bm{\gamma}=g(\phi)\left[\begin{array}{c}(\sin\phi+\lambda\sin{2\phi})\cos 2\theta \\(\cos\phi+\lambda\cos{2\phi}) \\-(\sin\phi+\lambda\sin{2\phi})\sin 2\theta\end{array}\right],
\]
where $g(\phi)=(1+\lambda)/(1+\lambda+\lambda \cos{\phi} +
\lambda^{2})$, with eigenvalues $\pm
g(\phi)\sqrt{1+\lambda^{2}+2\lambda \cos\phi}$. This shows that
$\tilde{\gamma}$ is critical in the sense of diverging correlation,
only for $\lambda=-1$ and not for $\lambda=1$, even if the
dissipative gap closes in both cases. Figure~\ref{TBD} shows the
dependence of the inverse correlation length of the bulk, the
dissipative gap and the mean Uhlmann curvature
$\bar{\mathcal{U}}_{\lambda \phi}$ on the parameter $\lambda$.
Notice a discontinuity of the Uhlmann phase corresponding to the
critical point $\lambda_{0}=-1$, while it does not show any
singularity for $\lambda=1$ where the gap closes.

\subsection{Rotated XY-model with Local Dissipation}
Let's now turn to a prototypical example of a translationally
invariant one-dimensional model. The features described above are
exemplified in the rotated XY model with periodic boundary
conditions~\cite{Carollo2005,Pachos2006},
$H=R(\theta)H_{XY}R(\theta)^{\dagger}$, with
$R(\theta)=e^{-i\frac{\theta}{2}\sum_{j}\sigma_{j}^{z}}$ and
\begin{equation}\label{RotatedXY}
H_{XY}\!=\!\sum_{j=1}^{n}
\!\left(\frac{1\!+\!\delta}{2}\sigma_j^x\,\,
\sigma_{j+1}^x\!+\!\frac{1\!-\!\delta}{2}\sigma_j^y \sigma_{j+1}^y
\!+\! h \sigma^z_j\right),
\end{equation}
where each site $j$ is coupled to two local reservoirs with Lindblad
operators $\Lambda_{j}^{\pm}=\epsilon \mu_{\pm} \sigma_{j}^{\pm}$.
The spin-system is converted into a quadratic Fermionic model via
Jordan-Wigner transformations. The Liouvillian spectrum can be
solved exactly~\cite{Prosen2008,Prosen2010a,Horstmann2013} and it is
independent of $\theta$. In the weak coupling limit $\epsilon\to 0$,
the symbol function of the NESS correlation matrix reads
$\tilde{\gamma}(\phi)=\bm{\gamma}^{T}\cdot\bm{\sigma}$, where 
\begin{equation}
 \bm{\gamma}= g(\phi)\begin{pmatrix}
 t(\phi) \cos\theta\\-1\\ t(\phi) \sin\theta
\end{pmatrix},
\end{equation}
with $g(\phi)= \frac{\mu_{-}^{2}-\mu_{+}^{2}}{\mu_{-}^{2}+\mu_{+}^{2}}
\frac{1}{1+t(\phi)^{2}}$ and $t(\phi) :=\delta \sin\phi/(\cos\phi
-h)$. The system shows criticality in the same regions as
the $XY$ hamiltonian model~\cite{Horstmann2013}. By using
expression~(\ref{eq:Uz}) one can calculate the exact values of the
mean Uhlmann curvature. One finds that $\bar{\mathcal{U}}_{\delta
h}$ vanishes identically, while $\bar{\mathcal{U}}_{\delta\theta}$
and $\bar{\mathcal{U}}_{h\theta}$ are plotted in Fig.~\ref{UXY}. As
predicted, the Uhlmann curvature shows a singular behaviour only
across criticality. In particular, $\mathcal{U}_{h\theta}$ is
discontinuous in the $XY$ critical points $|h|=1$, while
$\mathcal{U}_{\delta\theta}$ is discontinuous in the $XX$ type
criticalities $\delta=0$, $h< 1$. This shows that the mean Uhlmann
curvature captures faithfully the critical behaviour of the underlying physical
model. In the following we will see a model with a richer phase
diagram, in which the geometric features of criticality will be even
more apparent.
\begin{figure}[t]
\includegraphics[width=\textwidth]{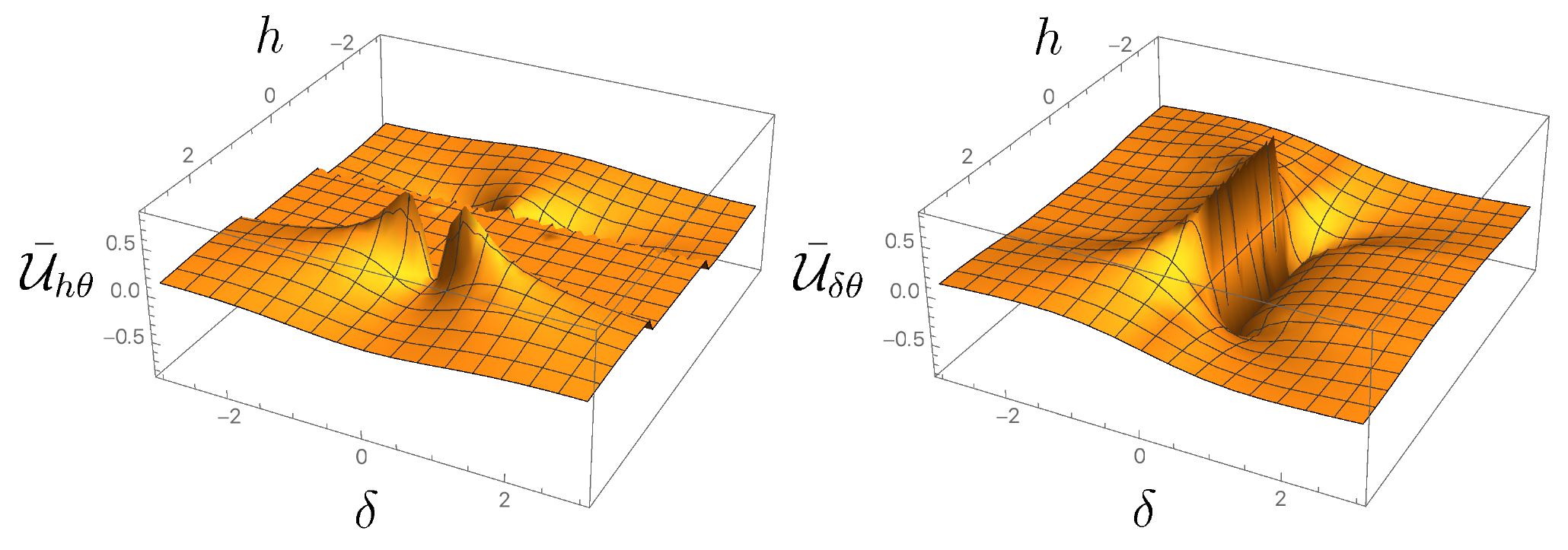}
\caption{The mean Uhlmann curvature per number of sites
$\bar{\mathcal{U}}$ for the rotated XY model with local reservoirs.
The dependence of $\bar{\mathcal{U}}_{h\theta}$ (left) and of
$\bar{\mathcal{U}}_{\delta\theta}$ (right) on the parameters
$\delta$ e $h$. The mean Uhlmann curvature shows a singular
behaviour in the critical regions of the model.
$\mathcal{U}_{h\theta}$ is discontinuous in the $XY$ critical points
$|h|=1$, and $\mathcal{U}_{\delta\theta}$ is discontinuous in the
$XX$ type criticalities $\delta=0$, $h< 1$. }\label{UXY}
\end{figure}

\subsection{Boundary driven XY-model}
Let's apply the above analysis to a specific model, the
boundary-driven spin-1/2 XY chain~\cite{Prosen2008}. In this model,
an open chain of spin-1/2 particles interacts via the
$XY$-Hamiltonian,
\begin{equation}\label{eq:HXYdiss}
H_{XY}\!\!=\!\sum_{j=1}^{n-1}\!
\left(\frac{1\!+\!\delta}{2}\sigma_j^x
\sigma_{j+1}^x\!+\!\frac{1\!-\!\delta}{2}\sigma_j^y
\sigma_{j+1}^y\right)\! +\!\sum_{j=1}^{n} h \sigma^z_j,
\end{equation}
where the $\sigma_{j}^{x,y,z}$ are Pauli operators acting on the
spin of the $j$-th site. At each boundary, the chain is in contact
with two different reservoirs, described by Lindblad operators 
\begin{equation}
 \Lambda^{\pm}_{L}=\sqrt{\kappa_{L}^{\pm}}(\sigma_{j}^{x}\pm
i\sigma_{j}^{y})/2 \quad \textrm{ and }\quad
\Lambda^{\pm}_{R}=\sqrt{\kappa_{R}^{\pm}}(\sigma_{j}^{x}\pm
i\sigma_{j}^{y})/2. 
\end{equation}
  A Jordan-Wigner transform converts the system into a quadratic Fermionic dissipative model with Gaussian NESS~\cite{Prosen2008,Prosen2010}.
\begin{figure}
\begin{center}
\includegraphics[width=0.4\linewidth]{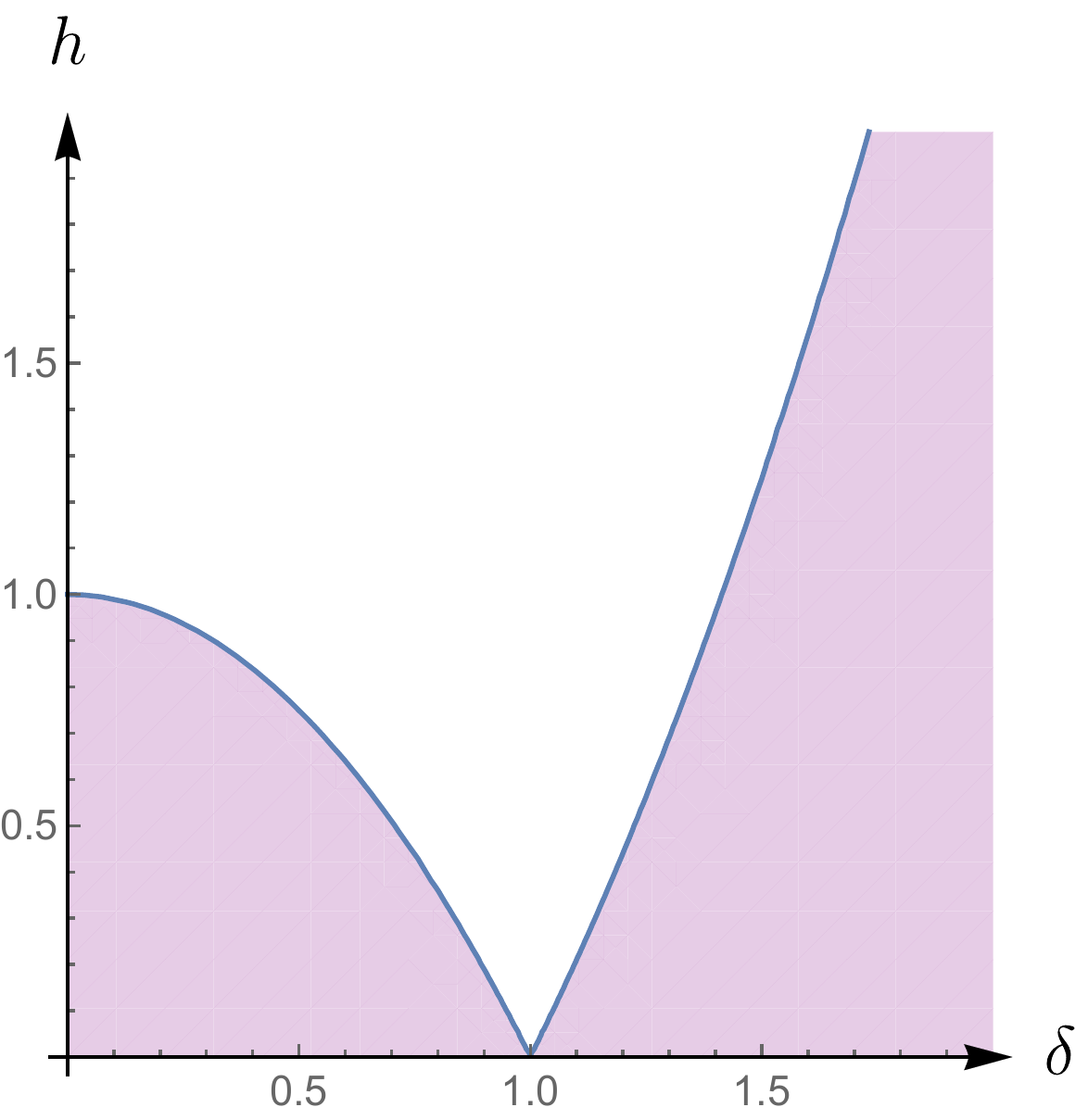}
\caption{Phase diagram of the boundary driven XY model.
$h<h_{c}:=|1-\delta^{2}|$ the chain exhibits long-range magnetic
correlations (LRMC). For $h>h_{c}$ and along the lines $h=0$ and
$\delta=0$ the model shows short-range magnetic correlations (SRMC).
The qualitative features of the phase diagram do not depend on the
values of the environmental parameters $\kappa_{L}^{\pm}$ and
$\kappa_{R}^{\pm}$.}\label{fig:BDXYpd}
\end{center}
\end{figure}
 The system experiences different phases as the anisotropy $\delta$ and magnetic field $h$ are varied. For $h<h_{c}:=|1-\delta^{2}|$ the chain exhibits long-range magnetic correlations (LRMC) and high sensitivity to external parameter variations~(see Fig~\ref{fig:BDXYpd}). For $h>h_{c}$ and along the lines $h=0$ and $\delta=0$ the model shows short-range magnetic correlations (SRMC), with correlation function $C_{jk}:=\langle\sigma^{z}_{j}\sigma^{z}_{k}\rangle - \langle\sigma^{z}_{j}\rangle\langle\sigma^{z}_{k}\rangle$ exponentially decaying: $C_{jk}\propto\exp{-|j-k|/\xi}$, with $\xi^{-1}\simeq 4\sqrt{2(h-h_{c})/h_{c}}$. In both long and short range phases, the dissipative gap closes as $\Delta=\mathcal{O}(n^{-3})$ in the thermodynamic limit $n\to\infty$. The critical line $h=h_{c}$, is characterised by power-law decaying correlations $C_{jk}\propto|j-k|^{-4}$, and $\Delta=\mathcal{O}(n^{-5})$. Therefore, the scaling law of $\Delta$ cannot distinguish long and short range phases, and can only detect the actual critical line $h=h_{c}$. Likewise, $\Delta$ does not identify the transitions from the LRMC phase to the $\delta=0$ and $h=0$ lines.\\\indent
In Table~\ref{table}, the scaling laws of $|\mathcal{U}|$,
$||g||_{\infty}$, $\Det{(g)}$ and $\Delta$ are compared in each
region of the phase diagram. Fig.~\ref{gDBXY} and Fig.~\ref{UDBXY}
clearly show that both $||g||_{\infty}$, and $|\mathcal{U}_{\delta
h}|$ map faithfully the phase diagram of Fig.~\ref{fig:BDXYpd}.
\begin{figure}
\begin{center}
    \includegraphics[width=0.7\linewidth]{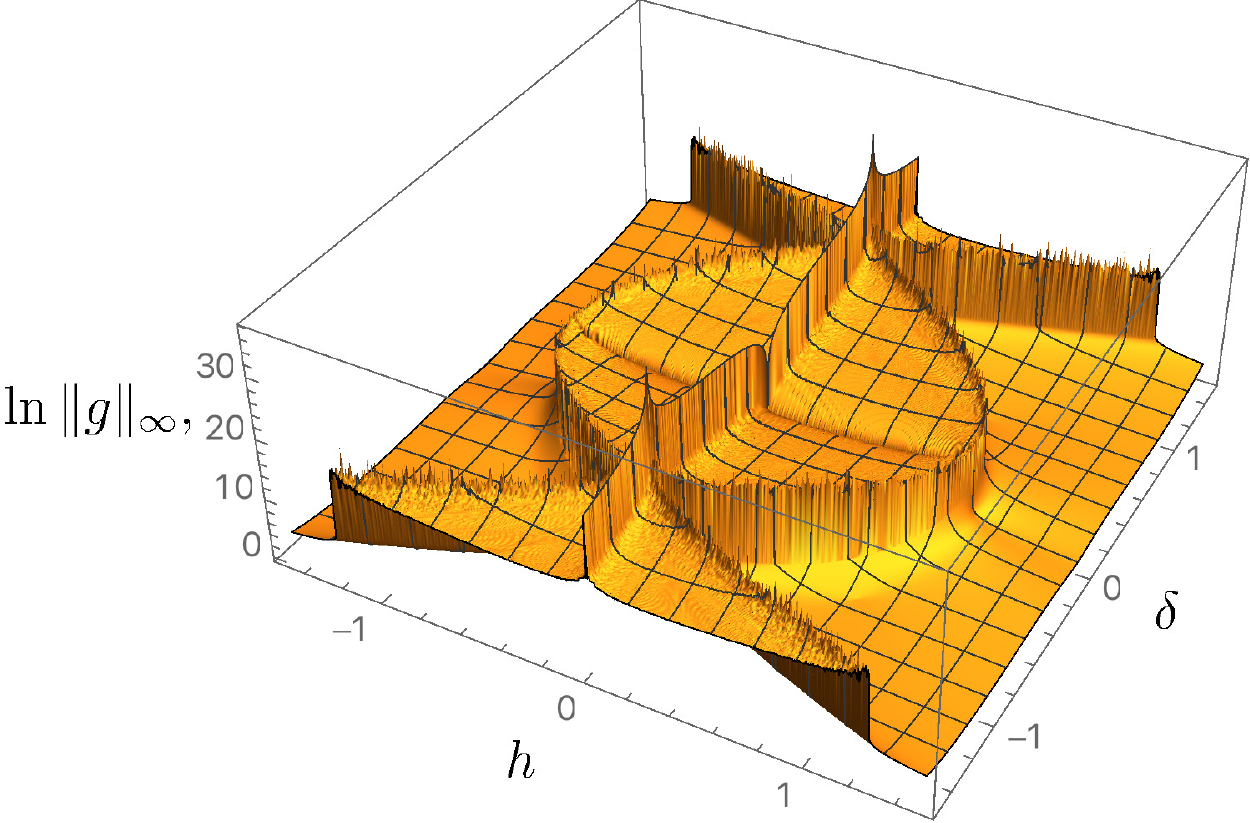}
    \caption{The largest eigenvalue of the Bures metric $||g||_{\infty}$ for the boudary driven XY model, for $n=300$. The qualitative behaviour of the metric maps the phase diagram quite faithfully. It is evident a larger value of $||g||_{\infty}$ close to the phase transitions $h=h_{c}:=|1-\delta^{2}|$ between LRMC and short range phases. $\kappa_{L}^{+}=0.3$, $\kappa_{L}^{-}=0.5$,$\kappa_{R}^{+}=0.1$, $\kappa_{R}^{-}=0.5$. The qualitative features remains unchanged for different values of $\kappa_{L,R}^{\pm}.$}\label{gDBXY}
\end{center}
\end{figure}
\begin{figure}
\begin{center}
    \includegraphics[width=0.7\linewidth]{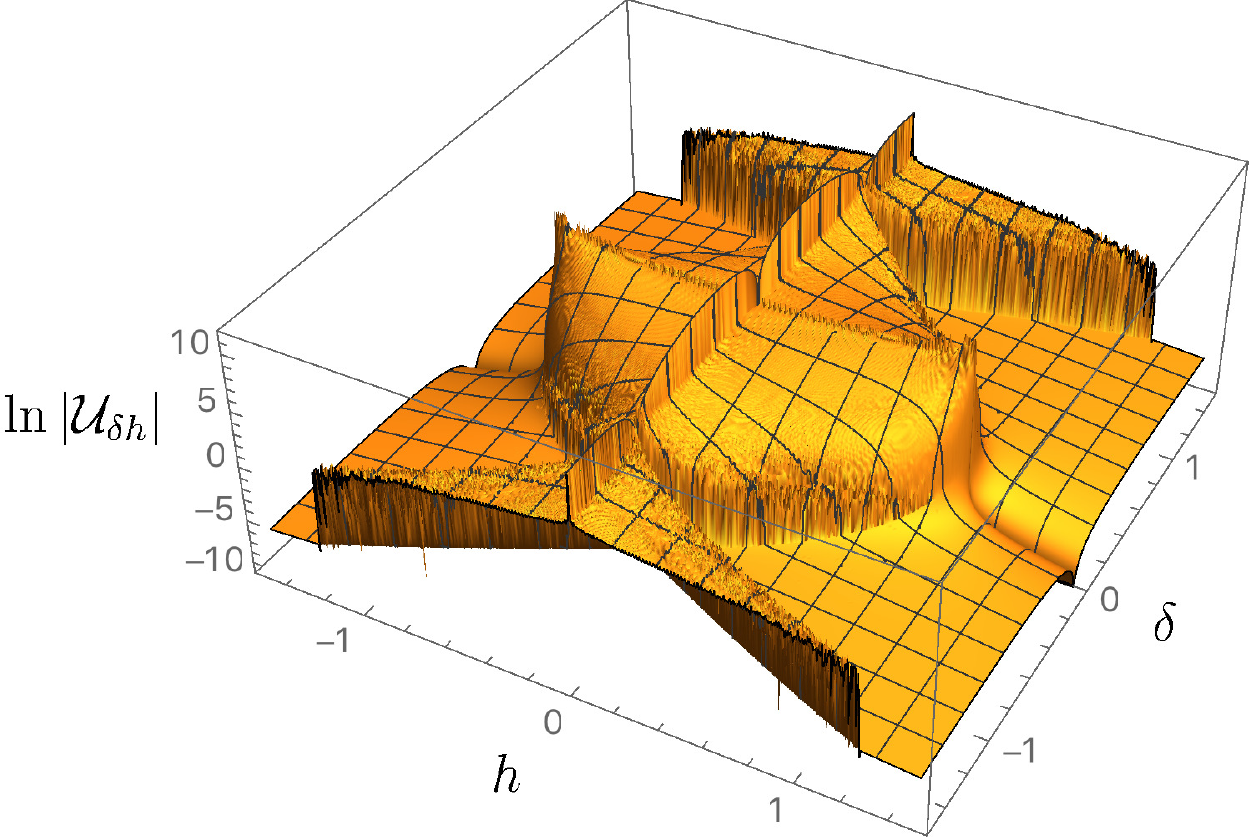}
    \caption{The MUC $|\mathcal{U}_{\delta h}|$ for the boudary driven XY model, for $n=300$. Here the parameters are the same as in figure~\ref{gDBXY}. As in the case of the metric, also the qualitative behaviour of MUC maps quite well the phase diagram. The striking difference with figure~\ref{gDBXY} is the nature of the discontinuity accross the critical line $h=h_{c}:=|1-\delta^{2}|$, which still signals the transitions between LRMC and short range phases. Here the lack of a greater divergence of the MUC at the critical line is a manifestation of the classical nature in the fluctuations driving the NESS-QPTs. }\label{UDBXY}
\end{center}
\end{figure}
\begin{figure}
\begin{center}
\includegraphics[width=0.7\linewidth]{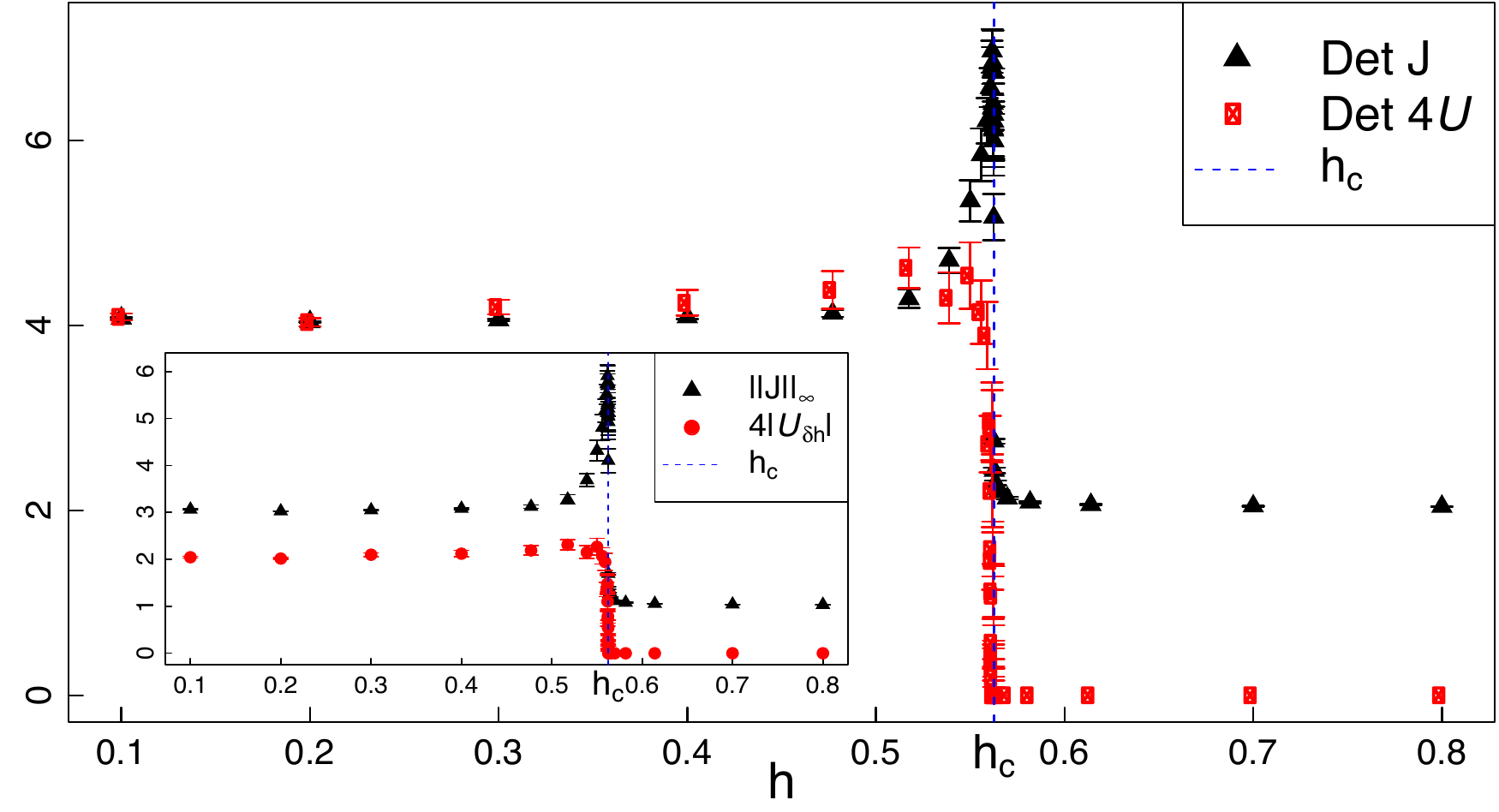}
\caption{Boundary driven XY model. Scaling laws of the determinants
(main) and maximal eigenvalues (inset) of the Fisher information
matrix $J$ and mean Uhlmann curvature $\mathcal{U}$ for different
values of $h$, with $\delta=1.25$ and $h_{c}=|1-\delta^{2}|$. The
laws do not depend on the particular values of the
$\kappa^{\pm}_{R,L}$. The scalings are the results of fits on
numerical data, with size ranging in
$n\in[20,2000]$.}\label{fig:Scaling}
\end{center}
\end{figure}
The results of table~\ref{table} show that the Liouvillian gap, the
metric and the MUC encode different information. Indeed, unlike the
Hamiltonian gap ruling the ground-state QPT, the Liouvillian gap
$\Delta$ closes for $n\to\infty$ at the critical point and for
$h\neq h_c$, both in the LRMC and SRMC phase.
\begin{table}[htp]
\begin{center}
\begin{tabular}{lcccccc}
\hline
Phase&Parameters&$\Delta$  & $||g||_{\infty}$&\Det{\,g }&$|\mathcal{U}_{\delta h}|$& R\\
\hline
Critical &$ h=0$ & $n^{-3}$ & $n^{6}$ & $n^{7}$&$n^{3}$ &$n^{-1}$ \\
Long range&$0<|h|<h_{c}$& $n^{-3}$& $n^{3}$&$n^{4}$ &$n^{2}$& $n^{0}$\\
Critical &$h\simeq h_{c}$& $n^{-5}$& $n^{6}$&$n^{7}$ &$n^{0}$&$n^{-7}$\\
Short range&$h > h_{c}$& $n^{-3}$& $n$&$n^{2}$ &$n^{0}$&$n^{-2}$\\
Critical&$\delta=0,|h|<h_{c}$& $n^{-3}$& $n^{2}$&$n^{8}$ &$n^{3}$&$n^{-2}$\\
\hline
\end{tabular}
\end{center}
\caption{Here we show a comparison between the scaling laws for: the
dissipative gap $\Delta$~\cite{Prosen2008}, the largest eigenvalue
$||g||_{\infty}$ of the metric~\cite{Banchi2014}, the determinant of
$g$ and the largest eigenvalue
$||\mathcal{U}||_{\infty}=|\mathcal{U}_{\delta
h}|=\sqrt{\Det\,\mathcal{U}}$ of the MUC, and the of ratio
$R:=|\Det{2 \mathcal{ U}}|/\Det{J}\propto \Det{2 \mathcal{
U}}|/\Det{g}$  for each phase of the boundary driven XY
model~\cite{Prosen2008}. The ratio $R\le 1$ when $R\sim n^{0}$ marks
the condition of maximal asymptotic incompatibility~\cite{Carollo2019} introduced in
section~\ref{sec:Fisher}.}\label{table}
\end{table}
As the reservoirs act only at the boundaries of the spin chain the
eigenvalues $x_k$ of the matrix $X$ for $n\gg 1$ are a small
perturbation of the $n{\to}\infty$ case, where
$x_k=\pm4i\varepsilon_k$, being 
\begin{equation} \varepsilon_k = \sqrt{(\cos k-h)^2+\gamma^2\sin^2k}, 
\end{equation}
the quasi-particle energy dispersion
relation of the Hamiltonian \eqref{eq:HXYdiss}. In particular $x_k$
gains a small real part and one finds a gap $\Delta =O(n^{-3})$ for
$h\neq h_c$ and $\Delta = O(n^{-5})$ for $h=h_c$. Therefore the
scaling of the Liouvillian gap allows one to identify the transitions
from the SRMC phase to the LRMC phase only along the critical line
$h=h_c$, while the transitions occurring  at the $h=0$ (or
$\gamma=0$) line can only be appreciated by evaluating the
long-rangeness of the magnetic correlations. The question that
naturally arises is how the different phases and transitions can be
precisely characterised in a way similar to what happens for ground
state quantum phase transitions.
%, and in particular whether the Liouvillian gap and its scaling behaviour,
%can be considered as a good indicator of the presence of QPTs  in the dissipative scenario.
%The closing of the Liouvillian gap appears to be not directly related to the appearance of quantum critical points, and might not be the most relevantquantity for characterising NESS-QPTs.  Indeed, unlike the Hamiltonian gap ruling ground state QPT, the Liouvilliangap $\Delta$ closes for $n\to\infty$, although with different exponent, both around the critical point and for $h\neq h_c$, both in the long-range and short-range regions.
This question becomes more compelling if one compares the above
results with the scaling of the Bures metric  $g_{\mu\nu}$ and mean
Uhlmann curvature $\mathcal{U}$ (more precisely their largest
eigenvalue $\|g\|_{\infty}$ and $||\mathcal{U}||_{\infty}=|U_{\delta
h}|$\footnote{$\mathcal{U}$ is an antisymmetric $2\times 2$ matrix
for this two-parameter model. Therefore, it only has two opposite
eigenvalues $\pm|U_{\delta h}|$.}) for specific values of the parameters, see Table~\ref{table},
Fig.~\ref{gDBXY} and Fig.~\ref{UDBXY}.
%Motivated by success of the fidelity approach in describing different kinds of phase transitions, we have applied Theorem~\ref{fidelity} to NESS-QPTs  and
% we have performed a scaling analysis of the largest eigenvalue (called $|g|$) of the geometric tensor $g_{\mu\nu}$ Eq.~\eqref{e.buresbasis}, and of the particular element $g_{hh}$; the results, obtained for different values of the Hamiltonian and Lindblad parameters, are summarized in Table~\ref{t.scaling}.

A first important result is that the geometric properties $g$ and
$\mathcal{U}$ are able to identify the transitions between SRMC and
LRMC phases.
%and to make a clear distinction between different type of transitions.
On the "transitions lines" $h=0$ and $h=h_c$ one has that
$\|g\|_{\infty}\sim\mathcal O(n^6)$, while in the rest of the phase
diagram $\|g\|_{infty}<\mathcal O(n^6)$. Furthermore, a closer
inspection of the elements of $g$ shows that, while $g_{hh}(h=0,
\gamma) = \mathcal O(n^6)$,  one has that
$g_{\gamma\gamma}(h=0,\gamma) = \mathcal O(n)$: the scaling is
superextensive only if one moves away from the line $h=0$ ($g_{hh}$)
and enters in the LRMC phase, while, if one moves along the $h=0$
line ($g_{\gamma\gamma}$), i.e. if one remains in the SRMC phase,
the scaling is simply extensive and it matches the scaling displayed
in the other SRMC phase $h > h_c$. On the other hand, the transitions
occurring at $\gamma=0$
has a different scaling: %when moving toward the SMRC one has that
$g_{\gamma\gamma} = \mathcal O(n^2)$ while $g_{hh}\approx 0$.

Another important result shown in Table~\ref{table} is that both the
metric tensor and the MUC are able to signal the presence of
long-range correlations: within the LRMC phase $g_{\mu\nu}$ scales
superextensively with $\|g\|_{\infty}\sim\mathcal O(n^3)$, and
$|\mathcal{U}_{h\delta}|\sim\mathcal O(n^2)$. This
super-extensive behaviour is different from that displayed at the
transitions lines.
%Indeed, the presence of long-range correlations make the steady state very sensitive to small perturbations of the external parameters \cite{zunkovic2010explicit,prosen2008quantum}, and therefore NESS corresponding to infinitesimally different values are likely to be highly statistically distinguishable, their difference is therefore captured by the fidelity, and this allows for a clear distinction between the SRMC and LRMC phases.
Thus, differently from $\Delta$, the MUC discriminates these phases,
with no need of crossing the critical line $h=h_{c}$.

However, a more compelling result concerns the quality of the phase
transitions in each region. As discussed in
section~\ref{sec:quant}, the MUC marks the role played by the
quantum nature of the model in the parameter estimation problem. In
other words, it signals the ``quantumness '' of the underlying
physical model.
Table~\ref{table} displays the scaling law of the ratio $R=\sqrt{|\Det{2 \mathcal{ U}}|/\Det{J}}$ in different regions of the phase diagram. In particular, its asymptotic behaviour provides insight into the character of fluctuations which drive the NESS-QPTs.  Indeed, in the limit of $R\sim const$ this ratio signals a condition of maximal asymptotic incompatibility, in which the role of the quantum fluctuations in the criticality cannot be neglected in the thermodynamic limit.\\
Fig.~\ref{fig:Scaling} shows that, in the LRMC phase, the scaling law
of the MUC saturates the upper bound~(\ref{UBound}), in contrast to
the short range phase. This shows the striking different nature of
the two phases. In the LRMC region, the system behaves as an
inherently two-parameter quantum estimation model, where the
parameter incompatibility cannot be neglected even in the
thermodynamic limit. On the short-range phase, instead, the system
is asymptotically quasi-classical. The critical line $\delta=0$
(with $|h|\le h_{c}$) and the critical line $h=0$, which mark
regions of short range correlations embedded in a LRMC phase, show a
MUC which grows super-extensively, with scaling
$\mathcal{O}(n^{3})$, and a nearly saturated
inequality~(\ref{UBound}). In the critical line $h\simeq h_{c}$,
despite the spectacular divergence of
$||g||_{\infty}\simeq\mathcal{O}(n^{6})$, the scaling law of
$|\mathcal{U}_{\delta h}|$ drops to a constant, revealing an
asymptotic quasi-classical behaviour of the model at the phase
transitions.

\section{Conclusions}
Condensed matter physics is a rich framework where a variety of
interesting phenomena arise in association with geometrical
properties of the interactions. Topological behaviour of quantum
interactions are particularly evident for those systems near
QPTs~\cite{Thouless98,Sachdev2011}. It is a well known fact that
quantum criticalities are accompanied by a qualitative change in the
nature of correlations in the ground state of a quantum system, and
describing these changes is clearly one of the major interests in
condensed matter physics. Typical examples are metal-insulator
transitions, or paramagnetic-ferromagnetic transitions for spin
chains, where the two phases are associated with distinct local vs.
global properties of the quantum state.

It is, therefore, expected that such drastic changes in the
properties of the ground states are reflected in the geometry of the
Hilbert space explored by the system across the criticality.
Geometric phase is known to be a signature of the curvature, and in
general of the geometry of the state manifold, and therefore it is a
useful means to investigate the properties of systems near QPTs.  The
heuristic explanation for the non-trivial behaviour of Berry
curvature in the proximity of criticality relies on the idea of level
crossings, occurring at the thermodynamic limit, which involve
ground state and low lying part of the energy spectrum. Level
crossings can be identified as the origin of the curvature in the
phase space manifold, pretty much the same way energy singularity
bends the geometry of the space-time, or more like a Dirac monopole
twists the topology of the field configuration around
it~\cite{Nakahara1990}. Driving the system close to or around these
singularities results in dramatic effects on the state geometry,
picked up by its kinematics in the form of geometric phase
instabilities~\cite{Carollo2005,Hamma2006}. Whether this intuition
may be adapted to an entirely different type of quantum many-body
phenomena is an absolutely non-trivial question.

Novel type of non-equilibrium phase transitions have recently
emerged in quantum mechanical systems, as phenomena that may
underpin new forms of criticality, departing significantly from
transitions that are observed in the equilibrium settings. A
particularly intriguing type of non-equilibrium critical phenomena
arise in the context of open quantum systems, where the
non-equilibrium character is induced by coupling between system and
several external reservoirs. Theoretical investigation of open
quantum systems is ultimately motivated by the inherently open
nature of several current experimental platforms, which are typically
subject to external drive, dissipation and dephasing.

The occurrence of quantum phase transitions in non-equilibrium
steady states, which are the results of complex many-body
dissipative evolutions, is far from being understood. Harnessing the
investigation of such an uncharted scenario with a powerful new set
of tools is certainly desirable. Geometric properties has proven
successful in unravelling general quantitative and qualitative
informations in equilibrium criticalities. It is therefore expected
that they may glean new insights is these novel scenario, providing
a comprehensive framework for their understanding.

The crucial focus of this review was indeed to present the general
framework of geometric methods and to demonstrate the applicability
of these ideas to the entirely novel category of non-equilibrium
phase transitions. What it may seem at first glance a pedantic
application of known concepts to yet another instance of critical
phenomena, is instead quite a major shift of paradigm. It is no
coincidence that nearly 10 years after their introduction, models
such as the one proposed in Ref.~\cite{Prosen2008} are still subject
of active investigations through well established tools of quantum
information and information geometry. Yet, a full understanding of
their main characteristics is still lacking. Most of all, what is
missing is an intuitive comprehensive framework, within which
comparing the non-equilibrium-QPTs  with what is known from equilibrium
phase transitions.

Equilibrium phase transitions fall invariably into two markedly
non-overlapping categories: classical phase transitions and quantum
phase transitions. NESS-QPTs  offer a unique arena where such a
distinction indeed fades off. The coexistence between quantum and
classical fluctuations in these models may vaguely be reminiscent of
what happens at quantum to classical crossovers in equilibrium
phenomena, with a major striking difference: the remarkably sharp
character of truly critical phenomena.

Among other things, we described a recently developed approach which allows one
to quantitatively assess the ``quantum-ness'' of a critical
phenomena. This method resorts to ideas borrowed from quantum
estimation theory, which endow the geometric phase approach with an
operationally well defined character. This approach brings insight
into the interplay between quantum and classical fluctuation in
critical phenomena. Quadratic Fermionic Liouvillian models are
perhaps the simplest significant example where this interaction
plays a non-trivial role, where one finds quite unusual interplay
between  markedly classical and quantum features associated to the
same phase transitions.

A source of confusion within the class of dissipative NESS-QPTs  is
that the concept of criticality has indeed a variety of inequivalent
definitions. Already, in the physically relevant subclass of
quadratic quasi-free Fermionic models, there are two non-equivalent
widely used definitions in literature. They rely on the idea of
diverging correlation length and critical slow down, respectively.
While in the usual setting of equilibrium QPTs  these two definitions
generally coincide, in NESS-QPTs  this is not quite the case. The
first definition introduced by Eisert and Prosen
in~\cite{Eisert2010} is the one mostly adopted in this review. The
second, inequivalent, one is used for example in a series of works
related to dissipatively induced topological order
(e.g.~\cite{Bardyn2013}). One can prove analytically that a singular
geometric phase curvature is an unequivocal signature of a critical
behaviour associated to a diverging correlation length. In
retrospect, it might not come as a surprise that such a connection
exists, as that is indeed what the intuition built up from the
experience on equilibrium phenomena would suggest.

However, a similarly heuristic argument should point towards an
analogous conclusion in the case of criticality defined by a closing
dissipative gap. After all, this is the NESS-QPTs  analogue of a
vanishing Hamiltonian gap. One should legitimately expect that a
closing dissipative gap, i.e. a critical slowing down, should result
in a singular behaviour of the mean Uhlmann curvature. That this is
a reasonable guess is further suggested by several studies on
dissipative topological phase transitions~\cite{Bardyn2013}, where
topologically inequivalent regions of the phase diagram are
separated by critical points with vanishing dissipative gap, which
are not necessarily accompanied by a diverging correlation length in
the bulk.  However, one can analytically demonstrate that this
second heuristic argument does not hold, showing that the Uhlmann
curvature is sensitive to the criticality, but only in the sense of
a truly diverging correlation length. In this sense, the Uhlmann
curvature confirms its role as a witness of the purely kinematic
aspects of the criticality, and it is only indirectly affected by
the dynamical features of the NESS-QPTs. 

Although the main focus of the review is on the physically relevant
class of Fermionic quadratic models, this is by no means the only
context in which this idea is applicable. This approach, for its
generality, immediately extends to any equilibrium and
non-equilibrium QPT, with and without order parameters, with or
without symmetry breaking, including non-equilibrium dynamical
phase-transitions, topological dissipative phase transitions,
and cluster state phase transitions. Moreover, this idea is a promising
tool which may glean insight on the interplay between competing
orders both in equilibrium and non-equilibrium QPTs.  It is hard to
grasp the full extent of the implications that such a general
approach may provide.

Going beyond the geometrical aspect mentioned in this review, the
mean Uhlmann curvature and the Uhlmann geometric phase in general
offer the possibility of studying topological structures on the
manifold of density matrices. One can indeed formulate, under
suitable assumption, a topologically nontrivial gauge structure
based on the notion of mean Uhlmann curvature. In this framework,
topological invariants that are protected by suitable symmetries may
be identified for mixed states. One may define class of
topologically inequivalent mappings from a parameter space into the
density matrices which can be continuously deformed into each other
only if the underlying symmetry assumptions are violated.

In a lattice translation-invariant system, one may think of
identifying the parameter space with the Brillouin zone, thereby
providing a possible way of generalising topological band structure
invariants to the domain of mixed states. The possible applications
of this conceptual framework is to obtain a topologically nontrivial
Chern insulator or topological superconductors as a steady state of
a non-equilibrium open quantum system.

Topological invariants, that may be constructed through the mean
Uhlmann curvature, are in principle experimentally accessible via
state tomography. However, a possible route of investigation would
be to be able to construct a relation to natural observables such as
response functions. A prototypical quantity to look at is the
quantised Hall conductance. However, the formulation of a mixed
state quantity, which under statistical mixture does not cause
deviations from an integer quantisation of the Hall conductance, is
quite an open challenge.

\vspace{6pt}

\appendix

\section{Fermionic Gaussian States}\label{app:FGS}
We review here the main properties of Fermionic Gaussian States.
Let's consider a system of $n$ Fermionic particles described by
creation and annihilation operators $c_{j}^{\dag}$ and $c_{j}$.
These operators obey the canonical anticommutation relations, 
\begin{equation}
 \{c_{j},c_{k}\}=0\qquad \{c_{j},c^{\dag}_{k}\}=\delta_{jk}\,. 
\end{equation}
 Let's define the Hermitian Majorana operators as 
\begin{equation}
 \omega_{2j-1}:=c_{j}+c^{\dagger}_{j}\,,\qquad
\omega_{2j}:=i(c_{j}-c_{j}^{\dagger})\,, \end{equation} which are generators of
a Clifford algebra, and satisfy the following anti-commutation
relations \begin{equation} \{\omega_{j},\omega_{k}\}=2\delta_{jk}\,. 
\end{equation}
 Fermionic Gaussian states are defined as states that can be
expressed as
 \begin{equation}\label{GS}
\rho=\frac{e^{-\frac{i}{4}\bm{\omega}^{T} \Omega
\bm{\omega}}}{Z}\,,\qquad Z:=\Tr[ e^{-\frac{i}{4}\bm{\omega}^{T}
\Omega \bm{\omega}}]
 \end{equation}
 where $\Omega$ is a $2n\times 2n$ real antisymmetric matrix.
As any antisymmetric real matrix, $\Omega$ can be cast in the
following canonical form by an orthogonal matrix $Q$, i.e.
\begin{equation} \begin{aligned}
\Omega &= Q^T\, \bigoplus_{k=1}^n
\begin{pmatrix} 0&\Omega_k\\-\Omega_k&0 \end{pmatrix}
  \, Q & Q^T=Q^{-1}~,
  \label{e.Q}
\end{aligned} \end{equation} 
where $\pm i\Omega_k$ are $\Omega$'s  eigenvalues. Let 
\begin{equation}
 \bm{z}=(z_{1},\dots,z_{2n})^{T}:=Q\bm{\omega} \end{equation} be the vector of
Majorana Fermions in the eigenmode representation. Hence,
\begin{equation} \begin{aligned}
\rho &= \frac1Z\prod_k \left[\cosh\left(\frac{\Omega_k}2\right) -i
\sinh\left(\frac{\Omega_k}2\right)\,z_{2k-1} z_{2k}\right]~,
\\
Z &= \prod_k 2\cosh\left(\frac{\Omega_k}2\right) \,.
\label{e.gaussianZ}
\end{aligned} \end{equation} 

 Gaussian states are completely specified by the two-point correlation matrix

\begin{equation}
  \Gamma_{jk}:=1/2\Tr{(\rho[\omega_{j},\omega_{k}])}\,, \qquad \Gamma=\Gamma^{\dagger}=-\Gamma^{T}\,,
\end{equation}
  which is an imaginary antisymmetric matrix. As

\begin{equation}
  \Gamma_{jk}=\frac{2i}Z\frac{\partial Z}{\partial \Omega_{jk}},
\end{equation}
  one can show that
\begin{align}
\Gamma = \tanh\left(i\frac{\Omega}2\right)~. \label{e.corrG}
\end{align}
The correlation matrix is diagonal in the same basis of $\Omega$ and
its eigenvalues read $\gamma_k = \tanh(\Omega_k/2)$. Hence
\begin{align}
  \rho &= \prod_k \frac{1-i \gamma_k\,z_{2k-1} z_{2k}}2~,
  \label{e.gaussianC}
\end{align}
where $|\gamma_k|\le 1$. Hence the Gaussian Fermionic state can be
factorised into a tensor product $\rho=\bigotimes_{k}\rho_{k}$ of
density matrices of the eigenmodes $\rho_{k}:=\frac{1-i
\gamma_k\,z_{2k-1} z_{2k}}2$. Note that for $\gamma_k=\pm 1$, one
has $\Omega_k = \pm\infty$, making the definition~(\ref{GS}) of
Gaussian state not well defined, unlike Eq.~\eqref{e.gaussianC}.
Indeed, the latter offers an appropriate parameterisation even
in those extremal points. Notice that $|\gamma_{k}|=1$ corresponds
to a Fermionic mode $\tilde{c}_{k}=1/2(z_{2k-1}+z_{2k})$ being in a
pure state, as  it is clear from the following explicit expression
for the purity of the states $\rho_{k}$:
\begin{equation} \begin{aligned}
  \Tr[\rho_{k}^2] =
  \frac{\det\left[2\cosh\left(\,\Omega_{k}\right)\right]^{\frac12}}{
  \det\left[2\cosh\left(\frac{\Omega_{k}}2\right)\right]} ~.
  \label{e.purity}
\end{aligned} \end{equation} 

\begin{equation} \begin{aligned}
  \Tr[\rho^2] =
  \frac{\det\left[2\cosh\left(i\,\Omega\right)\right]^{\frac12}}{
  \det\left[2\cosh\left(i\frac{\Omega}2\right)\right]} =
  \sqrt{\det\left(\frac{1+\Gamma^2}2\right)}~.
  \label{e.purity}
\end{aligned} \end{equation} 

\subsection{Symmetric Logarithmic derivative for Fermionic Gaussian States}\label{app:FDS}
We will review here useful expressions, adapted from
Ref.~\cite{Jiang2014}, which are instrumental for the derivation
of the symmetric logarithmic derivative of density matrices in the
exponential form 
\begin{equation}\label{eq:exp} \rho=e^{D(\bm{\lambda})}. 
\end{equation}
 Clearly, a Gaussian Fermionic state can be expressed in the
exponential form~(\ref{eq:exp}) by identifying 
\begin{equation}
 D=-\frac{i}{4}\bm{w}^{\dagger}\cdot \Omega\cdot\bm{w} - \one \ln{Z}.
\end{equation} Notice, that the above parameterisation is well defined in the
case of full-rank density matrices. As usual, the case of extremal
conditions $|\gamma_{k}|=1$, where $\gamma_{k}$ is an eigenvalue of the
correlation function, should be carried out as a limiting procedure.

The starting point is Eq.~(2.1) of
Ref.~\cite{Wilcox1967} for the derivative of density operators
\begin{equation} \begin{aligned}\label{eq:wilcox}
 \p_\mu  \rho = \int_0^1 e^{sD}\,\p_\mu  D\, e^{(1-s)D}\,d s\;.
\end{aligned} \end{equation} 
One can use the nested-commutator relation
\begin{equation} \begin{aligned}
e^D A e^{-D}&=A+[D,A]+\frac{1}{2!}\,\big[D,[D,A]\,\big]+\cdots\nonumber\\
&=\sum_{n=0}^\infty\frac{1}{n!}\,\mathcal{C}^n(A)=e^{\mathcal{C}}(A)\;,
\end{aligned} \end{equation} 
where $\mathcal{C}^n(A)$, a linear operation on $A$, denotes the
$n$th-order nested commutator
$\commutb{D}{\nsp\ldots\,,\commut{D}{A}}$, with
$\mathcal{C}^0(A)=A$.  Applying this relation to the
expression~(\ref{eq:wilcox}) leads to
\begin{equation} \begin{aligned}\label{eq:rhodot}
\begin{split}
 \p_\mu  \rho \rho^{-1}
 &= \p_\mu  D +\frac{1}{2!}\,\commut{D}{\p_\mu  D}+\frac{1}{3!}\,\commutb{D}{\commut{D}
 {\p_\mu D}}+\cdots\\
 &= \sum_{n=0}^\infty \frac{1}{(n+1)!}\, \mathcal{C}^n(\p_\mu  D)
 = h(\mathcal{C})(\p_\mu D) \;,
\end{split}
\end{aligned} \end{equation} 
where $h$ is the generating function of the expansion coefficients
in Eq.~(\ref{eq:rhodot}),
\begin{equation} \begin{aligned}\label{eq:generating_g}
 h(t) = 1+\frac{t}{2!}+\frac{t^2}{3!}+\cdots = \frac{e^t - 1}{t}\;.
\end{aligned} \end{equation} 

Using the definition of symmetric logarithmic derivative, i.e. 
\begin{equation}
 \p_{\mu} \rho = \frac{1}{2}\left( L_{\mu}\rho + \rho
L_{\mu}\right)~, \end{equation} and that of density matrix in exponential
form~(\ref{eq:exp}), one gets
\begin{equation} \begin{aligned}\label{eq:left_derivative_b}
\begin{split}
\p_\mu  \rho\, \rho^{-1}  &= \frac{1}{2}\big(L + e^D L e^{-D}\big)\\[2pt]
&= \frac{1}{2} \bigg(L + \sum_{n=0}^\infty \frac{1}{n!}\,
\mathcal{C}^n(L) \bigg) = r(\mathcal{C})(L)\;,
\end{split}
\end{aligned} \end{equation} 
where the generating function is $r(t)=(e^t+1)/2$.  Suppose that the
SLD adopts the form,
\begin{equation} \begin{aligned}\label{eq:expansion_L_a}
L_{\mu} &= \sum_{n=0}^\infty f_n\, \mathcal{C}^n(\p_\mu  D)
=f(\mathcal{C})(\p_\mu D)\;,
\end{aligned} \end{equation} 
with the generating function
\begin{equation} \begin{aligned}\label{eq:f_expansion_a}
f(t) = f_0+f_1 t+ f_2 t^2+\cdots
\end{aligned} \end{equation} 
to be determined. Plugging Eq.~(\ref{eq:expansion_L_a}) into
Eq.~(\ref{eq:left_derivative_b}) yields
\begin{equation} \begin{aligned}\label{eq:combining_expansions}
 \p_\mu  \rho\, \rho^{-1}= r(\mathcal{C})\bigl[f(\mathcal{C})(\p_\mu D)\bigr] =r\circ f(\mathcal{C})(\p_\mu D)=r \cdot f(\mathcal{C})(\p_\mu D)~,
\end{aligned} \end{equation} 
where the identity $r\circ f = r\cdot f $ between the combination
and the simple product of the two  functions arises from
$\mathcal{C}^n(\mathcal{C}^m(A))=\mathcal{C}^{n+m}(A)$. Comparing
Eq.~(\ref{eq:combining_expansions}) with Eq.~(\ref{eq:rhodot}) leads
to the following relation between generating functions,
\begin{equation} \begin{aligned}\label{eq:f_expansion_b}
 f(t)= \frac{h(t)}{r(t)}=\frac{\tanh(t/2)}{t/2}
 = \sum_{n=0}^\infty \frac{4\, (4^{n+1}-1) B_{2n+2}}{(2n+2)!}\,t^{2n}\;,
\end{aligned} \end{equation} 
where $B_{2n+2}$ is the $(2n+2)$th Bernoulli number.  Comparing
Eqs.~(\ref{eq:f_expansion_a}) with (\ref{eq:f_expansion_b}), we have
\begin{equation} \begin{aligned}
 f_n =
\begin{cases}
 \displaystyle{\frac{4\, (4^{n/2+1}-1) B_{n+2} }{(n+2)!}}\,,&\mbox{for even $n$}\,,\\[3pt]
  0\,,&\mbox{for odd $n$}\,.
\end{cases}
\end{aligned} \end{equation} 
The vanishing of the odd-order $f_n$'s is a consequence of the
Hermiticity of $L$, which makes $f(t)$ an even function.

For a Gaussian Fermionic state the operator $D$ in terms of the
Majorana Fermions of the eigenmodes is written 
\begin{equation}
 D=-\frac{i}{4}\sum_{k} \Omega_{k}[z_{2k-1}, z_{2k}] - \one
\ln{Z}=\sum_{k} \Omega_{k}\Big(\tilde c_{k}^{\dagger} \tilde
c_{k}-\frac12\Big) - \one \ln{Z}, 
\end{equation} 
where  $\tilde
c_{k}:=\frac{1}{2}(z_{2k-1}+iz_{2k})$, $\tilde
c_{k}^{\dagger}:=\frac{1}{2}(z_{2k-1}-iz_{2k})$ are the ordinary
annihilation and creation operators of the eigenmodes. It is
straightforward to derive the commutation relations between $D$ and
Fermionic operators,
\begin{equation} \begin{aligned}\label{eq:commutation_relations}
 \commutb{D}{\tilde c_k} =- \Omega_k \tilde c_k\,,\quad \commutb{D}{\tilde c_k^\dagger} = \ssp \Omega_k \tilde c_k^\dagger\;,
\end{aligned} \end{equation} 
and for quadratic operators, we get also
\begin{equation} \begin{aligned}
\commutb{D}{\tilde c_j^{\dagger}\tilde c_k} =(\Omega_j - \Omega_k)
\tilde c_j^{\dagger}\ \tilde c_k,\quad \commutb{D}{\tilde c_j \tilde
c_k} =(\Omega_j - \Omega_k) \tilde c_j\ \tilde c_k.\end{aligned} \end{equation} 
Consequently, one finds
\begin{equation} \begin{aligned}
 f(\mathcal C)\ssp(\tilde c_k^\dagger \tilde c_k) = f(\Omega_k-\Omega_k)\ssp \tilde c_k^\dagger \tilde c_k\;,\\[2pt]
 f(\mathcal C)\ssp(\tilde c_k \tilde c_k) = f(\Omega_k+\Omega_k)\ssp \tilde c_k \tilde c_k\;.\label{eq:relation_b}
\end{aligned} \end{equation} 
Most generally, the derivative of $D$ takes the form
\begin{equation} \begin{aligned}\label{eq:derivative_G_a}
 \p_{\mu}D=-\frac12\ssp\ssp {\bm c}\, {\p_{\mu}\Omega}^{\prime}\ssp \bm c -\frac{\p_{\mu} Z}{Z}\;,
\end{aligned} \end{equation} 
which plugged into formula~(\ref{eq:expansion_L_a}) shows that
$L_{\mu}$ is at most quadratic in Fermionic operators.

% !TEX root = thesis.tex
\section[Spectral Properties of Quadratic Liouvillians]{Spectral Properties of Quadratic Liouvillians}\label{app:Spec}
We will review the main results on the spectral properties of the 
quadratic Fermionic Liouvillian~(\ref{eq:Lindblad}). Note that the
real matrix $X$ defined in~(\ref{eq:X}) has no general structure
apart from the fact that $X + X^T = 8\Re{M}\ge 0$, where $M$ is the
bath matrix defined in~(\ref{eq:BathM}), whose positive
semi-definiteness  implies $\Re{M} \ge 0$. We will drop here the
assumption made in section~\ref{sec:QuadLind}, about the
diagonalisability of $X$, and will show that the qualitative aspects
of the results derived in section~\ref{sec:GeoNESS} still hold.

 The matrix $X$ can always be put in the Jordan canonical form,
i.e. $X=U\,D_{X}^J\,U^{-1}$ with $D_{X}^J = \oplus_m
J_{\ell_m}(x_m)$,
\begin{align}
J_{\ell_m}(x_m) = \begin{pmatrix}
x_m & 1 & \\
& x_m & 1 & \\
&& x_m & 1 & \\
&&& \ddots & \ddots &
\end{pmatrix},
\end{align}
where $x_m$ are (possibly equal) eigenvalues of $X$ and $\ell_m$ is the
dimension of the Jordan block: each block is composed of $\ell_m$
degenerate eigenvalues of $X$. The form of the transformation
\eqref{e.bogohat} remains the same (although with a new matrix $U$)
while \eqref{e.lindiag} becomes
\begin{equation} \begin{aligned}
\mathcal L = -\sum_{j=1}^{2n} x_j \, \hat{b}_j^\times \hat{b}_j -
\sum_m\sum_{k=1}^{\ell_m-1} \hat{b}^\times_{m_k+1} \hat{b}_{m_k}~,
\label{e.linjor}
\end{aligned} \end{equation} 
where $m_k$ refers to the index of the $k$-th element in the $m$-th
Jordan block. It is clear that the state \eqref{e.superss} is still
a stationary state.

\begin{Lemma}
\label{lemma:stability} Let ${X}$ be a  {\em real} square matrix,
such that ${X} + {X}^T \ge 0$. Then:
\begin{enumerate}
\item[(i)] Any eigenvalue $x_{j}$ of ${X}$ satisfies $\Re x_{j} \ge 0$.
\item[(ii)] For any eigenvalue $x_{j}$ of ${X}$ on the imaginary line, $\Re x_{j}= 0$, its algebraic and geometric multiplicities coincide.
\end{enumerate}
\end{Lemma}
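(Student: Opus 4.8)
The plan is to handle the two parts separately, both times exploiting the single identity that couples the Hermitian form of $X$ to its symmetric part $X+X^T$. For part (i), I would take an eigenpair $Xv = x_j v$ with $v\ne 0$ (allowing $v\in\mathbb{C}^{2n}$) and evaluate $v^\dagger(X+X^T)v$. Since $X$ is real, $X^\dagger = X^T$, so $v^\dagger X^T v = \overline{v^\dagger X v} = \bar{x}_j\|v\|^2$, whence $v^\dagger(X+X^T)v = 2\,\Re\,x_j\,\|v\|^2$. As $X+X^T\ge 0$ is real symmetric, hence positive semidefinite as a complex-Hermitian matrix, the left-hand side is non-negative, and since $\|v\|^2>0$ this forces $\Re\,x_j\ge 0$.

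For part (ii), I would argue by contradiction that no Jordan block of size $\ge 2$ can be attached to an eigenvalue $x_j = i\omega$ on the imaginary axis. Suppose such a block exists; then there is a length-two Jordan chain $v_1,v_2$ with $Xv_1 = x_j v_1$ and $Xv_2 = x_j v_2 + v_1$, with $v_1\ne 0$. The crucial observation is that $v_1^\dagger(X+X^T)v_1 = 2\,\Re\,x_j\,\|v_1\|^2 = 0$; invoking the standard fact that a positive semidefinite $A$ satisfies $u^\dagger A u = 0 \iff Au = 0$ (write $A=B^\dagger B$ so that $u^\dagger A u = \|Bu\|^2$), this yields $(X+X^T)v_1 = 0$, i.e. $X^T v_1 = -x_j v_1$.

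With this relation in hand, I would compute $v_1^\dagger X v_2$ in two ways. On one side, the chain relation gives $v_1^\dagger X v_2 = x_j\,v_1^\dagger v_2 + \|v_1\|^2$. On the other, $v_1^\dagger X v_2 = (X^T v_1)^\dagger v_2 = -\bar{x}_j\,v_1^\dagger v_2$. Equating the two expressions gives $(x_j+\bar{x}_j)\,v_1^\dagger v_2 + \|v_1\|^2 = 0$, and using $x_j+\bar{x}_j = 2\,\Re\,x_j = 0$ collapses the $v_1^\dagger v_2$ term and leaves $\|v_1\|^2 = 0$, contradicting $v_1\ne 0$. Hence every Jordan block of an imaginary eigenvalue is $1\times 1$, so its algebraic and geometric multiplicities coincide.

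The only genuinely delicate point, and the one I would keep honest, is the passage from the scalar identity $v_1^\dagger(X+X^T)v_1 = 0$ to the operator identity $(X+X^T)v_1 = 0$: this is where the positive semidefiniteness is used in full strength rather than merely as a sign condition, and it is what converts a purely imaginary eigenvalue into the strong constraint $X^T v_1 = -x_j v_1$ that drives the contradiction. Everything else is bookkeeping with the reality of $X$ and the Jordan-chain relations. As a consistency check at the end, I would note that this argument reproduces, as the generic special case, the observation used in the main text that for $2\Re\,x_0 = 0$ the relevant $2\times 2$ symbol matrix $\tilde{x}(\phi)$ is diagonalisable.
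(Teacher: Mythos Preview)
Your proof is correct. Part~(i) is essentially identical to the paper's argument. Part~(ii), however, takes a genuinely different route. The paper argues dynamically: it considers the linear ODE $(d/dt)\bm{u}(t)=-X\bm{u}(t)$, observes that $X+X^T\ge 0$ is equivalent to Lyapunov stability $(d/dt)\|\bm{u}\|^2\le 0$, and then notes that a nontrivial Jordan block at a purely imaginary eigenvalue would produce solutions growing like $t^{\ell_m-1}e^{-ibt}$, violating stability. Your argument is instead purely algebraic: from $v_1^\dagger(X+X^T)v_1=0$ you extract $(X+X^T)v_1=0$ via positive semidefiniteness, hence $X^Tv_1=-x_jv_1$, and then a two-line computation of $v_1^\dagger Xv_2$ using the chain relation forces $\|v_1\|^2=0$. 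Your approach is more self-contained and avoids the ODE detour; the paper's approach has the virtue of making contact with the physical notion of stability that underlies the Liouvillian context, but yours is arguably the cleaner proof of the bare linear-algebra statement.
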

\begin{proof}
\emph{(i)} Let $x_{j}$ be an eigenvalue and let $\bm{u}_{j}$ be its
corresponding eigenvector. One can write ${X}\bm{u}_{j} = x_{j}
\bm{u}_{j}$, and the complex conjugate of this equation
${X}\bm{u}^{*} = x_{j}^{*} \bm{u}^{*}$. Then take a scalar product
of the first equation with $\bm{u}^{\dagger}$ and the scalar product
of the second equation with $\bm{u}$ and sum up:
\begin{equation}
\bm{u}^{\dagger}\cdot ({X}+{X}^T)\bm{u} = (2 \Re x_j)
\bm{u}^{\dagger}\cdot \bm{u}.
\end{equation}
Strict positivity of the eigenvector norm, $\bm{u}^{\dagger} \cdot
\bm{u} > 0$, and the positive semi-definiteness, $({X} + {X}^T)\ge 0$, imply $\Re
x_j \ge 0$.

\emph{(ii)} Consider a linear system of differential equations,
\begin{equation}
(d/d t) \bm{u}(t) = -{X} \bm{u}(t). \label{eq:ode}
\end{equation}
Positive semi-definiteness of ${X} + {X}^T$ is equivalent to
Lyapunov stability in control theory, namely \begin{equation} (d/d
t)||\bm{u}||^2_2 = -\bar{\bm{u}}\cdot ({X} + {X}^T)\bm{u} \le 0
\quad  \textrm{ iff }\quad {X} + {X}^T \ge 0. \end{equation} Then, one can show
that, in order for the system~(\ref{eq:ode}) to be Lyapunov stable, a
purely imaginary (or vanishing) eigenvalue $x_m = i b$ cannot
correspond to a Jordan block of dimension $\ell_m > 1$ in the Jordan
canonical form of ${X}$. This is indeed obvious since, if we take
the initial vector $\bm{u}(0)$ for (\ref{eq:ode}) from ${\rm
ker\,}({X} - x_m \one)^{\ell_m} \ominus {\rm ker\,}({X} - x_m
\one)^{\ell_m-1}$, then $\bm{u}(t) \propto t^{\ell_m-1} e^{-i b t}$.
Hence, if the system~\eqref{eq:ode} is not Lyapunov stable, ${X} + {X}^T \not\ge 0$.
\end{proof}

In \cite{Prosen2010a} it has been shown that the spectrum of the
Liouvillian is
\begin{equation} \begin{aligned}
{\rm{Sp}}({\cal L})= -\{x_{\mathbf{n}}:=\sum_{m} x_m n_m\,/\,
n_m=0,\cdots,\ell_m\}. \label{e.spectrumnd}
\end{aligned} \end{equation} 
Accordingly, $\Delta_{\mathcal L} = \Delta \equiv 2\min_m \Re[x_m]$.
If $\Delta>0$ the steady state \eqref{e.superss} is unique
\cite{Prosen2010a}.

In the non-diagonalizable case the last equation in~\eqref{eq:prop1} is not satisfied. On the other hand one can
obtain the following~\cite{Banchi2014}
\begin{Proposition}\begin{equation}
\|\hat{X}^{-1}\|_\infty < \frac{1+p(\Delta^{-1})}\Delta~,
\label{e.deltadeltand}
\end{equation}
for a certain polynomial $p()$.
\end{Proposition}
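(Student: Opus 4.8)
The plan is to realise $\hat X^{-1}$ as a Laplace-type integral of the semigroup generated by $\hat X$ and to estimate that semigroup using two complementary properties of $X$. The hypothesis $\Delta=\min_j 2\Re(x_j)>0$ places the entire spectrum of $\hat X=X\otimes{\mathbf 1}+{\mathbf 1}\otimes X$ in the open right half-plane, since its eigenvalues are the sums $x_i+x_j$ with $\Re(x_i+x_j)\ge\Delta$; hence $e^{-t\hat X}\to 0$ and one may write $\hat X^{-1}=\int_0^\infty e^{-t\hat X}\,dt$. Because $\hat X$ is a Kronecker sum, the two summands commute and $e^{-t\hat X}=e^{-tX}\otimes e^{-tX}$, so by multiplicativity of the spectral norm under tensor products $\|e^{-t\hat X}\|_\infty=\|e^{-tX}\|_\infty^2$. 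The whole problem thus reduces to controlling $\|e^{-tX}\|_\infty$.

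First I would exploit accretivity. Since $X+X^T=8\,\Re M\ge 0$ from the positive semi-definiteness of the bath matrix (see Eq.~\eqref{eq:X} and Eq.~\eqref{eq:BathM}), the derivative $\tfrac{d}{dt}\|e^{-tX}v\|_2^2=-w^\dagger(X+X^T)w\le 0$ with $w=e^{-tX}v$, exactly as in the Lyapunov argument of Lemma~\ref{lemma:stability}; this gives $\|e^{-tX}\|_\infty\le 1$ for all $t\ge 0$. This bound governs the short-time contribution but decays too slowly on the tail, so to capture the gap I would pass to the Jordan canonical form $X=U D_X^J U^{-1}$. On each block $J_{\ell_m}(x_m)=x_m{\mathbf 1}+N$ with $N$ nilpotent, $e^{-tJ_{\ell_m}(x_m)}=e^{-tx_m}\sum_{k=0}^{\ell_m-1}\tfrac{(-t)^k}{k!}N^k$, whence $\|e^{-tX}\|_\infty\le \kappa(U)\,e^{-t\Delta/2}\,P(t)$, where $\kappa(U)=\|U\|_\infty\|U^{-1}\|_\infty$ and $P$ is a polynomial in $t$ of degree $\ell_{\max}-1$, $\ell_{\max}$ being the largest Jordan block size.

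Inserting the decay bound into the integral representation gives $\|\hat X^{-1}\|_\infty\le \kappa(U)^2\int_0^\infty e^{-t\Delta}P(t)^2\,dt$; using $\int_0^\infty t^{2k}e^{-t\Delta}dt=(2k)!\,\Delta^{-(2k+1)}$ this integral equals $\Delta^{-1}$ times a polynomial in $\Delta^{-1}$, so one obtains a bound of the announced form $\|\hat X^{-1}\|_\infty<\bigl(1+p(\Delta^{-1})\bigr)/\Delta$. In the diagonalizable case $P$ reduces to a constant and the correction collapses to the $\Delta^{-1}$ behaviour recorded in Proposition~\ref{prop:Delta}, so the polynomial $p$ is genuinely an artefact of non-trivial Jordan blocks. (A slightly sharper route splits $\hat X^{-1}=\int_0^T+\int_T^\infty$, bounding the first piece by $T$ via $\|e^{-t\hat X}\|_\infty\le 1$ and rewriting $\int_T^\infty e^{-t\hat X}dt=e^{-T\hat X}\hat X^{-1}$ to get $\|\hat X^{-1}\|_\infty\le T/\!\left(1-\|e^{-TX}\|_\infty^2\right)$ once $\|e^{-TX}\|_\infty<\tfrac12$, which only needs $\ln\kappa(U)$ rather than $\kappa(U)$.)

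The main obstacle is precisely the prefactor $\kappa(U)$: a priori it could blow up as $\bm\lambda$ approaches criticality and $\Delta\to 0$, destroying polynomial dependence on $\Delta^{-1}$. The key to taming it is Lemma~\ref{lemma:stability}(ii): any eigenvalue that reaches the imaginary axis in the limit $\Delta\to 0$ is necessarily semisimple, so no large Jordan block can nucleate around the eigenvalues that drive the gap to zero. Combined with the fact that the entries of $X$ are bounded, smoothly varying functions of $\bm\lambda$, a continuity/perturbation estimate then bounds $\kappa(U)$ — and, more importantly, $\ln\kappa(U)$, which is all that enters the choice of $T$ — polynomially in $\Delta^{-1}$. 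This uniform control of the similarity transformation is where I expect the real work to lie; the detailed estimate is carried out in~\cite{Banchi2014}.
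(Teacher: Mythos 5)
Your route is genuinely different from the paper's, and most of its individual steps are sound: the Laplace representation $\hat X^{-1}=\int_0^\infty e^{-t\hat X}\,dt$ (valid since $\Re(x_i+x_j)\ge\Delta>0$), the factorisation $e^{-t\hat X}=e^{-tX}\otimes e^{-tX}$ with $\|e^{-t\hat X}\|_\infty=\|e^{-tX}\|_\infty^2$, and the contraction bound $\|e^{-tX}\|_\infty\le1$ from $X+X^T=8\Re M\ge0$ are all correct. The paper, by contrast, never touches semigroups or accretivity in this proof: it works in the Jordan basis of $X$, invokes Lemma 3.1 of Ref.~\cite{Prosen2010a} to bring the Kronecker sum itself to Jordan form $D_{\hat X}+N$ with $N$ nilpotent and $D_{\hat X}=\diag(x_i+x_j)$, factors $\hat X=D_{\hat X}\,(\one+D_{\hat X}^{-1}N)$, and inverts through the \emph{terminating} Neumann series, so that the bound $\|\hat X^{-1}\|_\infty\le\Delta^{-1}\bigl[1+\max_{m,n}\sum_{k=1}^{\ell_m+\ell_n-2}\Delta^{-k}\bigr]$ follows from the triangle inequality alone, the polynomial $p$ being read off from the Jordan block dimensions.

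The genuine gap is the one you flagged yourself, and your proposed repair does not close it. Your bound necessarily carries $\kappa(U)^2$ (or $\ln\kappa(U)$ after the splitting trick), and you argue this is controlled polynomially in $\Delta^{-1}$ by combining Lemma~\ref{lemma:stability}(ii) with smoothness of $X$ in $\bm{\lambda}$. But Lemma~\ref{lemma:stability}(ii) constrains only eigenvalues lying \emph{on} the imaginary axis; it says nothing about the eigenvector geometry of eigenvalues with $\Re x_j>0$. Eigenvalues can (nearly) coalesce at distance $O(1)$ from the axis: the accretive matrix $\left(\begin{smallmatrix}1 & 1\\ 0 & 1+\epsilon\end{smallmatrix}\right)$ satisfies $X+X^T>0$ and has $\Delta\approx2$, yet $\kappa(U)\sim\epsilon^{-1}\to\infty$. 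Hence $\kappa(U)$ is simply not a function of $\Delta$, and no continuity or perturbation argument can deliver the polynomial control your estimate requires. Nor can you defer ``the real work'' to Ref.~\cite{Banchi2014}: that is the source of this very proposition, and, tellingly, neither it nor the paper ever performs a $\kappa(U)$ estimate --- the algebraic inversion inside each Jordan block makes one unnecessary (at the implicit price, which the paper accepts silently, of evaluating the norm in the Jordan basis of $\hat X$). To rescue your semigroup formulation you would have to replace the decay bound $\|e^{-tX}\|_\infty\le\kappa(U)e^{-t\Delta/2}P(t)$ by an exact block-wise computation of $e^{-t\hat X}$, at which point you have essentially reconstructed the paper's argument.
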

\begin{proof}
We start by writing
\begin{equation} \begin{aligned}
\hat X &= \bigoplus_m J_{\ell_m}(x_m)\otimes \one + \bigoplus_m \one
\otimes J_{\ell_m}(x_m) \nonumber\\&= \bigoplus_{m,n}
\left[J_{\ell_m}(x_m)\otimes \one_{\ell_n} + \one_{\ell_m} \otimes
J_{\ell_n}(x_n)\right] \nonumber\\&=\hat x + \bigoplus_{m,n}
\left[J_{\ell_m}(0)\otimes \one_{\ell_n} + \one_{\ell_m} \otimes
J_{\ell_n}(0)\right]~,
\end{aligned} \end{equation} 
where $D_{\hat{X}}$ is the diagonal matrix with entries $x_i + x_j$
and where we used the decomposition $\one = \oplus_m 1_{\ell_m}$.
Moreover, thanks to Lemma 3.1 of Ref.~\cite{Prosen2010a},
\begin{equation} \begin{aligned}
\hat X &= D_{\hat{X}} + \bigoplus_{m,n}
\bigoplus_{r=1}^{\min\{\ell_m,\ell_n\}} J_{\ell_m+\ell_n -2r+1}(0)
\nonumber\\&= D_{\hat{X}}\left[\one + \bigoplus_{m,n}
\bigoplus_{r=1}^{\min\{\ell_m,\ell_n\}} \frac{J_{\ell_m+\ell_n
-2r+1}(0)}{x_m+x_n}\right]~.
\end{aligned} \end{equation} 
As $J$ is nilpotent,
\begin{align*}
\hat{X}^{-1} &= D_{\hat{X}}^{-1} \left[\one + \bigoplus_{m,n}
\bigoplus_{r=1}^{\min\{\ell_m,\ell_n\}}
\sum_{m=1}^{\ell_m+\ell_n-2r}\left(- \frac{J_{\ell_m+\ell_n
-2r+1}(0)}{x_m+x_n}\right)^m\right]~,
\end{align*}
and
\begin{align}
\|\hat{X}^{-1}\|_\infty &\le \|D_{\hat{X}}^{-1}\|_\infty
\left[1 + \max_{m,n} \max_r %\max_{r=1}^{\min\{\ell_m,\ell_n\}}
\sum_{m=1}^{\ell_m+\ell_n-2r} \frac{1}{|x_m+x_n|^m}\right]
\nonumber\\&= \|D_{\hat{X}}^{-1}\|_\infty \left[1 + \max_{m,n}
\sum_{m=1}^{\ell_m+\ell_n-2} \frac{1}{|x_m+x_n|^m}\right]
\nonumber\\&\le\frac1\Delta \left[1 + \max_{m,n}
\sum_{m=1}^{\ell_m+\ell_n-2} \frac{1}{\Delta^m}\right]~.
\end{align} 
\end{proof}

\bibliographystyle{elsarticle-num}
%\bibliographystyle{model1-num-names}
%\bibliography{\bibliopath library}
\bibliography{ref}
\end{document}